\documentclass[12pt,oneside]{report}

\usepackage[utf8]{inputenc}
\usepackage{graphicx} 
\usepackage[a4paper,left=40mm,right=20mm,top=25mm,bottom=25mm]{geometry} 
\usepackage{lineno}
\usepackage{fancyhdr} 
\usepackage{amsthm} 
\usepackage{amssymb} 
\usepackage{amsmath} 
\usepackage{color}
\usepackage{setspace} 
\usepackage{afterpage} 
\usepackage{lipsum} 
\usepackage[square, numbers, comma, sort&compress]{natbib} 
\usepackage[nottoc,notlot,notlof]{tocbibind} 
\usepackage{mathbbol} 
\usepackage{bm} 
\usepackage{hyperref} 
\usepackage{url} 
\usepackage[table]{xcolor} 
\usepackage{tabto}

\hypersetup{
    colorlinks=true,
    linkcolor=blue,
    filecolor=magenta,      
    urlcolor=cyan,
    citecolor=cyan}
\newtheorem{theorem}{Theorem}[section]
\newtheorem{corollary}{Corollary}[theorem]
\newtheorem{lemma}[theorem]{Lemma}
\newcommand\blankpage{%
   \null
   \thispagestyle{empty}%
   \addtocounter{page}{0}%
   \newpage}
   
\newcommand{\ket}[1]{\left | #1 \right\rangle}
\newcommand{\bra}[1]{\left \langle #1 \right |}
\newcommand{\openone}[0]{\mathbb{1}}
\def\identity{\leavevmode\hbox{\small1\kern-3.8pt\normalsize1}}
\newcommand{\proj}[1]{\ket{#1}\bra{#1}}
\definecolor{timberwolf}{rgb}{0.86, 0.84, 0.82}
\definecolor{trolleygrey}{rgb}{0.5, 0.5, 0.5}
   

\begin{document}
\pagenumbering{roman} 
\thispagestyle{empty}
{\color{white}{space}}
\vspace{0.5cm}

\begin{figure}[h]
\centering
\includegraphics[width=0.8\textwidth]{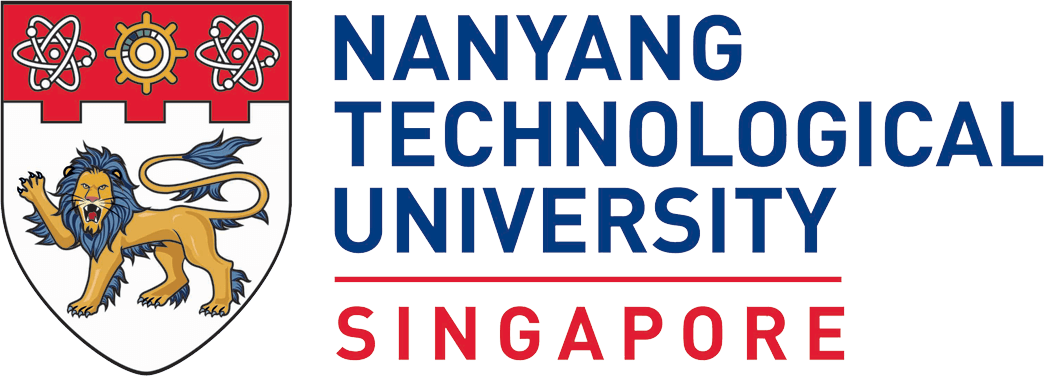}
\end{figure}
\vspace{2cm}

\begin{center}
\large \bf DISTRIBUTION OF QUANTUM ENTANGLEMENT:\\ PRINCIPLES AND APPLICATIONS\end{center}
\vspace{3cm}

\begin{center}
\large \bf I WAYAN GEDE TANJUNG KRISNANDA
\end{center}
\vspace{3cm}


\begin{center}
\large \bf SCHOOL OF PHYSICAL AND MATHEMATICAL SCIENCES\\
\end{center}
\vspace{3cm}

\begin{center}
\large \bf 2020
\end{center}

\afterpage{\blankpage}

\thispagestyle{empty}
{\color{white}{space}}
\vspace{0.5cm}

\begin{center}
\large \bf DISTRIBUTION OF QUANTUM ENTANGLEMENT:\\ PRINCIPLES AND APPLICATIONS
\end{center}
\vspace{3cm}

\begin{center}
\large \bf I WAYAN GEDE TANJUNG KRISNANDA
\end{center}
\vspace{3cm}

\begin{center}
\large SCHOOL OF PHYSICAL AND MATHEMATICAL SCIENCES\\
\end{center}
\vspace{6cm}

\begin{center}
\large A thesis submitted to the Nanyang Technological\\ University in partial fulfilment of the requirement for the\\ degree of Doctor of Philosophy
\end{center}
\vspace{1cm}

\begin{center}
\large \bf 2020
\end{center}
\thispagestyle{empty}


\afterpage{\blankpage}

\onehalfspacing 

\pagenumbering{arabic} 
\chapter*{Abstract}
\setcounter{page}{1} 
\addcontentsline{toc}{chapter}{Abstract}

Quantum entanglement is a form of correlation between quantum particles that has now become a crucial part in quantum information and communication science. 
For example, it has been shown to enable or enhance quantum processing tasks such as quantum cryptography, quantum teleportation, and quantum computing. 
However, quantum entanglement is prone to decoherence as a result of interactions with environmental scatterers, making it an expensive resource. 
Therefore, it is crucial to understand its creation.

We centre our attention to a situation where one would like to distribute quantum entanglement between principal particles that are apart.
In this case, it is necessary to use ancillary systems that are communicated between them or interact with them continuously.
Cubitt \emph{et al}. showed that the ancillary systems need not be entangled with the principal particles in order to distribute entanglement.
This has been demonstrated experimentally in the case of communicated ancillary particles and it is now known that the bound on the distributed  entanglement is given by a communicated quantum discord.
On the other hand, little is understood about the setting with continuous interactions, despite its abundant occurrence in nature.

The main focus of this thesis is to study the distribution of quantum entanglement via continuous interactions with ancillary particles, which I will call mediators. 
First, basic concepts and tools that are helpful for this thesis will be introduced.
This includes the description of quantum objects within the framework of quantum mechanics, their dynamics, and important properties.
Next, I will present my work regarding the necessary conditions for entanglement distribution, the factors that are relevant for the distributed amount, and the speed limit to achieving maximum entanglement gain.
Finally, I present some notable applications that can benefit from our work.
This includes, among others, indirect probing of the quantum nature of optomechanical mirrors, photosynthetic organisms, and gravitational interactions.



\newpage

\chapter*{\color{white}{Dedication}}
To my mother and stepfather \thispagestyle{empty}
 

\chapter*{Acknowledgements}
\addcontentsline{toc}{chapter}{Acknowledgements}

I would like to give a \emph{gazillion} thanks to my supervisor A/Prof.~Paterek for all his thoughts and guidance. 
I really appreciate the time we spent discussing and arguing throughout my PhD programme. 
I also thank my co-supervisor Dr. Zuppardo for helping me to get started in this research. 
Special thanks to Prof.~Paternostro for his advice and collaboration for most parts of our projects. 

I thank my current supervisor Prof.~Dumke and also Prof.~Brukner for taking interest in our projects and approving my transitions.

Many thanks to our colleagues Mr.~Ganardi, Dr.~Lee, Prof.~Kim, Dr.~Marletto, Prof.~Vedral, Dr.~Noh, A/Prof.~Streltsov, Prof.~Bandyopadhyay, Dr.~Banerjee, Dr.~Deb, Dr.~Halder, Dr.~Modi, Prof.~Sengupta, Mr.~Pal, Ms.~Batra, and Prof.~Mahesh for their collaboration.

I thank Prof. Winter, Dr. Coles, and Prof. Gr\"{o}blacher for stimulating discussions, which improved our articles. 

I wish to express my gratitude to Prof.~Laskowski for hospitality at the University of Gda\'{n}sk, Dr.~Dutta for hospitality and collaboration at the Korea Institute for Advanced Study, and Prof. Liew for hospitality at Nanyang Technological University. 

I had the opportunity to co-supervise undergraduate students who were keen in doing research. 
I would like to thank Guo Yao Tham, Wen Yu Kon, Koo Sui Ho Edmund, Parth Patel, Swetha Sridhar, Jeremy Goh Swee Kang, Chee Mun Yin, and Tan Xue Yi for their contributions to our projects.

I thank my colleagues Dr.~Miller, Dr.~Lake, Minh Tran, Kai Sheng, Zhao Zhuo, Liu Zheng, Lamia Varawala, and Ankit Kumar for the time we spent together. 

Thank you to my sister, my mother, my stepfather, and my friends Andhita, David, Hendra, Wiswa, and Zhonglin for their company. 
I also thank Mary and Steven for their hospitality.
Towards the end of my PhD, there is a special person who added new colours in my life. 
I would like to express my gratefulness to my girlfriend Qiannan Zhang.

I wish to acknowledge the funding support for our projects by the National Research Foundation Singapore and Singapore Ministry of Education Academic Research Fund Tier 2 Project No. MOE2015-T2-2-034.  

\afterpage{\blankpage} 

\newpage
\addcontentsline{toc}{chapter}{Table of contents}
\tableofcontents





\chapter*{Abbreviations}\label{C_abb}
\addcontentsline{toc}{chapter}{Abbreviations}

\[
\begin{array}{lcl}  
\text{am} &\text{    }& \text{Air molecules}\\
\text{BChl} &\text{    }& \text{Bacteriochlorophyll}\\
\text{cg} &\text{    }& \text{Casimir-gravity}\\
\text{CM} &\text{    }& \text{Covariance matrix}\\
\text{CV} &\text{    }& \text{Continuous variable}\\
\text{det} &\text{    }& \text{Determinant}\\
\text{di} &\text{    }& \text{Direct interaction}\\
\text{FWHM} &\text{    }& \text{Full-width at half-maximum}\\
\text{gnd} &\text{    }& \text{Ground}\\
\text{inf} &\text{    }& \text{Infimum}\\
\text{int} &\text{    }& \text{Interaction}\\
\text{LIGO} &\text{    }& \text{Laser Interferometer Gravitational-Wave Observatory}\\
\text{loc} &\text{    }& \text{Local}\\
\text{LOCC} &\text{    }& \text{Local operations and classical communication}\\
\text{max} &\text{    }& \text{Maximum}\\
\text{min} &\text{    }& \text{Minimum}\\
\text{ph} &\text{    }& \text{Photon}\\
\text{PPT} &\text{    }& \text{Positive under partial trasposition}\\
\text{QC, qc} &\text{    }& \text{Quantum-classical}\\
\text{QSL} &\text{    }& \text{Quantum speed limit}\\
\text{RED} &\text{    }& \text{Relative entropy of discord}\\
\text{REE} &\text{    }& \text{Relative entropy of entanglement}\\
\text{RHS} &\text{    }& \text{Right hand side}\\
\text{SEC} &\text{    }& \text{System-environment correlation}\\
\text{sep} &\text{    }& \text{Separable}\\
\text{SDE} &\text{    }& \text{Standard deviation of energy}\\
\text{sq} &\text{    }& \text{Squeezed}\\
\text{sup} &\text{    }& \text{Supremum}\\
\text{th} &\text{    }& \text{Thermal}\\
\text{tr} &\text{    }& \text{Trace}
\end{array}
\]

\afterpage{\blankpage} 
\chapter*{List of notations}\label{D_sym}
\addcontentsline{toc}{chapter}{List of symbols}
A list of general symbols used throughout this thesis is provided below. 
Some related notations are defined in groups for better clarity.
Special usage in some Chapters will be mentioned.
Note that trivial notations are not included.
\[
\begin{array}{lcl}  
A,B,C & \text{    } &\mbox{Generally used to indicate principal quantum systems ($A$ \& $B$)}\\
 & \text{    } & \text{and a mediating system ($C$). Also, multiple systems together are}\\
 & \text{    } & \text{denoted simply as $AB$, $ABC$, etc.}\\
C_{X:Y}, \tilde C_{X:Y} & \text{    } &\mbox{Classical correlation between systems $X$ \& $Y$ and its lower}\\
& \text{    } &\mbox{bound.}\\
D & \text{    } & \text{Used to denote a diagonal matrix associated with Brownian}\\
& \text{    } & \text{noises and decay rates.}\\
D_{X|Y} & \text{    } & \text{Quantum discord between systems $X$ and $Y$, given the}\\
 & \text{    } & \text{measurement outcomes on system $Y$.}\\
D(\rho,\sigma) & \text{    } & \text{General distance measure between quantum states $\rho$ and $\sigma$.}\\
& \text{    } & \text{$D_{\text{tr}}(\rho,\sigma)$ denotes trace distance measure.}\\
d,d_X & \text{    } & \text{Dimension of the Hilbert space of a quantum system and of}\\
 & \text{    } & \text{a quantum system $X$. Not to be confused with an object's length.}\\
E_{X:Y} & \text{    } & \text{Quantum entanglement (entropy of entanglement or relative}\\
 & \text{    } & \text{entropy of entanglement) between systems $X$ and $Y$. It is also}\\
 & \text{    } & \text{used to denote logarithmic negativity for the applications part of}\\
 & \text{    } & \text{the thesis, for simplicity. Not to be confused with another usage}\\
 & \text{    } & \text{of $E$ which denotes energy or expectation value of energy.}\\
 \bm{E}_m & \text{    } & \text{Used to denote strength of driving lasers (has frequency unit).}\\
F_m,Q_n & \text{    } & \text{Zero-mean Gaussian noise operators for CV mode $m$ and $n$,}\\
 & \text{    } & \text{respectively. Also, $X_m$ and $Y_m$ are used to denote noise}\\
 & \text{    } & \text{quadratures. These are only used in Chapter~\ref{Chapter_probing}.}\\
 g_{mn},G_{mn}& \text{    } & \text{Individual and collective coupling between mode $m$ and $n$. We }\\
 & \text{    } & \text{also use $\tilde G_{n}$ to indicate base coupling. These quantities are only}\\
 & \text{    } & \text{defined in Chapter~\ref{Chapter_probing}.}\\
G_{0J}& \text{    } & \text{Coupling constant defined in Chapter~\ref{Chapter7}. Note also its effective}\\
& \text{    } & \text{coupling $G_J$.}\\

\end{array}
\]

\newpage
\[
\begin{array}{lcl}  
H, H_X,H^j_X & \text{    } & \text{General Hamiltonian, Hamiltonian applies only on system $X$, }\\
 & \text{    } & \text{and its $j$th variant. A lowercase letter can also indicate variant,}\\
 & \text{    } & \text{e.g., $H_j$. Also, $H_{\text g}$ is used to indicate gravitational term.}\\
\langle H \rangle, \Delta H& \text{    } & \text{Expectation value of energy with respect to the ground state }\\
& \text{    } & \text{level, i.e., $\mbox{tr}(\rho H)-E_g$ and energy variance. The dimensionless}\\
& \text{    } & \text{quantities are written as $\langle M \rangle$ and $\Delta M$, respectively.}\\
I_{X:Y} & \text{    } & \text{Mutual information between systems $X$ and $Y$. Not to be}\\
& \text{    } & \text{confused with $\mathcal{I}_{X:Y}$, see below.}\\
j,j^{\dagger} & \text{    } &\mbox{Annihilation and creation operator, respectively,}\\
 & \text{    } & \text{for a general $d$-dimensional quantum system.}\\
 j_{\text{in}} & \text{    } &\mbox{Used to denote input noise in Chapter~\ref{Chapter7}. The corresponding}\\
& \text{    } &\mbox{noise quadratures are denoted by $x_{\text{in},J}$ and $y_{\text{in},J}$.}\\
K & \text{    } & \text{A drift matrix. Note also the notation $W_{\pm}(t)=\exp{(\pm Kt)}$ or}\\
& \text{    } & \text{$M(t)=\exp{(Kt)}$.}\\
L_{X:Y} & \text{    } & \text{Logarithmic negativity between systems $X$ and $Y$. Not to be}\\
& \text{    } & \text{confused with $L$ which denotes distance.}\\
L\rho, L_X\rho & \text{    } & \text{Lindblad operation on a quantum state $\rho$ and with the operation}\\
& \text{    } & \text{only on system $X$.}\\
m,m_X& \text{    } & \text{Mass of an object. Not to be confused with dummy index.}\\
n_r& \text{    } & \text{Refractive index.}\\
n(t)& \text{    } & \text{Noise vector used in Chapter~\ref{Chapter7}.}\\
\bar n, \bar N & \text{    } & \text{Mean excitation numbers. For example, $\bar n=(e^\beta-1)^{-1}$ is}\\
& \text{    } & \text{mean thermal phonon with $\beta=\hbar \omega/k_B T$.}\\
N_{X:Y} & \text{    } & \text{Negativity between systems $X$ and $Y$.}\\
P & \text{    } & \text{Used to indicate pressure. Not to be confused with power $\bm{P}$.}\\
Q,Q_X,Q_X^j & \text{    } & \text{A quantum operation, operation applied only on system $X$,}\\
& \text{    } & \text{and its $j$th variant.}\\
Q_{X:Y} & \text{    } & \text{General correlation quantifier between systems $X$ and $Y$.}\\
r_1,r_2,r_3& \text{    } & \text{Interaction rates for different configurations in Chapter~\ref{Chapter_gravity}.}\\
r_{\text{cg}}& \text{    } & \text{The ratio between Casimir and gravitational energy terms,}\\
& \text{    } & \text{which are relevant for entanglement gain in Chapter~\ref{Chapter_gravity}.}\\
R,R_X & \text{    } & \text{Radius, radius of object $X$. Not to be confused with reflectivity}\\
& \text{    } & \text{of mirrors, $R_{1,2}$, in Chapter~\ref{Chapter_probing}.}\\
S,S^{\dagger}& \text{    } & \text{Squeezing operators with squeezing strength $s$. Subscripts can}\\
& \text{    } & \text{be used to indicate the application on a quantum system, }\\
& \text{    } & \text{e.g., $s_A,s_B$.}\\
\end{array}
\]

\newpage
\[
\begin{array}{lcl}  
S(\rho), S(\rho_X) & \text{    } & \text{von Neumann entropy of state $\rho$ and its subsystem $\rho_X$,}\\
& \text{    } & \text{also can be simply written as $S_X$.}\\
S_{X|Y} & \text{    } & \text{Conditional von Neumann entropy.}\\
S(\rho||\sigma) & \text{    } & \text{Relative entropy distance between states $\rho$ and $\sigma$.}\\
T & \text{    } & \text{Dimensionless time, $T=\omega t$ or $T=\Omega t$, where $\omega$ and $\Omega$}\\
 & \text{    } & \text{being frequencies. Not to be confused with temperature. As a}\\ 
& \text{    } & \text{power, it denotes transposition. Also $T_X$ denotes partial}\\
 & \text{    } & \text{transposition with respect to system $X$.}\\
u & \text{    } & \text{Used to define quadrature vector, e.g., $u=(\bm{X}_A, \bm{P}_A, \bm{X}_B, \bm{P}_B)^T$.}\\
& \text{    } & \text{Not to be confused with dummy index.}\\
U, U_X & \text{    } & \text{Unitary operator in general and only on system $X$.}\\
V, V_{ij}& \text{    } & \text{Covariance matrix and its elements. The elements can also be}\\
& \text{    } & \text{denoted by $2\times 2$ matrices. For example, for a two-mode system,}\\
& \text{    } & \text{the components are $I_A,I_B, L,$ and $L^T$. A useful quantity is}\\
& \text{    } & \text{defined in this regard, i.e., $\Sigma=\mbox{det}I_A+\mbox{det}I_B-2\:\mbox{det}L$. After}\\
& \text{    } & \text{partial transposition, a covariance matrix is written as $\tilde V$.}\\
V_m& \text{    } & \text{Mode volume, only defined in Chapter~\ref{Chapter_probing}.}\\

x, p, \bm{X}, \bm{P} & \text{    } & \text{Position and momentum. The bold uppercase notations are used}\\
 & \text{    } & \text{to indicate the corresponding dimensionless quantities (or}\\
  & \text{    } & \text{quadratures). Also, subscripts are used to indicate their systems, }\\
  & \text{    } & \text{e.g., $x_A,p_B,\bm{X}_A$. For field operators, quadratures are denoted}\\ 
  & \text{    } & \text{by $\bm{X}$ and $\bm{Y}$.}\\

X, Y & \text{    } & \text{Generally used to indicate quantum systems, but they can}\\
 & \text{    } & \text{indicate parties each consisting of one or more quantum systems.}\\
X:Y & \text{    } & \text{Used to indicate symmetrical partition of systems $X$ and $Y$.}\\
X|Y & \text{    } & \text{Indicates asymmetrical partition of systems $X$ and $Y$. In}\\
& \text{    } & \text{particular, it denotes a situation given the knowledge of system}\\
& \text{    } & \text{$Y$, e.g., measurement outcomes.}\\
\mathcal{C} & \text{    } & \text{A non-convex set containing quantum-classical states.}\\
\mathcal{E}_{\text{c}} & \text{    } & \text{Casimir energy between two objects.}\\
\mathcal{F}(\rho,\sigma) & \text{    } & \text{Fidelity between quantum states $\rho$ and $\sigma$}.\\
\mathcal{F}_{\text{i}} & \text{    } & \text{Finesse.}\\
\mathcal{I}_{X:Y} & \text{    } & \text{A distance from a state to the closest product state of the form}\\
& \text{    } & \text{$\sigma_X\otimes \sigma_Y$ as quantified by a general correlation quantifier $D(\cdot,\cdot)$}.\\
\mathcal{L},\mathcal{L}_{X} & \text{    } & \text{General map, applies only on system $X$. It is used in association}\\
& \text{    } & \text{with Lindblad operators.}\\
\mathcal{M} & \text{    } & \text{A non-convex set containing product states.}\\
\mathcal{P}(\rho) & \text{    } & \text{Purity of a quantum state $\rho$.}\\

\end{array}
\]

\newpage
\[
\begin{array}{lcl}  
\mathcal{Q} & \text{    } & \text{Mechanical quality factor.}\\
\mathcal{S} & \text{    } & \text{A convex set containing separable states. Also used to indicate}\\
& \text{    } & \text{a set in general for general illustration.}\\
\openone, \openone_X& \text{    } & \text{Identity operator, identity operator applied only on system $X$.}\\
|\psi \rangle,|\psi \rangle_X & \text{    } & \text{A pure quantum state, state of system $X$.}\\
& \text{    } & \text{A lowercase subscript indicates its variant, e.g., $|\psi \rangle_j$ or $|\psi_j \rangle$.}\\
\Lambda, \Lambda_X& \text{    } & \text{A general completely positive trace-preserving map,}\\
& \text{    } & \text{applies only on system $X$. Not to be confused with frequency of}\\
& \text{    } & \text{lasers $\Lambda_m$ in Chapter~\ref{Chapter_probing}.}\\
\bm{\Lambda}_{\text{ph}}, \bm{\Lambda}_{\text{am}}& \text{    } & \text{Decoherence coefficients as a result of scattering photons and air}\\
& \text{    } & \text{molecules.}\\
\eta_{\text g}& \text{    } & \text{Dimensionless figure of merit for gravity-mediated}\\
& \text{    } & \text{entanglement gain between two oscillators. $\bm{\sigma}$ is the}\\
& \text{    } & \text{corresponding figure of merit of entanglement between two}\\
& \text{    } & \text{released masses.}\\
\Theta(\rho,\sigma)& \text{    } & \text{Bures angle defined as $\Theta(\rho,\sigma)=\arccos( \mathcal{F}(\rho,\sigma)).$}\\
\Xi_X& \text{    } & \text{Charge of quantum battery for system $X$.}\\
\omega,\Omega& \text{    } & \text{Used to denote frequencies. In Chapter~\ref{Chapter7} Section~\ref{SC_faf}, we also}\\
& \text{    } & \text{use $g$.}\\
\Omega_N& \text{    } & \text{Symplectic form composed of $N$ CV modes. Not to be confused}\\
& \text{    } & \text{with the frequency $\Omega$.}\\
\lambda& \text{    } & \text{Wavelength. Not to be confused with linear mass density or}\\
& \text{    } & \text{complex coefficients.}\\
\Delta_{0J}& \text{    } & \text{Cavity-laser detuning. Note also its effective detuning $\Delta_j$.}\\
\delta_{mn}& \text{    } & \text{Kronecker delta.}\\
\delta(t-t^{\prime})& \text{    } & \text{Direct delta.}\\
\kappa_{\nu},\upsilon(t),l(t)& \text{    } & \text{Constant vector, noise vector, and a vector defined as }\\
& \text{    } & \text{$l(t)=\upsilon(t)+\kappa_{\nu}$.}\\
\xi,\xi_X& \text{    } & \text{Brownian noise, affecting system $X$. Also used to denote a}\\
& \text{    } & \text{combined annihilation or creation operators in Chapter~\ref{Chapter7}.}\\
& \text{    } & \text{Section~\ref{SC_faf}.}\\
\gamma,\gamma_X& \text{    } & \text{Decay rate, affecting system $X$. Note that a notation $\kappa$ is also}\\
& \text{    } & \text{used to indicate decay in Chapter~\ref{Chapter_probing} and \ref{Chapter7}.}\\
\tau_{\text{e}},\tau_{\text{ph}},\tau_{\text{am}}& \text{    } & \text{Entangling time, coherence time due to photon scattering, and}\\
& \text{    } & \text{due to air molecules.}\\

\end{array}
\]

\newpage
\[
\begin{array}{lcl}  
\tau(\rho,\sigma)& \text{    } & \text{The time it takes for a state $\rho$ to evolve to another state $\sigma$.}\\
& \text{    } & \text{Its dimensionless quantity is denoted as $\Gamma(\rho,\sigma)$ or simply $\Gamma$.}\\
& \text{    } & \text{Note, for example, $\Gamma_{\text{di}}$ stands for dimensionless time bound for}\\
& \text{    } & \text{direct interactions scenario.}\\

\Gamma_n& \text{    } & \text{FWHM of bacterial spectrum in Chapter~\ref{Chapter_probing}.}\\
\mu& \text{    } & \text{Dipole moment of two-level transition. Not to be confused with}\\
& \text{    } & \text{dummy index.}\\
\nu_k,\tilde \nu_k& \text{    } & \text{The $k$th symplectic eigenvalue of a covariance matrix,}\\
& \text{    } & \text{and its value after partial transposition. Not to be confused}\\
& \text{    } & \text{with the frequency constant $\nu$ in Chapter~\ref{Chapter_gravity}}\\
\rho,\rho_X,\rho_X^j& \text{    } &\text{General density matrix, density matrix of system $X$, and its $j$th}\\
& \text{    } & \text{variant. A lowercase subscript also denotes its variant, e.g., $\rho_j$.}\\
 & \text{    } & \text{Not to be confused with mass density in Chapter~\ref{Chapter_gravity}.}\\
 & \text{    } & \text{Note that $\sigma$ is also often used to denote a density matrix.}\\
\sigma^j,\sigma_X^j& \text{    } & \text{$j$-Pauli matrix ($j=x,y,z$). The subscript indicates its}\\
& \text{    } & \text{application on system $X$ in the context of multiple two-level}\\
& \text{    } & \text{systems.}\\
\sigma^{+(-)},\sigma_X^{+(-)}& \text{    } & \text{Two-level system raising (lowering) operator in general and for}\\
& \text{    } & \text{system $X$.}\\
\Pi^j_X& \text{    } & \text{The $j$th projection operator of system $X$.}
\end{array}
\]


\afterpage{\blankpage} 

\chapter{Introduction} 

\label{Chapter0} 

\lhead{Chapter 1. \emph{Introduction}} 

\emph{This chapter presents the motivation, objectives, and organisation of this thesis.\footnote{Note that this chapter serves as a brief introduction. 
Some terminologies used here will be explained more thoroughly in Chapter~\ref{Chapter1}.}
It begins with the importance of quantum entanglement, leading to the need for understanding the process of its creation. 
In particular, we focus on mediated interaction scenarios where the aim is to create or distribute entanglement between two quantum objects by utilising an ancillary system.
This sets the stage for the thesis.
Finally, I will describe the objectives of our study and how the thesis is organised.}

\clearpage 

\section{Motivation}
Entanglement is a form of quantum correlation first recognised by Einstein, Podolsky, and Rosen~\cite{epr}, and Schr\"{o}dinger~\cite{schrodinger1935discussion} in 1935.
This correlation has proven crucial and resulted in key proposals for enabling tasks that are not possible in the classical regime.
This includes, to name a few, quantum cryptography~\cite{cryptography}, quantum dense coding~\cite{dense-coding}, and quantum teleportation~\cite{teleportation}.
The rapid utilisation and manipulation of quantum entanglement for the purpose of quantum information tasks have made it a crucial resource, some argued, as real as energy~\cite{RevModPhys.81.865}.
Therefore, the study of its creation (or distribution) is a necessity.
In this section, we will briefly review protocols for creating quantum entanglement between two principal objects, here denoted as, $A$ and $B$.
In particular, we describe a scenario where $A$ interacts directly with $B$ and the one where the interactions are mediated by an ancillary system $C$.
For the latter case, the nature of the interactions can be discrete or continuous, which will be the main focus of this thesis.

\subsection{Direct interactions}

A straightforward way to create quantum correlations is to engineer direct interactions between the principal objects, see Fig.~\ref{FIG_ch2_direct}.
This dynamics can either be closed (unitary) or open (e.g., following the Lindblad master equation). 
The general Hamiltonian considered in this case is of the form $H_{AB}=\sum_{j} H_A^j \otimes H_B^j,$ where the subscripts indicate the corresponding objects. 
Note also that the Hamiltonian components can be local, i.e., $H_A\otimes \openone_B$ or $\openone_A\otimes H_B$, which we will simply write as $H_A$ and $H_B$ hereafter.
As an example, let us consider two qubits that are interacting with Hamiltonian $H=\hbar \omega \:\sigma_A^x\otimes \sigma_B^x$, where $\hbar \omega$ is the energy unit and $\sigma_{A(B)}^x$ is the $x$-Pauli matrix with the subscript denoting the corresponding object. 
For the initial condition, let us consider the state $\ket{\psi(0)}=\ket{0}_A\ket{0}_B$, where $\ket{0}_{A(B)}$ is the eigenstate (ground state) of the $z$-Pauli matrix.
One can see that the dynamics from the Hamiltonian will create excitations in both objects $A$ and $B$, i.e., the state at time $t$ is given by $\ket{\psi(t)}=\alpha(t)\ket{0}_A\ket{0}_B+\beta(t)\ket{1}_A\ket{1}_B$, where $\alpha(t)$ and $\beta(t)$ are complex coefficients.
This is a form of state where $A$ is said to be entangled with $B$, as the state of one object cannot be written separately from the other.\footnote{A detailed analysis will be given later in this thesis.}
\begin{figure}[h]
\centering
\includegraphics[scale=0.4]{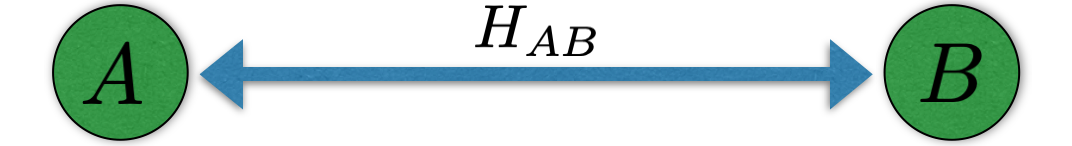} 
\caption{Direct interaction setting. Two quantum particles $A$ and $B$ are interacting with Hamiltonian $H_{AB}$.}
\label{FIG_ch2_direct} 
\end{figure}

\subsection{Indirect interactions}

In some cases, the principal objects are not situated together, e.g., $A$ is in Alice's lab and $B$ is in Bob's lab (see Fig.~\ref{FIG_ch2_indirect}).
Therefore, they require an ancillary system to mediate interactions between them.
We will refer to this as indirect interaction setting.
Two scenarios can be applied here based on the nature of the interactions.
A discrete interaction scenario (Fig.~\ref{FIG_ch2_indirect}a) involves performing operations (or discrete interactions) on objects $A$ and $C$ in Alice's lab, then $C$ is sent to Bob's lab, after which Bob can perform operations on $B$ and $C$.
On the other hand, a continuous interaction scenario (Fig.~\ref{FIG_ch2_indirect}b) considers continuous (in time) interactions between $A$ and $C$ as well as between $B$ and $C$.
We note that both scenarios have been presented by Cubitt \emph{et al.}~\cite{cubitt}.

\begin{figure}[h]
\centering
\includegraphics[scale=0.45]{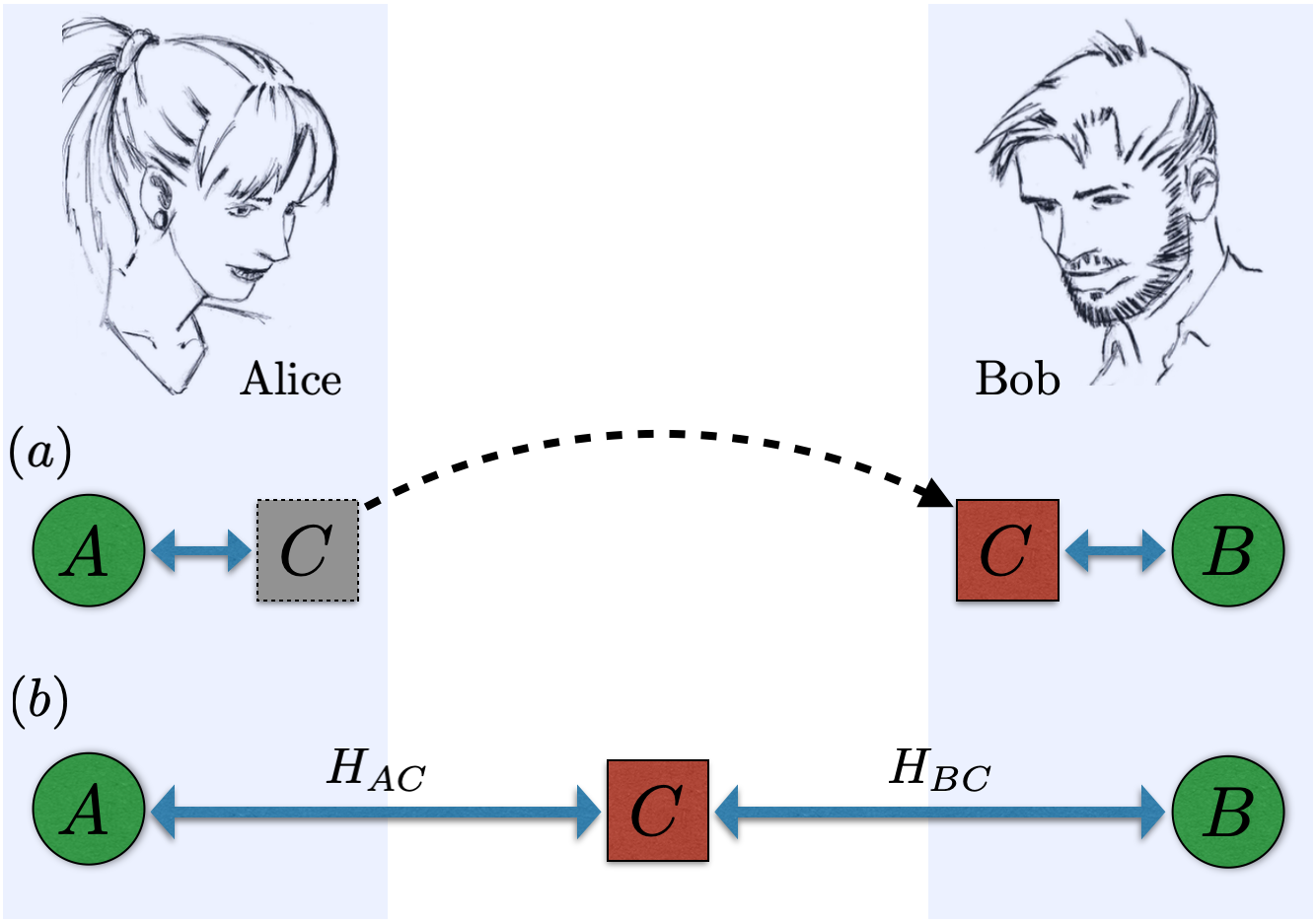} 
\caption{Indirect interaction setting. Quantum particles $A$ and $B$ are in different locations. In discrete interaction scenario (a), an ancillary system $C$ first interacts with $A$, then it is sent to Bob, after which it also interacts with $B$. In scenario (b), the ancillary system is interacting continuously with $A$ and $B$. Note that in both scenarios, $A$ and $B$ are not interacting directly, i.e., there is no Hamiltonian $H_{AB}$.}
\label{FIG_ch2_indirect} 
\end{figure}

\subsubsection{Discrete interactions}

In the discrete interaction scenario, a simple way to create entanglement between $A$ and $B$ is to first entangle $A$ and $C$, after which one can perform entanglement swapping (e.g., see Ref.~\cite{swapping} and its implementations~\cite{exp_swap1,exp_swap2}) on $B$ and $C$, resulting in entanglement between the principal objects.
This scenario requires $C$ to be entangled in the process.
Interestingly, the distribution of entanglement is also possible via a separable ancillary system~\cite{cubitt}.
The results of Cubitt \emph{et al.} have attracted attention, resulting in entanglement distribution proposals~\cite{mista1,mista2,mista3,akay}.
The distribution via separable mediators has also been verified experimentally~\cite{exp0,edssexp1,edssexp2,edssexp3}.
Additionally, Refs.~\cite{bounds1, bounds2} have shown that the entanglement gain in this scenario is bounded by another form of non-classical correlation, known as quantum discord, between the principal particles $AB$ and the mediator $C$.
It demonstrates that the change of quantum entanglement as a result of relocating the object $C$ from Alice's lab to Bob's lab is bounded by the quantum discord carried by $C$. 
This way, some form of quantum correlation is required to distribute entanglement.
We also note the distribution in the presence of noise with the use of different entanglement quantifiers~\cite{streltsov2015progress,excessive}.

\subsubsection{Continuous interactions}

On the other hand, the continuous interaction scenario assumes that the total energy of the objects is described by Hamiltonian of the form $H=H_{AC}+H_{BC}$.
We note $A$ and $B$ do not interact directly, and therefore there is no $H_{AB}$ component.
More explicitly, the total Hamiltonian can be written as $H=\sum_j H_A^j\otimes H_{C_1}^j+\sum_k H_B^k\otimes H_{C_2}^k$ (local Hamiltonian terms are included).
Note that the subscripts $1$ and $2$ are used to indicate that the terms $H_{C_1}^j$ and $H_{C_2}^k$ can be different in general.
For most parts of this thesis, we allow each object to interact with its own local environment, with the corresponding dynamics following the Lindblad master equation.
This way, the continuous interaction scenario corresponds to ample situations in nature, some of which will be considered in this thesis.
Yet, this entanglement distribution method is not as well understood as its counterpart, the discrete interaction scenario.

\section{Objectives}
As first shown by Cubitt \emph{et al.}, the continuous scenario allows entanglement distribution via a separable mediator~\cite{cubitt}. 
Investigating further questions in this direction is therefore the goal of this thesis.
In particular, one seeks answers to the following:
\begin{enumerate}
\item The requirements for entanglement distribution (if not entangled mediator). 
\item The amount of distributed entanglement and the relevant factors.
\item The speed limit of distributing quantum entanglement.
\end{enumerate}
In addition to answering these questions, we will also present the implications of our study and explore applications that can benefit from it.

\section{Thesis organisation}
This thesis is organised as follows. 
Firstly, Chapter \ref{Chapter1} covers preliminaries that are useful for the core of the thesis. 
Readers who are interested mainly in new results can proceed to Chapters \ref{Chapter_revealing}--\ref{Chapter7}. 
Chapter \ref{Chapter_revealing} presents our results on the resource necessary for distribution of quantum entanglement via continuous interactions. 
An important and general application will be constructed that is aimed at indirect revelation of non-classical properties of unknown objects. 
Another resource for entanglement distribution will be presented in Chapter \ref{Chapter_detecting}. 
More specifically, we link the amount of distributed correlations to one of the most elementary non-classical features of quantum observables, that is non-commutativity.  
Chapter \ref{Chapter_speedup} considers the speed of entanglement distribution for direct and indirect interaction cases. 
We present conditions for the indirect interaction setting to saturate the maximum entangling speed that is derived from the quantum speed limit. 
Major applications of our theorems will be presented in Chapter \ref{Chapter_gravity} and \ref{Chapter_probing}. 
Chapter \ref{Chapter_gravity} addresses one of the most unsettled quests in physics, that is showcasing quantum nature of gravitational interactions. 
In particular, we propose experimental setups where gravity can mediate observable quantum entanglement between two otherwise non-interacting masses. 
Chapter \ref{Chapter_probing} covers the problem of detecting quantum features in living organisms by proposing an experimental setup where one can probe non-classical properties of photosynthetic organisms, such as green sulphur bacteria.
Finally, other applications will be presented in Chapter \ref{Chapter7}.

\afterpage{\blankpage} 

\chapter{Quantum objects, their dynamics, and properties} 
\label{Chapter1} 

\lhead{Chapter 1. \emph{Introduction}} 

\emph{This chapter provides preliminaries that will be useful for the remainder of this thesis.
First, I describe quantum objects and their states, both for discrete systems and continuous variable systems.
Additionally, quantum systems evolve in time due to interactions between them or simply as a result of local energy. 
In this regard, I provide the dynamics of quantum objects both for closed and open (interactions with environment) systems.
Finally, I proceed to review important properties that quantum objects can possess. 
This includes quantities such as von Neumann entropy, purity, and fidelity, as well as correlations such as quantum entanglement, quantum discord, and mutual information.
More in-depth analysis can be found in Refs.~\cite{nielsenchuang,RevModPhys.81.865,discord4,weedbrook2012gaussian}.
I will also introduce the notion of classicality and non-classicality that will be used throughout the thesis.
}

\clearpage 

\section{Quantum objects}

Within the framework of quantum mechanics, a physical object is fully described by its wave function (or, more abstractly a state $\ket{\psi}$). 
Knowing the state of the object, one can then calculate expectation values of observables such as position, momentum, and others.
In this section, we focus on describing the state of objects for two cases.
One is when an object ``lives" in a finite dimensional Hilbert space, such as the spin state of an electron, while the other is when the description requires infinite dimension, such as an atom in a harmonic potential. 
Indeed, the latter has infinitely many energy levels.
Finally, I will also cover the evolution of these objects both where they are isolated and having interactions with environment.

\subsection{Discrete systems}
For quantum systems that only require finite-dimensional space, their pure state is superposition 
\begin{equation}\label{EQ_purestate}
\ket{\psi}=\sum_{n=0}^{d-1} \lambda_n \ket{n},
\end{equation}
where, e.g., $\{\ket{n}\}$ correspond to discrete energy levels,\footnote{We will also call them eigenstates and they form an orthonormal basis, i.e., $\langle n|n^{\prime}\rangle=\delta_{nn^{\prime}}$.} $d$ is the dimension of the object's Hilbert space, and $\lambda_n$ is a complex coefficient. 
As a simple example, consider a two-dimensional system such as the spin state of an electron $\ket{\psi_{\text{s}}}=\lambda_0 \ket{0}+\lambda_1 \ket{1}$, where $\ket{0}$ ($\ket{1}$) denotes spin up (down).
Note that normalisation of the state requires $|\lambda_0|^2+|\lambda_1|^2=1$, which can be interpreted as the electron having spin up with probability $|\lambda_0|^2$ and spin down with probability $|\lambda_1|^2$.

A quantum system can also be in a mixture of states (simply, mixed state or density matrix) 
\begin{equation}\label{EQ_mixedstate}
\rho=\sum_m p_m\: \ket{\psi_m}\bra{\psi_m},
\end{equation}
where $\ket{\psi_m}$ is any pure state.
The normalisation condition requires $\sum_m p_m=1$ and $p_m\ge0$ gives the probability of observing the state $\ket{\psi_m}$.
This requirement is referred to as positive semidefiniteness, i.e., a physical state $\rho$ has to have real non-negative eigenvalues with the sum equal to unity.
It turns out that any such $\rho$ has a spectral decomposition 
\begin{equation}\label{EQ_mixedstateeig}
\rho=\sum_{n=0}^{d-1} p_n |n\rangle \langle n|,
\end{equation}
where $\{|n\rangle \}$ form a basis in which $\rho$ is diagonal.
Note that in a matrix representation, the mixed state in Eq.~(\ref{EQ_mixedstateeig}) has dimension $d\times d$, while the pure state $\ket{\psi}$ in Eq.~(\ref{EQ_purestate}) is a vector with dimension $d$. 
In this thesis, we will use $\ket{\psi}$ or $\rho$ to describe the state of quantum objects, from which one can calculate important quantities. 
Some examples will be given later in this chapter.

\subsubsection{Two-level systems}
A special case where $d=2$ (simply, a qubit) will be used often in this thesis.
Now we provide useful notations for the states and operators.
\begin{equation}
\ket{0}=\left( \begin{array}{c} 1\\ 0\end{array}\right),\:\:\: \ket{1}=\left( \begin{array}{c} 0\\ 1\end{array}\right),\:\:\:\ket{\pm}=\frac{1}{\sqrt{2}}\left( \begin{array}{c} 1\\ \pm 1\end{array}\right),\:\:\:\ket{y_{\pm}}=\frac{1}{\sqrt{2}}\left( \begin{array}{c} 1\\ \pm i\end{array}\right).
\end{equation}
Note that $\{\ket{0},\ket{1}\}$, $\{\ket{\pm}\}$, and $\{\ket{y_{\pm}}\}$ each form an eigenbasis of the Pauli matrix
\begin{equation}
\sigma^z=\left( \begin{array}{cc} 1&0\\ 0&-1\end{array}\right),\:\:\: \sigma^x=\left( \begin{array}{cc} 0&1\\ 1&0\end{array}\right),\:\:\: \sigma^y=\left( \begin{array}{cc} 0&-i\\ i&0\end{array}\right),
\end{equation}
respectively. 
Additionally, we also use the raising and lowering operators, respectively written as $\sigma^+=\ket{1}\bra{0}$ and $\sigma^-=\ket{0}\bra{1}$.

\subsection{Continuous variable systems}

Technically, infinite dimensional (simply, continuous variable) systems can be described using the state in (\ref{EQ_purestate}) or (\ref{EQ_mixedstateeig}) with infinite sums. 
The calculations of observables follow accordingly.
However, for continuous variable (CV) Gaussian systems, which belong to a special class of states mostly considered for applications later in this thesis, the description of the corresponding object can be made simpler.
Indeed, a complete specification of the object can be incorporated in the covariance matrix (CM), denoted by $V$, whose components are defined as
\begin{equation}\label{EQ_cmdef}
V_{ij}=\frac{\langle u_iu_j+u_ju_i\rangle}{2} -\langle u_i\rangle \langle u_j\rangle,
\end{equation}
where $u=(\bm{X},\bm{P})^T$ is a quadrature vector\footnote{Note that we adopt the term quadrature vector from literature. However, one should note that $u$ is not a vector in the normal algebraic sense. For instance, the cross product $u\times u \ne 0$, which is a consequence of the non-commuting relation between $\bm{X}$ and $\bm{P}$.} with $\bm{X}$ and $\bm{P}$ being dimensionless position and momentum quadratures.\footnote{In this example, $V$ is a single-mode covariance matrix. For an $N$-mode CM, one writes the quadrature vector $u=(\bm{X}_1,\bm{P}_1,\bm{X}_2,\bm{P}_2,\cdots,\bm{X}_N,\bm{P}_N)^T$.} 

In this thesis, we use the following relations 
\begin{equation}\label{EQ_relation}
\bm{X}=\frac{a+a^{\dagger}}{\sqrt{2}}, \:\:\:\:\bm{P}=\frac{a-a^{\dagger}}{\sqrt{2}i},
\end{equation}
where $a$ ($a^{\dagger}$) is the annihilation (creation) operator of the corresponding infinite dimensional object.
We note that an $N$-mode covariance matrix has $N$ symplectic eigenvalues $\{\nu_k\}_{k=1}^N$, which can be calculated from the spectrum of a matrix $|i\Omega_N V|$, where
\begin{equation}
\Omega_{N}=\bigoplus^{N}_{k=1} \left( \begin{array}{cc} 0&1\\ -1 &0\end{array}\right),
\end{equation}
is known as symplectic form. 
A physical covariance matrix has 
\begin{equation}\label{EQ_cmcon}
\nu_k\ge1/2
\end{equation}
for all $k=1,2,\cdots,N$.
This is essentially a consequence of the uncertainty relation~\cite{weedbrook2012gaussian}.\footnote{Note that some literature uses the relations in (\ref{EQ_relation}) without the factor $\sqrt{2}$, and therefore the corresponding condition in (\ref{EQ_cmcon}) becomes $\nu_k\ge1$.}

We now provide some examples of states and the corresponding CM that will be useful in this thesis. 
This includes coherent, thermal, and squeezed thermal states of a harmonic oscillator
\begin{equation}
\ket{\alpha}= e^{-\frac{|\alpha|^2}{2}}\sum_{j=0}^\infty \frac{\alpha^j}{\sqrt{j!}}\ket{j},\:\:\:\:
\rho_{\text{th}}=\frac{1}{\bar n+1}\sum_{j=0}^{\infty}\left( \frac{\bar n}{\bar n+1}\right)^j \ket{j}\bra{j},\:\:\:\:
\rho_{\text{sq}}= S\rho_{\text{th}}S^{\dagger},
\end{equation}
where the mean phonon number $\bar n=(\exp(\beta)-1)^{-1}$ with $\beta=\hbar \omega/k_BT$, $\hbar \omega$ the energy unit of the oscillator, and $T$ the temperature.
Also, $|j\rangle$ denotes the $j$th eigenstate of the operator $a^{\dag}a$.
The single-mode squeezing operator is given by
\begin{equation}
S=\exp\left(\frac{\eta^{*}}{2}aa-\frac{\eta}{2} {a^{\dagger}a^{\dagger}}\right),
\end{equation} 
where $\eta=s\exp\left(i\theta \right)$ is a complex number with $s$ and $\theta$ being the squeezing strength and angle respectively.
The corresponding CM is obtained by computing the expectation value of observables in Eq.~(\ref{EQ_cmdef}) with respect to the state of the object.
One can confirm
\begin{equation}
V_{\alpha}=\frac{1}{2}\openone,\:\:\:\:\: V_{\text{th}}=\frac{2\bar n+1}{2}\openone,\:\:\:\:\:V_{\text{sq}}=\frac{2\bar n+1}{2} \left( \begin{array}{cc} e^{2s}&0\\ 0&e^{-2s}\end{array}\right),
\end{equation}
where $\openone$ here denotes a $2\times 2$ identity matrix and we have assumed the case $\theta=0$.\footnote{This corresponds to squeezing the momentum (position) quadrature for $s>0$ ($s<0$).}
Moreover, the symplectic eigenvalues are given by $1/2$, $(2\bar n+1)/2$, and $(2\bar n+1)/2$ respectively.
Note that all the covariance matrices above are physical, i.e., they satisfy the condition in Eq.~(\ref{EQ_cmcon}).

\subsection{Quantum dynamics}
Here we describe the evolution of quantum objects, i.e., how their state $\ket{\psi}$ (or, in general $\rho$) changes in time. 

\subsubsection{Closed systems}
For an isolated system, the dynamical equation reads
\begin{equation}\label{EQ_udynamics}
\ket{\psi(t)}=U\ket{\psi(0)},\:\:\:\:\: \rho(t)=U\rho(0)U^{\dagger},
\end{equation}
for pure and mixed states respectively, where $U=\exp(-iHt/\hbar)$ is a unitary operator and $H$ is the Hamiltonian that represents the corresponding total energy of the object.\footnote{We have also assumed that the Hamiltonian here is time-independent which may not apply in general. In this thesis, we always try to move into a frame where the Hamiltonian is time-independent (e.g., by changing to a rotating frame) and, wherever convenient, solve the problem in Heisenberg picture.}
In general, the Hamiltonian can be composed of many terms, i.e., $H=\sum_j H_j$.
Note also that during the evolution, we have $\text{tr}(\rho(t))=\text{tr}(U\rho(0)U^{\dagger})=\text{tr}(\rho(0)U^{\dagger}U)=\text{tr}(\rho(0))=1$, where we used the cyclic property of trace and unitary property $U^{\dagger}U=UU^{\dagger}=\openone$.

Let us now provide a tool for the expression of the unitary operator that will be crucial for some parts of this thesis.
In particular, we consider a unitary operator with Hamiltonian $H=H_1+H_2$. 
We utilise the Trotter expansion (see Appendix~\ref{A_trotter} for a simple derivation) and write the operator as
\begin{equation}
U=\exp\left(-\frac{i(H_1+H_2)t}{\hbar}\right)=\lim_{n\rightarrow \infty} \left( \exp\left(-\frac{iH_1\Delta t}{\hbar}\right)\exp\left(-\frac{iH_2\Delta t}{\hbar}\right)\right)^n,
\end{equation}
where $\Delta t=t/n$.
Hence, one can think of the evolution as if it consisted of sequences of two unitary operations with Hamiltonian $H_2$ and $H_1$ (or, in reverse), each for a time $\Delta t$.
For a special case, in which the Hamiltonians commute,  $[H_1,H_2]=0$, the corresponding operator reads
\begin{equation}
U=\exp\left(-\frac{iH_1t}{\hbar}\right)\exp\left(-\frac{iH_2t}{\hbar}\right).
\end{equation}

\subsubsection{Open systems}

For quantum systems that are open, i.e., having interactions with their environment, we will consider the evolution of the state following the Lindblad equation
\begin{equation}\label{EQ_lindblad}
\frac{d\rho(t)}{d t}=-\frac{i}{\hbar}[H,\rho(t)]+L \rho(t),
\end{equation}
where $L\rho(t)\equiv \sum_k Q^k\rho(t) Q^{k\dag}-\frac{1}{2}\{Q^{k\dag}Q^k,\rho(t)\}$, $H$ is the Hamiltonian of the object, and $Q^k$ describes the operation on the object as a result of interactions with its environment.\footnote{For consistency, we have used the superscript $k$ to denote the variant of the operator $Q$ and not the power.}
Note that, for simplicity, the strength of the interaction is absorbed in $Q^k$ such that the term $L \rho(t)$ has the unit frequency.
For the case where the strength goes to zero, the 2nd term on the RHS of Eq.~(\ref{EQ_lindblad}) vanishes and the dynamics becomes unitary as in Eq.~(\ref{EQ_udynamics}).
Also, by using the cyclic property of trace, it is clear that $\text{tr}([H,\rho(t)])$ and $\text{tr}(L\rho(t))$ both equal zero, and consequently, the Lindblad equation is trace-preserving.

\section{Properties of quantum states}

Having known the state of quantum objects and its evolution in time, we now proceed to some quantities that can be computed from it.
This includes von Neumann entropy, purity, and fidelity between quantum states.
We will focus on definitions and list of properties that will be utilised later in this thesis.

\subsection{von Neumann entropy}

The amount of disorder in a quantum state can be quantified using
\begin{equation}\label{EQ_vnent}
S(\rho)=-\text{tr}(\rho \log_2(\rho)).\footnote{In this thesis, we shall call this the von Neumann entropy as it is commonly done in literature.}
\end{equation}

For a general mixed state, e.g., the one in Eq.~(\ref{EQ_mixedstateeig}), the von Neumann entropy (simply, entropy) is given by $-\sum_n p_n\log_2(p_n)$.
As $\{p_n\}$ form a probability distribution, the entropy of a quantum object is non-negative.
Some other useful properties of von Neumann entropy are listed below.
\begin{enumerate}
\item $S(\rho)=0$ if and only if the state $\rho$ is pure.
\item For a $d$-dimensional object, the maximum entropy is $\log_2(d)$.
\item $S(U\rho U^{\dagger})=S(\rho)$ (invariant under unitary operations).
\item For two systems, $S(\rho_X\otimes \rho_Y)= S(\rho_X)+S(\rho_Y)$ (additive).
\item For three systems, $S(\rho_{XYZ})+S(\rho_{Y})\le S(\rho_{XY})+S(\rho_{YZ})$ (strongly subadditive).
\end{enumerate}

\subsection{Purity}

Some quantum states are more mixed than the others. 
In order to quantify the purity of a state $\rho$, we use 
\begin{equation}
\mathcal{P}(\rho)=\text{tr}(\rho^2).
\end{equation}
For a general state with dimension $d$, it acquires values $1/d\le \mathcal{P} \le 1$.
Indeed, $\mathcal{P}=1$ for a pure state $\rho=\ket{\psi}\bra{\psi}$ and $\mathcal{P}=1/d$ for a maximally mixed state $\rho=\openone/d$.

We note that in a unitary dynamics, one has
\begin{equation}
\text{tr}(\rho(t)^2)=\text{tr}(U\rho(0)U^{\dagger}U\rho(0)U^{\dagger})=\text{tr}(\rho(0)^2),
\end{equation}
with the help of the cyclic property of trace and the unitary property.
Therefore, unitary dynamics is purity-preserving.
While the purity of the state $\rho$ is invariant, the purity of its subsystems can change.
We will see below some examples showing this case.

As the von Neumann entropy characterises the disorder in the system, one can also quantify the purity as $1-S(\rho)/\log_2(d)$.
In this case, its value ranges from a minimum of $0$ to a maximum of unity.

\subsection{Fidelity}

In order to assess how close a quantum state is from another, one of the quantifiers we use is the Uhlmann \emph{root} fidelity
\begin{equation}\label{EQ_rootfidel}
\mathcal{F}(\rho,\sigma)=\mbox{tr}\left( \sqrt{\sqrt{\rho}\sigma\sqrt{\rho}} \right),
\end{equation}
where $\rho$ and $\sigma$ are two density matrices having the same dimension~\cite{fidelity1,fidelity2}.
In a way, the fidelity above is related to a probability of identifying one state as the other.
Some properties of fidelity are listed below
\begin{enumerate}
\item $\mathcal{F}(\rho,\sigma)=\mathcal{F}(\sigma,\rho)$ (symmetrical).
\item Its value follows $0\le \mathcal{F}(\rho,\sigma)\le 1$ and $\mathcal{F}(\rho,\rho)=1$.
\item For pure states $\rho=\ket{\psi}\bra{\psi}$ and $\sigma=\ket{\phi}\bra{\phi}$, $\mathcal{F}(\rho,\sigma)=|\langle \psi|\phi \rangle|$.
\item $\mathcal{F}(\Lambda [\rho],\Lambda[\sigma])\ge \mathcal{F}(\rho,\sigma)$ for any trace-preserving completely positive map $\Lambda$.
\end{enumerate}

One might also consider a measure introduced by Helstrom that shows the best success probability to distinguish two quantum states, i.e., 
\begin{equation}\label{EQ_helstrom}
p_{\text{best}}=\frac{1}{2}\left(1+D_{\text{tr}}(\rho,\sigma)\right),
\end{equation}
where $D_{\text{tr}}(\rho,\sigma)=||\rho-\sigma||_1/2$ is known as the trace distance and $||\cdot||_1$ is trace norm \cite{helstrom1969}. 
The probability in Eq.~(\ref{EQ_helstrom}) is then related to the fidelity in Eq.~(\ref{EQ_rootfidel}) through the following relation \cite{nielsenchuang}
\begin{equation}
1-\mathcal{F}(\rho,\sigma)\le D_{\text{tr}}(\rho,\sigma)\le \sqrt{(1-\mathcal{F}(\rho,\sigma)^2)}.
\end{equation}

\section{Correlations between quantum states}
Two or more quantum objects can have correlations between them that are beneficial for quantum information tasks.
These correlations can be in many forms ranging from purely classical to having quantum nature.
This section reviews types of correlations used in this thesis.
In particular, this includes quantum entanglement, quantum discord, and mutual information. 
Important properties of the quantifiers that will be useful for this thesis will be mentioned.
We will also introduce the notion of classicality and non-classicality.

\subsection{Quantum entanglement}

In this thesis, we will only consider bipartite entanglement, i.e., between party $X$ and $Y$.\footnote{In general, each party can consist of more than one physical object.}
For pure states, separable (not entangled) states are those which can be written in a product form
\begin{equation}\label{EQ_puresep}
\ket{\psi}=\ket{\psi_X}\otimes \ket{\psi_Y},
\end{equation}
where the subscripts denote the corresponding party.
Therefore, if the description of the whole system $XY$ does not follow Eq.~(\ref{EQ_puresep}), the parties $X$ and $Y$ are entangled. 
For mixed states, separability is defined as a mixture of product states
\begin{equation}\label{EQ_mixedsep}
\rho_{\text{sep}}=\sum_j p_j\: \rho_X^j\otimes \rho_Y^j,
\end{equation}
where $\{p_j\}$ form a probability distribution.
States which are not of this form therefore possess quantum entanglement.
We note an important property of quantum entanglement -- monotonicity, i.e., it cannot be created or increased on average under local operations and classical communication (LOCC).
It is clear that both the states in (\ref{EQ_puresep}) and (\ref{EQ_mixedsep}) can be prepared via LOCC. 

As examples, we list below the Bell states (maximally entangled two-qubit states)
\begin{equation}\label{EQ_bellss}
\ket{\psi_{\pm}}=\frac{1}{\sqrt{2}}\left(\ket{01}\pm\ket{10} \right),\:\:\:\: \ket{\phi_{\pm}}=\frac{1}{\sqrt{2}}\left(\ket{00}\pm\ket{11} \right).
\end{equation} 
Note the simplified notation, e.g., $\ket{0}_X\otimes \ket{1}_Y=\ket{01}$, will be used throughout this thesis.
In general, some states are more entangled than the others, and therefore are more useful for performing tasks.
Below we provide some exemplary entanglement quantifiers.

\subsubsection{Entropy of entanglement}

For pure states, one can quantify the amount of entanglement using the entropy of entanglement, which is defined below.
Schmidt decomposition allows one to write a pure state $\ket{\psi}$ of system $XY$ as
\begin{equation}
\ket{\psi}=\sum_j \lambda_j \ket{x_j}\ket{y_j},
\end{equation}
where $\{\ket{x_j}\}$ and $\{\ket{y_j}\}$ are orthonormal on system $X$ and $Y$ respectively.
The state of each party is given by partially tracing out the other.
For example,
\begin{eqnarray}
\rho_X&=&\text{tr}_Y\left( \ket{\psi}\bra{\psi}\right)=\sum_j |\lambda_j|^2 \ket{x_j}\bra{x_j},\nonumber \\
\rho_Y&=&\text{tr}_X\left( \ket{\psi}\bra{\psi}\right)=\sum_j |\lambda_j|^2 \ket{y_j}\bra{y_j}
\end{eqnarray}
The information regarding the correlation between $X$ and $Y$ ($\lambda_j$) is carried in the states $\rho_X$ and $\rho_Y$.
The entropy of entanglement is defined as
\begin{equation}
E_{X:Y}=S(\rho_X)=S(\rho_Y).
\end{equation}

\subsubsection{Relative entropy of entanglement}

By using the so-called distance approach based on relative entropy, one can quantify the amount of entanglement in a state as its distance from the closest separable state.
In particular, we have the relative entropy of entanglement (REE)
\begin{equation}
E_{X:Y}=\inf_{\sigma\in \mathcal{S}} S(\rho||\sigma),
\end{equation}
where the relative entropy $S(\rho||\sigma)=-\text{tr}(\rho \log(\sigma))-S(\rho)$ and $\mathcal{S}$ is a convex set which contains separable states of Eq.~(\ref{EQ_mixedsep})~\cite{vedral1997quantifying}.
Note that for pure states, REE reduces to the entropy of entanglement.
In general, REE is hard to compute as it involves optimisation. 

\subsubsection{Negativity and logarithmic negativity}

For separable states in Eq.~(\ref{EQ_mixedsep}), it has been shown that the eigenvalues are positive after partial transposition with respect to one party~\cite{peres1996}.
This is commonly referred to as the PPT criterion.
If a state is not PPT, one argues that the corresponding negative eigenvalues quantify the amount of entanglement in the system.
These negative eigenvalues show that the corresponding density matrix after partial transposition is not physical.
In particular, one defines a computable quantifier of entanglement known as negativity
\begin{equation}
N_{X:Y}=\frac{||\rho^{T_X}||_1-1}{2},
\end{equation}
where $||\rho^{T_X}||_1$ is the trace norm of the state after partial transposition with respect to party $X$~\cite{zyczkowski1998volume,lee2000entanglement,lee2000partial,negativity}.\footnote{Note that the partial transposition can be done with respect to party $Y$, i.e., it is symmetrical.}

A logarithmic quantifier, which stems from negativity, can be defined as 
\begin{equation}
L_{X:Y}=\log_2(||\rho^{T_X}||_1)=\log_2\left( 2N_{X:Y}+1\right),
\end{equation}
and is known as logarithmic negativity.
In this thesis, we will mostly used the logarithmic negativity to quantify quantum entanglement between CV systems.
In this case, the expression is given by
\begin{equation}
L_{X:Y}=\max \big \{0,-\log_2{(2\tilde \nu_{\min})}\big \}, 
\end{equation}
where $\tilde \nu_{\min}$ is the minimum symplectic eigenvalue after partial transposition of the state (or covariance matrix) with respect to one party~\cite{negativity, adesso2004extremal}.
Similar to the PPT criterion, entanglement is shown by the violation of the physicality requirement of the covariance matrix, i.e., $\tilde \nu_{\min}<1/2$.

\subsection{Quantum discord}

A quantum state can easily be perturbed by measurements. 
Classical states are distinguishable, and therefore should be represented by orthogonal states.
For a bipartite system, the so-called quantum-classical (QC) state is defined as
\begin{equation}\label{EQ_qcstate}
\rho_{\text{qc}}=\sum_j p_j \:\rho_X^j \otimes \ket{y_j}\bra{y_j},
\end{equation}
where $\{\ket{y_j}\}$ form an orthonormal basis.
We note that there exists a von Neumann measurement on the side of $Y$ that does not perturb the whole system, i.e., $\sum_j \Pi_Y^j \rho \Pi_Y^j=\rho$, where $\Pi_Y^j=\ket{y_j}\bra{y_j}$ is a projector and $\sum_j \Pi_Y^j=\openone_Y$.\footnote{To clarify, this is a \emph{non-selective} von Neumann measurement and will be referred to as simply von Neumann measurement henceforward.}
Other states will be perturbed by any von Neumann measurement, characterising the quantumness in the correlation.
Quantum discord is a non-classical correlation of states that cannot be written as in Eq.~(\ref{EQ_qcstate})~\cite{discord1,discord2}.
We note that entanglement is a stronger type of quantum correlation in the following sense.
As any quantum-classical state in Eq.~(\ref{EQ_qcstate}) is a type of separable state in Eq.~(\ref{EQ_mixedsep}), all entangled states contain quantum discord, but the reverse is not always true.
As an example, consider the state
\begin{equation}\label{EQ_dexp}
\rho=\frac{1}{2}\big(\ket{0}\bra{0}\otimes \ket{0}\bra{0}+\ket{1}\bra{1}\otimes \ket{+}\bra{+} \big).
\end{equation}
This state is clearly separable, but is discorded since it cannot be written as (\ref{EQ_qcstate}) (with the orthonormal basis on the side of system $Y$).
One can see that by exchanging the system $X\leftrightarrow Y$ in Eq.~(\ref{EQ_dexp}), the resulting state is of the form (\ref{EQ_qcstate}) and therefore has zero discord. 
This emphasises that discord is not symmetrical.



\subsubsection{Relative entropy of discord}

Based on two expressions of mutual information that are equal in the classical regime but give different values in the quantum regime, a quantifier of quantum discord was first presented as the difference~\cite{discord2}.
In this thesis, however, we will use another form of quantifier using the distance approach based on relative entropy.
In particular, quantum discord is a distance from a given state $\rho$ to the closest quantum-classical state of Eq.~(\ref{EQ_qcstate}), i.e.,
\begin{equation}
D_{X|Y}=\inf_{\sigma \in \mathcal{C}}S(\rho||\sigma),
\end{equation}
where $\mathcal{C}$ is a non-convex set containing the quantum-classical states.
This quantifier is known as the relative entropy of discord (RED)~\cite{modi2010unified} or the one-way quantum deficit~\cite{deficit}.
Note that one can also rewrite the expression of discord as an optimisation over von Neumann measurement on the side of one party, i.e., $D_{X|Y}=\inf_{\{\Pi_Y^j\}}S(\rho||\sum_j\Pi_Y^j\rho \Pi_Y^j)$.
Consequently, this means that for the states in (\ref{EQ_qcstate}), there is always a measurement $\{\Pi_Y^j=\ket{y_j}\bra{y_j}\}$ (not perturbing the system) that minimises the RED.
It is also apparent that RED is not a symmetrical quantity, i.e., in general, $D_{X|Y}\ne D_{Y|X}$.
This can be seen in the state of Eq.(\ref{EQ_dexp}), where $D_{X|Y}$ is nonzero while $D_{Y|X}$ equals zero.
For pure states, RED reduces to the entropy of entanglement.

\subsubsection{The flags condition}

Here we note a useful property for the states of Eq.~(\ref{EQ_qcstate}).
Let us consider three quantum objects $A$, $B$, and $C$ with the state written as
\begin{equation}
\rho=\sum_j p_j\: \rho_{AB}^j \otimes \ket{c_j}\bra{c_j}.
\end{equation}
The entanglement of this quantum-classical state follows
\begin{equation}
E_{A:BC}(\rho)=\sum_j p_j\: E_{A:B}(\rho_{AB}^j).
\end{equation}
This equation is known as the flags condition~\cite{flags}. 
Furthermore, if one uses the REE as entanglement quantifier, the closest separable state to $\rho$ reads
\begin{equation}
\sigma=\sum_j p_j\: \sigma_{AB}^j\otimes \ket{c_j}\bra{c_j},
\end{equation}
where $\sigma_{AB}^j$ is the closest separable state to $\rho_{AB}^j$~\cite{flags}.

\subsubsection{Classicality \& non-classicality}

This section is dedicated to setting the notion of classicality and non-classicality that will be used throughout this thesis.
These definitions are correlation-based in the sense that they are related to the type of correlation between the object under consideration and other objects.

We will use the term classical for an object if there is a von Neumann measurement on it that does not perturb the whole (bigger) system. 
As an example, we say that system $Y$ is classical for the QC states in Eq.~(\ref{EQ_qcstate}), which have zero discord.
Note that this way, classicality is not a property of an object alone, rather it is a property of a bigger system, with the object as one of its components. 

Note that if we were to consider only the object in question, i.e., by tracing out the remaining others, one can always find a spectral decomposition of the corresponding density matrix such that it is expressed as a sum of weighted orthogonal (a signature of classicality) projectors.
Therefore, a single object (regardless of having correlations with other objects) is always classical at a particular time in a particular \emph{reference basis}.
Should this basis change in time, the system is developing coherence, which is quantum in nature \cite{roqcoherence}. 
This is simply shown by nonzero off-diagonal elements in the density matrix, as represented in the \emph{reference basis}.

In contrast with classicality, non-classicality is related to discorded states.
Therefore, being ``quantum" or non-classical is associated with the ability to have non-classical correlations with other objects.
Furthermore, for discorded states, e.g., with $D_{X|Y}\ne 0$, one can prepare quantum coherence on the side of $Y$ by only operating on system $X$ \cite{coherence-distillation}.
In this way, our notion of non-classicality can be thought of as a property of an object alone. 
Also, as entangled objects possess quantum discord, they are automatically non-classical.

\subsection{Mutual information}

In addition to non-classical correlations mentioned above, quantum systems can be classically correlated. 
As a measure of total correlation, we use the mutual information
\begin{equation}\label{EQ_minfo}
I_{X:Y}=S(\rho_X)+S(\rho_Y)-S(\rho),
\end{equation}
where $\rho_X$ and $\rho_Y$ are marginals (reduced states) of $\rho$~\cite{groisman2005}.
One can also phrase the mutual information using the distance approach. 
In particular, one writes $I_{X:Y}=\inf_{\sigma \in \mathcal{M}} S(\rho||\sigma)$, where $\mathcal{M}$ is a non-convex set containing product states.
As the closest product state to $\rho$ is simply given by its marginals, the expression simplifies to Eq.~(\ref{EQ_minfo})~\cite{modi2010unified}.
Also note that the mutual information is equal to twice the entropy of entanglement for pure states.

Finally, all considered correlations are summarised in Fig.~\ref{FIG_ch1_uni}.
In this unified view, one can clearly see that $E_{X:Y}\le D_{X|Y} \le I_{X:Y}$.
This is a consequence of the relation for the sets $\mathcal{M}\in \mathcal{C} \in \mathcal{S}$.

We should also mention that the presence of quantum discord, e.g., $D_{X|Y}\ne 0$, means that the total correlations between systems $X$ and $Y$, i.e., mutual information, cannot all be accessed by measuring system $Y$ alone \cite{discord2}.
In the context of this thesis, we will show how to reveal quantum discord without performing measurements on system $Y$, which is one of the main results in Chapter~\ref{Chapter_revealing}.

\begin{figure}[h]
\centering
\includegraphics[scale=0.6]{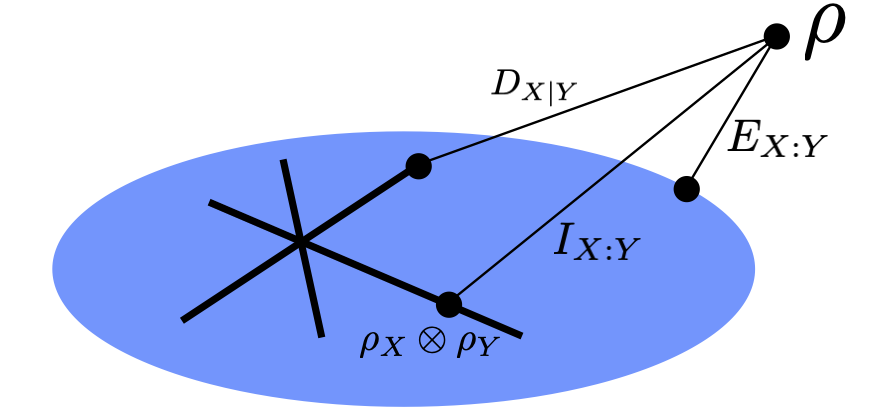} 
\caption{Geometrical illustration of correlations. The correlation quantifiers shown include relative entropy of entanglement $E_{X:Y}$, relative entropy of discord $D_{X|Y}$, and mutual information $I_{X:Y}$. The blue shaded area represents the convex set of separable states and the solid lines represent the non-convex set of quantum-classical states that also consist of product states, e.g., $\rho_X\otimes \rho_Y$.}
\label{FIG_ch1_uni} 
\end{figure}

\subsubsection{Conditional entropy}

Conditional entropy is a useful quantity that we will utilise later in this thesis. 
It is defined as 
\begin{equation}
S_{X|Y}=S(\rho)-S(\rho_Y).
\end{equation}
We note that for separable states in Eq.~(\ref{EQ_mixedsep}), the conditional entropy follows $S_{X|Y},S_{Y|X}\ge 0$~\cite{edmundfyp}. 
Consequently, the negative value, $S_{X|Y},S_{Y|X}<0$, implies quantum entanglement between party $X$ and party $Y$.
This way, one notices a difference between the classical regime where the conditional entropy is always non-negative and the quantum regime where it can be negative. 
For example, any pair of maximally entangled qubits in Eq.~(\ref{EQ_bellss}) gives $S_{X|Y}=S_{Y|X}=-1$.



\chapter{Revealing non-classicality of inaccessible objects} 

\label{Chapter_revealing} 

\lhead{Chapter 3. \emph{Revealing non-classicality of inaccessible objects}} 
\emph{In this chapter, the resource for entanglement gain in a scenario where two principal objects are continuously interacting via a mediating system will be investigated.\footnote{Parts of this chapter are reproduced from our published article of Ref.~\cite{krisnanda2017}, $\copyright$ [2019] American Physical Society. Where applicable, changes made will be indicated.}
I begin with an application-related motivation of showcasing quantum features of unknown objects that are not accessible by direct experimentation and a construction of a general scenario that will be considered throughout this chapter. 
Next, by taking non-correlated systems as the initial condition, I will show that quantum discord between the mediator and the principal objects is a necessary requirement.
On the other hand, an interesting scenario will be presented where the entanglement is distributed via a classical mediator, i.e., zero discord. 
This scheme, however, needs some correlation already present in the initial state.
By utilising this knowledge, I will devise a detection method capable of revealing a quantum property of an inaccessible object, as quantified by the quantum discord.
Our protocols use minimalistic assumptions: no information is required regarding the dimensionality of the objects involved, the initial state, and the explicit form of the interactions.
Also, all the objects can be open systems and no measurements are applied on the inaccessible object.
}

\clearpage

\section{Motivation and objectives}

What should be known about an inaccessible object to conclude that it is ``not classical"? 
In this thesis, inspired by quantum communication scenarios, I show the verification that such object can be used to increase quantum entanglement between remote probing particles (that individually interact with it) is a sufficient indication.

Specifically, we prove that such gain in quantum entanglement is only possible if, during its evolution, the object shares with the probes quantum correlations in the form of quantum discord~\cite{discord1,discord2,discord3,discord4,discord5}.
In turn, the presence of quantum discord between the probes and the object entails a non-classical feature of the object itself.
According to the definition of discord, two or more subsystems share quantum correlations if there is no von Neumann measurement on one of them that keeps the total state unchanged.
This can only happen when non-orthogonal (indistinguishable) states are involved in the description of the physical configuration of the measured subsystem.
This indistinguishability is the non-classical feature that we aim to detect. 
We formulate analytical criteria revealing such non-classicality based on operations performed only on the probing objects, and without any detailed modelling of the inaccessible object in question.

We emphasise that the non-classicality is revealed under a set of minimal assumptions.
Namely: (i) The object may remain inaccessible at all times, i.e., it needs not be directly measured. 
In particular its quantum state and Hilbert space dimension can remain unknown throughout the whole assessment. 
Our method is thus valid when the object is an elementary system or an arbitrarily complex one;
(ii) The details of the interactions between the object and the probes may also remain unspecified;
(iii) Every party can be open to its own local environment.
These properties make our method applicable to a large number of experimentally relevant situations. 
We demonstrate the revealing power of our criteria in Chapters \ref{Chapter_gravity}, \ref{Chapter_probing}, and \ref{Chapter7} for the detection of quantum property of gravitational interaction, photosynthetic organisms, optomechanical mirror in the membrane-in-the-middle setting, and more.


The general scenario we consider in this chapter is depicted in Fig.~\ref{FIG_re_ABC}.
System $C$ is assumed to be the inaccessible object and to mediate the interaction between two remote probes, labeled $A$ and $B$. Therefore, here, we refer to system $C$ as the {\it mediator}.
It is essential for our method that the probes are not directly coupled and only interact via the mediator. 
This means that the Hamiltonian for the process under scrutiny can be written as $H_{AC} + H_{BC}$, with $H_{JC}$ being the interaction Hamiltonian between the mediator $C$ and probe $J=A,B$.

\begin{figure}[h]
\centering
\includegraphics[scale=0.6]{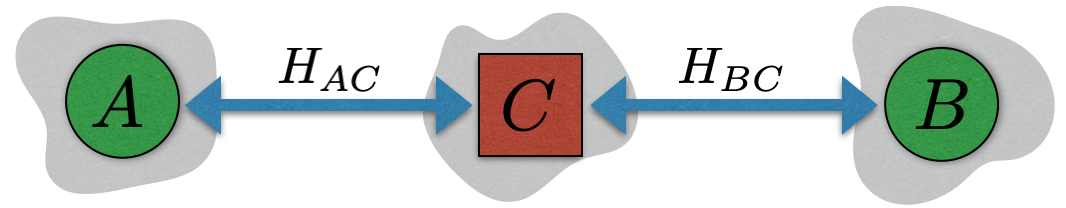} 
\caption{General setting for revealing non-classicality of inaccessible objects. 
An inaccessible object $C$ is mediating interactions between otherwise non-interacting probing objects $A$ and $B$. 
In general, the state of $C$ can be unknown and no measurement can be conducted on it. 
In the text, we present conditions for which entanglement gain between $A$ and $B$ implies non-classicality of $C$, in the form of quantum discord $D_{AB|C}$.
Note that our methods do not assume the dimensionality of any object and the explicit form of the interacting Hamiltonians $H_{AC}+H_{BC}$. 
Also, all the objects can be open to their own local environment (grey shaded area).}
\label{FIG_re_ABC} 
\end{figure}

\section{Instrumental discord}

Our work is developed in the context of entanglement distribution with continuous interactions~\cite{cubitt}.
We first focus on the partition $A:BC$ and demonstrate a result which will be instrumental to design our criteria for the inference of non-classicality of $C$ based on entanglement dynamics in $AB$ only.
Recall that previous studies on the resources allowing for entanglement distribution showed that any three-body density matrix, i.e., the state of $ABC$ at any time $t$ in the present context, satisfies the inequality~\cite{bounds1, bounds2}:
\begin{equation}
\label{ebound}
|E_{A:BC}(t) - E_{AC:B}(t) | \le D_{AB|C}(t).
\end{equation}
Here $E_{X:Y}$ is the relative entropy of entanglement in the partition $X:Y$~\cite{vedral1997quantifying}, and $D_{X|Y}$ is the relative entropy of discord~\cite{modi2010unified}, also known as the one-way quantum deficit~\cite{deficit}.
Note that relative entropy of discord is in general not symmetric, i.e., $D_{X|Y} \ne D_{Y|X}$.
Eq.~(\ref{ebound}) shows that the change in entanglement due to the relocation of $C$ is bounded by the quantum discord carried by it.

Let us start from the simple case where the overall probes-mediator system is closed (which allows us to ignore for now the grey-colored shadows in Fig.~\ref{FIG_re_ABC}).
If the interaction Hamiltonians $H_{JC}$ satisfy $[H_{AC},H_{BC}] = 0$, the evolution operator from the initial time $t=0$ to some finite time $\tau$ is just 
$U = U_{BC} U_{AC}$, where $U_{JC} = \exp{(- i H_{JC} \tau/\hbar)}$.
This situation is equivalent to first interacting $C$ with $A$ and then $C$ with $B$ (or in reversed order).
However, note that the density matrix $\rho' = U_{AC} \rho_0 U_{AC}^\dagger$ obtained by ``evolving'' the initial state through $U_{AC}$ only does not describe the state of the system at $\tau$.
Nevertheless, we now show the relevance of the properties of state $\rho'$ for entanglement gain.

Consider the following forms of Eq.~(\ref{ebound}) written for the initial state $\rho_0$ and the instrumental state $\rho'$, respectively
\begin{eqnarray}
E_{AC:B}(0) - E_{A:BC}(0) &\le& D_{AB|C}(0), \nonumber \\
E_{A:BC}' - E_{AC:B}' &\le& D_{AB|C}'.
\end{eqnarray}
Note that $E_{AC:B}(0) = E_{AC:B}'$, because $U_{AC}$ is a local unitary operator in this partition.
The state at time $\tau$ is given by $\rho_\tau = U_{BC} \rho' U_{BC}^\dagger$, and thus $E_{A:BC}(\tau) = E_{A:BC}'$, this time owing to $U_{BC}$ being local.
Summing the above inequalities we obtain a bound on the entanglement gain
\begin{equation}
\label{ebound_finite}
E_{A:BC}(\tau) - E_{A:BC}(0) \le D_{AB|C}(0) + D_{AB|C}'.
\end{equation}
This opens up the possibility to create entanglement at time $\tau$ without producing discord at both $t=0$ and $\tau$, but rather by utilising non-classicality in the instrumental state.
In other words, the gain of entanglement in $A:BC$ could be mediated by object $C$, which gets non-classically correlated by $U_{AC}$ 
and then decorrelated by $U_{BC}$. Therefore, $C$ is only classically correlated at times $t=0$ and $\tau$. 
We now give a concrete example of this type of entanglement creation.

Consider all the objects to be qubits and take the interaction Hamiltonian as
\begin{equation}
\label{xhamiltonian}
H=(\sigma^x_A\otimes \openone \otimes \sigma^x_C+\openone \otimes \sigma^x_B \otimes \sigma^x_C)\:\hbar \omega ,
\end{equation}
where $\sigma^j~(j=x,y,z)$ is the Pauli-$j$ matrix and $\omega$ is the frequency. 
As initial state we choose the classically correlated state
\begin{eqnarray}
\rho_0 &=& \tfrac{1}{2}| 011 \rangle \langle 011 | + \tfrac{1}{2}| 100 \rangle \langle 100 |,
\end{eqnarray}
where, e.g., $\sigma^z|0\rangle=|0\rangle$. 
One can now readily check that the relative entropy of entanglement $E_{A:BC}$ grows from $0$ to $1$ in the timespan from $T=0$ to $T = \pi/4$,\footnote{For simplicity, we define a dimensionless time variable $T=\omega t$.} whereas discord $D_{AB|C}$ remains zero at these two times. 
The gain is indeed due to non-classical correlations of the instrumental state: applying only $U_{AC}$ for a time $T = \pi/4$ produces discord $D_{AB|C}' = 1$. 
A summary of this dynamics is presented in Fig. \ref{FIG_re_exp1}.

\begin{figure}[h]
\centering
\includegraphics[scale=0.55]{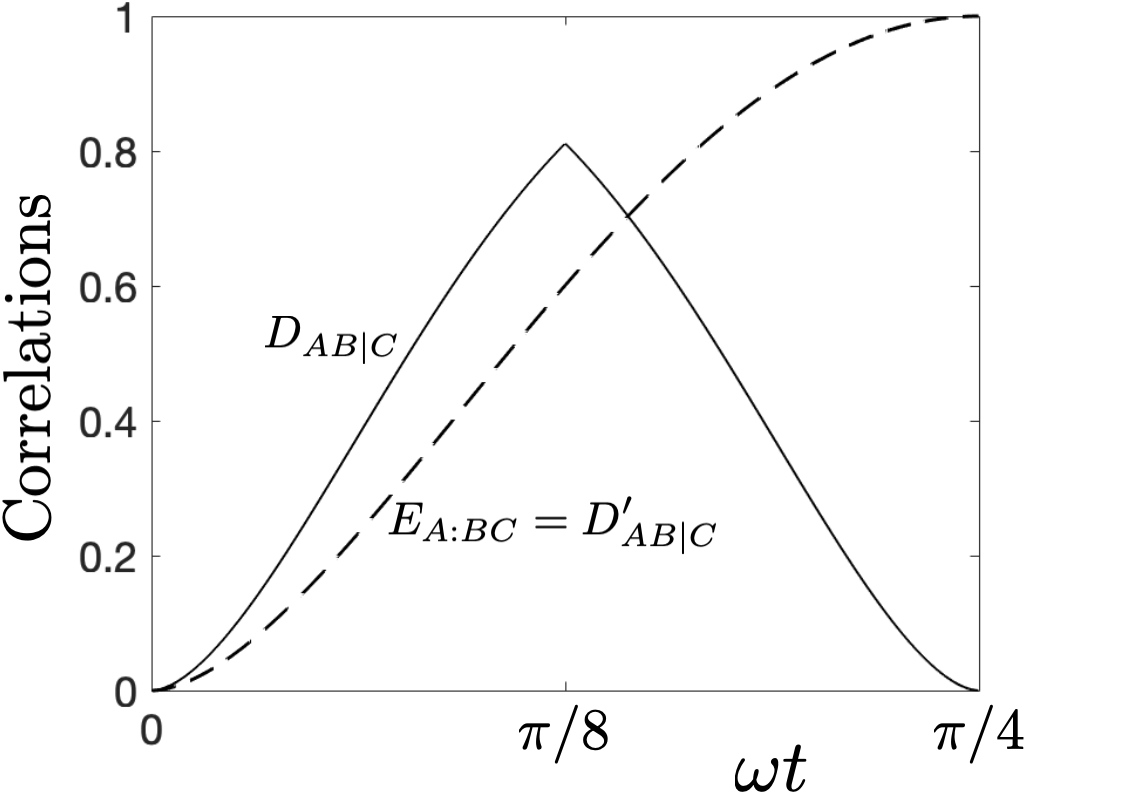} 
\caption{Entanglement distribution utilising instrumental discord. 
The Hamiltonian is taken to be $H=(\sigma^x_A\otimes \openone \otimes \sigma^x_C+\openone \otimes \sigma^x_B \otimes \sigma^x_C)\: \hbar \omega$ with initial state $\tfrac{1}{2}| 011 \rangle \langle 011 | + \tfrac{1}{2}| 100 \rangle \langle 100 |$.
Note the maximum quantum discord is $D_{AB|C}\approx 0.81$ at $\omega t=\pi/8$.
This shows that the entanglement gain is not bounded by the maximum discord.}
\label{FIG_re_exp1} 
\end{figure}

For completeness, I provide the detailed calculations below.
First, we note that the Hamiltonian has commuting components, i.e., $[H_{AC},H_{BC}]=0$ and therefore we write the evolution operator as $U=U_{BC}U_{AC}$.
Next, entanglement in the partition $A:BC$ is calculated as follows
\begin{eqnarray}
E_{A:BC}(T)&=&E_{A:BC}^{\prime} \nonumber \\
&=&\frac{1}{2}E_{A:C}(U_{AC} \ket{01}\bra{01}U_{AC}^{\dagger})+\frac{1}{2}E_{A:C}(U_{AC} \ket{10}\bra{10}U_{AC}^{\dagger}),
\end{eqnarray}
where now $E_{A:BC}^{\prime}=E_{A:BC}(U_{AC} \rho_0 U_{AC}^{\dagger})$ for a time $T$ and the steps are justified as follows.
In the first line, we use the fact that $U_{BC}$ is local in the partition $A:BC$.
Next, by applying $U_{AC}$ operator to the initial state, the state of $B$ is still classical, i.e., represented by an orthogonal basis $\{\ket{0},\ket{1}\}$.
Therefore, we arrive at the second line with the help of the flags condition \cite{flags}.
We note that the states in the second line are pure and one can equate the entanglement $E_{A:C}$ with the corresponding von Neumann entropy of object $A$ or $C$. 
As an example, by using $U_{AC}=\cos(T)\openone-i\sin(T)\sigma_A^x\sigma_C^x$, one has for the first term
\begin{eqnarray}
\text{tr}_B\left(U_{AC} \ket{01}\bra{01}U_{AC}^{\dagger}\right)=\cos^2(T)\ket{0}\bra{0}+\sin^2(T)\ket{1}\bra{1},
\end{eqnarray}
giving an entropy $-\sin^2(T)\log_2(\sin^2(T))-\cos^2(T)\log_2(\cos^2(T))$. 
The calculation for the second term gives the same result, and therefore the entanglement dynamics is simply given by
\begin{equation}
E_{A:BC}(T)=-\sin^2(T)\log_2(\sin^2(T))-\cos^2(T)\log_2(\cos^2(T)).
\end{equation}

Now I will show that the evolution of $D_{AB|C}^{\prime}$ is equal to $E_{A:BC}(T)$ as follows.
First, note that $E_{A:BC}(T)=E_{A:BC}^{\prime}=E_{AB:C}^{\prime}$, where the second equality is from the flags condition as a result of classical $B$.
By showing that the state $\rho^{\prime}$ has the closest separable state $\sigma_{AB:C}^{\prime}$ that is classical on $C$, we confirm that $D_{AB|C}^{\prime}=E_{AB:C}^{\prime}=E_{A:BC}(T)$. 
In order to do this, we recall that the flags condition also states that the closest separable state of a quantum--classical density matrix $\rho^{\prime}$ (classical $B$) is given by 
\begin{equation}
\sigma_{AB:C}^{\prime}=\frac{1}{2}\sigma_{A:C}^{\prime1} \otimes \ket{1}_B\bra{1}+\frac{1}{2}\sigma_{A:C}^{\prime2}\otimes \ket{0}_B\bra{0},
\end{equation}
where $\sigma_{A:C}^{\prime1}$ and $\sigma_{A:C}^{\prime2}$ are the closest separable states respectively to $U_{AC} \ket{01}\bra{01}U_{AC}^{\dagger}$ and $U_{AC} \ket{10}\bra{10}U_{AC}^{\dagger}$. 
Next, we use the fact that the closest separable state of a pure state is given by the density matrix that is de-phased in the local Schmidt basis, to arrive at $\sigma_{A:C}^{\prime1}=\cos^2(T)\ket{01}\bra{01}+\sin^2(T)\ket{10}\bra{10}$ and $\sigma_{A:C}^{\prime2}=\cos^2(T)\ket{10}\bra{10}+\sin^2(T)\ket{01}\bra{01}$.
By putting the expressions together, one realises that $\sigma_{AB:C}^{\prime}$ is also classical on the side of $C$, hence proving the claimed statement above.

Finally, the dynamics of $D_{AB|C}$ is computed numerically. 
We note that the von Neumann measurement on $C$ minimising the discord is given by the basis $\{\ket{0},\ket{1}\}$ for $T=[0,\pi/8]$ and $\{\ket{y_+},\ket{y_-}\}$ (the eigenbasis of $\sigma^y$) for $T=[\pi/8,\pi/4]$.

For general non-commuting interaction Hamiltonians, one can pursue a similar analysis with the help of the Suzuki-Trotter expansion (see Appendix \ref{A_trotter} for a simple proof).
The evolution operator $U$ is now discretised into short-time interactions of $C$ with $A$ and then $B$ (or the reversed order) as
\begin{eqnarray}
U&=& \lim_{n\rightarrow \infty} \left(e^{-i H_{BC}\Delta{t}/\hbar}e^{-i H_{AC}\Delta{t}/\hbar}\right)^n ,
\end{eqnarray}
where $\Delta t = \tau /n \rightarrow 0$. 
Accordingly, Eq.~(\ref{ebound_finite}) holds with $\tau$ replaced by $\Delta t$.
It is now natural to ask if a scenario exists where entanglement could be increased via interactions with a classical $C$ \emph{at all times} by exploiting the discord in the instrumental state.
The example given above is not of this sort because, although we have $D_{AB|C}=0$ at $t=0$ and $\tau$, it is non-zero for $t \in (0,\tau)$.
It turns out that, for short evolution times, the discord of the instrumental state cannot be exploited as the theorem in the following section demonstrates.

\section{Quantum discord is a resource for entanglement gain}

This section focuses on showing that quantum discord $D_{AB|C}$ is necessary for entanglement gain in the partition $A:BC$. 
On the other hand, entanglement in the partition $A:B$ requires that the mediator is correlated with the principal objects (might be in the form of classical correlation).
We begin with the following theorem.

\begin{theorem}\label{TH_revealing}
Consider two quantum objects $A$ and $B$ that are interacting via a mediator $C$, but not with each other.
The entanglement follows 
\begin{equation}
E_{A:BC}(\tau)\le E_{A:BC}(0)
\end{equation}
if the state of the system is quantum--classical in the partition $AB:C$ at all times, i.e., $D_{AB|C}(t)=0$ for $t\in[0,\tau]$.
Note that all objects are allowed to interact with their own local environments.
\end{theorem}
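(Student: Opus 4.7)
The strategy is to combine the bound from Eq.~(\ref{ebound}) with the structural consequences of the hypothesis for the joint dynamics. Since Eq.~(\ref{ebound}) applied at every instant with $D_{AB|C}(t)=0$ forces the equality $E_{A:BC}(t)=E_{AC:B}(t)$ throughout $[0,\tau]$, any variation of one bipartite entanglement is mirrored by the other, and this will be the main leverage.

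The cleanest route is to exploit the flags condition. Because $D_{AB|C}(t)=0$ at every $t$, the state admits a decomposition $\rho(t)=\sum_j p_j(t)\,\rho_{AB}^j(t)\otimes\ket{c_j(t)}\bra{c_j(t)}$ with an orthonormal family $\{\ket{c_j(t)}\}$ that may rotate in time, and the flags identity supplies
\begin{equation}
E_{A:BC}(t)=\sum_j p_j(t)\,E_{A:B}(\rho_{AB}^j(t)).
\end{equation}
It then suffices to show that the right-hand side cannot grow. I would argue that, at every instant, the non-perturbing von Neumann measurement $\{\ket{c_j(t)}\bra{c_j(t)}\}$ on $C$ can be inserted without altering the physical state, and the obtained label $j$ can be broadcast as shared classical information between the two wings. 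Since $H=H_{AC}+H_{BC}$ couples $C$ to a single principal particle at a time and the Lindblad terms are strictly local to $A$, $B$, or $C$, the induced evolution of each flagged component $\rho_{AB}^j(t)$ is generated by operations that are local to $A$ or to $B$ conditioned on $j$, that is, an LOCC protocol between the two parties. Monotonicity of $E_{A:B}$ under LOCC, together with the convex flagged sum, then delivers $E_{A:BC}(\tau)\le E_{A:BC}(0)$.

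The main obstacle is making the ``continuous measurement'' of $C$ rigorous. I would handle it via Suzuki--Trotter discretisation: between infinitesimal half-steps $U_{AC}(\Delta t)$ and $U_{BC}(\Delta t)$, the locality of $U_{AC}$ in the partition $AC{:}B$ and of $U_{BC}$ in the partition $A{:}BC$ gives
\begin{equation}
E_{AC:B}(\rho')=E_{AC:B}(\rho(t)),\qquad E_{A:BC}(\rho(t+\Delta t))=E_{A:BC}(\rho'),
\end{equation}
for the instrumental state $\rho'=U_{AC}(\Delta t)\,\rho(t)\,U_{AC}^{\dagger}(\Delta t)$. Combining this with Eq.~(\ref{ebound}) applied to $\rho'$ and the equality $E_{A:BC}(t)=E_{AC:B}(t)$ from the first paragraph yields $E_{A:BC}(\rho(t+\Delta t))-E_{A:BC}(\rho(t))\le D_{AB|C}(\rho')$. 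The hypothesis that the physical trajectory never leaves the zero-discord manifold in $AB{:}C$ forces the instrumental discord to be of higher order in $\Delta t$ along the admissible dynamics, so the accumulated step-wise increases vanish in the continuous limit; the local Lindblad contributions, acting within a single subsystem, are themselves entanglement monotones across any cut separating their support and can only reinforce the resulting inequality, finishing the proof.
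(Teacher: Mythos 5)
Your second paragraph identifies the correct skeleton --- the flags condition, the observation that conditioning on the classical flag of $C$ turns the interaction into effectively local dynamics on $A$ and $B$, and convexity plus LOCC-monotonicity of $E_{A:B}$ --- and this is precisely the route the paper takes. However, the step that carries all the weight is asserted rather than proven. The paper's proof consists almost entirely of the computation you skip: expanding the conditional state $p_j(\Delta t)\,\rho_{AB|j}(\Delta t)=\bra{\phi_j}\rho_{\Delta t}\ket{\phi_j}$ to first order in $\Delta t$, using $|\langle c|\phi_j\rangle|^2\simeq\delta_{cj}$ for the rotating flag basis, to show that (i) the coherent part reduces to the effective \emph{local} Hamiltonian $E^j_{C_1}H_A+E^j_{C_2}H_B$ with $E^j_{C_i}=\bra{j}H_{C_i}\ket{j}$, and (ii) the term $L_C$ does \emph{not} act locally within a flag but instead stochastically mixes the conditional states $\rho_{AB|c}$ with non-negative weights summing to one (the cancellation of the gain and loss terms is what makes the final inequality an equality in the closed case). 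Your phrase ``operations that are local to $A$ or to $B$ conditioned on $j$'' is not correct as stated for the $L_C$ contribution, and the reason $H_{AC}$ acts locally on a flagged component is not that it ``couples $C$ to a single principal particle at a time'' but that the flag is, to first order, unperturbed. Without this expansion the LOCC claim is not established.

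The third paragraph does not repair this: the inequality $E_{A:BC}(t+\Delta t)-E_{A:BC}(t)\le D_{AB|C}(\rho')$ is exactly Eq.~(\ref{ebound_finite}) specialised to zero physical discord, and the paper explicitly flags that whether the instrumental discord can be exploited over many short steps is the open question that the theorem is designed to answer --- it is not used as an ingredient of the proof. Your assertion that the zero-discord hypothesis on the physical trajectory ``forces the instrumental discord to be of higher order in $\Delta t$'' is the entire content of the theorem in this formulation, and proving it directly is delicate: one must show $D_{AB|C}(\rho')=o(\Delta t)$ \emph{uniformly} in $t$ and summably over $\tau/\Delta t$ steps, which requires second-order perturbation control of the relative entropy near the quantum--classical set (note the paper's own example gives instrumental discord $\sim(\omega\Delta t)^2\log(1/\omega\Delta t)$, and rank-deficient states, degenerate spectra, and the first-order population transfer produced by the dissipators all threaten the needed uniform bound). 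As written, this part of the argument is circular. To complete the proof you should carry out the conditional-state expansion of Eq.~(\ref{EQ_COMM})--(\ref{EQ_US}) or supply a genuine quantitative bound on the instrumental discord per step.
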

\begin{proof}
For simplicity, the Hamiltonian of the system is taken as $H=H_{AC}+H_{BC}$, where $H_{AC}=H_{A}\otimes H_{C_1}$ and $H_{BC}=H_{B}\otimes H_{C_2}$.\footnote{Note that the term $H_{C_1}$ can be different from $H_{C_2}$ in general.}
We note that it is straightforward to extend the proof for more general Hamiltonians $H_{AC} = \sum_{\mu} H_A^{\mu} \otimes H_{C_1}^{\mu}$ and $H_{BC} = \sum_{\nu} H_B^{\nu} \otimes H_{C_2}^{\nu}$. 

As we assume, at all times, quantum--classical state in the partition $AB:C$, one writes the initial density matrix of the system as 
\begin{equation}
\rho_0 = \sum_{c} p_c \: \rho_{AB|c} \otimes \ket{c}\bra{c},
\end{equation}
and the state after a short time $\Delta t$ as
\begin{equation}
\rho_{\Delta t} = \sum_{c} p_c(\Delta t) \: \rho_{AB|c}(\Delta t) \otimes \ket{\phi_c}\bra{\phi_c},
\end{equation}
where $\{\ket{c}\}$ and $\{ \ket{\phi_c} \}$ both form orthonormal bases of the mediating system $C$.

Now we incorporate the local environment of all the objects under scrutiny. 
In particular, the dynamics of the whole system assumes the following coarse-grained master equation in Lindblad form
\begin{equation}\label{EQ_LE}
\frac{\rho_{\Delta t}-\rho_0}{\Delta t}=-\frac{i}{\hbar}[H,\rho_0]+\sum_{X=A,B,C}L_X \rho_0,
\end{equation}
where the first term on the RHS is the coherent evolution while the second describes incoherent interactions with the local environments. 
We recall that $L_X\rho_0\equiv \sum_k Q_X^k \rho_0 Q_{X}^{k\dag}-\frac{1}{2}\{Q_{X}^{k\dag}Q_X^k,\rho_0\}$, with the operator $Q_X^k$ only acting on object $X$.
Note that the strength of the interactions with environments have been absorbed in the operator $Q_X^k$.

By utilising the master equation (\ref{EQ_LE}), one obtains the following conditional state after a short time $\Delta t$
\begin{eqnarray}
p_j(\Delta t) \: \rho_{AB|j}(\Delta t) &=&\bra{\phi_j}  \rho_{\Delta t}\ket{\phi_j}\nonumber \\
&=& \bra{\phi_j} \rho_0 \ket{\phi_j} -i \frac{\Delta t}{\hbar} \bra{\phi_j} [H,\rho_0] \ket{\phi_j}\nonumber \\
&&+\Delta t  \sum_{X=A,B,C} \bra{\phi_j} L_X \rho_0\ket{\phi_j}.
\label{EQ_COMM}
\end{eqnarray}
We note that the continuous dynamics follows by taking the limit $\Delta t\rightarrow 0$ and applying the short time evolution successively. 
In this proof, terms of the order $\mathcal{O}(\Delta t^2)$ will not be taken into account, and we use ``$\simeq$" to denote equality where the $\mathcal{O}(\Delta t^2)$ terms are irrelevant and ignored.
In this short time, we write the change in the basis of $C$, up to $\Delta t$ order, as $\ket{c} \to \ket{\phi_c} = \alpha_c \ket{c} + \beta_c \ket{c_\perp}$, where $\ket{c_\perp}$ and $\ket{c}$ are orthogonal, and $\beta_c\propto \Delta t$.
As a result, one gets $|\langle c | \phi_j \rangle|^2 \simeq \delta_{cj}$, making $\bra{\phi_j} \rho_0 \ket{\phi_j} \simeq p_j\: \rho_{AB|j}$.

The coherent part of the dynamics in Eq.~(\ref{EQ_COMM}) is proportional to $\Delta t$, and therefore, we render $\Delta t$ order terms in $\bra{\phi_j} [H,\rho_0] \ket{\phi_j}$ irrelevant. 
As a result, we have
\begin{equation}
 \bra{\phi_j} [H,\rho_0] \ket{\phi_j} \simeq p_j [E_{C_1}^j H_A + E_{C_2}^j H_B,\rho_{AB|j}] ,
\end{equation}
where we define $E_{C_1(C_2)}^j\equiv \bra{j} H_{C_1(C_2)} \ket{j}$ as the expectation value of energy.\footnote{For clarity, we note that one can make these expectation values dimensionless such that the energy unit is absorbed in the terms $H_A$ and $H_B$ respectively.}
Note that, in this way, the coherent component is simply equivalent to an effective dynamics that are local in $A$ and $B$.

On the other hand, the incoherent part in Eq.~(\ref{EQ_COMM}) consists of three terms.
The first two in the summation are given by
\begin{equation}
\bra{\phi_j} L_A \rho_0+L_B \rho_0\ket{\phi_j}\simeq p_j(L_A+L_B)\rho_{AB|j},
\end{equation}
and they are local operations with respect to object $A$ and object $B$.
The last one reads
\begin{eqnarray}
\bra{\phi_j} L_C \rho_0\ket{\phi_j}&\simeq&\sum_c \sum_k p_c |\bra{j}Q_C^k\ket{c}|^2 \rho_{AB|c}\nonumber \\
&-&p_j\sum_k  \bra{j}Q^{k\dag}_{C}Q_C^k\ket{j} \rho_{AB|j}.
\end{eqnarray}

Plugging all these findings to Eq.~(\ref{EQ_COMM}), we write the conditional state explicitly as 
\begin{eqnarray}
\rho_{AB|j}(\Delta t)& \simeq &  \frac{p_j(1-\sum_k  \bra{j}Q^{k\dag}_{C}Q_C^k\ket{j}\Delta t )}{p_j(\Delta t)} \, \, \tilde \rho_{AB|j} \nonumber \\
  & + & \sum_c\frac{p_c}{p_j(\Delta t)}\sum_k |\bra{j}Q_C^k\ket{c}|^2\Delta t \: \rho_{AB|c},
\label{EQ_US}
\end{eqnarray}
where we define
\begin{eqnarray}
\tilde \rho_{AB|j} & \equiv & \rho_{AB|j} -i[E_{C_1}^j H_A + E_{C_2}^j H_B,\rho_{AB|j}] \frac{\Delta t}{\hbar} \nonumber \\
& + & (L_A+L_B)\rho_{AB|j} \Delta t.
\end{eqnarray}
Therefore, the density matrix $\tilde \rho_{AB|j}$ gives the evolution of $\rho_{AB|j}$ through an effective Lindblad master equation, where the interactions are local on the side of $A$ and $B$.
By using $\text{tr}(\rho_{AB|j}(\Delta t))=1$, one obtains the probability 
\begin{eqnarray}
p_j(\Delta t)&=& p_j \left( 1-\sum_k  \bra{j}Q^{k\dag}_{C}Q_C^k\ket{j}\Delta t \right)\nonumber \\
  &&+\sum_cp_c\sum_k |\bra{j}Q_C^k\ket{c}|^2\Delta t,
\end{eqnarray}
where we have utilised the cyclic property of trace. 
Note that the factors for the states $\tilde \rho_{AB|j}$ and $\rho_{AB|c}$ in Eq. (\ref{EQ_US}) are all real, non-negative, and sum up to unity.

Finally, the entanglement in the partition $A:BC$ reads
\begin{eqnarray}
E_{A:BC}(\Delta t)&=&\sum_{j} p_{j}(\Delta t) \: E_{A:B}(\rho_{AB|j}(\Delta t)) \\
&\le&\sum_j p_j E_{A:B}(\rho_{AB|j}) \nonumber \\
&-&\sum_jp_j\sum_k \bra{j}Q^{k\dag}_{C}Q_C^k\ket{j}\Delta t \:E_{A:B}(\rho_{AB|j})\nonumber \\
&+&\sum_j\sum_cp_c\sum_k|\bra{j}Q_C^k\ket{c}|^2\Delta t \:E_{A:B}(\rho_{AB|c}),\label{EQ_LL}  \\
&=& \sum_j p_j E_{A:B}(\rho_{AB|j}) = E_{A:BC}(0), \label{EQ_LAST}
\end{eqnarray}
where we explain the steps taken above as follows.
The flags condition \cite{flags} is applied in the first line to the quantum--classical state in the partition $AB:C$.
Next, the inequality is obtained by using the convexity property of entanglement on the state $\rho_{AB|j}(\Delta t)$ in Eq. (\ref{EQ_US}), and $E_{A:B}(\tilde \rho_{AB|j})\le E_{A:B}(\rho_{AB|j})$ from the monotonicity of entanglement under local interactions.
We note that the second and third lines in Eq. (\ref{EQ_LL}) cancel out. 
This is apparent by inserting $\sum_c \ket{c}\bra{c}=\openone$ between $Q^{k\dag}_{C}$ and $Q^k_C$ in the second line and exchanging the dummy indices $c \leftrightarrow j$.
The last equality is obtained from the flags condition applied to the initial state. 
The conclusion of the theorem follows by applying the arguments above successively from $t=0$ to $\tau$ with the condition $D_{AB|C}(t)=0$ at all times.
\end{proof} 

Theorem \ref{TH_revealing} above has an important special case, which is shown by the corollary below.

\begin{corollary}
For the premise of Theorem \ref{TH_revealing}, where all the objects are closed systems, we have 
\begin{equation}
E_{A:BC}(\tau) = E_{A:BC}(0).
\end{equation}
\end{corollary}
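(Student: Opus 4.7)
My plan is to leverage the time-reversal symmetry of closed quantum dynamics in order to convert the inequality of Theorem~\ref{TH_revealing} into an equality. The forward evolution is generated by the Hamiltonian $H = H_{AC} + H_{BC}$; for a closed system, the same trajectory run in reverse is generated by $H' = -H = (-H_{AC}) + (-H_{BC})$, which is still a Hamiltonian with no direct $A$--$B$ coupling. Hence the reversed dynamics satisfies exactly the structural hypothesis of Theorem~\ref{TH_revealing}.

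Concretely, first I would observe that the hypothesis $D_{AB|C}(t) = 0$ for all $t \in [0,\tau]$ is time-reversal invariant: running the unitary $U = \exp(-iH\tau/\hbar)$ backwards from $\rho_\tau$ to $\rho_0$ visits exactly the same family of states, all of which remain quantum--classical in $AB:C$. Since the system is closed, there are no Lindblad terms to break this symmetry. Applying Theorem~\ref{TH_revealing} to the reversed dynamics (with initial state $\rho_\tau$, final state $\rho_0$, and interactions $-H_{AC}$, $-H_{BC}$) yields
\begin{equation}
E_{A:BC}(0) \le E_{A:BC}(\tau).
\end{equation}
Combining this with the forward inequality $E_{A:BC}(\tau) \le E_{A:BC}(0)$ already proven in Theorem~\ref{TH_revealing} gives the desired equality.

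As a sanity check, one can verify this directly by tracing through the proof of Theorem~\ref{TH_revealing} under the closed-system assumption $L_A = L_B = L_C = 0$. In that case, the Kraus-type terms involving $Q_C^k$ vanish, so $p_j(\Delta t) = p_j$ and $\rho_{AB|j}(\Delta t) = \tilde\rho_{AB|j}$. The state $\tilde\rho_{AB|j}$ now differs from $\rho_{AB|j}$ only by the infinitesimal local unitary generated by $E_{C_1}^j H_A + E_{C_2}^j H_B$, which preserves the entanglement $E_{A:B}$ rather than merely not increasing it. Thus the convexity and monotonicity bounds invoked in Eq.~(\ref{EQ_LL}) both hold with equality, and $E_{A:BC}(\Delta t) = E_{A:BC}(0)$ to leading order in $\Delta t$; iterating from $0$ to $\tau$ recovers the corollary.

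I do not expect a real obstacle here: the nontrivial work was done in Theorem~\ref{TH_revealing}, and the only conceptual point is that closedness restores reversibility, so the monotone bound becomes two-sided. The one subtlety worth highlighting is to make sure that the time-reversal argument uses the full three-body unitary $U^{\dagger}$ (whose generator $-H$ respects the $A$--$C$ plus $B$--$C$ decomposition) rather than naively reversing only $U_{AC}$ and $U_{BC}$ separately, which would require $[H_{AC},H_{BC}]=0$ and is not assumed.
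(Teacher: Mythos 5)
Your proposal is correct, and your primary argument takes a genuinely different route from the paper. The paper proves the corollary by re-running the $\Delta t$ analysis of Theorem~\ref{TH_revealing} with the Lindblad terms switched off: the conditional states then satisfy $p_j(\Delta t)=p_j$ and $\rho_{AB|j}(\Delta t)=\tilde\rho_{AB|j}$, where $\tilde\rho_{AB|j}$ is obtained from $\rho_{AB|j}$ by an effective \emph{local} unitary, so $E_{A:B}(\rho_{AB|j})$ is exactly preserved and the flags condition gives $E_{A:BC}(\Delta t)=E_{A:BC}(0)$ step by step --- this is precisely your ``sanity check,'' so that part of your write-up reproduces the paper's proof. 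Your main argument instead treats Theorem~\ref{TH_revealing} as a black box and applies it a second time to the reversed dynamics generated by $-H=(-H_{AC})+(-H_{BC})$, which still has no direct $A$--$B$ coupling, visits the same family of zero-discord states in reverse order, and therefore yields $E_{A:BC}(0)\le E_{A:BC}(\tau)$; combined with the forward inequality this gives equality. This is valid and arguably the more economical corollary-level argument, since it requires no re-examination of the theorem's internals; the paper's direct route buys the additional structural insight that, in the closed case, every inequality in the chain (\ref{EQ_LL}) is saturated because the conditional evolution is a local unitary rather than a general LOCC. Your closing caveat --- that one must reverse the full generator $-H$ rather than the factors $U_{AC}$, $U_{BC}$ separately, which would illegitimately presuppose $[H_{AC},H_{BC}]=0$ --- is exactly the right point to flag.
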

\begin{proof}
In this case, the dynamics is simply given by a unitary operator with Hamiltonian $H=H_{AC}+H_{BC}$.
The conditional state of Eq. (\ref{EQ_US}) simplifies to
\begin{equation}\label{EQ_SC}
\rho_{AB|j}(\Delta t)=\frac{p_j}{p_j(\Delta t)} \, \, \tilde \rho_{AB|j},
\end{equation}
where $\tilde \rho_{AB|j}$ now reads $\rho_{AB|j}-i[E_{C_1}^j H_A + E_{C_2}^j H_B,\rho_{AB|j}] \Delta t/\hbar$.
One arrives at $p_j(\Delta t)=p_j$ and $\rho_{AB}^j(\Delta t)=\tilde \rho_{AB}^j$ by taking the trace of Eq.~(\ref{EQ_SC}).
Therefore, $\rho_{AB|j}(\Delta t)$ is simply an evolution from the initial state $\rho_{AB|j}$ via an effective unitary operator that is local in both $A$ and $B$, making the corresponding entanglement in the partition $A:B$ invariant. 
Finally, from the flags condition applied to the initial state and the state after $\Delta t$ one gets $E_{A:BC}(\Delta t) = E_{A:BC}(0)$.
The Corollary follows by considering successive applications of the arguments above for a time $\tau$.
\end{proof}

We emphasise the generality of Theorem \ref{TH_revealing}, where both the mediator and probing objects are open to their own local environments. 
This matches a large number of experimentally relevant situations -- some of the important ones will be addressed later in this thesis.
It is also worth stressing that this Theorem extends the monotonicity of entanglement under local operations and classical communication (LOCC)~\cite{locc} to the case of continuous interactions.
In general, zero-discord states are good models for classical communication as they allow for continuous projective measurements on $C$ that do not disturb the whole multipartite state.

\subsection{The role of correlated mediator}

Now let us consider entanglement between the principal objects $A$ and $B$.
In particular, we have the following Lemma.

\begin{lemma}\label{LM_revealing1}
Consider the premise of Theorem \ref{TH_revealing}.
It follows that the gain of entanglement in the partition $A:B$ implies the existence of correlation in the partition $AB:C$.
\end{lemma}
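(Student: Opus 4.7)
The plan is to establish the lemma by contrapositive. I would assume the stronger statement that the tripartite state is entirely uncorrelated across the bipartition $AB:C$ throughout the evolution, i.e., $\rho_{ABC}(t) = \rho_{AB}(t) \otimes \rho_C(t)$ for all $t \in [0,\tau]$, and then show this forces $E_{A:B}(\tau) \le E_{A:B}(0)$. The strategy is to reduce the claim to Theorem~\ref{TH_revealing}, which already handles the closely related zero-discord case.

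The first step is to note that the absence of correlation in $AB:C$ is a special instance of the quantum--classical structure used in Theorem~\ref{TH_revealing}. Indeed, diagonalising the marginal as $\rho_C(t)=\sum_j q_j(t)\,\ket{e_j(t)}\bra{e_j(t)}$ one may write
\begin{equation}
\rho_{ABC}(t) \;=\; \rho_{AB}(t)\otimes \rho_C(t) \;=\; \sum_j q_j(t)\, \rho_{AB}(t)\otimes \ket{e_j(t)}\bra{e_j(t)},
\end{equation}
which is of the quantum--classical form in the partition $AB|C$. Hence $D_{AB|C}(t)=0$ for every $t\in[0,\tau]$, and the hypothesis of Theorem~\ref{TH_revealing} is satisfied. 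Applying it directly gives $E_{A:BC}(\tau)\le E_{A:BC}(0)$.

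The second step invokes the elementary fact that entanglement measures based on the relative entropy are unchanged by appending an uncorrelated subsystem: for any product state one has $E_{A:BC}(\rho_{AB}\otimes \rho_C)=E_{A:B}(\rho_{AB})$. Applying this at both $t=0$ and $t=\tau$ turns the inequality from Theorem~\ref{TH_revealing} into $E_{A:B}(\tau)\le E_{A:B}(0)$. This contradicts any strict gain in $A:B$ entanglement, so such a gain forces the assumption of uncorrelated $AB:C$ to fail at some instant, establishing the lemma.

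I expect no serious obstacle here, as the lemma is essentially a corollary of the previous theorem once one observes that a fully uncorrelated state is automatically quantum--classical. The only mildly delicate point is the quantification of ``correlation'' in the partition $AB:C$: I would interpret it as any nonzero total correlation (e.g., nonzero mutual information $I_{AB:C}$), which is equivalent to $\rho_{ABC}\ne \rho_{AB}\otimes \rho_C$, so that the reduction above is unambiguous. If the statement were instead read as ``correlation at some single time'', one would simply apply the same argument to each infinitesimal step of the Lindblad evolution via the Trotter-like discretisation already used in the proof of Theorem~\ref{TH_revealing}, propagating the invariance of $E_{A:B}$ step-by-step.
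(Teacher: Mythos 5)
Your proof is correct, but it runs the contrapositive from a different hypothesis than the paper does, and the difference matters for what the argument ultimately delivers. You assume the state is fully product in $AB:C$ at \emph{all} times, observe that a product state is a special case of a quantum--classical state so that $D_{AB|C}(t)=0$ throughout, invoke Theorem~\ref{TH_revealing}, and then use invariance of the relative entropy of entanglement under uncorrelated ancillas at both endpoints to get $E_{A:B}(\tau)\le E_{A:B}(0)$. The paper instead assumes only that the \emph{initial} state is product and that $D_{AB|C}(t)=0$ throughout; it uses $E_{A:BC}(0)=E_{A:B}(0)$ at $t=0$ but only the one-sided monotonicity $E_{A:B}(\tau)\le E_{A:BC}(\tau)$ at $t=\tau$, so no assumption on the state at later times (beyond vanishing discord) is needed. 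Both routes establish the lemma as literally stated, but the paper's weaker hypothesis yields the sharper disjunctive conclusion that entanglement gain forces either $\rho(0)\ne\rho_{AB}\otimes\rho_C$ or $D_{AB|C}(t)\ne 0$ at some time; that finer statement is exactly what is exploited immediately after the lemma to conclude that, for initially decoupled objects, quantum \emph{discord} (not merely some correlation) is necessary for entanglement gain between the probes. Your version, whose contrapositive hypothesis is the stronger ``product at all times,'' only recovers ``some correlation at some time'' and would not support that follow-up remark. As a minor economy, you too could drop the product assumption at $t=\tau$ and replace the equality $E_{A:B}(\tau)=E_{A:BC}(\tau)$ by the monotonicity inequality, which holds unconditionally.
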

\begin{proof}
This is proof by contradiction. 
Let us write the initial state with an uncorrelated mediator $C$ as $\rho(0)=\rho_{AB}\otimes \rho_C$.
This gives us $E_{A:BC}(0) = E_{A:B}(0)$ from the monotonicity of entanglement under tracing out the uncorrelated object $C$.
Furthermore, if one has quantum discord satisfying $D_{AB|C}(t)=0$ for $t\in [0,\tau]$, one can utilise the conclusion of Theorem \ref{TH_revealing} and obtain
\begin{equation}
E_{A:B}(\tau)\le E_{A:BC}(\tau)\le E_{A:BC}(0)=E_{A:B}(0),
\end{equation}
where the first inequality is also a consequence of the monotonicity property.
This shows that the observation of entanglement gain $E_{A:B}(\tau)>E_{A:B}(0)$ implies that either the initial state $\rho(0)\ne \rho_{AB}\otimes \rho_C$ or $D_{AB|C}(t)\ne 0$ at some time during the dynamics. 
In either case, the entanglement gain detects the presence of correlation between $AB$ and $C$.
\end{proof}

Let us note that in many experimental settings, one usually assumes decoupled objects as the initial condition, i.e., $\rho_0=\rho_A\otimes \rho_B\otimes \rho_C$.
As this type of state is a special case of $\rho_{AB}\otimes \rho_C$ considered in Lemma \ref{LM_revealing1}, one concludes that, for initial decoupled objects, quantum discord is a necessary resource for entanglement gain between the principal objects.

\section{Entanglement localisation via classical mediators}

Before we proceed with the non-classicality detection protocols, I will show in this section an exemplary dynamics where quantum entanglement \emph{can} be mediated using a classical mediator.

Let us again consider three qubits, where the Hamiltonian is given by Eq.~(\ref{xhamiltonian}), and choose the initial state
\begin{equation}
\rho_0 = \tfrac{1}{2} \ket{\psi_+}\bra{\psi_+} \otimes \ket{+}\bra{+} + \tfrac{1}{2} \ket{\phi_+}\bra{\phi_+} \otimes \ket{-}\bra{-},
\label{EQ_CL_EX}
\end{equation}
where $\sigma^x\ket{\pm}=\pm|\pm\rangle$, and $\ket{\psi_+}= \frac{1}{\sqrt{2}}(\ket{01}+\ket{10})$ and $\ket{\phi_+} = \frac{1}{\sqrt{2}}(\ket{00}+\ket{11})$ are two Bell states between subsystems $AB$.
As the initial state in Eq.~(\ref{EQ_CL_EX}) contains the eigenstates of $H_C$, the system remains classical, as measured on $C$, at all times.
Furthermore, the classical basis is the same at all times.
Yet, one can verify that the relative entropy of entanglement between the probes is given by $E_{A:B}(T) = 1-S_{AB}(T)$, where $S_{AB}(T)$ is the von Neumann entropy of the $AB$ state at time $T$, and oscillates between $0$ and $1$.
Note this means, in general, that entanglement gain in the partition $A:B$ does \emph{not} signify the non-classicality of $C$ (nonzero $D_{AB|C}$).
The corresponding dynamics is illustrated in Fig.~\ref{FIG_re_exp2}.
One can also see from Fig.~\ref{FIG_re_exp2} that there is entanglement already present initially in the bigger partition $A:BC$.
The subsequent evolution only localises such entanglement to the $A:B$ partition.
Let us therefore call this phenomenon \emph{entanglement localisation}.

\begin{figure}[h]
\centering
\includegraphics[scale=0.55]{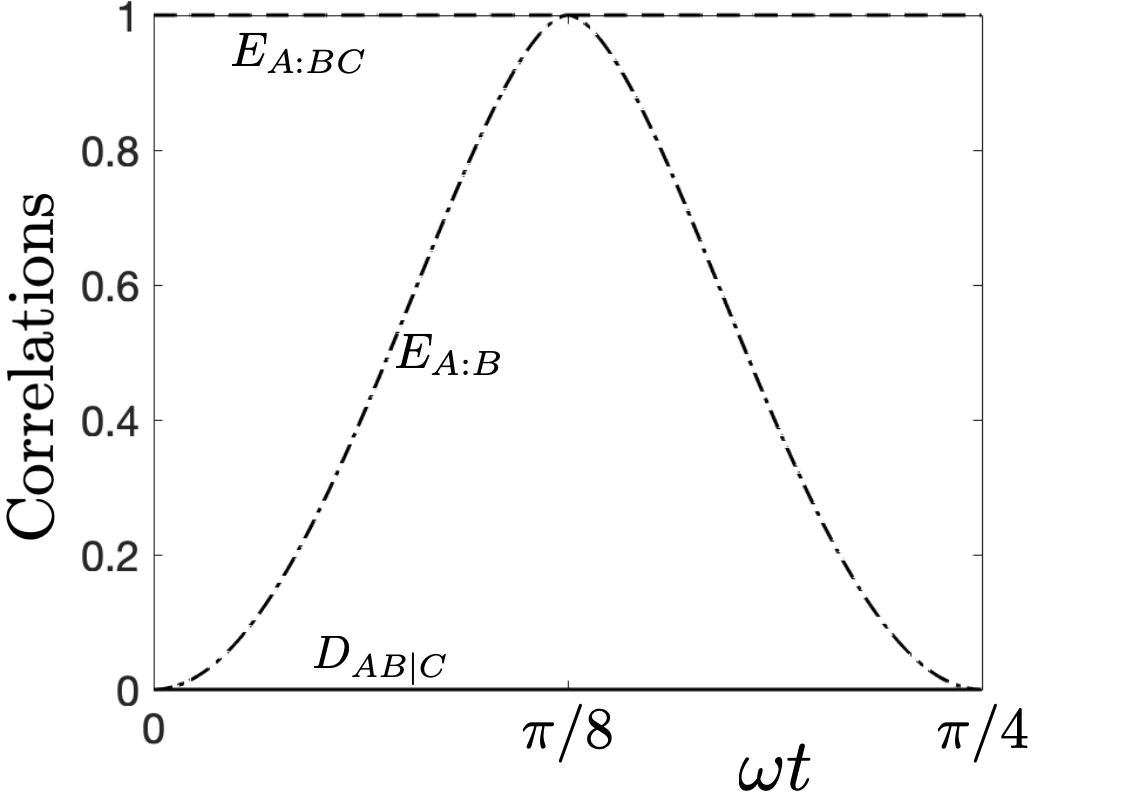} 
\caption{Entanglement distribution via classical ancillary system. 
The Hamiltonian is taken to be $H=(\sigma^x_A\otimes \openone \otimes \sigma^x_C+\openone \otimes \sigma^x_B \otimes \sigma^x_C) \: \hbar \omega$ with initial state $\tfrac{1}{2} \ket{\psi_+}\bra{\psi_+} \otimes \ket{+}\bra{+} + \tfrac{1}{2} \ket{\phi_+}\bra{\phi_+} \otimes \ket{-}\bra{-}$.
The dynamics shows that the entanglement between $A$ and $B$ is increasing while the quantum discord $D_{AB|C}$ is zero at all times.
}
\label{FIG_re_exp2} 
\end{figure}

For completeness I will provide the calculations for the dynamics in Fig.~\ref{FIG_re_exp2}.
Let us start with the state of $AB$ at time $T$
\begin{eqnarray}
\rho_{AB}(T)&=&\text{tr}_C\left(U_{BC}U_{AC}\rho_0U_{AC}^{\dagger}U_{BC}^{\dagger}\right)\nonumber \\
&=&\frac{1}{2}\rho_{\alpha}+\frac{1}{2}\rho_{\beta},
\end{eqnarray}
where we define $\rho_{\alpha}$ as the density matrix for a pure state $\ket{\alpha}=\cos(2T)\ket{\psi_+}-i\sin(2T)\ket{\phi_+}$ and similarly, $\rho_{\beta}$ for $\ket{\beta}=\cos(2T)\ket{\phi_+}+i\sin(2T)\ket{\psi_+}$.
As $\rho_{\alpha}$ and $\rho_{\beta}$ are both pure states, one can calculate the corresponding entanglement in the partition $A:B$ by von Neumann entropy of either $A$ or $B$. 
It is straightforward to obtain that $E_{A:B}(\rho_{\alpha})=E_{A:B}(\rho_{\beta})=1$, i.e., both states are maximally entangled at all times.
From the definition of REE: 
\begin{equation}
E_{A:B}(\rho_{\alpha})=\min_{\sigma_{\alpha}} \left( -\mbox{tr}\left(\rho_{\alpha}\log_2(\sigma_\alpha)\right)-S(\rho_{\alpha})\right).
\end{equation}
In this case, $E_{A:B}(\rho_{\alpha})=1$ and $S(\rho_{\alpha})=0$.
Therefore, we have $-\mbox{tr}\left(\rho_{\alpha}\log_2(\sigma_\alpha)\right) \ge 1$ (*), and similarly, $-\mbox{tr}\left(\rho_{\beta}\log_2(\sigma_\beta)\right) \ge 1$ (**).
Now we show that there exists a single separable state that achieves the minima in both (*) and (**).
By choosing 
\begin{equation}\label{EQ_css}
\sigma=\frac{1}{2}\ket{\psi_+}\bra{\psi_+}+\frac{1}{2}\ket{\phi_+}\bra{\phi_+},
\end{equation}
one directly verifies that $-\mbox{tr}\left(\rho_{\alpha}\log_2(\sigma)\right) =-\mbox{tr}\left(\rho_{\beta}\log_2(\sigma)\right)=1$.
Now, we calculate the entanglement as follows
\begin{eqnarray}
E_{A:B}(T)&=&\min_{\sigma} \left(-\text{tr}\left(\frac{1}{2}\rho_{\alpha}+\frac{1}{2}\rho_{\beta}||\sigma \right)\right)-S_{AB}(T)\nonumber \\
&=&\min_{\sigma}\left( \frac{1}{2}(-\text{tr}\left(\rho_{\alpha}||\sigma \right))+\frac{1}{2}(-\text{tr}\left(\rho_{\beta}||\sigma \right))\right)-S_{AB}(T)\nonumber \\
&=&1-S_{AB}(T),
\end{eqnarray}
where we have used (\ref{EQ_css}) as the minimising separable state.
See Appendix~\ref{A_eloc} for the dynamics of $E_{A:BC}$ and $D_{AB|C}$, and for a general prescription of entanglement localisation.
We note that our entanglement localisation proposal has been realised experimentally~\cite{elocexp}.

We note that similar considerations have been presented in Ref.~\cite{gyongyosi2014correlation} to provide a counter-example to the impossibility of entanglement gain via LOCC.
However, as mentioned above, the partition $A:BC$ is entangled already from the beginning (in our example we have $E_{A:BC}(0) = 1$).
Therefore, the entanglement localisation emphasises that the ancillary systems within the framework of LOCC, here $C$, are not allowed to be initially correlated with the principal objects $A$ and $B$, even if the correlations are classical.

\section{Non-classicality detection protocols}

Based on the acquired knowledge in previous sections, I will now present two methods for detecting quantum property of an inaccessible object (nonzero discord).
Note that, in this section, we will not assume the form of the initial state such that it is beneficial for general experimental purposes. 

\subsection{Via entanglement breaking channel}

We note that the only way of gaining entanglement in subsystem $AB$ via classical $C$ is to localise it from the already present entanglement in $A:BC$.
This is a consequence of Theorem \ref{TH_revealing} and reinforces its role as a proper generalisation of the monotonicity of entanglement to continuous interactions.
Namely,
\begin{equation}
E_{A:B}(\tau) \le E_{A:BC}(\tau) \le E_{A:BC}(0).
\label{E_CHAIN}
\end{equation}
Now, if we ensure by operating on the probes only that the initial entanglements coincide, i.e. $E_{A:BC}(0) = E_{A:B}(0)$, entanglement gain in system $AB$ is only possible due to nonzero discord $D_{AB|C}$.
As we are interested in observing entanglement gain, it is natural to start with as small entanglement as possible.
This leads us to propose the application of an entanglement-breaking channel to one of the principal systems, at time $t=0$.
Indeed, after application of the channel on $A$, we have $E_{A:B}(0)=E_{A:BC}(0)=0$.
In a more concrete example, the channel is a von Neumann measurement.
An arbitrary measurement is allowed and experimentalist should choose the one having potential for biggest entanglement gain.
Note that the measurement results need not be known.
Our method is symmetrical, i.e., the entanglement-breaking channel can be applied on $B$. 
This is a consequence of the symmetry in Theorem~\ref{TH_revealing}, leading to $E_{A:B}(\tau) \le E_{B:AC}(\tau) \le E_{B:AC}(0)$.
Together with the condition after the channel, $E_{A:B}(0)=E_{B:AC}(0)=0$, they give the same conclusion.
The non-classicality detection method is illustrated and summarised in Fig.~\ref{FIG_re_method}a.
Note that entanglement estimation in step (iii) can be realised with entanglement witnesses~\cite{RevModPhys.81.865,horodecki1996m}, rendering state tomography unnecessary.

\subsection{Via initial entropies}

Another method for revealing the non-classicality can be derived in terms of the purity (or entropy) of the probes.
For this purpose, we present the following Lemma.

\begin{lemma}
Provided the premise of Theorem \ref{TH_revealing}, the following inequality holds
\begin{equation}
E_{A:B}(\tau) \le S_A(0) + S_B(0),
\end{equation}
where the entanglement quantifier here is REE and $S_X$ denotes the von Neumann entropy of object $X$.
\end{lemma}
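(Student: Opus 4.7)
The plan is to chain three inequalities,
\[
E_{A:B}(\tau) \;\le\; E_{A:BC}(\tau) \;\le\; E_{A:BC}(0) \;\le\; S_A(0) + S_B(0),
\]
with the middle step handed to us directly by Theorem~\ref{TH_revealing} and the other two being short structural bounds on REE. Two of the three steps are one-liners; the only piece that needs some care is the final bound on the initial entanglement.

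First I would invoke monotonicity of REE under partial trace: discarding $C$ is a local operation with respect to the $A{:}BC$ cut, so $E_{A:B}(\tau) \le E_{A:BC}(\tau)$. Next, since the premise of Theorem~\ref{TH_revealing} is assumed to hold throughout $[0,\tau]$, that theorem yields $E_{A:BC}(\tau) \le E_{A:BC}(0)$ at once.

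The remaining task is to bound $E_{A:BC}(0)$ by the sum of the initial marginal entropies. Here I would use the fact that $\rho_0$ is QC on the $AB{:}C$ cut, so by the flags condition $E_{A:BC}(0) = \sum_c p_c\, E_{A:B}(\rho_{AB|c})$, where $\rho_0 = \sum_c p_c\, \rho_{AB|c}\otimes \ket{c}\bra{c}$. On each conditional state I would apply the standard bound $E_{A:B}(\rho_{AB|c}) \le I_{A:B}(\rho_{AB|c}) \le S(\rho_{A|c}) + S(\rho_{B|c})$, which follows from $E\le I$ (cf.\ Fig.~\ref{FIG_ch1_uni}) together with $S(\rho_{AB|c})\ge 0$. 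Averaging in $c$ and using concavity of the von Neumann entropy, $\sum_c p_c\, S(\rho_{X|c}) \le S\big(\sum_c p_c\, \rho_{X|c}\big) = S_X(0)$ for $X\in\{A,B\}$, delivers the desired inequality.

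The main obstacle I expect is not computational but bookkeeping: the flags condition is stated for the cut that singles out the classical system, so one must be careful to apply it to the $AB{:}C$ partition (which is QC by assumption) rather than to $A{:}BC$ (which is the partition being bounded). Once the partitions are tracked correctly, every inequality in the chain is a one-line consequence of results already recorded in the paper, and the proof concludes immediately.
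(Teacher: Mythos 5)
Your proof is correct, and the first two links of your chain, $E_{A:B}(\tau)\le E_{A:BC}(\tau)\le E_{A:BC}(0)$, coincide exactly with the paper's Eq.~(\ref{E_CHAIN}). Where you diverge is in bounding $E_{A:BC}(0)$ by $S_A(0)+S_B(0)$. The paper does this globally: it bounds the REE by the mutual information of the full tripartite state, $E_{A:BC}\le I_{A:BC}$, then uses subadditivity to get $I_{A:BC}\le S_A+S_B-S_{AB|C}$, and finally invokes the non-negativity of the conditional entropy $S_{AB|C}\ge 0$ for states separable across $AB{:}C$ (citing~\cite{edmundfyp}). You instead work branch by branch: the flags condition~\cite{flags} applied to the quantum--classical initial state gives $E_{A:BC}(0)=\sum_c p_c\,E_{A:B}(\rho_{AB|c})$, each term is bounded by $S(\rho_{A|c})+S(\rho_{B|c})$ via $E\le I$, and concavity of the von Neumann entropy reassembles the marginals. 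Both arguments exploit the same structural fact (the classicality of $C$ at $t=0$) but through different tools; your version is arguably more self-contained, since it replaces the separable-state conditional-entropy result with the flags condition (already used repeatedly in the paper) and the standard concavity of entropy, while the paper's version passes through the intermediate bound $I_{A:BC}(0)$, which connects more naturally to its companion result bounding the gain by $I_{AB:C}(0)$. Your closing caution about which partition the flags condition applies to ($AB{:}C$ is the QC cut, $A{:}BC$ is the cut being bounded) is exactly the right bookkeeping point, and you have handled it correctly.
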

\begin{proof}
Let us begin here with Eq. (\ref{E_CHAIN}), that is $E_{A:B}(\tau) \le E_{A:BC}(\tau) \le E_{A:BC}(0)$. 
Now we will provide an upper bound to the initial entanglement $E_{A:BC}(0)$ in terms of initial von Neumann entropies of the accessible objects $A$ and $B$. 
First, note that REE is bounded by mutual information~\cite{modi2010unified}, which in our case reads $E_{A:BC} \le I_{A:BC}$.
We also have $I_{A:BC} \le S_A + S_B - S_{AB|C}$ from the sub-additivity of von Neumann entropy. 
As the state of object $C$ is classical, the conditional entropy follows $S_{AB|C} \ge 0$, and therefore
\begin{equation}\label{EQ_preie}
E_{A:B}(\tau) \le E_{A:BC}(0) \le S_A(0) + S_B(0).
\end{equation}
\end{proof}

Any violation of inequality (\ref{EQ_preie}) then reveals the non-classicality of $C$.
See Fig.~\ref{FIG_re_method}b for a summary of this detection scheme.

\begin{figure}[h]
\centering
\includegraphics[scale=0.45]{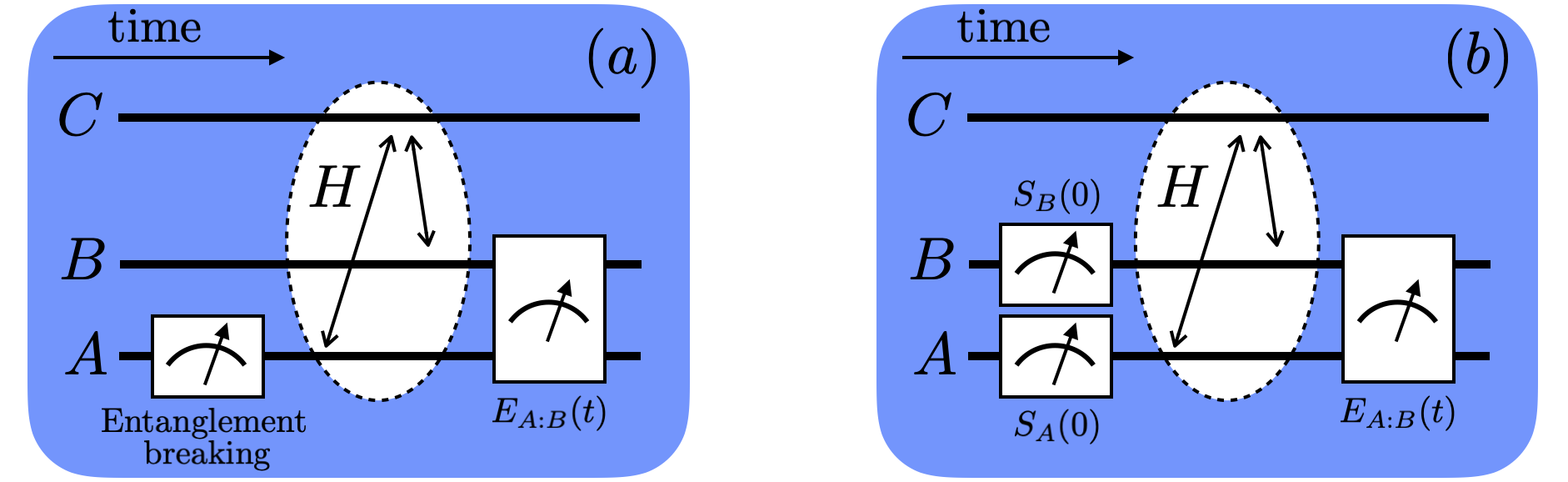} 
\caption{Summary of protocols for revelation of non-classicality of an inaccessible object $C$.
(a) Protocol via entanglement breaking channel involves the following steps:
(i) entanglement breaking channel on object $A$ (or object $B$), e.g., von Neumann measurement;
(ii) dynamics of the whole system;
(iii) estimation of entanglement between $A$ and $B$.
In this scheme, nonzero entanglement implies positive discord $D_{AB|C}$.
(b) Protocol via initial entropies has similar steps with the exception of measurement of initial entropies for objects $A$ and $B$. 
This scheme reveals positive $D_{AB|C}$ if the entanglement between $A$ and $B$ is larger than the sum of initial entropies, i.e., $E_{A:B}>S_A(0)+S_B(0)$. 
Note that for both protocols we have minimalistic assumptions. 
All objects can be open systems and no assumption is made regarding the initial state nor the explicit form of the Hamiltonian $H_{AC}+H_{BC}$.
}
\label{FIG_re_method} 
\end{figure}

\section{Minimalistic assumptions and applications}

We have proposed an entanglement-based criteria for the inference of quantumness of an inaccessible object.
Our protocols are fully non-disruptive of the state of the system to probe, and rely on only weak assumptions on the nature of the interactions involved.
They are also robust against decoherence.
These features make our proposal suitable to address non-classicality at many levels, 
from technological platforms such as quantum optomechanics (Chapter \ref{Chapter7}) to the problem of detecting quantumness in living organisms (Chapter \ref{Chapter_probing}) and the fundamental questions on the nature of gravity (Chapter \ref{Chapter_gravity}). 
For example, the gravity scheme puts $C$ as a gravitational field coupling massive objects $A$ and $B$, which are mutually non-interacting.
By determining experimentally the entanglement gain between $A$ and $B$ one would conclude, according to our scheme, the non-classical nature of the gravitational field between them.
That is, if we were to embed into the quantum formalism description of the masses and the field, the dynamics would be driving the field through non-orthogonal states as this is required for the quantum discord $D_{AB|C}$ to be non-zero.
A similar analysis can also be done for remote quantum dots in a solid-state substrate~\cite{pp} or spin-chain systems like in Ref.~\cite{NatPhys.11.255},
as their physics also naturally distinguishes a mediating object that is inaccessible, e.g., locations of unpaired spins are unknown in a sample.
We have performed detailed analysis of the latter in Ref.~\cite{kon2019non}.

\section{Summary}
In this study we considered two principal objects that are only indirectly interacting, via a mediator.
We have shown that quantum discord (a form of quantum property of the mediator) is a resource for entanglement gain.
This prompted us to propose schemes for assessing quantumness of an inaccessible object (the mediator).
The schemes are based on monitoring entanglement dynamics between the principal objects, which serve as probes over which we have control.
Our method is robust and experimentally friendly as it allows the controlled objects and the inaccessible one to be open systems, makes no assumptions about the initial state, dimensionality of involved Hilbert spaces, and details of the interaction Hamiltonian.


\chapter{Non-decomposability of evolution \& extreme gain of correlations} 

\label{Chapter_detecting} 

\lhead{Chapter 4. \emph{Non-decomposability of evolution \& extreme gain of correlations}} 

\emph{In this chapter, the amount of distributed correlations will be studied.\footnote{Parts of this chapter are reproduced from our published article of Ref.~\cite{nondecompos}, $\copyright$ [2019] American Physical Society. Where applicable, changes made will be indicated.}
I show that non-commutativity of interaction Hamiltonians (or, in a more general setting, non-decomposability of time evolution) is a necessary resource for having high gain of correlations between two objects.
From the extreme gain of correlations one can then detect the non-commutativity, which shows another level of quantum property that is present in the system.
This chapter starts with a motivation and a construction of a general setting that will be considered herein.
I will then proceed to show that for decomposable dynamics the correlations between two principal objects are bounded by a correlation capacity of a mediating system.
This applies for a plethora of correlation quantifiers, some of which will be presented in this chapter. 
Next, I will discuss a special scenario where all the objects involved are open to their local environments.
Finally, the origin of the possible extreme gain of correlations will be covered.
}

\clearpage 

\section{Motivation and objectives}

All classical observables are functions of positions and momenta.
Since there is no fundamental limit on the precision of position and momentum measurement in classical physics,
all classical observables are, in principle, measurable simultaneously.
Quite differently, the Heisenberg uncertainty principle forbids simultaneous exact knowledge of quantum observables corresponding to position and momentum.
The underlying non-classical feature is their non-commutativity:
Any pair of non-commuting observables cannot be simultaneously measured to arbitrary precision, as first demonstrated by Robertson in his famous uncertainty relation~\cite{robertson1929}.
Other examples of non-classical phenomena with underlying non-commutativity of observables include violations of Bell inequalities~\cite{landau1987,peres-book} or, more generally, non-contextual inequalities; e.g., see~\cite{thompson2016}.
Here we describe a method to detect non-commutativity of interaction Hamiltonians, and generally non-decomposability of temporal evolution, from the dynamics of correlations.

Consider the situation depicted in Fig.~\ref{FIG_de_setup}, where the probing systems $A$ and $B$ do not interact directly but only via the mediator $C$; i.e., there is no Hamiltonian term $H_{AB}$.
In general, we allow all objects to be open systems and study whether the evolution operator cannot be represented by a sequence of operations between each probe and the mediator, i.e., $\Lambda_{ABC}=\Lambda_{BC}\Lambda_{AC}$ or in reverse order.
For the special case in which all the systems are closed, non-decomposability implies non-commutativity of interaction Hamiltonians, i.e., $[H_{AC},H_{BC}]\ne0$.
Indeed, for commuting Hamiltonians, the unitary evolution operator is decomposable into $U_{BC}U_{AC}$, where, for example, $U_{AC} = \exp(- i t H_{AC}/\hbar)$.
We show that for decomposable evolution, correlations between $A$ and $B$ are bounded. 
We also show with concrete dynamics generated by non-commuting Hamiltonians that these bounds can be violated.
The bounds derived depend solely on the dimensionality of $C$ and not on the actual form of the evolution operators.
Hence, these operators can remain unknown throughout the assessment.
This is a desired feature, as experimenters usually do not reconstruct the evolution operators via process tomography.
It also allows applications of the method to situations where the physics is not understood to the extent that reasonable Hamiltonians or Lindblad operators can be written down.
Furthermore, the assessment does not depend on the initial state of the tripartite system and does not require any operations on the mediator.
It is therefore applicable to a variety of experimental situations; Refs.~\cite{rauschenbeutel2001,NatPhys.11.255,pp,hamsen} provide concrete examples.

\begin{figure}[h]
\centering
\includegraphics[scale=0.45]{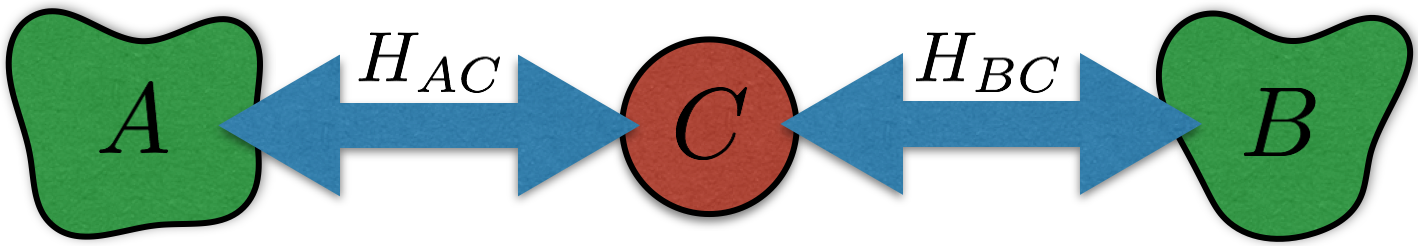}
\caption{General setting for detecting non-decomposability of time evolution.
High dimensional objects $A$ and $B$ are interacting with a low dimensional object $C$, but not with each other. 
In general, we allow all objects to be open to their local environments. 
The interaction Hamiltonians are given by $H_{AC}$ and $H_{BC}$, which describe the coherent part of the dynamics.
In the text, we show that large gain of correlation between $A$ and $B$, exceeding certain thresholds (functions of the dimensionality of $C$), implies non-decomposability of time evolution, which makes the latter crucial for substantial correlation distribution.
Note that a lot of correlation quantifiers are useful in our method. 
This setup is different from the one considered in Fig.~\ref{FIG_re_ABC} of Chapter~\ref{Chapter_revealing}.
For non-decomposability detection, here the dimension of $C$ is assumed to be known and, as we will show in the text, small value of such dimension is preferable.
}
\label{FIG_de_setup}
\end{figure}

Below I shall begin by presenting the general bounds on the amount of correlations one can establish if the evolution is decomposable.
It is shown that these bounds are generic and hold for a large number of correlation quantifiers.
We then calculate concrete bounds on exemplary quantifiers.
In Chapter~\ref{Chapter7}, I will show how they can be violated in a system of two fields coupled by a two-level atom.
We discuss the origin of the violation in terms of ``Trotterized" evolution, where a virtual particle is exchanged between $A$ and $B$ multiple times if the Hamiltonians do not commute but only once if they do commute.
Finally, we focus on immediate applications in quantum information and discuss the consequences of our findings for correlation distribution protocols and dimension witnesses.

\section{Correlation capacity bound}

Consider the setup illustrated in Fig. \ref{FIG_de_setup}.
System $C$, with finite dimension $d_C$, is mediating interactions between higher-dimensional systems $A$ and $B$.
For simplicity we take $d_A=d_B > d_C$.
We assume that there is no direct interaction between $A$ and $B$, such that the Hamiltonian of the whole tripartite system is of the form $H_{AC}+H_{BC}$ (local Hamiltonians $H_A$, $H_B$, and $H_C$ included).
Our bounds follow from a generalization of the following simple observation.
Consider, for the moment, the relative entropy of entanglement as the correlation quantifier \cite{vedral1997quantifying}.
If the evolution is decomposable, it can be written as $\Lambda_{BC} \Lambda_{AC}$, or in reverse order.
Therefore, it is as if particle $C$ interacted first with $A$ and then with $B$, a scenario similar to that in Refs.~\cite{cubitt,bounds1,bounds2,edssexp1,edssexp2,edssexp3}.
The first interaction can generate at most $\log_2(d_C)$ ebits of entanglement, whereas the second, in the best case, can swap all this entanglement. 
In the end, particles $A$ and $B$ gain at most $\log_2(d_C)$ ebits.
The bound is indeed independent of the form of interactions. 
Furthermore, it is intuitively clear, as this is just the ``quantum capacity'' of the mediator. 
Below we will generalise this bound for varied initial states and more correlation quantifiers.

\subsection{Arbitrary initial states}

Let us consider correlation quantifiers defined in the so-called ``distance" approach~\cite{vedral1997quantifying, modi2010unified}.
The idea is to quantify correlation $Q_{X:Y}$ in a state $\rho_{XY}$ as the shortest distance $D(\rho_{XY},\sigma_{XY})$ from $\rho_{XY}$ to a set of states $\sigma_{XY} \in \mathcal{S}$ without the desired correlation property,
i.e., $Q_{X:Y} \equiv \inf_{\sigma_{XY}\in \mathcal{S}} D(\rho_{XY},\sigma_{XY})$.
For example, the relative entropy of entanglement is given by the relative entropy of a state to the set of disentangled states~\cite{vedral1997quantifying}.
It turns out that most such quantifiers are useful for the task introduced here.
The conditions we require are that (i) $\mathcal{S}$ is closed under local operations $\Lambda_Y$ on $Y$,
(ii) $D(\Lambda[\rho],\Lambda[\sigma]) \leq D(\rho, \sigma)$ (monotonicity), and (iii) $D(\rho_0, \rho_1) \leq D(\rho_0, \rho_2) + D(\rho_2, \rho_1)$ (triangle inequality).
They are sufficient to prove Theorem~\ref{TH_cons} below.

For completeness let us begin with a useful lemma.

\begin{lemma}\label{LM_pre}
Consider correlation between two parties $X$ and $Y$, as measured by a correlation quantifier $Q_{X:Y}$ that is non-increasing under local operations on the side of $Y$.
It follows that $Q_{X:Y}$ stays constant under all reversible operations on $Y$, e.g., tracing-out uncorrelated objects in $Y$. 
\end{lemma}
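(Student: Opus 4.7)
The plan is to leverage the monotonicity assumption twice---once for the forward operation and once for the inverse---to sandwich $Q_{X:Y}$ from both sides and force equality. Concretely, suppose $\Lambda_Y$ is a local operation on $Y$ that admits a local inverse $\Lambda_Y^{-1}$ in the sense that $\Lambda_Y^{-1}\circ\Lambda_Y$ acts as the identity on the state under consideration. Applying the hypothesis of the lemma to $\rho_{XY}$ under $\Lambda_Y$ gives $Q_{X:Y}(\Lambda_Y[\rho_{XY}])\le Q_{X:Y}(\rho_{XY})$. Applying the same hypothesis to $\Lambda_Y[\rho_{XY}]$ under $\Lambda_Y^{-1}$ gives $Q_{X:Y}(\rho_{XY})=Q_{X:Y}(\Lambda_Y^{-1}\Lambda_Y[\rho_{XY}])\le Q_{X:Y}(\Lambda_Y[\rho_{XY}])$. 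The two inequalities together yield $Q_{X:Y}(\Lambda_Y[\rho_{XY}])=Q_{X:Y}(\rho_{XY})$, which is the claim.

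To handle the specific example mentioned in the lemma---tracing out an uncorrelated subsystem---I would split $Y=Y_1Y_2$ and assume $\rho_{XY_1Y_2}=\rho_{XY_1}\otimes \rho_{Y_2}$. Take $\Lambda_Y=\mathrm{tr}_{Y_2}$, which by itself is not invertible on arbitrary inputs, and take the would-be inverse to be the map $\sigma_{XY_1}\mapsto \sigma_{XY_1}\otimes \rho_{Y_2}$, i.e.\ appending a fresh copy of the uncorrelated marginal. This appending map is a valid local operation on the $Y$-side (it only acts on subsystems belonging to $Y$), and by the product-form assumption its composition with $\mathrm{tr}_{Y_2}$ returns the original state exactly. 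Hence the above sandwich argument applies and yields invariance of $Q_{X:Y}$ under discarding uncorrelated parts of $Y$.

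The main subtlety, and the step I would be most careful about, is making precise the notion of ``reversible'' so that the partial-trace example truly fits. The appending map is only a one-sided inverse, and only on states of product form---it would not invert the partial trace on a generic correlated state. What is used in the argument is weaker than genuine invertibility of the channel: one only needs that for the \emph{particular} state $\rho_{XY}$ at hand there exists a local CPTP map $\Lambda_Y^{-1}$ satisfying $\Lambda_Y^{-1}\Lambda_Y[\rho_{XY}]=\rho_{XY}$. I would state this refined notion explicitly at the start of the proof, verify it holds for unitary conjugations, isometric embeddings, and partial traces of product factors, and then the two applications of monotonicity close the argument without further assumptions on $D(\cdot,\cdot)$.
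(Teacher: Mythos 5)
Your proof is correct and takes essentially the same route as the paper's: two applications of monotonicity, once for the operation and once for its reversal, sandwich $Q_{X:Y}$ and force equality. Your explicit treatment of the partial-trace example---observing that the appending map $\sigma_{XY_1}\mapsto\sigma_{XY_1}\otimes\rho_{Y_2}$ is only a state-dependent one-sided inverse valid on product-form states---is in fact more careful than the paper's terse assertion that tracing out uncorrelated objects is a reversible process.
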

\begin{proof}
Let us suppose that one performs an operation on party $Y$. 
Based on the monotonicity property, $Q_{X:Y}$ cannot increase in value. 
Furthermore, the value also cannot decrease for a reversible operation, because this means $Q_{X:Y}$ can increase by reversing the process. 
Therefore, $Q_{X:Y}$ can only be invariant. 
As tracing-out uncorrelated objects is a reversible process, the conclusion follows. 
\end{proof}

The main theorem in this chapter is proven as follows.

\begin{theorem}\label{TH_cons}
Consider a distance-based quantifier of correlation between party $X$ and party $Y$, defined as $Q_{X:Y} \equiv \inf_{\sigma_{XY}\in \mathcal{S}} D(\rho_{XY},\sigma_{XY})$ with properties
\begin{enumerate}
\item[{\rm (i)}] $\mathcal{S}$ is closed under local maps $\Lambda_Y$ on $Y$;
\item[{\rm (ii)}] $D(\Lambda[\rho],\Lambda[\sigma]) \leq D(\rho, \sigma)$; and
\item[{\rm (iii)}] $D(\rho_0, \rho_1) \leq D(\rho_0, \rho_2) + D(\rho_2, \rho_1)$.
\end{enumerate}
It follows that for a decomposable dynamical operator, i.e., $\Lambda_{ABC}=\Lambda_{BC}\Lambda_{AC}$, one has 
\begin{equation}\label{EQ_thebound}
	Q_{A:B} (t) \leq \mathcal{I}_{AC:B}(0) + \sup_{\ket{\psi}} Q_{A:C},
\end{equation}
where we define here $\mathcal{I}_{AC:B}(0) \equiv \inf_{\sigma_{AC} \otimes \sigma_B} D(\rho, \sigma_{AC} \otimes \sigma_B)$, use $\rho$ as the initial density matrix, and the supremum in the last term is over pure states $\{\ket{\psi}\}$ of $AC$.
\end{theorem}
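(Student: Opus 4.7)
The plan is to build a chain of inequalities that begins with $Q_{A:B}(t)$ and ends at $\mathcal{I}_{AC:B}(0)+\sup_{\ket{\psi}}Q_{A:C}$ by inserting the intermediate state $\rho'\equiv\Lambda_{AC}[\rho(0)]$, then applying monotonicity~(ii) twice and the triangle inequality~(iii) once. First I would observe that, since tracing out $C$ is local on the $BC$ side and $\Lambda_{BC}$ is local on the $A{:}BC$ cut, two uses of~(ii) give
\[
Q_{A:B}(t)\ \le\ Q_{A:BC}(t)\ =\ Q_{A:BC}(\Lambda_{BC}[\rho'])\ \le\ Q_{A:BC}(\rho').
\]
The remaining task is to upper-bound $Q_{A:BC}(\rho')$ by the two terms on the right-hand side of the claimed inequality.

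For that bound I would construct an explicit witness in $\mathcal{S}_{A:BC}$. Let $\sigma_{AC}^{0}\otimes\sigma_{B}^{0}$ attain (or $\epsilon$-approach) the infimum defining $\mathcal{I}_{AC:B}(0)$, and set $\tilde\sigma_{AC}\equiv\Lambda_{AC}[\sigma_{AC}^{0}]$. Because $\Lambda_{AC}$ is local on~$B$, (ii) yields $D(\rho',\,\tilde\sigma_{AC}\otimes\sigma_{B}^{0})\le\mathcal{I}_{AC:B}(0)$. Next, pick $\tau^{*}\in\mathcal{S}_{A:C}$ with $Q_{A:C}(\tilde\sigma_{AC})=D(\tilde\sigma_{AC},\tau^{*})$; by~(i), the state $\tau^{*}\otimes\sigma_{B}^{0}$ lies in $\mathcal{S}_{A:BC}$, since appending $\sigma_{B}^{0}$ is a local CPTP map on $BC$. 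Applying the triangle inequality~(iii) to $\rho'$, $\tilde\sigma_{AC}\otimes\sigma_{B}^{0}$, and $\tau^{*}\otimes\sigma_{B}^{0}$, then optimising over $\sigma_{AC}^{0}\otimes\sigma_{B}^{0}$ and $\tau^{*}$, produces
\[
Q_{A:BC}(\rho')\ \le\ \mathcal{I}_{AC:B}(0)+D(\tilde\sigma_{AC}\otimes\sigma_{B}^{0},\,\tau^{*}\otimes\sigma_{B}^{0}).
\]
A final use of~(ii), viewing the preparation ``append $\sigma_{B}^{0}$'' as a CPTP map acting on the $AC$ arguments, collapses the residual distance to $D(\tilde\sigma_{AC},\tau^{*})=Q_{A:C}(\tilde\sigma_{AC})$, which in turn is $\le\sup_{\ket{\psi}}Q_{A:C}(\ket{\psi})$.

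The hard part, I expect, will be justifying the last two steps cleanly under only the three listed properties. The tensor-product contraction $D(\cdot\otimes\sigma_{B}^{0},\cdot\otimes\sigma_{B}^{0})\le D(\cdot,\cdot)$ is immediate for relative-entropy-type distances (where it is actually an equality), but for a generic $D$ satisfying only (i)--(iii) one must carefully frame it as monotonicity under a preparation CPTP map; this is legitimate but easy to leave implicit. The pure-state supremum bound $Q_{A:C}(\tilde\sigma_{AC})\le\sup_{\ket{\psi}}Q_{A:C}(\ket{\psi})$ likewise deserves explicit comment: for convex distance-based quantifiers it follows from an ensemble decomposition of $\tilde\sigma_{AC}$, and for the standard examples (REE, RED, mutual information) it is ultimately just the capacity interpretation that motivated the bound in the first place. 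Once these two technical moves are handled, concatenating the inequalities delivers the theorem, with the second term reducing to a function of $d_{C}$ alone for the standard choices of $Q$.
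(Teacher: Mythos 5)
Your proposal is correct and follows essentially the same route as the paper's proof: reduce to $Q_{A:BC}(\Lambda_{AC}[\rho])$ by two applications of monotonicity, insert the intermediate state $\Lambda_{AC}[\sigma^0_{AC}]\otimes\sigma^0_B$ via the triangle inequality, contract the first distance with property (ii) to obtain $\mathcal{I}_{AC:B}(0)$, and identify the remainder with $Q_{A:C}$ of the evolved closest product state, bounded by the pure-state supremum. The two technical points you flag are handled in the paper by the preliminary monotonicity lemma (invariance under the reversible operation of tracing out the decoupled $\sigma^0_B$) and by citing the result that any quantifier monotone under local operations on one party is maximised on pure states, rather than by appealing to convexity.
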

\begin{proof}
First, we show that $Q_{X:Y}$ is monotone under local operations $\Lambda_Y$.
This is done as follows
\begin{eqnarray}\label{EQ_pre1}
Q_{X:Y}(\Lambda_Y[\rho_{XY}])&=&\inf_{\sigma_{XY}\in \mathcal{S}} D(\Lambda_Y[\rho_{XY}],\sigma_{XY}) \nonumber \\
&\le&D(\Lambda_Y[\rho_{XY}],\Lambda_Y[\sigma_{XY}^{0}])\nonumber \\
&\le&D(\rho_{XY},\sigma_{XY}^{0})\nonumber \\
&=&Q_{X:Y}(\rho_{XY}),
\end{eqnarray}
where $\sigma_{XY}^{0}$ is a state in $\mathcal{S}$ closest to $\rho_{XY}$.
The second line in (\ref{EQ_pre1}) is due to property (i) and that $\Lambda_Y[\sigma_{XY}^{0}]$ might not be the the closest to $\Lambda_Y[\rho_{XY}]$, while the third line follows from property (ii). 
This allows us to use the property in Lemma~\ref{LM_pre}.

Next, we have the following arguments
\begin{eqnarray}
	Q_{A:B} (t)
	&\leq& Q_{A:BC} \left(\Lambda_{BC} \Lambda_{AC} [\rho] \right) \label{APP_TH_S1} \\
	&\le& Q_{A:BC} \left( \Lambda_{AC} [\rho] \right) \label{APP_TH_S2} \\
	& \le & D \left(\Lambda_{AC} [\rho], \mu \right) \label{APP_TH_S3} \\
	& \leq & D \left( \Lambda_{AC} [\rho], \Lambda_{AC} [\sigma^0_{AC}]  \otimes \sigma^0_B \right) \nonumber \\
	& + & D\left( \Lambda_{AC} [\sigma^0_{AC}]  \otimes \sigma^0_{B}, \mu \right) \label{APP_TH_TRIAN}\\
	& \le &  D(\rho, \sigma^0_{AC} \otimes \sigma^0_B)	\nonumber \\
	&+ & D\left( \Lambda_{AC} [\sigma^0_{AC}]  \otimes \sigma^0_{B}, \mu \right) \label{APP_TH_S4} \\
	&=& \mathcal{I}_{AC:B}(0) + Q_{A:BC} (\Lambda_{AC} [\sigma^0_{AC}]  \otimes \sigma^0_{B} ) \label{APP_TH_S5} \\
	&=& \mathcal{I}_{AC:B}(0) + Q_{A:C} (\Lambda_{AC} [\sigma^0_{AC}] ) \label{APP_TH_S6} \\
	&\leq& \mathcal{I}_{AC:B}(0) +\sup_{\ket{\psi}} Q_{A:C}, \label{APP_TH_S7}
	\end{eqnarray}
where the steps are explained as follows.
We utilise that $Q_{X:Y}$ is monotone under local maps on $Y$ (in this case, tracing out $C$) in line (\ref{APP_TH_S1}).
Similarly, in (\ref{APP_TH_S2}), $Q_{A:BC}$ is monotone under the operation $\Lambda_{BC}$.
The inequality (\ref{APP_TH_S3}) utilises the definition of $Q_{A:BC}$ as a distance-based quantifier and holds for any state $\mu \in \mathcal{S}_{A:BC}$.
The triangle inequality (iii) confirms line (\ref{APP_TH_TRIAN}).
We note that the first term in (\ref{APP_TH_TRIAN}) is independent of $\mu$ and one can choose any state $\sigma^0_{AC}$ and $\sigma^0_B$ at this point.
Property (ii) is utilised in equality (\ref{APP_TH_S4}).
For line (\ref{APP_TH_S5}), the state $\sigma^0_{AC} \otimes \sigma^0_B$ is chosen as the closest product state to $\rho$, while $\mu \in \mathcal{S}_{A:BC}$ as the closest state to $\Lambda_{AC} [\sigma^0_{AC}]  \otimes \sigma^0_{B}$.
The equality (\ref{APP_TH_S6}) is obtained as $Q_{A:BC}$ is invariant under the partial trace of the decoupled state $\sigma^0_B$.
The final line follows from a property that a bipartite correlation quantifier is maximal on pure states if it is non-increasing under local operations on at least one party~\cite{streltsov2012general}.
\end{proof}

Note that although the relative entropy does not satisfy (iii) it still follows Theorem \ref{TH_cons}. This is the result of the following lemma. 

\begin{lemma}\label{LM_re}
For a distance measure based on the relative entropy the conclusion in Theorem \ref{TH_cons} follows.
\end{lemma}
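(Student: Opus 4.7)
The plan is to salvage the argument of Theorem~\ref{TH_cons} by rerouting around the single step -- the triangle inequality (iii) -- that fails for relative entropy. Properties (i) and (ii) are still in force, so I would first reuse lines \eqref{APP_TH_S1}--\eqref{APP_TH_S2} verbatim to arrive at $Q_{A:B}(t)\le Q_{A:BC}(\Lambda_{AC}[\rho])$. The task then reduces to upper-bounding $Q_{A:BC}(\Lambda_{AC}[\rho])$ by $\mathcal{I}_{AC:B}(0)+\sup_{|\psi\rangle}Q_{A:C}$ without ever invoking a triangle inequality.

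For that I would pick a specific feasible point in the variational set defining $Q_{A:BC}$, namely $\mu^{\star}=\eta_{AC}\otimes\rho_B$, where $\eta_{AC}$ ranges over $\mathcal{S}_{A:C}$. On this product form the relative entropy splits via the standard identity
\begin{equation}
S(\sigma_{XY}||\tau_X\otimes\tau_Y)=I(X{:}Y)_\sigma+S(\sigma_X||\tau_X)+S(\sigma_Y||\tau_Y),
\end{equation}
applied with $\sigma=\Lambda_{AC}[\rho]$, $\tau_X=\eta_{AC}$, $\tau_Y=\rho_B$. Crucially, the $B$-marginal is untouched by $\Lambda_{AC}$, so the third term vanishes; the first term, $I(AC{:}B)_{\Lambda_{AC}[\rho]}$, is bounded by $I(AC{:}B)_\rho$ through data-processing under the $AC$-local map, and the latter is precisely $\mathcal{I}_{AC:B}(0)$, since for relative entropy the infimum over product states is attained at the marginals. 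Infimising over $\eta_{AC}\in\mathcal{S}_{A:C}$ then turns the middle term into $Q_{A:C}(\Lambda_{AC}[\rho_{AC}])$, and the pure-state maximality used in \eqref{APP_TH_S7} upgrades this to $\sup_{|\psi\rangle}Q_{A:C}$. Chaining the inequalities closes the argument.

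The one point that needs a little care -- and is, I expect, the main obstacle -- is that the ansatz $\mu^{\star}=\eta_{AC}\otimes\rho_B$ must actually belong to $\mathcal{S}_{A:BC}$ for whichever RE-based quantifier one has in mind (relative entropy of entanglement, one-way quantum deficit, mutual information). Fortunately $\rho_B$ admits a spectral decomposition in an orthonormal $B$-basis, so the separability, quantum--classicality, or product structure of $\eta_{AC}$ transfers directly to the $A{:}BC$ cut after tensoring with $\rho_B$. This is essentially a bookkeeping case check rather than an additional technical hurdle, but it is exactly what guarantees that the infimum over $\eta_{AC}\in\mathcal{S}_{A:C}$ is a legitimate upper bound on the infimum defining $Q_{A:BC}$, and thus that the proof above is complete.
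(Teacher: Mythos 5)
Your proposal is correct and follows essentially the same route as the paper's proof: both replace the failed triangle inequality by choosing the product feasible point $\mu_{AC}\otimes\rho_B$ in the infimum defining $Q_{A:BC}$, apply the exact splitting identity for relative entropy against product states, kill the $B$-term because $\Lambda_{AC}$ leaves $\rho_B$ invariant, bound the mutual-information term by data processing, and finish with pure-state maximality of $Q_{A:C}$. The membership check $\eta_{AC}\otimes\rho_B\in\mathcal{S}_{A:BC}$ that you flag is indeed the only point the paper passes over silently.
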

\begin{proof}

First, we show the following identity
\begin{eqnarray}
S(\rho||\sigma_X\otimes \sigma_Y)&=&\mbox{tr}(\rho \log{\rho}-\rho\log{\sigma_X \otimes \sigma_Y})  \nonumber \\
&=& \mbox{tr}(\rho \log{\rho}-\rho \log{\rho_X \otimes \rho_Y})  \nonumber \\
&&+\mbox{tr}(\rho \log{\rho_X \otimes \rho_Y} - \rho \log{\sigma_X \otimes \sigma_Y}) \nonumber \\
&=&S(\rho||\rho_X\otimes \rho_Y)+S(\rho_X||\sigma_X) \nonumber \\
&&+S(\rho_Y||\sigma_Y), \label{GG0}
\end{eqnarray}
where $\rho_X$ and $\rho_Y$ are the reduced states of $\rho$.
We have also used, e.g., the relation $\mbox{tr}(\rho \log{\sigma_X \otimes \sigma_Y}) = \mbox{tr}(\rho_X \log{\sigma_X}) + \mbox{tr}(\rho_Y \log{\sigma_Y})$.

While relative entropy is known to follow property (ii) \cite{uhlmann1977relative}, it does not satisfy property (iii).
In this case, one starts with line (\ref{APP_TH_S2}) in Theorem \ref{TH_cons} and obtains
\begin{eqnarray}
Q_{A:BC} \left( \Lambda_{AC} [\rho] \right)&=&\inf_{\mu \in \mathcal{S}_{A:BC}} S \left(\Lambda_{AC}[\rho] || \mu \right) \label{GG01}\\
&\le& S(\Lambda_{AC}[\rho] || \mu_{AC}\otimes \mu_B) \label{GG1}\\
&=&  S(\Lambda_{AC}[\rho] || \rho_{AC}^{\prime}\otimes \rho_B^{\prime}) \nonumber \\
&&+S(\rho_{AC}^{\prime} || \mu_{AC})+S(\rho_{B}^{\prime} || \mu_{B}) \label{GG2} \\
&=&I_{AC:B}(\Lambda_{AC}[\rho] )+Q_{A:C}(\rho_{AC}^{\prime}) \label{GG3} \\
&\le&I_{AC:B}(0)+\sup_{\ket{\psi}} Q_{A:C},
\end{eqnarray}
where $\rho_{AC}^{\prime}$ and $\rho_B^{\prime}$ are the reduced states of $\Lambda_{AC}[\rho]$. 
We justify the steps above as follows.
In line (\ref{GG1}), we use a state of the form $\mu_{AC}\otimes \mu_B$ that is in the set of $\mathcal{S}_{A:BC}$.
The identity of (\ref{GG0}) confirms line (\ref{GG2}).
From the definition of mutual information as the distance from a state to its marginals \cite{modi2010unified}, one arrives at equality (\ref{GG3}).
Note that $\mu_{AC}\in \mathcal{S}_{A:C}$ has been chosen as the closest to $\rho_{AC}^{\prime}$ and $\mu_B=\rho_B^{\prime}$.
The final line is obtained from the monotonicity of mutual information under the local map $\Lambda_{AC}$ and the supremum of $Q_{A:C}$ is attained over pure states. 
\end{proof}

Correlations between probe $A$ and probe $B$ are therefore bounded by the maximal achievable correlation with the mediator, $\sup_{\ket{\psi}} Q_{A:C}$.
The additional term $\mathcal{I}_{AC:B}(0)$ in Eq.~(\ref{EQ_thebound}) reduces to the usual mutual information $I_{AC:B}(0)$ if $D(\rho_{XY},\sigma_{XY})$ is the relative entropy distance~\cite{modi2010unified} 
and characterizes the amount of total initial correlations between one of the probes and the rest of the system.
Note that the bound is independent of time.
This can be seen as a result of the effective description of such dynamics given by $\Lambda_{BC}\Lambda_{AC}$.
The particle $C$ is exchanged between $A$ and $B$ only once, independently of the duration of the dynamics. 

In a typical experimental situation the initial state can be prepared as completely uncorrelated $\rho = \rho_A \otimes \rho_B \otimes \rho_C$,
in which case Theorem~\ref{TH_cons} simplifies and the bound is given solely in terms of the ``correlation capacity'' of the mediator:
\begin{equation}
Q_{A:B} (t) \le \sup_{\ket{\psi}} Q_{A:C}.
\label{EQ_CC}
\end{equation}
Clearly, the same bound holds for initial states of the form $\rho = \rho_{AC} \otimes \rho_B$.
In Section~\ref{SS_ucs} we show that, with this initial state, Eq.~(\ref{EQ_CC}) holds for any correlation quantifier that is monotonic under local operations $\Lambda_{BC}$, 
not necessarily based on the distance approach, e.g., any entanglement monotone.

\subsection{Uncorrelated initial states}\label{SS_ucs}

Here we prove a bound on correlations for particular initial states where one of the probing objects is decoupled.

\begin{theorem}\label{TH_main}
Consider the initial density matrix of the system being in the form $\rho = \rho_{AC} \otimes \rho_B$.
For a decomposable dynamical operator $\Lambda_{ABC}=\Lambda_{BC}\Lambda_{AC}$, it follows that
\begin{equation}
Q_{A:B}(t)\le \sup_{\ket{\psi}} \:Q_{A:C},
\end{equation}
where $Q$ is any correlation quantifier that is monotone under local maps $\Lambda_{BC}$.
\end{theorem}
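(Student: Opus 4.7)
The plan is to exploit the decomposition $\Lambda_{ABC} = \Lambda_{BC}\Lambda_{AC}$ together with the product structure of the initial state. Because $\Lambda_{AC}$ acts trivially on $B$, the intermediate state inherits the same factorisation across $AC{:}B$, and therefore all correlations between $A$ and $B$ in the final state must be manufactured by $\Lambda_{BC}$, which is local across the $A{:}BC$ cut. This is exactly the ingredient that makes the bound a pure ``correlation capacity'' statement, with no need for the initial-correlation term that appeared in Theorem~\ref{TH_cons}.

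Concretely, writing $\rho'_{AC} \equiv \Lambda_{AC}[\rho_{AC}]$ and $\rho(t) = \Lambda_{BC}[\rho'_{AC} \otimes \rho_B]$, I would chain four applications of the monotonicity hypothesis. First, tracing out $C$ is a local map on the $BC$ side, hence
\begin{equation}
Q_{A:B}(\rho(t)) \le Q_{A:BC}(\rho(t)).
\end{equation}
Second, the assumed monotonicity under $\Lambda_{BC}$ gives
\begin{equation}
Q_{A:BC}(\rho(t)) \le Q_{A:BC}(\rho'_{AC} \otimes \rho_B).
\end{equation}
Third, tensoring with the uncorrelated ancilla $\rho_B$ is itself a local map on the $BC$ side (from $C$ to $BC$), so
\begin{equation}
Q_{A:BC}(\rho'_{AC} \otimes \rho_B) \le Q_{A:C}(\rho'_{AC}).
\end{equation}
Finally, $\rho'_{AC}$ is one particular state of $AC$, and invoking the result of Ref.~\cite{streltsov2012general} that a correlation quantifier non-increasing under local operations on at least one party attains its maximum on pure states, one obtains $Q_{A:C}(\rho'_{AC}) \le \sup_{\ket{\psi}} Q_{A:C}$, completing the proof.

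The step I expect to require the most care is the third, namely identifying ``append the fixed state $\rho_B$'' as a local channel on the $BC$ side and thereby a legitimate instance of $\Lambda_{BC}$. This is automatic for the distance-based quantifiers covered by Theorem~\ref{TH_cons} and for standard entanglement monotones, but it should be flagged as a mild structural requirement when applying the theorem to more exotic $Q$. The closing appeal to Ref.~\cite{streltsov2012general} likewise tacitly uses that $Q_{A:C}$ is monotone under local operations on at least one party; for the quantifiers considered throughout this thesis this is inherited from the $\Lambda_{BC}$-monotonicity by taking $B$ trivial, so no new hypothesis is incurred.
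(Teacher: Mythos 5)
Your proposal is correct and follows essentially the same chain of inequalities as the paper: trace out $C$ as a $BC$-local map, apply monotonicity under $\Lambda_{BC}$, discard the uncorrelated $\rho_B$, and bound by the pure-state supremum via Ref.~\cite{streltsov2012general}. The only cosmetic difference is that your third step uses monotonicity of the ``append $\rho_B$'' map to get an inequality, whereas the paper invokes its Lemma~\ref{LM_pre} (reversibility of tracing out uncorrelated systems) to get an equality --- both rest on the same implicit structural assumption you correctly flag.
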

\begin{proof}
The proof is done with the following arguments 
\begin{eqnarray}
Q_{A:B}(t) & \le & Q_{A:BC}(t)\le Q_{A:BC}(\Lambda_{AC}[\rho]) \nonumber \\
&=& Q_{A:C}(\Lambda_{AC}[\rho]) \le \sup_{\ket{\psi}}  Q_{A:C},
\end{eqnarray}
where the steps are justified as follows.
The first inequality is obtained as tracing out party $C$ (which might be correlated with $AB$ in general) is a local operation on the side of $BC$.
The monotonicity of the quantifier under $\Lambda_{BC}$ confirms the second argument.
Party $B$ stays uncorrelated after the application of $\Lambda_{AC}$ on the initial density matrix $\rho_{AC}\otimes \rho_B$.
As a result, the equality follows by utilising Lemma \ref{LM_pre}.
The quantifier $Q_{A:C}$ is maximised on pure states, giving the final inequality.
\end{proof}

We note that for initial states that are close to $\rho = \rho_{AC} \otimes \rho_B$ one can utilize the continuity of the von Neumann entropy \cite{fannes1973continuity} and see that $\mathcal{I}_{AC:B}(0)$ in Eq. (\ref{EQ_thebound}) is indeed small. We can also ensure that the initial state is of the form $\rho = \rho_{AC} \otimes \rho_B$  by performing a correlation breaking channel on $B$ first.
One example of such a channel is a measurement in the computational basis followed by a measurement in some complementary (say Fourier) basis.
This implements the correlation breaking channel $(\identity_{AC} \otimes \Lambda_B) (\rho_{ABC}) = \rho_{AC} \otimes \frac{\identity}{d_B}$.\footnote{In this context one might ask whether there exists a channel such that $(\identity_{AC} \otimes \Lambda_B) (\rho_{ABC}) = \rho_{AC} \otimes \rho_B$, where $\rho_{AC}$ and $\rho_B$ are reduced density matrices. Such a channel does not exist, e.g., it would be nonlinear. See also Ref.~\cite{milegu}.}
In this way, our method does not require any knowledge of the initial state and any operations on the mediator, similar in spirit to the detection of quantum discord of inaccessible objects in Ref.~\cite{krisnanda2017}.
We now move to concrete correlation quantifiers and their correlation capacities.

\section{Correlation capacity of particular quantifiers}

We provide four correlation quantifiers (whose definitions will be repeated here for convenience), which capture different types of correlations between quantum particles.
They are mutual information, classical correlation, quantum discord, and negativity.
All of them are shown to be useful in detecting non-decomposability.

Mutual information is a measure of total correlations~\cite{groisman2005} and is defined as $I_{X:Y} = S_X + S_Y - S_{XY}$, where, e.g., $S_X$ is the von Neumann entropy of subsystem $X$.
It can also be seen as a distance-based measure with the relative entropy as the distance and a set of product states $\sigma_X \otimes \sigma_Y$ as $\mathcal{S}$~\cite{modi2010unified}.
The supremum in Eq. (\ref{EQ_CC}) is attained by the state (recall that $d_A > d_C$),
\begin{equation}
|\Psi \rangle = \frac{1}{\sqrt{d_C}} \sum_{j = 1}^{d_C} |a_j \rangle |c_j \rangle,
\label{EQ_MAX_ENT}
\end{equation}
where $| a_j \rangle$ and $| c_j \rangle$ form orthonormal bases. 
One finds $\sup_{\ket{\psi}} I_{A:C} = 2 \log_2(d_C)$.

An interesting quantifier in the context of non-classicality detection is the classical correlation in a quantum state.
It is defined as mutual information of the classical state obtained by performing the best local von Neumann measurements on the original state $\rho$~\cite{terhal2002},
i.e., $C_{X:Y} = \sup_{\Pi_X \otimes \Pi_Y} I_{X:Y}(\Pi_X \otimes \Pi_Y(\rho) )$, where $\Pi_X \otimes \Pi_Y(\rho) = \sum_{xy} \proj{xy} \rho \proj{xy}$, and $\ket{x}$, $\ket{y}$ form orthonormal bases.
The supremum of mutual information over classical states of $AC$ is $\log_2(d_C)$.

Quantum discord is a form of purely quantum correlations that contain quantum entanglement.
It can be phrased as a distance-based measure. 
In particular, we consider the relative entropy of discord~\cite{modi2010unified}, also known as the one-way deficit~\cite{deficit}.
It is an asymmetric quantity defined as $D_{X|Y} = \inf_{\Pi_Y} S(\Pi_Y(\rho)) - S(\rho)$, where $\Pi_Y$ is a von Neumann measurement conducted on subsystem $Y$.
The relative entropy of discord is maximized by the state (\ref{EQ_MAX_ENT}), for which we have $\sup_{\ket{\psi}} D_{A|C} = \log_2(d_C)$.

Our last example is negativity, a computable entanglement monotone~\cite{zyczkowski1998volume,lee2000entanglement,lee2000partial,negativity}.
For a bipartite system negativity is defined as $N_{X:Y} = (||\rho^{T_X}||_1-1)/2$,
where $||.||_1$ denotes the trace norm and $\rho^{T_X}$ is a matrix obtained by partial transposition of $\rho$ with respect to $X$.
Negativity is maximized by the state (\ref{EQ_MAX_ENT}), and the supremum reads $\sup_{\ket{\psi}} N_{A:C} = (d_C-1)/2$.

Clearly, many other correlation quantifiers are suitable for our detection method because the assumptions behind Eqs.~(\ref{EQ_thebound}) and (\ref{EQ_CC}) are not demanding.
For those presented here, see Table~\ref{TB_sum} for a summary.
In fact, one may wonder which correlations do not qualify for our method.
A concrete example is the geometric quantum discord based on $p$-Schatten norms with $p > 1$, as it may increase under local operations on $BC$~\cite{piani2012,paula2013}.

\begin{table}[h]
\begin{center}
\caption{Summary of particular correlation quantifiers and the corresponding capacities. 
Note that we have assumed $d_C<d_A,d_B$.
}\label{TB_sum}
\smallskip
\begin{tabular}{|l|c|}
\hline
Quantifier & Correlation capacity\\
\hline
\hline
Mutual information & $2\log_2(d_C)$\\
\hline
Classical correlation & $\log_2(d_C)$\\
\hline
Quantum discord & $\log_2(d_C)$\\
\hline
Negativity & $(d_C-1)/2$\\
\hline
\end{tabular}
\end{center}
\end{table}

\section{Local environment \& non-commutativity of Hamiltonians}

Finally, we wish to discuss a scenario where all the objects are open to their own local environments, as realised, e.g., in~\cite{hamsen}.
We take the evolution following the master equation in Lindblad form,
\begin{eqnarray}
\dot \rho & = & -\frac{i}{\hbar}[H_{AC}+H_{BC},\rho]+\sum_{X=A,B,C}L_X\rho, \label{EQ_de_open}\\
L_X\rho & = & \sum_k Q_X^k\rho Q_{X}^{k\dag}-\frac{1}{2}\{Q_{X}^{k\dag}Q_X^k,\rho\}, \nonumber
\end{eqnarray}
where the last term in (\ref{EQ_de_open}) is the incoherent part of the evolution and $L_X$ describes interactions of system $X$ with its local environment, i.e., the operators $Q_X^k$ act on system $X$ only.
We denote $\mathcal{L}_{AC}=-(i/\hbar)[H_{AC},\cdot]+L_A+L_C$ and $\mathcal{L}_{BC}=-(i/\hbar)[H_{BC},\cdot]+L_B$. 
One readily verifies that if $[H_{AC},H_{BC}]=0$ and $[L_C,H_{BC}/\hbar]=0$, we have commuting Lindblad operators $[\mathcal{L}_{AC},\mathcal{L}_{BC}]=0$. 
Note that, if one includes $L_C$ in $\mathcal{L}_{BC}$ instead, the second condition for commuting Lindblad operators now reads $[L_C,H_{AC}/\hbar]=0$.
For commuting Lindbladians, the corresponding evolution decomposes as $\Lambda_{BC} \Lambda_{AC}$, or in reverse order. 
Therefore, our bounds apply accordingly. 
Their violation implies that either the Hamiltonians do not commute or the operators describing dissipative channels on $C$ do not commute with $H_{AC}$ and $H_{BC}$.
In particular, if $C$ is kept isolated so that its noise can be ignored, the violation of our bounds is solely the result of the non-commutativity of the interaction Hamiltonians.

\section{Intuition behind excessive gain of correlations}

In Chapter~\ref{Chapter7}, I present high correlation gain (violating the capacity bounds) between two orthogonally-polarised cavity fields that are individually interacting with a two-level atom.
Here, the origin of the violation will be explained.

Since the total Hamiltonian is of the form $H_{AC} + H_{BC}$, the Suzuki-Trotter expansion of the resulting evolution is particularly illuminating,
\begin{equation}\label{EQ_trotterexp}
e^{i \frac{t}{\hbar} (H_{AC} + H_{BC})} = \lim_{n \to \infty} \left( e^{i\frac{\Delta t}{\hbar} H_{BC}} e^{i\frac{\Delta t}{\hbar} H_{AC}} \right)^n,
\end{equation}
where $\Delta t = t / n$.
If Hamiltonians do not commute, it is necessary to think about Eq. (\ref{EQ_trotterexp}) as $n$ sequences of pairwise interactions of $C$ with $A$ followed by $C$ with $B$, each for a time $\Delta t$.
Each pair of interactions can only increase correlations up to the correlation capacity of the mediator, but their multiple use allows the accumulation of correlations beyond what is possible with commuting Hamiltonians.
Recall that, in the latter case, we deal with only one exchange of system $C$, independently of the duration of dynamics.
We stress that Trotterization is just a mathematical tool and in the laboratory system $C$ is continuously coupled to $A$ and $B$.
It is rather as if a virtual particle $C$ were transmitted multiple times between $A$ and $B$, interacting with each of them for a time $\Delta t$.


\section{Summary}
One of the most elementary non-classical traits of quantum observables is their non-commutativity.
In this chapter, we have linked non-commutativity of interaction Hamiltonians (more generally, non-decomposability of time evolution) to the amount of correlations that can be created in the associated dynamics.
This led us to a method for detection of non-decomposability of evolution in a scenario where a mediator $C$ mediates interactions between two probing objects $A$ and $B$ (all these objects can interact with their local environments).
The Hamiltonians or Lindblad operators can remain unknown throughout the assessment, we only require knowledge of the dimension of the mediator.
Furthermore, no operation on $C$ is necessary at any time, which makes this strategy experimentally friendly.
Technically, under the assumption of decomposable evolution, we derived upper bounds on correlations between the probes.
Non-decomposability is detected by observing violation of certain bounds on $AB$ correlations.
A plethora of correlation quantifiers are helpful in our method, e.g., quantum entanglement, discord, mutual information, and even classical correlation.
We also provided an intuitive explanation in terms of multiple exchanges of a virtual particle which lead to the possible excessive accumulation of correlations.

\afterpage{\blankpage} 

\chapter{Speed of distribution of quantum entanglement} 

\label{Chapter_speedup} 

\lhead{Chapter 5. \emph{Speed of distribution of quantum entanglement}} 

\emph{This chapter investigates the rate at which quantum entanglement is generated.
I will begin with a motivation comprising the well-known quantum speed limit and general scenarios that will be considered here.
Entangling speed limit for the direct interaction setting will be examined. 
The following section then formulates the limit for its counterpart, the indirect interaction setting, with exemplary dynamics saturating the speed.
Next, I will revisit the entanglement localisation introduced in Chapter~\ref{Chapter_revealing} and provide an upper bound for entanglement gain in terms of initial correlations present in the initial state.
I will also present the implications of the results and a simple application to charging quantum batteries.
}

\clearpage
\section{Motivation and objectives}

A quantum state requires finite time to evolve into a state distinguishable from its initial form. 
This limitation on the shortest time is widely known as the quantum speed limit (QSL).
The lower bound on time required for the evolution was first obtained by Mandelstam and Tamm \cite{mandelstam}.
Since then, advances for the time bound include unitary evolution of pure states \cite{pure1,pure2,margolus} and for mixed states \cite{mix1,mix2,mix3,mix4}.
These results have many applications, e.g., to the investigation of the rate of entropy increase~\cite{deffner2010generalized}, the limit in quantum metrology~\cite{giovannetti2011advances}, the limit on computation \cite{lloyd2000ultimate,lloyd2002computational}, and the bound on charging power of quantum batteries \cite{binder2015quantacell,binderphdthesis,campaioli2017enhancing}.
Furthermore, previous studies have also shown the application of QSL to classical domain~\cite{shanahan2018quantum,okuyama2018quantum}.

We will use the time bound that is given by the \emph{unified} quantum speed limit \cite{unifiedbound,revbound}, which, for an evolution of a state $\rho$ to another state $\sigma$, reads
\begin{equation}\label{EQ_bound}
\tau(\rho,\sigma)=\hbar \frac{\Theta(\rho,\sigma)}{\min \{ \langle H\rangle,\Delta H\}},
\end{equation}
where $\Theta(\rho,\sigma)=\arccos( \mathcal{F}(\rho,\sigma))$ is the Bures angle, $\mathcal{F}(\rho,\sigma)=\mbox{tr}\left( \sqrt{\sqrt{\rho}\sigma\sqrt{\rho}} \right)$ is the Uhlmann root fidelity \cite{fidelity1,fidelity2}, $\langle H\rangle=\mbox{tr}(H\rho)-E_\text{g}$ is the average energy with respect to the ground state energy of $H$, and $\Delta H=\sqrt{\mbox{tr}[H^2\rho]-\mbox{tr}[H\rho]^2}$ is the standard deviation of energy (SDE).
Note that for pure states, the Bures angle is given by the Fubini-Study distance, i.e., $\Theta(\ket{\psi}\bra{\psi},\ket{\phi}\bra{\phi})=\arccos|\langle \psi |\phi \rangle|$ \cite{FS1,FS2,FS3}.

In this chapter, we study the quantum speed of system $AB$ with direct and indirect (through a mediator) continuous interactions, see Fig.~\ref{FIG_ch2_direct} and Fig.~\ref{FIG_ch2_indirect}b. 
The direct interaction case considers the objects $A$ and $B$ with Hamiltonian $H_{AB}$ (local Hamiltonians $H_A$ and $H_B$ included).
On the other hand, the interactions between $A$ and $B$ are mediated by an object $C$, such that the total Hamiltonian is of the form $H_{AC}+H_{BC}$ (again, local terms $H_A$, $H_B$, and $H_C$ included).
During the evolution, we study properties of the principle objects $A$ and $B$, such as quantum entanglement between them and charging power.
The latter has been defined as the speed to evolve the quantum state of $AB$ from $\ket{00}$ to $\ket{11}$ (fully charged). 
We compare the entanglement creation and the charging power between the direct and indirect interaction scenarios in terms of speed.
In particular, we investigate whether one setup has an advantage over the other and the factors that are relevant for optimal speed.

\section{Preliminaries}
It can be seen from the QSL of Eq.~(\ref{EQ_bound}) that the first relevant quantity is the fidelity between the initial state $\rho_{AB}(0)$ and the target state $\rho_{AB}(t)$. 
The second quantity is $\min \{ \langle H\rangle,\Delta H\}$, which involves the average energy or SDE. 
It is clear that the change in Hamiltonian $H\rightarrow kH$ (e.g., having more energy) results in $\langle H\rangle \rightarrow k\langle H\rangle$ and $\Delta H\rightarrow k \Delta H$ for a constant $k$. 
This means one can always speed up the evolution of the quantum state by, for example, providing a higher amount of energy.
Therefore, to put all processes on equal footing, we fix $\min \{ \langle H\rangle,\Delta H\}$ to be $\hbar \Omega$.
For ease of computation, let us rewrite Eq. (\ref{EQ_bound}) as
\begin{equation}\label{EQ_dbound}
\Gamma(\rho,\sigma)=\frac{\Theta(\rho,\sigma)}{\min \{ \langle M\rangle,\Delta M\}},
\end{equation}
where $\Gamma=\tau \Omega$ is a dimensionless time bound, $\langle M\rangle$ and $\Delta M$ are dimensionless quantities representing the average energy and SDE, respectively, per unit $\hbar \Omega$.
In this way, we fix the frequency $\Omega$ and set the resource on equal footing by having the condition $\min \{\langle M\rangle,\Delta M\}=1$.\footnote{Throughout this chapter, this will be referred to as \emph{resource equality}.}

We will use some correlation quantifiers (will be recalled briefly below) throughout this chapter.
For extensive definitions, see Chapter~\ref{Chapter1}.
As a quantifier for entanglement, we mostly use negativity, which is a computable entanglement monotone~\cite{zyczkowski1998volume,lee2000entanglement,lee2000partial,negativity}.
For two objects $X$ and $Y$, the negativity is denoted by $N_{X:Y}$ whose maximum is given by $(d-1)/2$, where $d=\min\{d_X,d_Y\}$ with $d_X (d_Y)$ being the dimension of the object $X (Y)$.
This maximum entanglement is given by pure states of the form
\begin{equation}
|\Psi_{\text{max}} \rangle = \frac{1}{\sqrt{d}} \sum_{j = 1}^{d} |x_j \rangle |y_j \rangle,
\label{EQ_sp_msent}
\end{equation}
where $| x_j \rangle$ and $| y_j \rangle$ form orthonormal bases. 
We will also use the relative entropy of entanglement (REE)~\cite{REEmutualinfo}, denoted as $E_{X:Y}$. 
The maximum of REE is given by $\log_2(d)$.
We consider the relative entropy of discord~\cite{modi2010unified}, which is also referred to as the one-way quantum deficit~\cite{deficit}.
It is an asymmetric quantity denoted as $\Delta_{X|Y}$.
Similar to REE, maximum discord is given by $\log_2(d)$.
To quantify the amount of total correlation, we use the mutual information $I_{X:Y}$ \cite{groisman2005}. 
Finally, on some occasions, we use conditional entropy $S_{X|Y}$.

\section{Entangling speed limit: Direct interactions}\label{SC_dents}

Let us now consider the quickest way to entangle our principal objects $A$ and $B$ with direct interaction. 
For an exemplary dynamics, assume for the moment, that all the objects are qubits. 
Starting with an initially disentangled states, e.g., $\ket{00}$, the optimal entangling speed is obtained with the following Hamiltonian
\begin{equation}\label{EQ_xx}
H_{AB}=\hbar \Omega\: \sigma_A^x\otimes \sigma_B^x,
\end{equation}
where $\sigma^x$ is the $x$-Pauli matrix and $\Omega$ denotes the frequency unit.
It is easy to see that the state at time $t$ takes the form $\ket{\psi(t)}=\cos(\Omega t)\ket{00}-i\sin(\Omega t)\ket{11}$, and therefore the state of $AB$ is oscillating between the disentangled states $\ket{00}$ or $\ket{11}$ and a maximally entangled state $(\ket{00}-i\ket{11})/\sqrt{2}$.

We note that in small time, the component $\ket{11}$ grows $\propto \Delta t$, which means that entanglement, as quantified by negativity, $N_{A:B}$ increases immediately.
In other words, the rate of entanglement follows $\dot N_{A:B}(0)>0$.
It is also easy to confirm that $\min \{ \langle M\rangle,\Delta M\}=1$ in this example, meaning that the time bound is simply given by the Bures angle, i.e., $\Gamma(\rho_{AB}(0),\rho_{AB}(T))=\Theta(\rho_{AB}(0),\rho_{AB}(T))$.\footnote{We use $T=\Omega t$ as a dimensionless time variable.}
The corresponding dynamics of entanglement and Bures angle is illustrated in Fig.~\ref{FIG_sp_exp1}. 
One can see that this dynamics saturates the time bound, i.e., the time equals the time bound $\Theta$ (dashed line in Fig.~\ref{FIG_sp_exp1}).

\begin{figure}[h]
\centering
\includegraphics[scale=0.45]{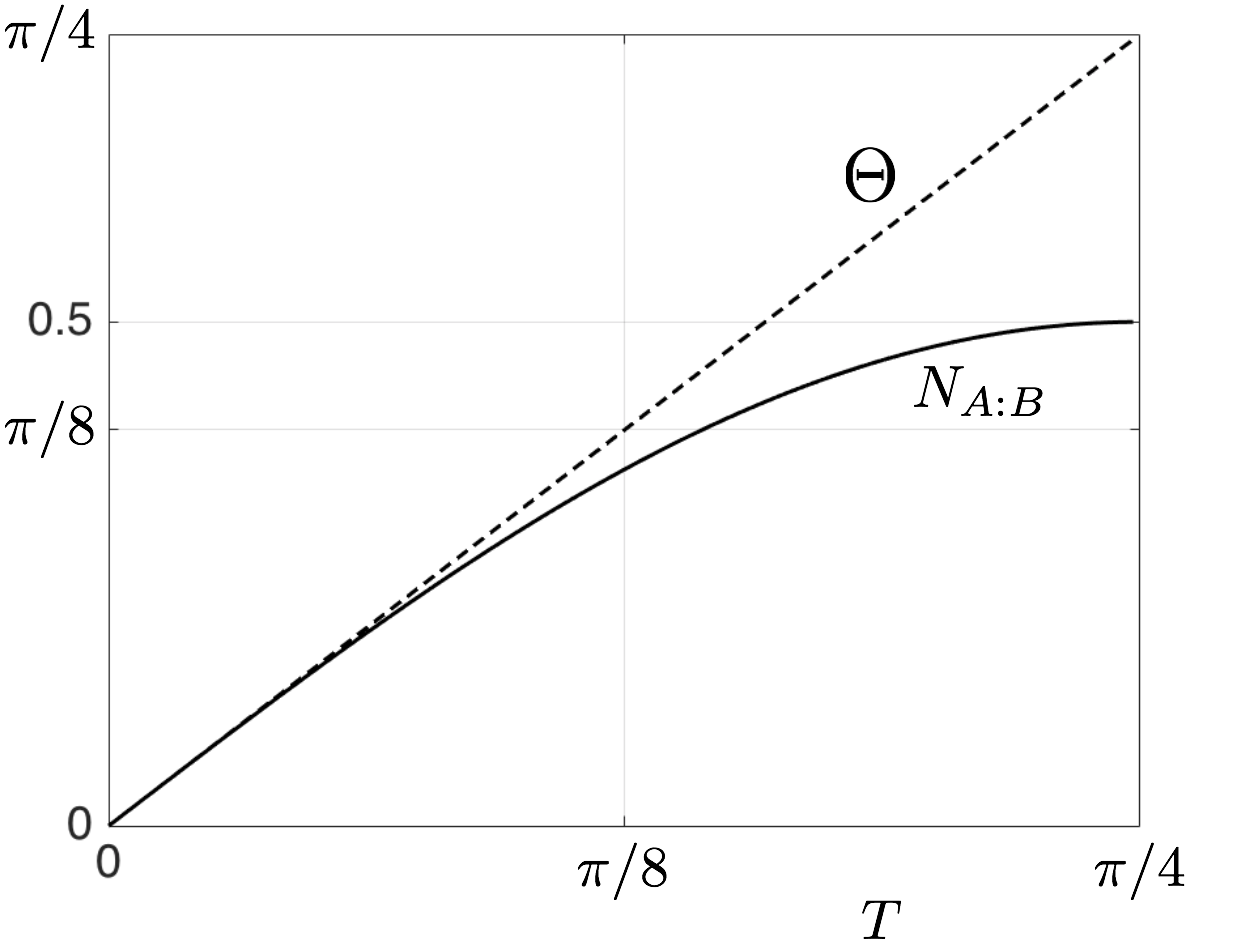}
\caption{Exemplary dynamics showing maximum entangling speed between two qubits. 
The Hamiltonian is given by $H_{AB}=\hbar \Omega\: \sigma_A^x\otimes \sigma_B^x$ with initial state $\ket{00}$.
Maximum negativity (solid curve) is achieved in $t=\pi/4\Omega$.
}
\label{FIG_sp_exp1}
\end{figure}

From the simple example above, it is apparent to see the generalisation of the bound for the case where the objects under consideration are $d$-dimensional.
In particular, by having the resource equality, the direct bound reads 
\begin{equation}
\Gamma_{\text{di}}=\arccos|\langle 00|\Psi_{\text{max}}\rangle|=\arccos\left(\frac{1}{\sqrt{d}}\right).
\end{equation}
For the actual dynamics, one has to figure out the Hamiltonian that both gives the resource equality and evolves the initial state to a state with given fidelity in the shortest possible time.

\section{Entangling speed limit: Indirect interactions}

It is now reasonable to ask whether the dynamics in Fig.~\ref{FIG_sp_exp1} can be sped up or at least replicated through indirect interactions. 
Although it is definitely possible to have entanglement growth from zero to maximum, we will show that it is impossible to beat the speed set by the direct interaction scenario with initial state $\ket{00}$ for system $AB$. 

Let us begin with the following theorem, which investigates the initial rate of entanglement gain in the partition $A:B$. 

\begin{theorem}\label{TH_rate}
Consider the indirect interaction scenario, where an ancilla $C$ is mediating interactions between objects $A$ and $B$. 
We allow all the objects to be open to their own local environments. 
If the initial state of the tripartite system is of the form $\rho_0=\rho_{AB}\otimes \rho_C$, then the rate of any entanglement monotone reads $\dot E_{A:B}(0)\le 0$. 
\end{theorem}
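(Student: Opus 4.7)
The plan is to show that, under the product initial condition $\rho_0=\rho_{AB}\otimes\rho_C$, the induced dynamics of the reduced state $\rho_{AB}(t)$ at $t=0$ is generated by an effective Lindbladian that is local with respect to the $A:B$ partition. Monotonicity of any entanglement monotone under local CPTP maps will then force $\dot E_{A:B}(0)\le 0$.

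First I will start from the full tripartite master equation
\begin{equation*}
\dot\rho(t)=-\tfrac{i}{\hbar}[H_{AC}+H_{BC},\rho(t)]+(L_A+L_B+L_C)\rho(t),
\end{equation*}
and compute $\dot\rho_{AB}(0)=\text{tr}_C\,\dot\rho(0)$ by evaluating at $\rho(0)=\rho_{AB}\otimes\rho_C$. The $L_C$ piece vanishes because $L_C$ acts only on $C$ and is trace-preserving; the local pieces $L_A$ and $L_B$ pass through the partial trace unchanged. The key observation is that, thanks to the product structure at $t=0$, each commutator reduces as
\begin{equation*}
\text{tr}_C\bigl[H_{JC},\rho_{AB}\otimes\rho_C\bigr]=[\bar H_J,\rho_{AB}],\qquad J=A,B,
\end{equation*}
where $\bar H_J=\text{tr}_C\bigl[(\openone_J\otimes\rho_C)H_{JC}\bigr]$ is an effective local Hamiltonian on system $J$ obtained by averaging the interaction term against the initial state of the mediator (the same averaging trick appears in the proof of Theorem~\ref{TH_revealing}). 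Consequently
\begin{equation*}
\dot\rho_{AB}(0)=\mathcal{L}_{\rm loc}[\rho_{AB}],\qquad \mathcal{L}_{\rm loc}\equiv -\tfrac{i}{\hbar}[\bar H_A+\bar H_B,\,\cdot\,]+L_A+L_B,
\end{equation*}
which is a manifestly $A$-and-$B$-local Lindblad generator.

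Next I will use this to bound the right derivative of $E_{A:B}$. For any small $\Delta t>0$, the map $e^{\Delta t\,\mathcal{L}_{\rm loc}}$ is a local CPTP map on $AB$ (hence an LOCC operation), while the true reduced evolution satisfies $\rho_{AB}(\Delta t)=e^{\Delta t\,\mathcal{L}_{\rm loc}}[\rho_{AB}]+O(\Delta t^2)$ by matching first-order expansions. Monotonicity of the entanglement monotone then gives $E_{A:B}\bigl(e^{\Delta t\,\mathcal{L}_{\rm loc}}[\rho_{AB}]\bigr)\le E_{A:B}(\rho_{AB})$. Dividing by $\Delta t$ and letting $\Delta t\to 0^+$ eliminates the $O(\Delta t^2)$ correction and delivers the claim.

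The main obstacle I anticipate lies precisely in this last step: ``any entanglement monotone'' is a broad class and some candidates are only upper semicontinuous in $\rho$, so one cannot naively commute a limit in $\Delta t$ with $E_{A:B}(\cdot)$. A clean sidestep is to interpret $\dot E_{A:B}(0)$ as the upper right Dini derivative $\limsup_{\Delta t\to 0^+}\bigl[E_{A:B}(\rho_{AB}(\Delta t))-E_{A:B}(\rho_{AB})\bigr]/\Delta t$; the argument above then produces a non-positive bound on this Dini derivative for every entanglement monotone, without any additional regularity hypothesis. For Lipschitz monotones such as negativity the ordinary right derivative is recovered immediately, so no separate treatment is needed in the applications that follow.
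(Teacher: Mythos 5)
Your proof follows essentially the same route as the paper's: using the product initial state to reduce the first-order dynamics of $\rho_{AB}$ to an effective Lindbladian that is local in the $A:B$ partition (with the interaction averaged against $\rho_C$), and then invoking monotonicity under local operations. Your added care about interpreting $\dot E_{A:B}(0)$ as a Dini derivative for non-continuous monotones is a sensible refinement the paper glosses over, but it does not change the argument.
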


\begin{proof}
Let us start with the evolution of the tripartite state following the Lindblad form
\begin{eqnarray}
\frac{\rho_{\Delta t}-\rho_0}{\Delta t} & = & -\frac{i}{\hbar}[H,\rho_0]+\sum_{X=A,B,C}L_X\rho_0, \label{EQ_open}\\
L_X\rho_0 & = & \sum_k Q_X^k\rho_0 Q_{X}^{k\dag}-\frac{1}{2}\{Q_{X}^{k\dag}Q_X^k,\rho_0\}, \nonumber
\end{eqnarray}
where the first term in (\ref{EQ_open}) is the coherent part of the evolution, whereas the second term is the incoherent part and $L_X$ describes interactions of system $X$ with its local environment, i.e., the operators $Q_X^k$ act on system $X$ only. Without loss of generality we assume the Hamiltonians of the form $H=H_A\otimes H_{C_1}+H_B\otimes H_{C_2}$, where $H_{C_2}$ can be different from $H_{C_1}$. Note that the generalisation of the proof to Hamiltonians $H=\sum_{\mu} H_A^{\mu}\otimes H_{C_1}^{\mu}+\sum_{\nu} H_B^{\nu}\otimes H_{C_2}^{\nu}$ is straightforward.

The state of $AB$ at time $\Delta t$ is calculated as follows
\begin{eqnarray}
\rho_{AB}(\Delta t)&=&\mbox{tr}_C(\rho_{\Delta t})\nonumber \\
&=&\mbox{tr}_C( \rho_0-i\frac{\Delta t}{\hbar} [H,\rho_0]+\Delta t \sum_{X=A,B,C}L_X\rho_0 )\nonumber \\
&=&\rho_{AB}-i\frac{\Delta t}{\hbar}[H_AE_{C_1}+H_BE_{C_2},\rho_{AB}]\nonumber \\
&&+\Delta t(L_A+L_B)\rho_{AB}, \label{EQ_rabdt}
\end{eqnarray}
where we have used $\rho_0=\rho_{AB}\otimes \rho_C$, and $E_{C_1}=\mbox{tr}(H_{C_1}\rho_C)$ and $E_{C_2}=\mbox{tr}(H_{C_2}\rho_C)$ are the initial mean energies.\footnote{We will assume that these constants are dimensionless and the energy unit is transferred entirely to the terms $H_A$ and $H_B$, respectively.} We have also utilised the cylic property of trace such that $\mbox{tr}_C(Q_C^k\rho_C Q_{C}^{k\dag}-\frac{1}{2}\{Q_{C}^{k\dag}Q_C^k,\rho_C\})=0$. 

One can see from Eq. (\ref{EQ_rabdt}) that the state of $AB$ at time $\Delta t$ is simply as a result of effective local Hamiltonians $H_AE_{C_1}+H_BE_{C_2}$ and local interactions with environments. 
For any entanglement monotone, i.e., non-increasing under local operations and classical communication (LOCC), we have $E_{A:B}(\Delta t)\le E_{A:B}(0)$ and therefore $\dot E_{A:B}(0)\le 0$.
\end{proof}

For example, if we consider unitary dynamics and the initial pure state $\ket{00}$ of $AB$, it requires $C$ to be in a decoupled form, i.e., the whole system assumes $\ket{00}\bra{00}\otimes \rho_C$.
From Eq.~(\ref{EQ_rabdt}) of Theorem~\ref{TH_rate}, it is easy to see that $E_{A:B}(\Delta t)= E_{A:B}(0)$, i.e., $\dot E_{A:B}(0)= 0$. 
This is due to the fact that unitary operations resulting from local Hamiltonians leave the entanglement invariant. 
This implies that, in order to have positive (or negative) initial entanglement rate, there has to be initial correlation between the mediator $C$ and the principal objects $A$ and $B$.

Note also that if $C$ is uncorrelated with $AB$ at all times, i.e., $I_{AB:C}(t)=0$, one can apply the method in Theorem \ref{TH_rate} successively and conclude that entanglement distribution is impossible as $E_{A:B}(t)\le E_{A:B}(0)$. 
This has been proven similarly in Chapter~\ref{Chapter_revealing}.
This way, entanglement gain between the principal objects can be used as a witness of correlation with the mediator $C$, i.e., $I_{AB:C}>0$ at some time during the evolution.

Theorem~\ref{TH_rate} is not enough to conclude the impossibility of beating the direct bound as it does not explain what is in play during the dynamics.
It might happen that although the initial rate of production is non-positive, the rate during the dynamics is high such that after some finite time $T<\Gamma_{\text{di}}$ the maximum entanglement is reached.
We will show that this scheme is impossible by utilising the following ultimate bound.

\begin{theorem}\label{TH_untimate}
Consider any coherent dynamics of three objects $A$, $B$, and $C$ with the total Hamiltonian given by $H$. This may include the direct interaction scenario between $A$ and $B$ as well as the indirect interaction where $C$ is the mediator. Starting with initial density matrix of the form $\rho(0)=\rho_{ABC}$, where the reduced state $\rho_{AB}$ is separable, we show that the time it takes to maximally entangle $A$ and $B$ is lower bounded as
\begin{equation}
T\ge \arccos\left(\frac{1}{\sqrt{d}}\right),
\end{equation}
where $T=\Omega t$ and $\min \{\langle H\rangle,\Delta H\}=\hbar \Omega$.
\end{theorem}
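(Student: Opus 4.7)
The plan is to apply the unified quantum speed limit of Eq.~(\ref{EQ_bound}) directly to the global state $\rho(t)$ of $ABC$ and then to translate the Bures-angle lower bound into a statement about the reduced dynamics on $AB$. Under the resource equality $\min\{\langle H\rangle,\Delta H\}=\hbar\Omega$, the QSL immediately gives $T=\Omega t\ge\Theta(\rho(0),\rho(T))$, so the task reduces to lower-bounding the Bures angle between the initial global state and the global state at the moment $AB$ is maximally entangled.

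First I would exploit the monotonicity of the Uhlmann root fidelity under CPTP maps. Since the partial trace over $C$ is CPTP, we have $\mathcal{F}(\rho(0),\rho(T))\le\mathcal{F}(\rho_{AB}(0),\rho_{AB}(T))$, hence $\Theta(\rho(0),\rho(T))\ge\Theta(\rho_{AB}(0),\rho_{AB}(T))$. Thus it is sufficient to bound the Bures angle of the reduced states. Next I would use the fact that the maximum attainable negativity $(d-1)/2$ is achieved only by pure maximally entangled states of the form (\ref{EQ_sp_msent}); in particular, at time $T$ the state $\rho_{AB}(T)$ must equal some $|\Psi_{\max}\rangle\langle\Psi_{\max}|$, so $AB$ is decoupled from $C$ and the fidelity reduces to
\begin{equation}
\mathcal{F}(\rho_{AB}(0),\rho_{AB}(T))=\sqrt{\langle\Psi_{\max}|\rho_{AB}(0)|\Psi_{\max}\rangle}.
\end{equation}

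The key calculation is then to bound this overlap using the hypothesis that $\rho_{AB}(0)$ is separable. Writing $\rho_{AB}(0)=\sum_j p_j\,\rho_A^j\otimes\rho_B^j$ and decomposing each factor spectrally, it suffices to show that for any product pure state $|\psi_A\rangle|\psi_B\rangle$ one has $|\langle\Psi_{\max}|\psi_A,\psi_B\rangle|^2\le 1/d$. This follows by direct expansion in the Schmidt basis of $|\Psi_{\max}\rangle$ together with the Cauchy--Schwarz inequality, giving $\langle\Psi_{\max}|\rho_{AB}(0)|\Psi_{\max}\rangle\le 1/d$. Consequently $\mathcal{F}(\rho_{AB}(0),\rho_{AB}(T))\le 1/\sqrt{d}$, so $\Theta(\rho_{AB}(0),\rho_{AB}(T))\ge\arccos(1/\sqrt{d})$, and combining with the partial-trace monotonicity and the QSL yields $T\ge\arccos(1/\sqrt{d})$ as claimed.

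The main obstacle I anticipate is conceptual rather than technical: one must be careful that the theorem holds uniformly over \emph{all} tripartite evolutions, whether direct or mediated, and the proof above achieves this precisely because it invokes only (i) the global QSL (which is blind to the interaction structure), (ii) CPTP monotonicity of fidelity under $\mathrm{tr}_C$, and (iii) the maximal-overlap bound between separable and maximally entangled pure states. A minor subtlety is justifying that $\rho_{AB}(T)$ is genuinely pure and maximally entangled whenever the entanglement attains its extreme value $(d-1)/2$; this follows because the negativity is strictly smaller than $(d-1)/2$ on any mixed state, so saturation forces purity of $\rho_{AB}(T)$ and hence the desired product structure $\rho(T)=\rho_{AB}(T)\otimes\rho_C(T)$ of the global state.
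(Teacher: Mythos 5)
Your proposal is correct and follows essentially the same route as the paper's own proof: apply the unified QSL to the global state, use monotonicity of the Uhlmann fidelity under $\mathrm{tr}_C$, note that maximal entanglement forces $\rho_{AB}(T)=\ket{\Psi_{\text{max}}}\bra{\Psi_{\text{max}}}$ with $C$ decoupled, and bound $\langle \Psi_{\text{max}}|\rho_{AB}(0)|\Psi_{\text{max}}\rangle\le 1/d$ over separable states. The paper obtains this last bound via convexity over a product-pure-state decomposition rather than Cauchy--Schwarz, but the content is identical.
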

\begin{proof}
First let us note that maximum entanglement between $A$ and $B$ implies that the state of $AB$ is pure, and that the state of $C$ is decoupled, i.e., $\rho(T)=\ket{\Psi_{\text{max}}}\bra{\Psi_{\text{max}}} \otimes \rho_C^{\prime}$.
Taking the fidelity of the initial and final states gives
\begin{eqnarray}\label{EQ_utb}
\mathcal{F}(\rho(0),\rho(T))&=&\mathcal{F}(\rho_{ABC},\ket{\Psi_{\text{max}}}\bra{\Psi_{\text{max}}}\otimes \rho_C^{\prime})\nonumber \\
&\le&\mathcal{F}(\rho_{AB},\ket{\Psi_{\text{max}}}\bra{\Psi_{\text{max}}}) \nonumber \\
&=&\sqrt{\langle \Psi_{\text{max}} |\rho_{AB}|\Psi_{\text{max}}}\rangle \nonumber \\
&\le&\max_{p_j,\ket{a_jb_j}} \sqrt{\sum_j p_j |\langle a_jb_j|\Psi_{\text{max}}\rangle|^2} \nonumber \\
&\le&\max_{\ket{a_jb_j}}  |\langle a_jb_j|\Psi_{\text{max}}\rangle| \nonumber \\
&=&\frac{1}{\sqrt{d}},
\end{eqnarray}
where we have used the non-decreasing property of fidelity under trace-preserving completely positive maps (in our case, tracing-out object $C$) in the first line~\cite{nielsen1996entanglement}, expressed the separable state as $\rho_{AB}=\sum_j p_j\:\ket{a_jb_j}\bra{a_jb_j}$, and used convexity in the last inequality.

Finally, by utilising the QSL of Eq.~(\ref{EQ_dbound}) and noting that $\min \{\langle H\rangle,\Delta H\}=\hbar \Omega$, one obtains
\begin{equation}
T\ge \arccos\left({\mathcal{F}(\rho(0),\rho(T))}\right) \ge \arccos{\left(\frac{1}{\sqrt{d}}\right)}=\Gamma_{\text{di}}.
\end{equation}
\end{proof}

Let us now discuss a special case of Theorem~\ref{TH_untimate} above. 
In particular, consider the initially decoupled mediator, i.e., $\rho(0)=\rho_{AB}\otimes \rho_C$.
Consequently, we have 
\begin{eqnarray}\label{EQ_tbspecial}
\mathcal{F}(\rho(0),\rho(T))&=&\text{tr} \left( \sqrt{\sqrt{\rho_{AB}\rho_C} \ket{\Psi_{\text{max}}}\bra{\Psi_{\text{max}}} \rho_C^{\prime} \sqrt{\rho_{AB}\rho_C}  }\right)\nonumber \\
&=&\text{tr} \left( \sqrt{\sqrt{\rho_{AB}} \ket{\Psi_{\text{max}}}\bra{\Psi_{\text{max}}}\sqrt{\rho_{AB}}  \sqrt{\rho_C} \rho_C^{\prime}\sqrt{\rho_C}}\right)\nonumber \\
&=&\text{tr} \left( \sqrt{\sqrt{\rho_{AB}} \ket{\Psi_{\text{max}}}\bra{\Psi_{\text{max}}}\sqrt{\rho_{AB}} }\right) \text{tr} \left( \sqrt{\sqrt{\rho_C} \rho_C^{\prime}\sqrt{\rho_C}}\right)\nonumber \\
&=&\mathcal{F}(\rho_{AB},\ket{\Psi_{\text{max}}}\bra{\Psi_{\text{max}}})\:\mathcal{F}(\rho_{C},\rho_C^{\prime}).
\end{eqnarray}
Note that as the fidelity $0\le \mathcal{F}(\rho_{C},\rho_C^{\prime})\le 1$, we recover the second line of Eq.~(\ref{EQ_utb}).
Now let us discuss the consequence of the initial state $\rho(0)=\rho_{AB}\otimes \rho_C$.
From Eq.~(\ref{EQ_tbspecial}), we notice that if $\rho_{AB}$ is strictly mixed, then for a unitary dynamics (purity-preserving), $\rho_C^{\prime}$ will be more mixed than $\rho_C$, i.e., $\rho_C\ne \rho_C^{\prime}$.
This means $\mathcal{F}(\rho_C,\rho_C^{\prime})<1$, making the corresponding first inequality in (\ref{EQ_utb}) strict, and resulting in a strict time bound $T>\Gamma_{\text{di}}$.
Therefore, in order to saturate the direct bound, the state $\rho_{AB}$ has to be pure and $\mathcal{F}(\rho_C,\rho_C^{\prime})=1$.
A trivial dynamics (via direct interactions in a tripartite setting) is given by the example presented in Section~\ref{SC_dents} with the addition of a decoupled mediator $\rho_C$ and a local Hamiltonian $H_C$.

\subsection{Saturating the limit}\label{SC_satlimit}

Now we present examples of indirect interactions that saturate the direct time bound.
As initial states of the form $\rho_{AB}\otimes \rho_C$ cannot be used for maximum entanglement speed, one would normally think of utilising entanglement in the partition $AB:C$.
In particular, consider the following initial state and Hamiltonian:
\begin{eqnarray}
\ket{\psi(0)}&=&\frac{1}{\sqrt{2}}(\ket{000}+\ket{111}),\nonumber \\
H&=&\frac{\hbar \Omega}{2\sqrt{2}}(\sigma^z_A\otimes H_{C_1}+\sigma^z_B\otimes H_{C_2}),
\end{eqnarray}
where $H_{C_1}=-(\openone +\sigma^x_C+\sigma^y_C+\sigma^z_C)$ and $H_{C_2}=\openone -\sigma^x_C-\sigma^y_C+\sigma^z_C$.
One finds that this example has $\min \{ \langle M\rangle,\Delta M\}=1$ and therefore the bound is also given by the Bures angle $\Theta(\rho_{AB}(0),\rho_{AB}(T))$, where now $\rho_{AB}(T)=\mbox{tr}_C\left(\rho_{ABC}(T)\right)$.
Furthermore, the resulting dynamics is the same as that in Fig.~\ref{FIG_sp_exp1}. 
From this example we note that the initial state has some purely quantum properties: entanglement $N_{AB:C}(0)=0.5$ and mutual information $I_{AB:C}(0)=2$.

However, is it necessary to have entanglement with the mediator $C$ to saturate the entangling speed of direct interactions?
We now show that this is \emph{not} the case.
In particular, it is enough to have classical correlation with $C$ initially.
For this purpose, we take the initial state and Hamiltonian
\begin{eqnarray}\label{EQ_exp3}
\rho(0)&=&\frac{1}{2}\ket{\psi_+}\bra{\psi_+}\otimes \ket{+}\bra{+}+\frac{1}{2}\ket{\phi_+}\bra{\phi_+}\otimes \ket{-}\bra{-},\nonumber \\
H&=&\frac{\hbar \Omega}{2}(\sigma^x_A\otimes \sigma^x_C+\sigma^x_B \otimes \sigma^x_C),
\end{eqnarray}
where $\sigma^x \ket{\pm}=\pm \ket{\pm}$, and $\ket{\psi_+}=(1/\sqrt{2})(\ket{01}+\ket{10})$ and $\ket{\phi_+}=(1/\sqrt{2})(\ket{00}+\ket{11})$ are two Bell states of party $AB$.\footnote{Note that this example is similar to that already presented in Chapter~\ref{Chapter_revealing} to demonstrate entanglement localisation.}
One can see that the initial state in Eq. (\ref{EQ_exp3}) is disentangled in the partition $AB:C$ and is correlated with mutual information $I_{AB:C}(0)=1$.
This example also has the resource equality satisfied and the dynamics of that in Fig.~\ref{FIG_sp_exp1}.
Furthermore, the dynamics from Eq. (\ref{EQ_exp3}) has been shown to have zero quantum discord $D_{AB|C}$ at all times (see Chapter~\ref{Chapter_revealing}), showing that the correlation between $C$ and $AB$ is purely classical.
One can extend further and see that initial states of the form $p\ket{\psi_+}\bra{\psi_+}\otimes \ket{+}\bra{+}+(1-p)\ket{\phi_+}\bra{\phi_+}\otimes \ket{-}\bra{-}$, where $p$ stands for probability, can be made maximally entangled by Hamiltonian in Eq.~(\ref{EQ_exp3}).
This maximum entanglement is achieved in $T=\pi/4$ for $0<p<1$, see Fig.~\ref{FIG_sp_exp3}.
Independent of $p$, the Bures angle dynamics is given by the dashed line in Fig.~\ref{FIG_sp_exp1}, which indicates saturation.

\begin{figure}[h]
\centering
\includegraphics[scale=0.37]{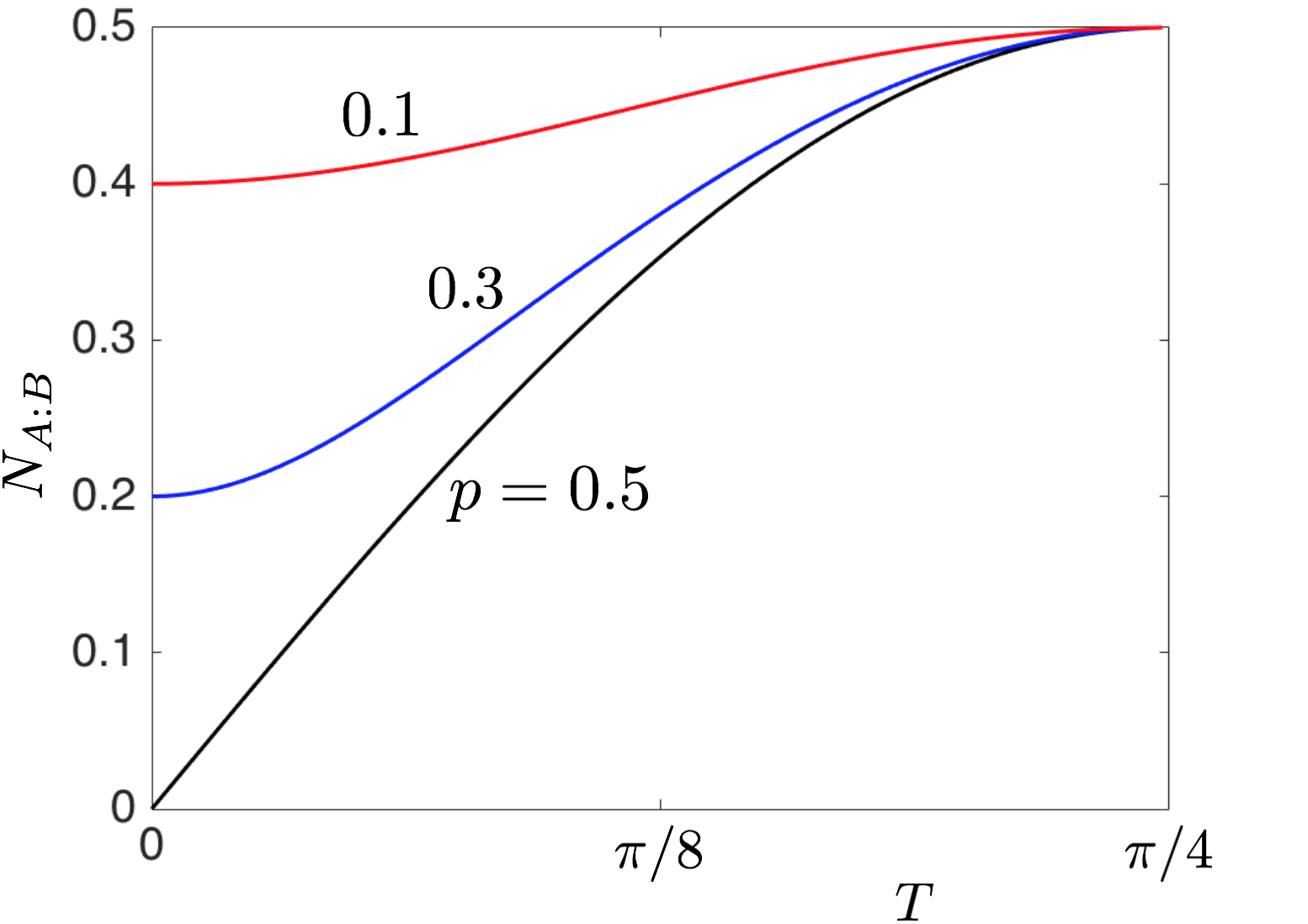}
\caption{Maximum entangling speed for various initial states.
The Hamiltonian is taken to be $H=\frac{\hbar \Omega}{2}(\sigma^x_A\otimes \sigma^x_C+\sigma^x_B \otimes \sigma^x_C)$ with initial state $p\ket{\psi_+}\bra{\psi_+}\otimes \ket{+}\bra{+}+(1-p)\ket{\phi_+}\bra{\phi_+}\otimes \ket{-}\bra{-}$.
Entanglement is plotted for $p=0.1$ (red curve), $0.3$ (blue curve), and $0.5$ (black curve).
}
\label{FIG_sp_exp3}
\end{figure}

Let us compare the three examples shown above. 
First of all, it is clear that starting with pure states, e.g., $\ket{00}$, it is possible to get maximum entanglement via direct interactions.
Also note that during the evolution, the purity of the state is preserved.
This is because unitary operations do not change purity.
Therefore, if one were to start with a mixed initial state $\rho_{AB}$, the maximum entanglement can never be achieved. 
This is true as maximum entanglement is given by pure states (as measured by any entanglement monotone) \cite{streltsov2012general}. 
In the indirect interaction scenario, the evolution of the state of $AB$ is not unitary. 
Therefore, the dynamics can change the purity of the state $\rho_{AB}$.
For example, in the last two examples, $\rho_{AB}$ changes from having zero entanglement with purity $0.5$ to maximum entanglment with purity $1$. 
Hence, for mixed initial states, one would want to utilise mediators to distribute maximum entanglement between the principal objects $A$ and $B$.

Finally, one might wonder, for an initial mixed state, is it possible to achieve maximum entanglement with direct interaction setup, this time allowing the principal objects to be open to their local environments?
This question is sensible as interactions with environments can change the purity of the principal objects, and therefore allowing them to reach maximum entanglement.
To illustrate this, let us assume a scenario where one of the principal objects, say $A$, is interacting with a single-qubit environment $C$.
Furthermore, take the initial state of Eq. (\ref{EQ_exp3}) and Hamiltonian of the form $H=\hbar \Omega \:\sigma^x_A\otimes \sigma^x_C$. 
One can confirm that the resulting $A:B$ entanglement dynamics is identical as that in Fig.~\ref{FIG_sp_exp1}.

\section{Initial correlation with mediators bounds localised entanglement}

Note that the example in (\ref{EQ_exp3}), the entanglement localisation, realises maximum entanglement with dynamics saturating the direct bound by having initial state with only classical correlation in the partition $AB:C$.
It is then natural to ask what bounds the entanglement gain.
In particular, whether the gain is related to the initial mutual information $I_{AB:C}(0)$.
In order to answer this question, let us begin by presenting the following Lemma.

\begin{lemma}\label{LM_initial}
For a three-party system $ABC$, where the subsystem $C$ is only classically correlated, i.e., $\rho=\sum_cp_c\:\rho_{AB|c}\otimes \ket{c}\bra{c}$ with $\{\ket{c}\}$ forming an orthonormal basis, the relative entropy of entanglement follows the bound
\begin{equation}
E_{A:BC}-E_{A:B}\le I_{AB:C}.
\end{equation}
\end{lemma}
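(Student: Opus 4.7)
The plan is to reduce the claimed inequality to a concavity property of the functional $F(\rho) := E_{A:B}(\rho) + S(\rho)$, which then follows directly from the variational definition of the relative entropy of entanglement. First I would exploit the quantum--classical structure on $C$ to rewrite both sides separately.

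Because $\{|c\rangle\}$ is orthonormal, the flags condition recalled in Chapter~\ref{Chapter1} gives $E_{A:BC}(\rho) = \sum_c p_c\, E_{A:B}(\rho_{AB|c})$. For the mutual information, a direct computation on the block-diagonal state yields $S(\rho_C) = H(\{p_c\})$ and $S(\rho) = H(\{p_c\}) + \sum_c p_c\, S(\rho_{AB|c})$, so
$$I_{AB:C} = S(\rho_{AB}) - \sum_c p_c\, S(\rho_{AB|c}),$$
which is the Holevo $\chi$ quantity of the ensemble $\{p_c, \rho_{AB|c}\}$. Substituting these identities into $E_{A:BC} - E_{A:B} \le I_{AB:C}$ reduces the claim to
$$\sum_c p_c\, F(\rho_{AB|c}) \le F(\rho_{AB}), \qquad \rho_{AB} = \sum_c p_c\, \rho_{AB|c}.$$

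To close the argument I would use the variational representation $F(\rho) = -\text{tr}(\rho \log \sigma^{*}(\rho))$, where $\sigma^{*}(\rho)$ is a separable state attaining the REE infimum, and test it with $\tau := \sigma^{*}(\rho_{AB})$. Since $\tau$ is separable but need not be optimal for any given $\rho_{AB|c}$, the definition of REE yields $F(\rho_{AB|c}) \le -\text{tr}(\rho_{AB|c}\log\tau)$ for every $c$. Averaging with weights $p_c$ and using linearity of the trace together with $\rho_{AB} = \sum_c p_c\, \rho_{AB|c}$ gives
$$\sum_c p_c\, F(\rho_{AB|c}) \le -\sum_c p_c\, \text{tr}(\rho_{AB|c}\log\tau) = -\text{tr}(\rho_{AB}\log\tau) = F(\rho_{AB}),$$
which is exactly what we need.

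The only delicate point is the existence of an optimising separable state $\sigma^{*}(\rho_{AB})$; in finite dimensions this follows from compactness of the separable set and lower semi-continuity of the relative entropy, and otherwise one can replace $\tau$ by a minimising sequence $\tau_n$ and let $n \to \infty$. Beyond that, the proof is a short consequence of the flags condition and the variational form of REE, so I do not expect any real obstacle.
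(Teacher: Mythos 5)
Your proposal is correct and is essentially the paper's own argument: both rest on the flags condition $E_{A:BC}=\sum_c p_c\,E_{A:B}(\rho_{AB|c})$, the identity $I_{AB:C}=S(\rho_{AB})-\sum_c p_c S(\rho_{AB|c})$ for quantum--classical states, and the key step of testing each conditional REE against the (suboptimal for it) separable state that is optimal for the average $\rho_{AB}$. The paper writes this as $E_{A:BC}-E_{A:B}=I_{AB:C}+\eta$ with $\eta\le 0$, which is precisely your concavity statement $\sum_c p_c F(\rho_{AB|c})\le F(\rho_{AB})$ for $F(\rho)=E_{A:B}(\rho)+S(\rho)=\inf_{\sigma\in\mathcal{S}}\left(-\mbox{tr}(\rho\log_2\sigma)\right)$, so the repackaging is only cosmetic.
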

\begin{proof}
By definition of the REE we have $E_{A:BC}(\rho)= -\mbox{tr}(\rho \log_2 \sigma)-S(\rho)$, where $\sigma$ is the closest separable state to $\rho$ \cite{REEmutualinfo}. 
According to the flags condition $E_{A:BC}=\sum_c p_c\: E_{A:B}(\rho_{AB|c})$, where $E_{A:B}(\rho_{AB|c})=-\mbox{tr}(\rho_{AB|c}\log_2 \sigma_{AB|c})-S(\rho_{AB|c})$ \cite{flags}.
Also, it has been shown that $\sigma=\sum_cp_c\:\sigma_{AB|c}\otimes \ket{c}\bra{c}$, where $\sigma_{AB|c}$ is a separable state closest to $\rho_{AB|c}$~\cite{flags}. 
Next, we note for the first term in the definition of $E_{A:BC}$:
\begin{eqnarray}
&&-\mbox{tr}\left(\sum_cp_c\:\rho_{AB|c}\otimes \ket{c}\bra{c}\log_2(\sum_jp_j\:\sigma_{AB|j}\otimes \ket{j}\bra{j})\right)\nonumber \\
&=&-\mbox{tr}\left(\sum_cp_c\:\rho_{AB|c}\otimes \ket{c}\bra{c}\sum_j\log_2(p_j\:\sigma_{AB|j})\otimes \ket{j}\bra{j}\right)\nonumber \\
&=&-\mbox{tr}\left(\sum_cp_c\:\rho_{AB|c}\log_2(p_c\:\sigma_{AB|c})\right)\nonumber \\
&=&-\sum_cp_c (\log_2(p_c)+\mbox{tr}(\:\rho_{AB|c}\log_2(\sigma_{AB|c})))\nonumber \\
&=&S_C+\sum_cp_c (-\mbox{tr}(\:\rho_{AB|c}\log_2(\sigma_{AB|c}))),
\end{eqnarray}
where each term in the sum is minimised as $\sigma_{AB|c}$ is closest to $\rho_{AB|c}$.
On the other hand, we have $E_{A:B}=-\mbox{tr}(\sum_c p_c\: \rho_{AB|c}\log_2 \sigma_{AB})-S_{AB}$, where $\sigma_{AB}$ is the closest separable state to $\sum_c p_c\: \rho_{AB|c}$.
Putting things together we have $E_{A:BC}-E_{A:B}=I_{AB:C}+\eta$,
where $I_{AB:C}=S_{AB}+S_C-S_{ABC}$ and 
\begin{eqnarray}
\eta&=&\sum_cp_c\:[-\mbox{tr}(\:\rho_{AB|c}\log_2(\sigma_{AB|c}))]\nonumber \\
&&-\sum_cp_c\:[-\mbox{tr}(\:\rho_{AB|c}\log_2(\sigma_{AB}))].
\end{eqnarray}
One can see that $\eta$ is negative, since $-\mbox{tr}(\:\rho_{AB|c}\log_2(\sigma_{AB|c}))$ is minimised, confirming the Lemma.
\end{proof}

Now, for the dynamics, let us consider the indirect interaction scenario where all our objects are allowed to be open to their own local environment. 
The entanglement bound is proven in the following Theorem.

\begin{theorem}\label{TH_ebound}
Consider the indirect interaction scenario where all the objects $A$, $B$, and the mediator $C$ are open to their local environment.
If $C$ is classical at all times, i.e., $D_{AB|C}(t)=0$ for $t\in [0,\tau]$, we have
\begin{equation}\label{EQ_bb}
E_{A:B}(\tau)-E_{A:B}(0)\le I_{AB:C}(0).
\end{equation} 
\end{theorem}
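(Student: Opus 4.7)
The plan is to chain together the two results already established earlier in the thesis: Theorem \ref{TH_revealing} (entanglement in the $A:BC$ partition is non-increasing when $D_{AB|C}=0$ at all times, even in the presence of local environments) and Lemma \ref{LM_initial} (which bounds the difference $E_{A:BC}-E_{A:B}$ by $I_{AB:C}$ for states with classical $C$). The assumption $D_{AB|C}(t)=0$ for $t\in[0,\tau]$ is precisely the premise needed to invoke both tools, so the proof should essentially be a three-line inequality chain.

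First I would upper-bound $E_{A:B}(\tau)$ by $E_{A:BC}(\tau)$, which follows from the monotonicity of the relative entropy of entanglement under the local operation of tracing out $C$ (local with respect to the partition $A:BC$ interpreted as $A:(BC)$). Next, since the premise guarantees $D_{AB|C}(t)=0$ throughout the interval, Theorem \ref{TH_revealing} applies directly and gives $E_{A:BC}(\tau)\le E_{A:BC}(0)$. Finally, at the initial time the state $\rho(0)$ is quantum--classical in the $AB:C$ partition (by the $t=0$ hypothesis), so Lemma \ref{LM_initial} yields $E_{A:BC}(0)\le E_{A:B}(0)+I_{AB:C}(0)$. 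Concatenating the three inequalities,
\begin{equation}
E_{A:B}(\tau)\;\le\;E_{A:BC}(\tau)\;\le\;E_{A:BC}(0)\;\le\;E_{A:B}(0)+I_{AB:C}(0),
\end{equation}
and rearranging produces exactly the claimed bound (\ref{EQ_bb}).

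There is no real obstacle here since every ingredient is already proved. The only subtlety worth flagging explicitly is that Lemma \ref{LM_initial} is applied at $t=0$ (where $C$ is classical by hypothesis), not at $t=\tau$; invoking it at $\tau$ would give a time-dependent bound involving $I_{AB:C}(\tau)$, which is weaker and less informative for experimental use. The initial-time application is also the natural one because $I_{AB:C}(0)$ quantifies the pre-existing resource available for entanglement localisation, matching the interpretation developed earlier in the chapter (e.g., in the entanglement-localisation example where $I_{AB:C}(0)=1$ gates the amount of gain achievable via a classical mediator). If one wanted to strengthen the statement, a possible refinement would be to note that the same argument combined with the symmetry of $E_{A:BC}(0)$ under swapping the probe in the spectator role yields an analogous bound $E_{A:B}(\tau)-E_{A:B}(0)\le I_{AB:C}(0)$ regardless of which probe one chooses to group with $C$, but this is not required for the theorem as stated.
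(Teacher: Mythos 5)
Your chain of inequalities is exactly the paper's proof: monotonicity under tracing out $C$, then Theorem \ref{TH_revealing} for $E_{A:BC}(\tau)\le E_{A:BC}(0)$, then Lemma \ref{LM_initial} applied to the initial quantum--classical state. The argument is correct and matches the paper's route step for step.
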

\begin{proof}
The proof is done with the following chain of equations:
\begin{eqnarray}
E_{A:B}(\tau)-E_{A:B}(0)&\le&E_{A:BC}(\tau)-E_{A:B}(0) \label{EQ_11}\\
&\le &E_{A:BC}(0)-E_{A:B}(0)\label{EQ_22}\\
&\le&I_{AB:C}(0),\label{EQ_33}
\end{eqnarray}
where the steps are justified as follows.
The inequality (\ref{EQ_11}) is due to the monotonicity of entanglement under local operations (tracing out the mediator $C$), i.e., $E_{A:B}(\tau)\le E_{A:BC}(\tau)$.
Line (\ref{EQ_22}) follows from Theorem \ref{TH_revealing}, stating that entanglement in the partition $A:BC$ cannot grow via classical $C$, that is $E_{A:BC}(\tau)\le E_{A:BC}(0)$.
Finally, Lemma \ref{LM_initial} confirms line (\ref{EQ_33}).
\end{proof}

The bound in Eq.~(\ref{EQ_bb}) can be made simpler at the expense of tightness.
Let us note that for states separable in the partition $AB:C$ (including states where $C$ is classical, i.e., $\rho=\sum_cp_c\:\rho_{AB|c}\otimes \ket{c}\bra{c}$), the conditional entropy $\{S_{AB|C},S_{C|AB}\}\ge 0$.
This means that the mutual information can be bounded as $I_{AB:C}=S_{AB}-S_{AB|C}\le S_{AB}$ or $I_{AB:C}=S_{C}-S_{C|AB}\le S_{C}$.
Therefore, entanglement gain via classical mediators cannot be larger than $\log_2(d_C)$ as it is the maximum entropy for system $C$.
We would also like to mention that if one observes $E_{A:B}(t)-E_{A:B}(0)> S_{AB}(0)$ then there had to be non-zero quantum discord $D_{AB|C}$ during the evolution. 
This is a witness of non-classicality of $C$ by observing only the objects $A$ and $B$, similar to the detection of quantum discord in Chapter~\ref{Chapter_revealing}.

As mentioned in Preliminaries, the evolution of a quantum state can be sped up simply by providing more energy to the system.
From the results presented in this chapter, one can also speed up the creation of entanglement between two objects via a mediating system that is initially correlated with them.
Indeed, for initially uncorrelated systems, it would take longer time to reach maximum entanglement.
In a way, this puts the two quantities, energy and correlations, on the same footing. 
Both can be seen as a resource for fast distribution of quantum entanglement.

\section{Charging power}

In this section we study the quickest way to flip a quantum bit, i.e., to evolve it from the ground state $\ket{0}$ to the excited state $\ket{1}$.
This is known as charging.
Also for this task, we normalise the resource by having $\min \{\langle M\rangle,\Delta M\}=1$.
For simplicity, let us first consider a system of two qubits. 
One would say that even free Hamiltonian of the form $H=\hbar \Omega(\sigma^x_A+\sigma^x_B)$ will suffice.
While this will charge the system, it does not provide the quickest process as we will show below.
Starting with the state $\ket{00}$, the resource equality requirement sets the free Hamiltonian to be $H_{\text{free}}=\hbar \Omega (\sigma^x_A+\sigma^x_B)/\sqrt{2}$.
As can be seen from the blue dashed line in Fig.~\ref{FIG_chrg}, this dynamics does not saturate the time bound.

We consider another Hamiltonian, $H_{\text{int}}=\hbar \Omega(\sigma^x_A\otimes \sigma^x_B)$, which involves interactions between $A$ and $B$.
Indeed, with this Hamiltonian, charging is done fastest as illustrated by the black dashed line in Fig.~\ref{FIG_chrg}.
The generalisation of the fastest charging to $N$ qubits simply reads
\begin{eqnarray}
\ket{\psi(0)}&=&\ket{0}^{\otimes N},\nonumber \\
H&=&\hbar \Omega\: (\sigma^x_1\otimes \sigma^x_2 \otimes \cdots \otimes \sigma^x_N).
\end{eqnarray}
Independent of $N$, fully charged state $\ket{1}^{\otimes N}$ is achieved in $T=\pi/2$.

\subsection{Charge quantifier}
Let us consider two quantum objects $A$ and $B$. 
The goal is to evolve the state from $\ket{00}$ to the fully charged state $\ket{11}$. 
As a way of measuring how close the state $\rho_{AB}(t)$ is to the target state, one may define the charge using the root fidelity as $\Xi_{AB}(t)=\mathcal{F}(\ket{11}\bra{11},\rho_{AB}(t))$.
The value would go from $0$ to $1$ (fully charged).

For direct interactions, where the state is pure, the charge simply reduces to $\Xi_{AB}(t)=|\langle11|\psi_{AB}(t)\rangle|$.
This would mean that for immediate or fast charging, the state $|\psi_{AB}(\Delta t)\rangle$ has to have a component $\ket{11}$ in the time of order $\Delta t$ such that $\Xi_{AB}(\Delta t)>0$, i.e., $\dot {\Xi}_{AB}(0)>0$.
This scenario is nicely illustrated in Fig.~\ref{FIG_chrg} by using the $H_{\text{int}}$.
On the other hand, free Hamiltonians of the form $H_A+H_B$ would evolve the subsystems separately, such that the state at time $\Delta t$ reads $\ket{\psi_{AB}(\Delta t)}=(\openone-i\Delta t H_A/\hbar)\ket{0}\otimes(\openone-i\Delta t H_B/\hbar)\ket{0}$.
One can easily see that, in this case, $\Xi_{AB}(\Delta t)=0$.
From these observations, we note that positive charging rate is linked with the presence of entanglement between objects $A$ and $B$.
This is because entanglement is nonzero for the state $\ket{\psi_{AB}(\Delta t)}$ with $\ket{11}$ component.

\begin{figure}[!h]
\centering
\includegraphics[scale=0.55]{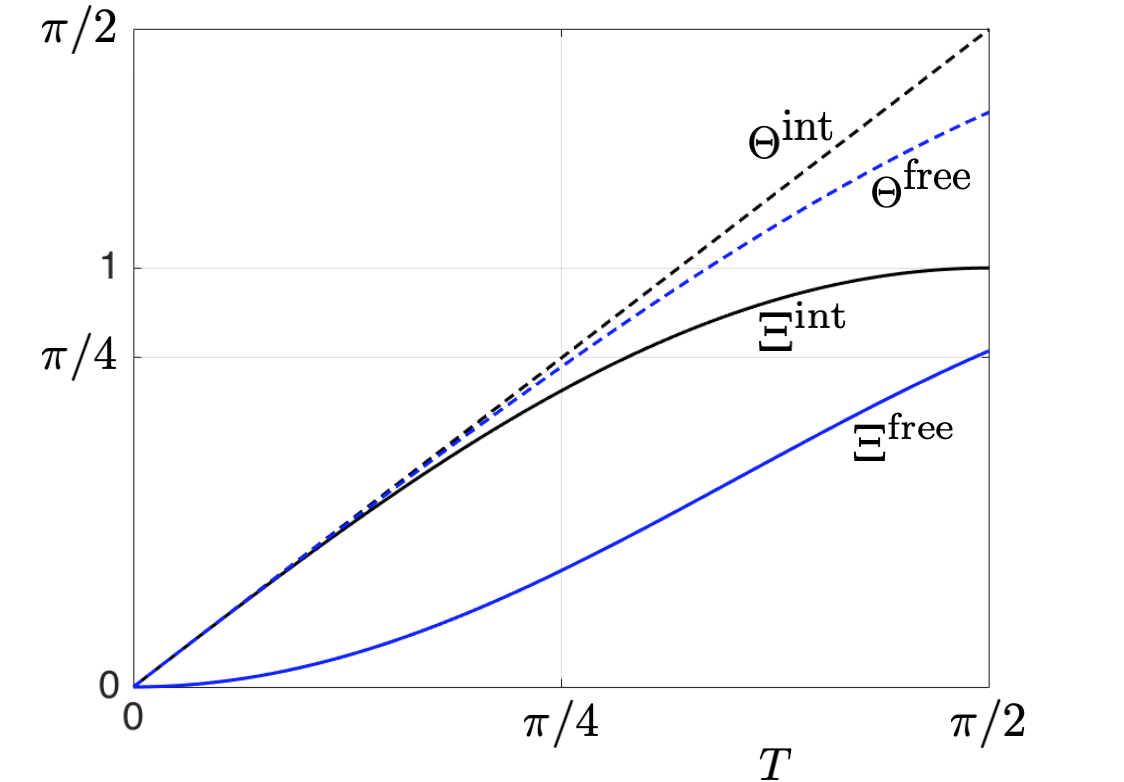}
\caption{Dynamics of charge $\Xi$ (solid curves) and Bures angle (dashed lines) for interacting (black) and non-interacting (blue) qubits. 
In both cases, the resource has been fixed to the same amount, i.e., $\min \{\langle M\rangle,\Delta M\}=1$.
}
\label{FIG_chrg}
\end{figure}

For indirect interactions, with system $C$ as the mediator, the initial state is of the form $\ket{00}\bra{00}\otimes \rho_C$.
We show with the following Lemma that an initially correlated mediator is also a factor for having positive charging rate.
This is similar to the entanglement distribution task.

\begin{lemma}
For indirect interactions, where all objects can be open to their local environment, the charging rate follows $\dot {\Xi}_{AB}(0)=0$.
\end{lemma}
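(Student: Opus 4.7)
The plan is to mirror the proof of Theorem~\ref{TH_rate}. Since $\Xi_{AB}^{2}(t)=\langle 11|\rho_{AB}(t)|11\rangle$ starts from zero, it suffices to show that this matrix element scales at least as $O(\Delta t^{4})$ for small $\Delta t$, which forces $\Xi_{AB}(\Delta t)=O(\Delta t^{2})$ and hence $\dot{\Xi}_{AB}(0)=0$. I would expand $\rho_{AB}(\Delta t)$ using the Lindblad master equation applied to the initially uncorrelated state $\rho(0)=\ket{00}\bra{00}\otimes \rho_{C}$ and track the population of $\ket{11}$ order by order.

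Using Eq.~(\ref{EQ_rabdt}) of Theorem~\ref{TH_rate}, the reduced state at first order in $\Delta t$ reads
\begin{equation*}
\rho_{AB}(\Delta t)=\ket{00}\bra{00}-\frac{i\Delta t}{\hbar}[E_{C_{1}}H_{A}+E_{C_{2}}H_{B},\ket{00}\bra{00}]+\Delta t(L_{A}+L_{B})\ket{00}\bra{00}+O(\Delta t^{2}).
\end{equation*}
The commutator $[E_{C_{1}}H_{A},\ket{00}\bra{00}]=[E_{C_{1}}H_{A},\ket{0}\bra{0}_{A}]\otimes \ket{0}\bra{0}_{B}$ factorises onto $A$ with the $B$-factor $\ket{0}\bra{0}_{B}$, and similarly for the $H_{B}$ commutator and for the single-probe dissipators $L_{A},L_{B}$. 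In every such term, at least one probe remains in $\ket{0}\bra{0}$, which is orthogonal to $\ket{1}\bra{1}$ under $\langle 1|\cdot|1\rangle$. Hence the first-order contribution to $\langle 11|\rho_{AB}(\Delta t)|11\rangle$ vanishes identically.

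Extending the argument to $O(\Delta t^{2})$ uses the same single-probe locality at the level of the second-order Taylor expansion of the full tripartite generator $\mathcal{L}$: since the Hamiltonian has the mediator-only form $H=H_{AC}+H_{BC}$, every elementary application of $H$ (or of any local Lindblad operator $L_{X}$) flips at most one of the two probes. Producing the diagonal element $\ket{1}\bra{1}_{A}\otimes \ket{1}\bra{1}_{B}$ from $\ket{0}\bra{0}_{A}\otimes \ket{0}\bra{0}_{B}$ therefore requires at least two flips on each probe, i.e.~at least four separate probe-level actions on $\rho(0)$. A direct inspection of the terms $H^{2}\rho(0)$, $\rho(0)H^{2}$, $H\rho(0)H$ and their mixed coherent--dissipative analogs shows that the only $\ket{1}\bra{1}_{A}\ket{1}\bra{1}_{B}$-relevant combinations, e.g.~$H_{AC}H_{BC}\rho(0)$, produce off-diagonal entries such as $\ket{11}\bra{00}\otimes(\cdot)_{C}$, which vanish after sandwiching by $\ket{11}\bra{11}\otimes \openone_{C}$ and tracing out $C$.

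Combining these steps gives $\Xi_{AB}^{2}(\Delta t)=\langle 11|\rho_{AB}(\Delta t)|11\rangle=O(\Delta t^{4})$, so that $\Xi_{AB}(\Delta t)=O(\Delta t^{2})$ and $\dot{\Xi}_{AB}(0)=0$. The main obstacle is precisely this second-order bookkeeping: while the first-order statement is essentially immediate from the effectively local reduction already established in Theorem~\ref{TH_rate}, one must carefully exclude $\ket{11}\bra{11}$ contributions from every mixed coherent--dissipative cross term in $\mathcal{L}^{2}\rho(0)$, a verification that reduces cleanly to the fact that each constituent generator $H_{AC}$, $H_{BC}$, $L_{A}$, $L_{B}$, $L_{C}$ touches at most one of the principal probes and so cannot, in only two applications, install the $\ket{1}\bra{1}$ diagonal element on both $A$ and $B$ simultaneously.
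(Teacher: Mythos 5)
Your first-order step is exactly the paper's proof: starting from Eq.~(\ref{EQ_rabdt}) with $\rho_0=\ket{00}\bra{00}\otimes\rho_C$, every $O(\Delta t)$ term leaves at least one probe in $\ket{0}\bra{0}$, so $\langle 11|\rho_{AB}(\Delta t)|11\rangle$ receives no first-order contribution; the paper stops there and declares $\Xi_{AB}(\Delta t)=0$. You are right that this is not literally sufficient, since $\Xi_{AB}=\sqrt{\langle 11|\rho_{AB}|11\rangle}$ means a nonvanishing coefficient $c$ at order $\Delta t^2$ in the matrix element already gives $\dot\Xi_{AB}(0)=\sqrt{c}>0$. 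Recognising that the square root forces a second-order analysis is a genuine improvement in care over the paper's own argument.

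However, your second-order bookkeeping has a gap that cannot be closed in the stated generality. You inspect the coherent terms $H^2\rho_0$, $\rho_0H^2$, $H\rho_0H$ and assert that the ``mixed coherent--dissipative analogs'' behave the same because each generator touches only one probe. The flip-counting is correct for the Hamiltonian, which acts on one side of $\rho$ per application, but a Lindblad dissipator acts on \emph{both} sides at once: a single application of $L_A$ already contains $\sum_k Q_A^k\ket{0}\bra{0}_AQ_A^{k\dagger}$, which installs the diagonal element $\ket{1}\bra{1}_A$ whenever $\langle 1|Q_A^k|0\rangle\neq 0$. The purely dissipative second-order term $\Delta t^2\,L_AL_B\ket{00}\bra{00}$ (which you never examine) therefore contributes $\sum_{k,l}|\langle 1|Q_A^k|0\rangle|^2\,|\langle 1|Q_B^l|0\rangle|^2\,\Delta t^2$ to $\langle 11|\rho_{AB}(\Delta t)|11\rangle$. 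For local environments that can excite the probes (e.g.\ thermal jump operators proportional to $\sigma^+$) this coefficient is strictly positive, the matrix element is of order $\Delta t^2$ rather than your claimed $O(\Delta t^4)$, and $\dot\Xi_{AB}(0)>0$: two independently heated, non-interacting qubits charge at a nonzero rate with no mediator at all. So the $O(\Delta t^4)$ bound fails, and the lemma read at your level of rigour requires the additional hypothesis $\langle 1|Q_X^k|0\rangle=0$ for $X=A,B$ (local noise that cannot excite the ground state); under that hypothesis your argument does go through, and the dissipators then contribute nothing to the $\ket{11}$ population at any order below the coherent $\Delta t^4$ channel you identified.
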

\begin{proof}
First, let us express the charge as 
\begin{equation}
\Xi_{AB}(t)=\mathcal{F}(\ket{11}\bra{11},\rho_{AB}(t))=\sqrt{\langle 11|\rho_{AB}(t)|11\rangle}.
\end{equation} 
Next, from Eq. (\ref{EQ_rabdt}), the state of $AB$ at time $\Delta t$ is given by 
\begin{eqnarray}
\rho_{AB}(\Delta t)&=&\ket{00}\bra{00}-i\frac{\Delta t}{\hbar}[H_AE_C+H_BE_{\gamma},\ket{00}\bra{00}]\nonumber \\
&&+\Delta t(L_A+L_B)\ket{00}\bra{00},
\end{eqnarray}
where we have used $\rho_0=\ket{00}\bra{00}\otimes \rho_C$.
One can see that $\Xi_{AB}(\Delta t)=0$ from the observation that $\langle 11|\rho_{AB}(\Delta t)|11\rangle=0$, which proves the Lemma.
\end{proof}

\section{Summary}
Quantum speed limit sets the minimum time required for evolving a quantum state and therefore also its properties. 
In this chapter we have presented the time bound for the aim of distributing quantum entanglement between two objects both via direct as well as indirect interactions. 
We have shown that the indirect interaction setting cannot beat the direct interaction setting in terms of entangling speed. 
Furthermore, the correlation between the mediator and the principal objects is required for optimal distribution.
From the perspective of speeding up the creation of correlations, we discussed that both energy and correlated mediators play similar roles.
We also presented briefly a simple application of the quantum speed limit to charging of quantum batteries and compared the direct and indirect interaction settings.


\chapter{Observable quantum entanglement due to gravity} 

\label{Chapter_gravity} 

\lhead{Chapter 6. \emph{Observable quantum entanglement due to gravity}} 

\emph{This chapter presents experimental setups where quantum entanglement is generated through gravitational coupling.\footnote{Parts of this chapter are reproduced from our published article of Ref.~\cite{backtoback}, which is licensed under the Creative Commons Attribution 4.0 International License (http://creativecommons.org/licenses/by/4.0/). Where applicable, changes made will be indicated.}
Under certain natural assumptions, this in turn shows the quantum nature of gravity, which is one of the most important quests in modern physics.
Firstly, a motivation is presented with a focus on the need for experimental evidence of quantum gravity. 
Some of past experiments where gravity affected quantum matter are briefly reviewed.
I will then proceed to introduce our proposal that involves two masses coupled gravitationally.
Two cases will be considered in which the masses are either trapped in harmonic potentials or released from the traps.
I will present details of the dynamics including the calculations of analytic figures of merit that characterise the amount of generated entanglement and the accumulation time.
An analysis of other effects including environmental noises, decoherence mechanisms, and Casimir interactions will also be presented.
Finally, I discuss the conclusion one can draw from our proposal.
}

\clearpage

\section{Motivation and objectives}
The successful unification of electromagnetic, weak, and strong interactions within the quantum framework strongly suggests that gravity should also be quantised.
Up to date, however, there is no experimental evidence of quantum features of gravity.
In numerous experiments gravity is key to the interpretation of the observed data, 
but it is often sufficient to use Newtonian theory (quantum particle moving in a background classical field) or general relativity (quantum particle moving in a fixed spacetime) to gather a meaningful understanding of such data.
Milestone experiments described within Newtonian framework include gravity-induced quantum phase shift in a vertical neutron interferometer~\cite{gphase}, 
precise measurement of gravitational acceleration by dropping atoms~\cite{gravimeter}, 
or quantum bound states of neutrons in a confining potential created by the gravitational field and a horizontal mirror~\cite{nfall}.
Quantum experiments that require general relativity include gravitational redshift of electromagnetic radiation~\cite{gshift1} or time dilation of atomic clocks at different heights~\cite{clock1}.

A number of theoretical proposals discussed scenarios capable of revealing quantumness of gravity.
For example Refs.~\cite{massg1,massg2,massg22,massg3,massg4,massg5,massg6,qgdis,cavenexp} proposed the observation of a probe mass interacting with the gravitational field generated by another mass.
More recent proposals put gravity in a role of mediator of quantum correlations and are based on the fact that quantum entanglement between otherwise non-interacting objects can only increase via a quantum mediator \cite{krisnanda2017,gravity1,gravity2}, see Chapter~\ref{Chapter_revealing}.
Motivated by these proposals and by advances in optomechanics \cite{RMP.86.1391}, in particular the cooling of massive mechanical (macroscopic) oscillators close to their quantum ground state \cite{ligomirror,nm,nanomirror} 
and the measurement of quantum entanglement of a two-mode system \cite{evalue1,evalue2,evalue3}, we study two nearby cooled masses interacting gravitationally.

We propose two scenarios capable of increasing gravitational entanglement between the masses.
In the first scenario, we consider the masses trapped at all times in 1D harmonic potentials (optomechanics). In the second one, the masses are released from the optical traps.
For both settings, we derive an analytic figure of merit characterising the amount of gravitationally induced entanglement and the time it takes to observe it.
The derivation includes various initial states and shows that the objects have to be cooled down very close to their ground states and that squeezing of their initial state significantly enhances the amount of generated entanglement.
We then formulate a numerical approach, which accounts for all the relevant sources of noise affecting the settings that we propose, to identify a set of parameters required for the observation of such entanglement. 
Finally, we discuss the conclusions that can be drawn from this experiment with emphasis on the need for independent verification that the gravitational interaction between nearby objects is indeed mediated.

\section{Proposed experimental setups}
Consider two particles, separated by a distance $L$, as depicted in Fig. \ref{FIG_gr_setup}. 
In what follows, we study the setting where the massive particles are either held or released from unidimensional harmonic traps.
In the former case one can treat the particles as identical harmonic oscillators, with the same shape, mass $m$, and vibrational frequency $\omega$. 
The two oscillators and the gravitational interaction between them give rise to the total Hamiltonian $H=H_0+H_{\text g}$, where
\begin{equation}\label{EQ_hamiltonian}
H_0=\frac{p_A^2}{2m}+\frac{1}{2}m\omega^2 x_A^2+\frac{p_B^2}{2m}+\frac{1}{2}m\omega^2 x_B^2
\end{equation}
and $H_{\text g}$ describes the gravitational term.
If the harmonic traps are removed the corresponding Hamiltonian simplifies to $H_0=(p_A^2+p_B^2)/2m$.
Before we proceed with detailed calculations, we shall discuss generic features of the gravitational term and the conditions required for the creation of entanglement.

\begin{figure}[h]
\centering
\includegraphics[scale=0.3]{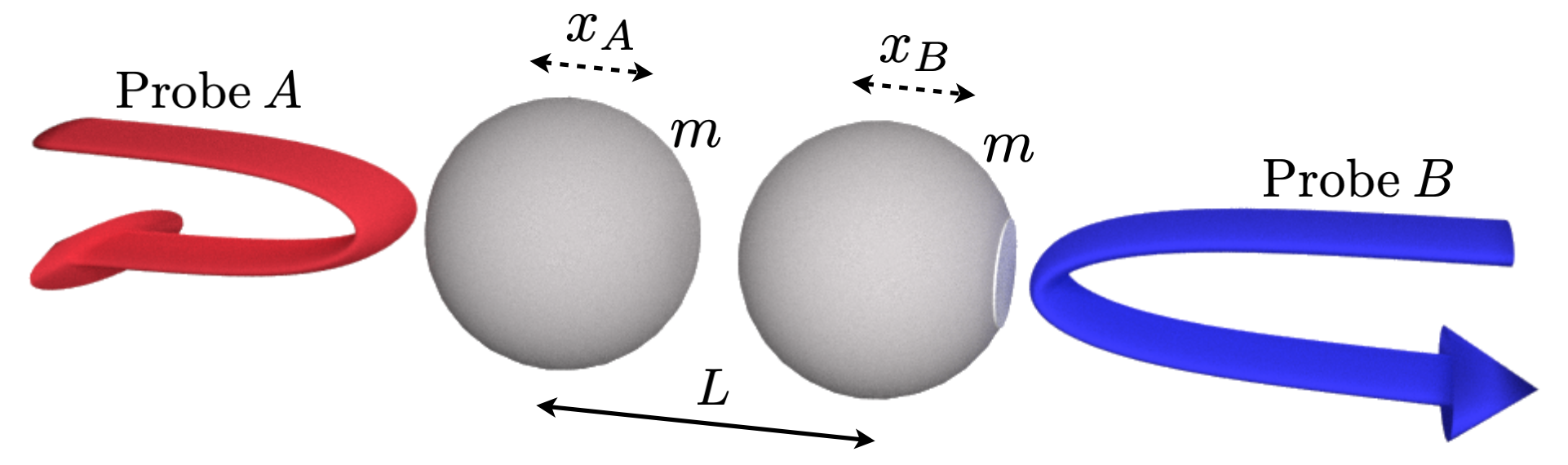}
\caption{Thought experiment for the generation of entanglement via gravitational interactions.
Two particles are a distance $L$ apart and trapped with unidimensional harmonic potentials.
It is assumed that both particles are cooled down near their ground state. 
We consider two scenarios in which the particles are either trapped at all times or released from the traps. 
In the text, we investigate entanglement dynamics for both scenarios.
Note that the generated entanglement can be probed with weak lights. 
Apart from the dominant gravitational coupling, our analysis takes noise, damping, decoherence, and Casimir forces into account. 
}
\label{FIG_gr_setup}
\end{figure}

In general, the gravitational term $H_{\text g}$ depends on the geometry of the objects. 
Various configurations have been analysed (will be presented later in Section~\ref{SC_osm}). 
The results of such analysis suggest that spherical masses give rise to the highest amount of generated entanglement.
The Newtonian gravitational energy of this setting is the same as if the two objects were point-like masses, that is 
$H_{\text g}=-Gm^2/ (L + x_B - x_A)$, where $L$ is the distance between the objects at equilibrium and $x_A$ ($x_B$) is the displacement of mass $A$ ($B$) from equilibrium.
By expanding the energy in the limit $x_A-x_B \ll L$, which is well justified for oscillators that are cooled down close to their ground state, one gets
\begin{equation}\label{EQ_hspheresexp}
H_{\text g}=-\frac{Gm^2}{L}\left(1+\frac{(x_A - x_B)}{L}+\frac{(x_A - x_B)^2}{L^2}+\cdots\right).
\end{equation}
The first term is a rigid energy offset, while the second is a bi-local term and cannot thus give rise to quantum entanglement. 
The third term, which is proportional to $(x_A-x_B)^2$, is the first that couples the masses. 
When written in second quantisation, it becomes apparent that this term includes contributions responsible for the correlated creation of excitations in both oscillators. 
In the quantum optics language, this is commonly referred to as a ``two-mode squeezing" operation, which can in principle  entangle the masses provided a sufficient strength of their mutual coupling.

\section{Dynamics of gravitationally induced entanglement: Oscillators}

We first provide an intuitive argument setting the scales of experimentally relevant parameters, which will then be proven rigorously.
In order to achieve considerable entanglement, we should ensure that the coupling (third term) in Eq.~(\ref{EQ_hspheresexp}) is comparable to the energy $\hbar \omega$ of each oscillator, that is $Gm^2(x_A - x_B)^2/L^3 \sim \hbar \omega$.
As we assume that the oscillators are near their ground state, we estimate their displacements by the ground state extension, $(x_A-x_B)^2 \sim 2 \hbar / m \omega$. 
We thus introduce the (dimensionless) figure of merit
\begin{equation}
\eta_{\text g} \equiv \frac{2 G m}{\omega^2L^3}.
\label{EQ_OSC_MERIT}
\end{equation}
We should have $\eta_{\text g}\sim1$ in order for the oscillators to be significantly entangled. 
This sets the requested values of the experimentally relevant parameters $m$, $\omega$, and $L$.

Below we will demonstrate the following results, which embody the key findings of our investigation:
(i) Starting from the ground state of each oscillator and assuming (for the sake of argument) only negligible environmental noise, the maximum entanglement (as quantified by the logarithmic negativity~\cite{negativity,adesso2004extremal}) generated during the dynamics is given by $E_{\mbox{\scriptsize th}}^{\max}\approx \eta_{\text g}/\ln{2}$.\footnote{Note that in this chapter, we shall use $E$ to denote the logarithmic negativity for simplicity.} 
Moreover, the time taken for entanglement to reach such maximum value is $t_{\mbox{\scriptsize th}}^{\max}=\pi/2(1-\eta_{\text g})\omega$;
(ii) Single-mode squeezing of the initial ground state of each oscillator substantially enhances the gravity-induced entanglement. 
The corresponding maximum entanglement becomes $E_{\mbox{\scriptsize sq}}^{\max}\approx |s_A+s_B|/\ln{2}$, 
where $s_j~(j=A,B)$ is the degree of squeezing of the $j$th oscillator, and we assume $\eta_{\text g} \ll s_A, s_B$.
In this case, the maximum entanglement is reached in a time $t_{\mbox{\scriptsize sq}}^{\max}=\pi/2\eta_{\text g}\omega$;
(iii) Weaker entanglement is generated with increasing temperature of the masses or coupling to the environment.

We will begin by constructing Langevin equations of the masses in Heisenberg picture. 
As we will be dealing with Gaussian states, we provide the dynamics of the covariance matrix, which completely describes the state of the whole system. 
From this dynamics, we will calculate important quantities such as quantum entanglement between the masses, both analytically for special cases and numerically for general cases.

\subsection{Langevin equations}\label{SC_leosc}

As the third term in Eq.~(\ref{EQ_hspheresexp}) is already very small under usual experimental conditions,\footnote{Note that the ratio between any two consecutive terms in Eq.~(\ref{EQ_hspheresexp}) is given by $(x_A - x_B) / L \sim \sqrt{\hbar / m \omega L^2}$. 
For instance, taking $m=100$ $\mu$g, $\omega=100$ kHz, and $L=0.1\:$ mm gives this ratio $\sim 10^{-12}$, and for macroscopic values $m = 1$ kg, $\omega = 0.1$ Hz, and $L = 1$ cm the ratio is $\sim10^{-15}$.}
we neglect all terms of order higher than the second in the displacement from equilibrium. 
We note Ref. \cite{linearp} for similar treatment of linearised central-potential interactions.
By taking the total Hamiltonian with a suitably truncated gravitational term $H_{\text g}$, one gets a set of Langevin equations in Heisenberg picture
\begin{equation}\label{EQ_langevins}
\begin{aligned}
\dot { \bm{X}}_j &=\omega \:  \bm{P}_j\qquad (j=A,B),\\ 
\dot { \bm{P}}_A&=-\omega\left(1-\eta_{\text g}\right) \bm{X}_A-\omega\eta_{\text g}\, \bm{X}_B-\gamma \,  \bm{P}_A+\xi_A+ \nu,\\
\dot { \bm{P}}_B&=-\omega\left(1-\eta_{\text g}\right) \bm{X}_B-\omega\eta_{\text g}\, \bm{X}_A-\gamma \,  \bm{P}_B+\xi_B-\nu,\\
\end{aligned}
\end{equation}
where we have introduced the constant frequency $\nu={Gm^2}/\sqrt{\hbar m\omega L^4}$ and the dimensionless quadratures $\bm{X}_j=\sqrt{m\omega/\hbar}\:x_j$ and $\bm{P}_j=p_j/\sqrt{\hbar m\omega}$. 
These equations incorporate Brownian-like noise -- described by the noise operators $\xi_j$ -- and damping (at rate $\gamma$) affecting the dynamics of the mechanical oscillators, due to their interactions with their respective environment.
We assume the (high mechanical quality) conditions $\mathcal{Q}=\omega/\gamma \gg 1$, as it is the case experimentally, so that the Brownian noise operators can {\it de facto} be treated as uncolored noise and we can write $\langle \xi_j(t)\xi_j(t^{\prime})+\xi_j(t^{\prime})\xi_j(t)\rangle/2 \simeq \gamma(2 \bar n+1)\delta(t-t^{\prime})$ for $j=A,B$~\cite{benguria1981quantum,giovannetti2001phase}. Here, $\bar n=(e^\beta-1)^{-1}$ is the thermal phonon number with $\beta=\hbar \omega/k_B T$ and $T$ the temperature of the environment with which the oscillators are in contact.  

The linearity of Eqs.~(\ref{EQ_langevins}) and the Gaussian nature of the noise make the theory of continuous variable Gaussian systems very well suited to the description of the dynamics and properties of the oscillators under scrutiny. In this respect, the key tool to use is embodied by the covariance matrix $V(t)$ associated with the state of the system, whose elements $V_{ij}(t) = \langle u_i(t)u_j(t)+u_j(t)u_i(t)\rangle/2-\langle u_i(t)\rangle \langle u_j(t)\rangle$ encompass the variances and correlations of the elements of the quadrature vector $u(t)=(\bm{X}_A(t), \bm{P}_A(t), \bm{X}_B(t), \bm{P}_B(t))^T$.
The temporal behaviour of physically relevant quantities for our system of mechanical oscillators can be drawn from $V(t)$ by making use of the approach for the solution of the dynamics that is illustrated below.

\subsection{Dynamics of covariance matrix}\label{SC_gr_cmrm}

In this section we provide the solution of the Langevin equations, and consequently the covariance matrix. 
One can rewrite the equations in (\ref{EQ_langevins}) as a single matrix equation $\dot u(t)=Ku(t)+l(t)$, with the vector $u(t)=(\bm{X}_A(t), \bm{P}_A(t), \bm{X}_B(t), \bm{P}_B(t))^T$ and a drift matrix
\begin{equation}\label{EQ_drift}
K=\left( \begin{array}{cccc} 
0&\omega&0&0\\ \eta_{\text g}
-\omega(1-\eta_{\text g})&-\gamma&-\omega\eta_{\text g}&0\\
0&0&0&\omega\\
-\omega\eta_{\text g}&0&-\omega(1-\eta_{\text g})&-\gamma
\end{array}\right).
\end{equation}
We split the last term in the matrix equation into two parts, representing the noise and constant term respectively, i.e., $l(t)=\upsilon(t)+\kappa_{\nu}$, where $\upsilon(t)=(0,\xi_A(t),0,\xi_B(t))^T$ and the constant $\kappa_{\nu}=\nu(0,1,0,-1)^T$ with $\nu={Gm^2}/\sqrt{\hbar m\omega L^4}$. 

The solution to the Langevin equations is given by 
\begin{eqnarray}\label{AEQ_gr_Lsol}
u(t)&=&W_+(t)u(0)+W_+(t)\int_0^t dt^{\prime}  W_-(t^{\prime})l(t^{\prime}),
\end{eqnarray}
where $W_{\pm}(t)=\exp{(\pm Kt)}$. 
This allows one to calculate the expectation value of the $i$th quadrature $\langle u_i(t)\rangle$ numerically, which is given by the $i$th element of 
\begin{eqnarray}\label{AEQ_quad}
&&W_+(t)\langle u(0)\rangle+W_+(t)\int_0^t dt^{\prime}  W_-(t^{\prime})\kappa_{\nu},
\end{eqnarray}
where we have used the fact that the noises have zero mean, i.e., $\langle \upsilon_i(t) \rangle=0$ and that $\langle \kappa_{\nu} \rangle=\mbox{tr}(\kappa_{\nu} \rho)=\kappa_{\nu}$.
From Eq.~(\ref{AEQ_gr_Lsol}), one can also calculate other important quantities via the covariance matrix as shown below.

Covariance matrix of our system is defined as $V_{ij}(t)\equiv \langle \{ \Delta u_i(t),\Delta u_j(t)\}\rangle/2=\langle u_i(t)u_j(t)+u_j(t)u_i(t)\rangle/2-\langle u_i(t)\rangle \langle u_j(t)\rangle$ where we have used $\Delta u_i(t)=u_i(t)-\langle u_i(t)\rangle$. 
This means that $\kappa_{\nu}$ does not contribute to $\Delta u_i(t)$ (and hence the covariance matrix) since $\langle \kappa_{\nu} \rangle=\kappa_{\nu}$. 
We can then construct the covariance matrix at time $t$ from Eq. (\ref{AEQ_gr_Lsol}) without considering $\kappa_{\nu}$ as follows
\begin{eqnarray}
V_{ij}(t)&=&\langle u_i(t)u_j(t)+u_j(t)u_i(t)\rangle/2-\langle u_i(t)\rangle \langle u_j(t)\rangle \nonumber \\
\label{EQ_gr_l3}
V(t)&=&W_+(t)V(0)W_+^T(t) \nonumber \\
&&+W_+(t)\int_0^t dt^{\prime} W_-(t^{\prime})DW_-^T(t^{\prime}) \:W_+^T(t) ,
\end{eqnarray}
where $D=\mbox{Diag}[0,\gamma(2\bar n+1),0,\gamma(2\bar n+1)]$ and we have assumed that the initial quadratures are not correlated with the noise quadratures such that the mean values of the cross terms are zero. 
A more explicit solution of the covariance matrix, after integration in Eq. (\ref{EQ_gr_l3}), is given by 
\begin{eqnarray}\label{EQ_gr_Ct}
KV(t)+V(t)K^T&=&-D+KW_+(t)V(0)W_+^T(t) \nonumber \\
&&+W_+(t)V(0)W_+^T(t)K^T \nonumber \\
&&+W_+(t)DW_+^T(t),
\end{eqnarray}
which is linear and can be solved numerically. 

Consider a special case, in which the damping term $\gamma$ is negligible, giving $D=\bm{0}$.
In this case, Eq. (\ref{EQ_gr_Ct}) simplifies to 
\begin{equation}\label{EQ_nd}
V(t)=W_+(t)V(0)W_+^T(t).
\end{equation}
In this regime we will obtain analytical results of Section~\ref{SC_gr_oscana}.

\subsubsection{Entanglement from covariance matrix}
We use logarithmic negativity to quantify the amount of entanglement between the coupled masses.
One can calculate this from the covariance matrix by using the method in Chapter~\ref{Chapter1}, which will be briefly repeated here for convenience.
The covariance matrix $V(t)$ describing our two-mode system can be written in a block form 
\begin{equation}
V(t)=\left( \begin{array}{cc} I_{A}&L\\ L^T&I_{B} \end{array}\right),
\end{equation}
where the component $I_{A}$ ($I_{B}$) is a $2\times 2$ matrix describing local mode correlation for $A$ ($B$) while $L$ is a $2\times 2$ matrix characterising the intermodal correlation. 
A two-mode covariance matrix has two symplectic eigenvalues $\{\nu_1,\nu_2\}$.
A physical system has $\nu_1,\nu_2\ge 1/2$ \cite{weedbrook2012gaussian}. 

For entangled modes, the covariance matrix will not be physical after partial transposition with respect to mode $B$ (this is equivalent to flipping the sign of the oscillator's momentum operator $\bm{P}_B$ in $V(t)$). 
This unphysical $V(t)^{T_B}$ is shown by the minimum symplectic eigenvalue $\tilde \nu_{\min}<1/2$. 
The explicit expression is given by $\tilde \nu_{\min}=(\Sigma-\sqrt{\Sigma^2-4\:\mbox{det}V} )^{1/2}/\sqrt{2}$, 
where $\Sigma=\mbox{det}I_A+\mbox{det}I_B-2\:\mbox{det}L$.
Entanglement between mode $A$ and mode $B$ is then quantified by logarithmic negativity as $E=\max \big \{0,-\log_2{(2\tilde \nu_{\min})}\big \}$~\cite{negativity, adesso2004extremal}. 
Note that the separability condition, when $V(t)^{T_B}$ has $\tilde \nu_{\min}\ge1/2$, is sufficient and necessary for two-mode systems \cite{werner2001bound}. \\

\subsection{Noiseless dynamics: Analytical solution}\label{SC_gr_oscana}

\subsubsection{Thermal initial state}
Due to weakness of the gravitational coupling, we have $\eta_{\text g} \ll 1$ in practically any realistic experimental situation, and we thus assume such condition here.\footnote{We use $\eta_{\text g}\ll1$ for all analytical derivations in this chapter.}
In the case of no damping (i.e., $\gamma = 0$) and assuming an initial (uncorrelated) thermal state of the oscillators, 
a tedious but otherwise straightforward analytical derivation shows that the entanglement between the mechanical systems, as quantified by the logarithmic negativity, oscillates in time with an amplitude of $\eta_{\text g}/\ln{2}-\log_2(2\bar n+1)$. At low operating temperature, a condition achieved through a combination of passive and radiation-pressure cooling~\cite{RMP.86.1391}, $\bar n\approx 0$ and the maximum entanglement between the oscillators is $E_{\mbox{\scriptsize th}}^{\max}\approx \eta_{\text g}/\ln{2}$, a value reached at a time $t^{\max}_{\mbox{\scriptsize th}}$ such that $\omega t^{\max}_{\mbox{\scriptsize th}}=\pi/2(1-\eta_{\text g})$. 
We present the dynamics of entanglement for varying values of $\eta_{\text g}$ in Fig.~\ref{FIG_eground}.

\begin{figure}[h!]
\centering
\includegraphics[scale=0.45]{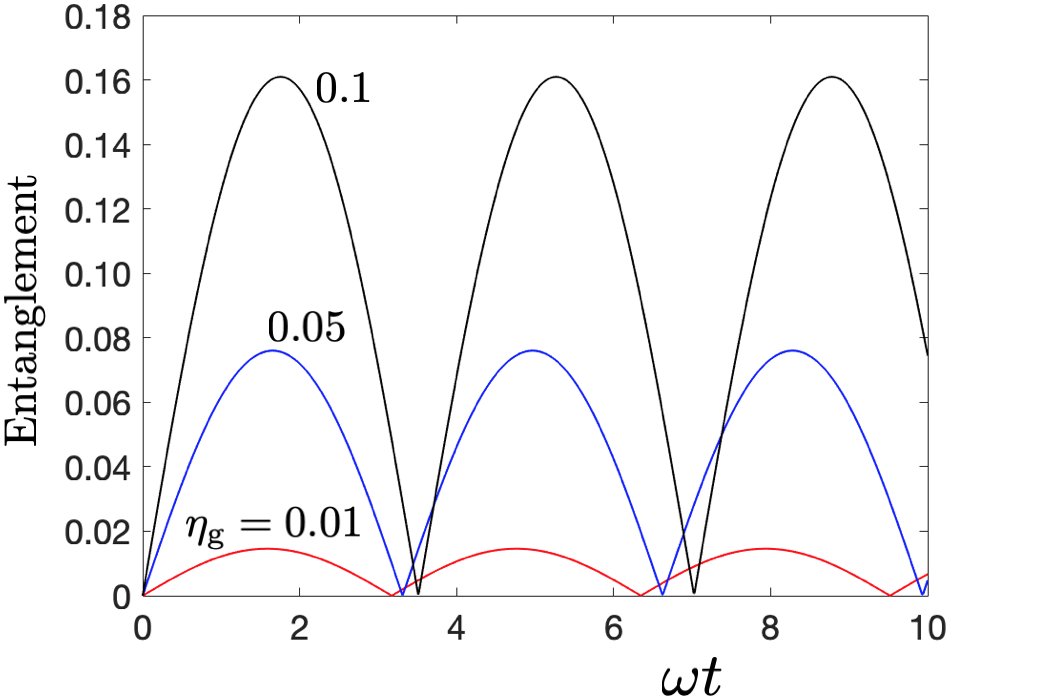}
\caption{Entanglement dynamics with initial ground state for each mass.
The figure of merit is varied as $\eta_{\text g}=0.01$ (red curve), $0.05$ (blue curve), and $0.1$ (black curve).
The maximum entanglement approximately follows $\eta_{\text g}/\ln{2}$.
}
\label{FIG_eground}
\end{figure}

\subsubsection{Squeezed thermal initial state}

An analytic solution is also possible for the case of mechanical systems initially prepared in squeezed thermal states, a situation that can be arranged by suitable optical driving~\cite{vanner2013cooling,rashid2016experimental}.
Each mass is prepared in a state $S\rho_{\text {th}}S^{\dagger}$, where $\rho_{\text {th}}$ is a thermal state and $S=\exp{(-i\:s(\bm{X}^2-\bm{P}^2)/2)}$ is the squeezing operator with strength $s$ (assumed to be real). 
This operator corresponds to anti-squeezing (squeezing) the position quadrature for $s>0$ ($s<0$).
By writing individual-oscillator squeezing as $s_j$ and assuming $s_j \gg \eta_{\text g}$, the entanglement is again observed to oscillate, but with amplitude $|s_A + s_B|/\ln{2}-\log_2(2\bar n+1)$. 
We present the entanglement dynamics in Fig.~\ref{FIG_sqsq} for varying values of the squeezing strength.
Note that it is irrelevant whether quadratures of both masses are squeezed or anti-squeezed.
We provide explanation in the Details of entanglement dynamics (Section~\ref{SC_doedyn}).
Therefore, only the degree of pre-available single-oscillator squeezing and the environmental temperature set a limit to the amount of entanglement that can be generated between the mechanical systems through the gravitational interaction.  
In the low temperature limit, where $E_{\mbox{\scriptsize sq}}^{\max}\approx |s_A+s_B|/\ln{2}$, which is in principle arbitrarily larger than the case without squeezing, 
a time $t^{\max}_{\mbox{\scriptsize sq}}=\pi/(2\eta_{\text g}\omega)\gg t^{\max}_{\mbox{\scriptsize th}}$ would be required for such entanglement to accumulate. 
Needless to say, long accumulation times are far from the  possibilities offered by state-of-the-art optomechamical experiments, which prompts an assessment that includes ab initio the effects of environmental interactions.

\begin{figure}[h!]
\centering
\includegraphics[scale=0.4]{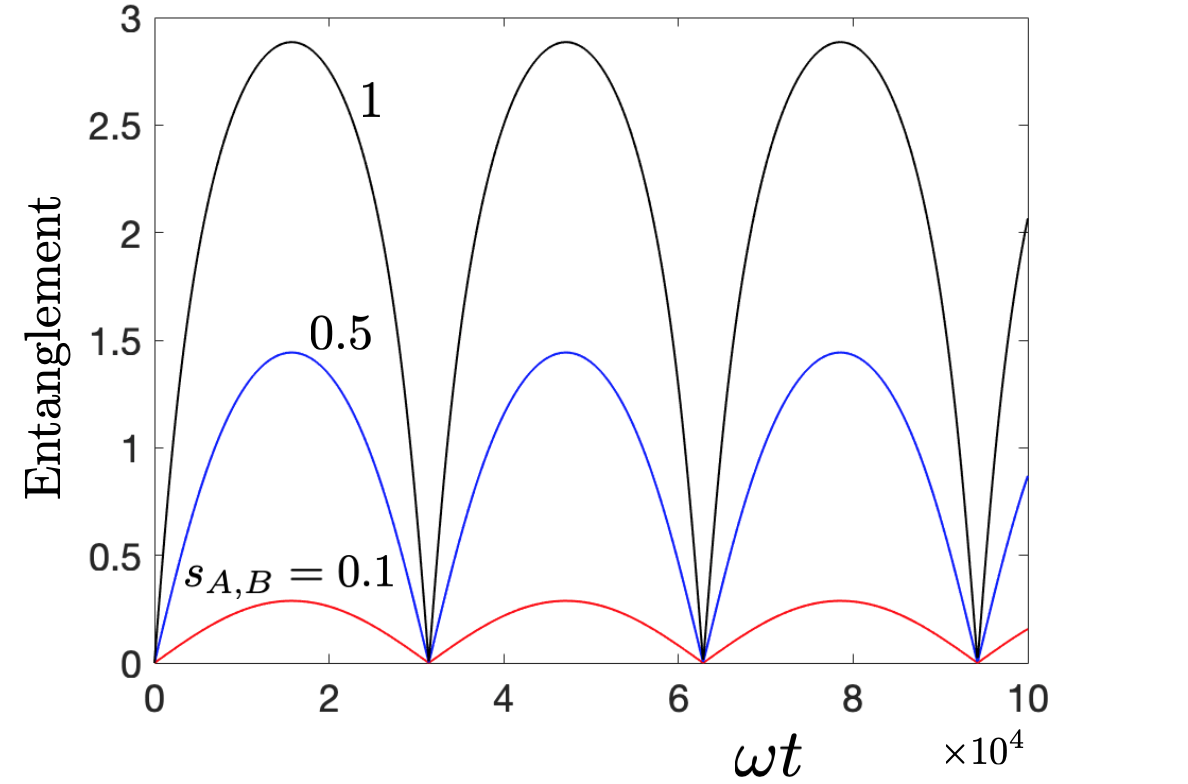}
\caption{Entanglement dynamics with squeezed initial ground state for each mass.
The figure of merit is taken as $\eta_{\text g}=10^{-4}$ and the squeezing strength is varied as $s_{A,B}=0.1$ (red curve), $0.5$ (blue curve), and $1$ (black curve).
The maximum entanglement approximately follows $2s_{A,B}/\ln{2}$.
}
\label{FIG_sqsq}
\end{figure}

\subsection{Noisy dynamics: Numerical simulation}

In the case of noisy dynamics, however, an analytical solution is no longer available and we have to resort to a numerical analysis.
Let us therefore consider the figure of merit $\eta_{\text g}$ in order to set the parameters for numerical investigation.
We consider two oscillators of spherical shape with uniform density $\rho$ and radius $R$, which are separated by a distance $L = 2.1 R$. This might be a situation matching current experiments in levitated optomechanics~\cite{kiesel2013cavity,vovrosh2017parametric}, which are rapidly evolving towards the possibility of trapping multiple dielectric nano-spheres in common optical traps and controlling their relative positions~\cite{Kiesel}. 
However, low-frequency oscillators, which are favourable for the figure of merit and typically associated with large masses, are unsuited to such platforms and would require a different arrangement, such as LIGO-like ones~\cite{ligomirror}.

In terms of the density $\rho$, we have $\eta_{\text g}=8\pi G\rho/3(2.1)^3 \omega^2$, which does not depend on the dimensions of the oscillators nor their mass.
As the density of materials currently available for such experiments varies within a range of only two orders of magnitude, the linear dependence on $\rho$ sets a considerable restriction on the values that $\eta_{\text g}$ can take. 
The densest naturally available material is Osmium, which has $\rho=22.59$~g/$\mbox{cm}^3$ and, in order to provide an upper bound to the generated entanglement that would be attainable using other materials, we shall use this density in our numerical simulations.
Accordingly, $\eta_{\text g}=1.36 \times 10^{-6}/\omega^2$, where $\omega$ is in Hz.

\begin{figure}[h!]
\centering
\includegraphics[scale=0.55]{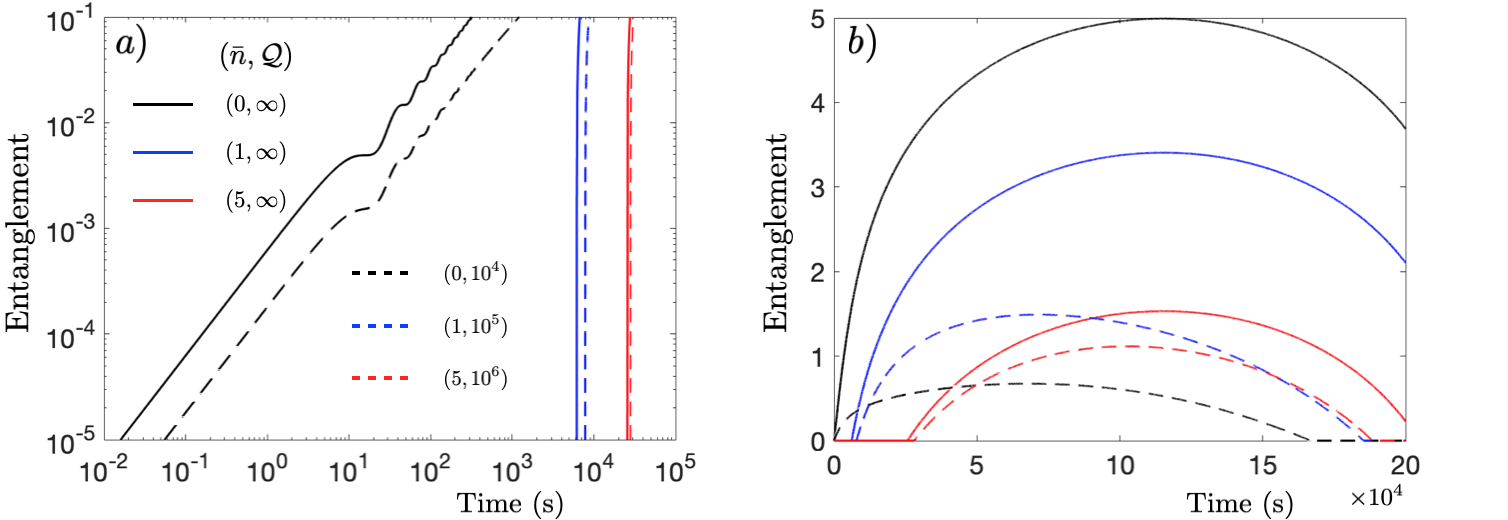}
\caption{Entanglement dynamics between two spherical oscillators. 
Dynamics for shorter times (a), showing entanglement resolution, and for longer times (b), showing high accumulation.
The varying parameters are the temperature (mean phonon number $\bar n$) and the damping of the oscillators (mechanical quality factor $\mathcal{Q}$). 
We have fixed both the frequency of the mirrors $\omega=0.1$~Hz and the amount of squeezing $s_{A,B}=1.73$.
Note that entanglement is quantified using the logarithmic negativity.
}
\label{FIG_Edyndam}
\end{figure}

Fig.~\ref{FIG_Edyndam} shows exemplary entanglement dynamics for different values of the thermal phonon number $\bar n$ and mechanical quality factor $\mathcal{Q}$.\footnote{The oscillations of entanglement for unsqueezed initial state are still present in this dynamics, showing repeating pattern with a period of $\pi/[(1-\eta_{\text g})\omega ] \approx 31$ s.}
The frequency has been fixed to $\omega=0.1$ Hz (cf. Section~\ref{SC_cwrexp}).
As expected, higher damping (lower $\mathcal{Q}$) results in the decay of entanglement, and the higher the temperature of the mirror (higher $\bar n$) the higher the mechanical quality factor needed to maintain entanglement. 
The setup allows for high entanglement, even with low coupling strength $\eta_{\text g} \sim 10^{-4}$.
However, this comes at the expense of the time for which the dynamics of the oscillators should be kept coherent.
In the same figure one can see the time required for accumulation of entanglement.
It shows that cooling down the masses close to their ground state, $\bar n\approx 0$, is crucial for the reduction of the required coherence time.

\subsection{Details of entanglement dynamics}\label{SC_doedyn}

In this section we show that entanglement gain is linked to the evolution of the position variance of each mass.
This is intuitive because bigger variance means stronger gravitational coupling for parts of the wave functions which are closer.
In order to illustrate this, we take, as an example, the oscillators setup with squeezed initial ground state for each mass.

\begin{figure}[h]
\centering
\includegraphics[scale=0.47]{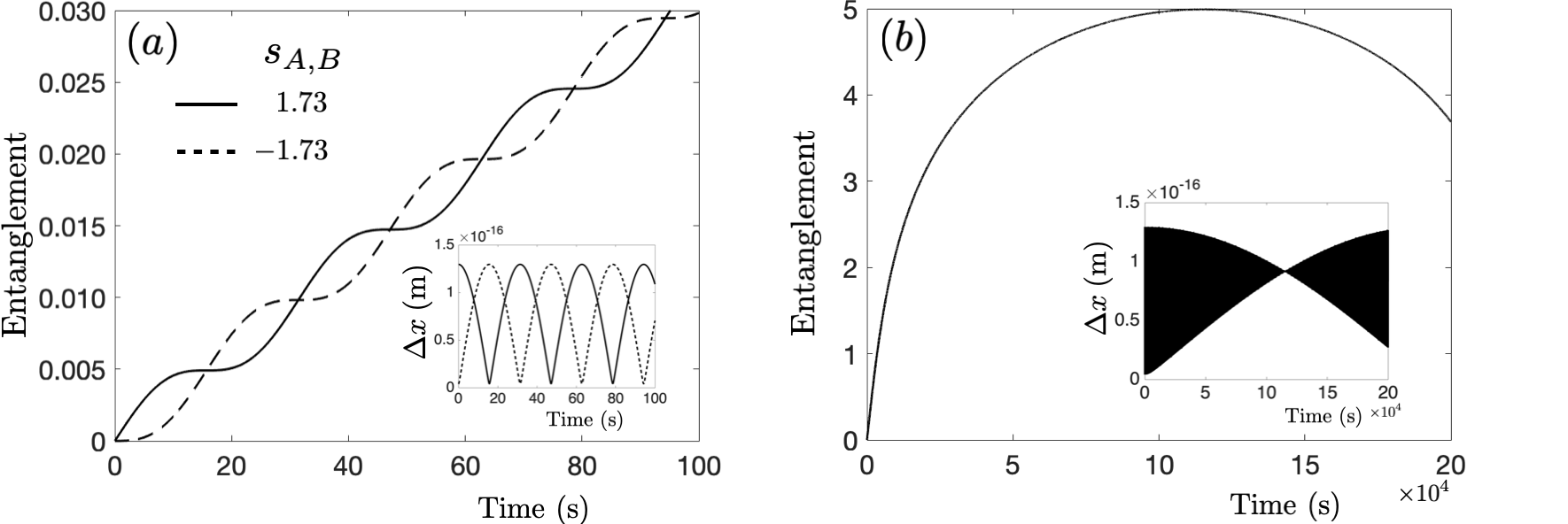}
\caption{Exemplary dynamics of entanglement and width of two coupled oscillators.
Each oscillator is made of Osmium and has a spherical shape with mass $m=1$~kg.
The equilibrium distance is $L=2.1R$ and the frequency of the trapping potential is $\omega=0.1$~Hz. 
(a) Short dynamics within $100$~s for squeezing parameters $s_{A,B}=\pm 1.73$.
(b) Higher gain of entanglement requires longer time. 
Note that the dynamics in (b) is approximately the same for positive and negative squeezing parameters. 
}
\label{FIG_sqtogether}
\end{figure}

Fig. \ref{FIG_sqtogether}a shows the entanglement dynamics with both $s_{A,B}=1.73$ and $-1.73$, corresponding to initial states of the masses that are both anti-squeezed or squeezed, respectively, in position quadrature.
During the evolution, the width of each mass oscillates as illustrated in the inset of Fig. \ref{FIG_sqtogether}a. 
It is clear that the oscillation of the width matches the oscillation of entanglement, for both squeezing parameters.
For longer time, i.e., $t\sim 10^4$~s, the oscillation in width leads to an accumulation of high entanglement as can be seen in Fig. \ref{FIG_sqtogether}b.
Note that the oscillation of the position variance is very rapid on this timescale and the envelope of these oscillations is the same for both positive and negative $s$, see the inset of Fig. \ref{FIG_sqtogether}b.
As a result, the entanglement dynamics is approximately equal for squeezed and anti-squeezed cases. 
This confirms our analytical result regarding the maximum entanglement gain being $|s_A + s_B|/\ln{2}$.

\section{Dynamics of gravitationally induced entanglement: Released masses}

As seen previously, the experimental parameters required for detectable gravitational entanglement of masses in harmonic traps are demanding.
We therefore study one more feasible system, where the traps are switched off after cooling the masses.
All solutions are analytical under the assumed weak gravitational coupling.

\subsection{Langevin equations and covariance matrix}
Similar to the treatment of two oscillators, one starts with the total Hamiltonian for free masses and truncated gravitational term, and obtains the following equations of motion
\begin{equation}\label{EQ_langevinsfm}
\begin{aligned}
\dot { \bm{X}}_j &=\omega \:  \bm{P}_j\qquad (j=A,B),\\ 
\dot { \bm{P}}_A&=\omega \eta_{\text g} \bm{X}_A-\omega\eta_{\text g}\, \bm{X}_B+ \nu,\\
\dot { \bm{P}}_B&=\omega \eta_{\text g} \bm{X}_B-\omega\eta_{\text g}\, \bm{X}_A-\nu.\\
\end{aligned}
\end{equation}
Note that $\omega$ here just sets the conversion between $x_j,p_j$ and their dimensionless parameters $\bm{X}_j,\bm{P}_j$. 
In what follows, we will consider starting the dynamics with thermal state for each mass.
For example, the ground state is a Gaussian state with width $\Delta x(0)=\sqrt{\hbar/2m\omega}$.
This way, one can think of $\omega$ as a parameter characterising the initial spread of the wave function.

For the solution of the dynamics, one can follow similar treatments as in Section~\ref{SC_gr_cmrm}, keeping in mind $\gamma=0$ and $\upsilon(t)=(0,0,0,0)^T$ such that the quadrature dynamics and covariance matrix is given by Eq. (\ref{AEQ_quad}) and Eq. (\ref{EQ_nd}) respectively with a new drift matrix
\begin{equation}\label{EQ_driftfm}
K=\left( \begin{array}{cccc} 
0&\omega&0&0\\ 
\omega \eta_{\text g}&0&-\omega\eta_{\text g}&0\\
0&0&0&\omega\\
-\omega\eta_{\text g}&0&\omega \eta_{\text g}&0
\end{array}\right).
\end{equation}

\subsection{Noiseless dynamics: Analytical solution}

One can obtain the covariance matrix $V(t)$ from Eqs.~(\ref{EQ_langevinsfm}) and consequently derive the entanglement dynamics using the approach described above. 
After imposing the limits $\eta_{\text g} \ll 1$ and $\sqrt{\eta_{\text g}}\: \omega t \ll 1$, which apply in typical experimental situations, 
one obtains the analytical expression for the entanglement dynamics as follows
\begin{eqnarray}\label{EQ_fment}
E_{\text {th}}(t)&=&\max \big \{0,E_{\text {gnd}}(t)-\log_2(2\bar n+1)\big \},\\
E_{\text {gnd}}(t)&=&-\log_2\Big(\sqrt{1+2\bm{\sigma}(t)-2\sqrt{\bm{\sigma}(t)^2+\bm{\sigma}(t)}}\Big),\nonumber
\end{eqnarray}
where $E_{\text {gnd}}(t)$ is the entanglement with initial ground state for each mass and $\bm{\sigma}(t)=4G^2m^2\omega^2t^6/9L^6$.
Since entanglement is an increasing function of $\bm{\sigma}(t)$, the latter is a figure of merit for entanglement gain relevant in the case of released masses. 
We present exemplary entanglement dynamics in Fig.~\ref{FIG_fsdynamics} for which entanglement $\sim10^{-2}$ is achieved within seconds. 
The parameters used here are $m=100~\mu$g, $\omega=100$~kHz, and $L=3R$.
We will show later that with these values gravity is the dominant interaction and coherence times are much longer than $1$~s.
Note that this setup does not require any squeezing.

\begin{figure}[h]
\centering
\includegraphics[scale=0.35]{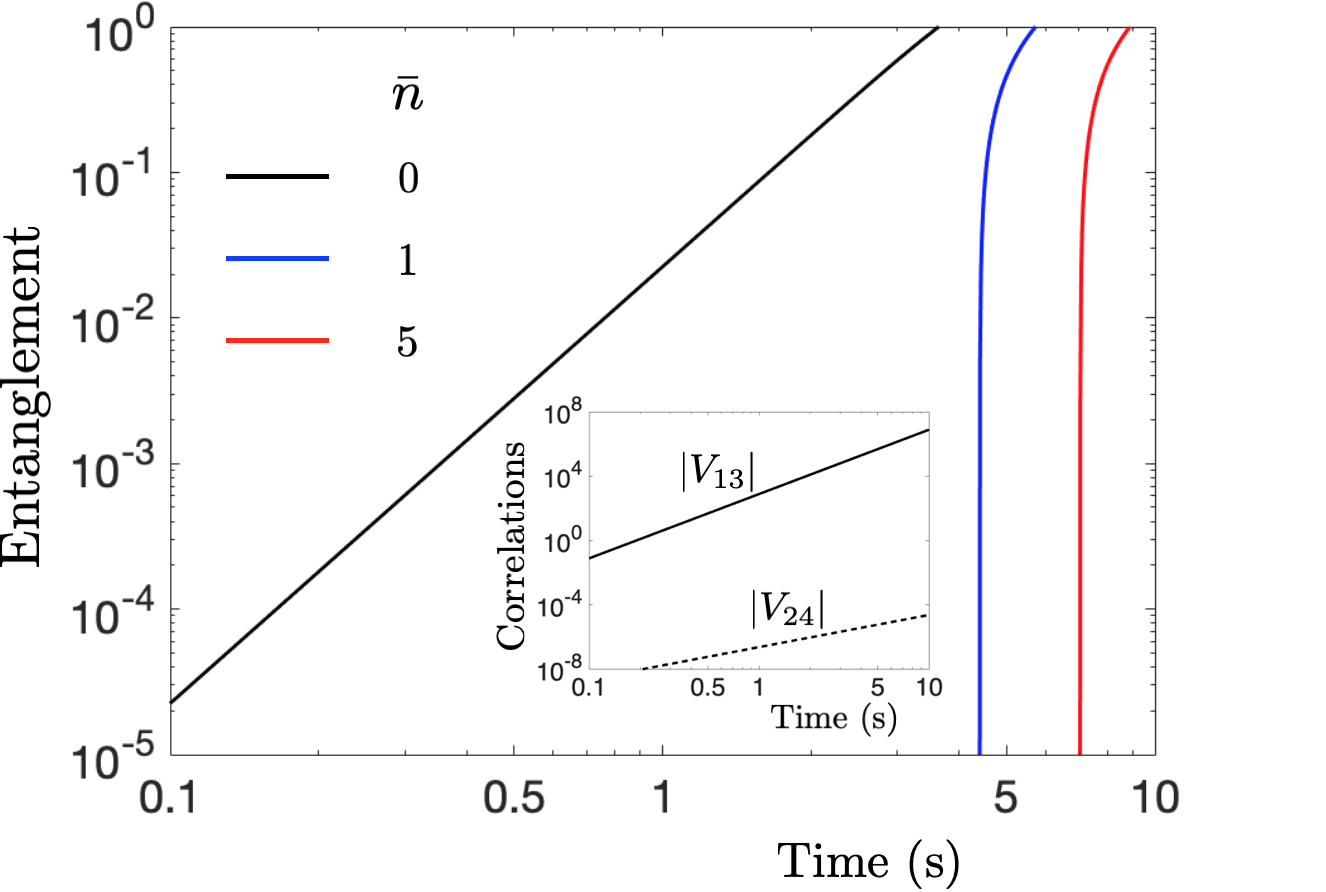}
\caption{Entanglement dynamics between two released spherical particles. 
Each particle with mass $100$~$\mu$g is initially trapped in 1D harmonic potential ($\omega=100$~kHz) and having phonon number $\bar n$. 
This corresponds to the initial state of each mass being Gaussian thermal state.
We take the equilibrium distance to be $L=3R\approx 0.1$~mm, making gravitational coupling more dominant than the Casimir interactions. 
We note that detectable entanglement, $E\sim 0.01$, is generated in seconds: $0.8$, $4.5$, and $7.5$~s for $\bar n=0$, $1$, and $5$ respectively.
It is shown in the text that this accumulation is within the coherence times of the particles. 
}
\label{FIG_fsdynamics}
\end{figure}

These improvements over the scheme with trapped masses are the result of unlimited expansion of the wave functions. 
For example, for initial ground state, the evolution of the width of each sphere closely follows $\Delta x(t)\approx \sqrt{\hbar/2m\omega} \sqrt{1+\omega^2 t^2}$, which is an exact solution to a free non-interacting mass.
The effect of gravity is stronger attraction of parts of the spatial superposition that are closer, and hence generation of position and momentum correlations, leading to the growth of quantum entanglement, see inset in Fig.~\ref{FIG_fsdynamics}.

In order to understand the effect of squeezing in this setup, let us suppose, for simplicity, the squeezing strengths $s_{A,B}=s$.
It is as if one initially prepared each mass in a Gaussian state with a new initial spread $\Delta x^{\prime}(0)=\Delta x(0)\exp(s)$. 
One can then calculate the entanglement dynamics using Eq. (\ref{EQ_fment}) with a new frequency $\omega^{\prime}=\omega \exp(-2s)$.
This means that anti-squeezing the initial position quadrature ($s>0$) would decrease entanglement gain, a situation opposite to the oscillators setup.
This is because a Gaussian state with smaller $\Delta x(0)$ spreads faster, such that during the majority of the evolution, the width is larger than that if one started with larger $\Delta x(0)$.
In principle, one obtains higher entanglement gain by squeezing the position quadrature ($s<0$).
However, this will result in higher final width, making it more susceptible to decoherence by environmental particles (see calculations in Section~\ref{SC_decdet}).

From numerical simulations, one confirms that, within $t=[0,10]$s, the displacements of the two masses follow $x_A-x_B \ll L$. 
Furthermore, the trajectories coincide for both quantum treatment with truncated gravitational energy and classical treatment with full $H_{\text g}$ (see Section~\ref{SC_qnct}).
This justifies the approximations used.
A summary of comparison between our two proposed setups can be seen in Table~\ref{TB_gravitypara}.

\begin{table}[h]
\begin{center}
\caption{Summary of parameters for the proposed experimental setups: Oscillators and released masses. Note that for the dynamics (shaded), we assume initial ground state for each mass.}\label{TB_gravitypara}
\smallskip
\begin{tabular}{|l|c|c|}
\hline
Parameters & Oscillators&Released masses\\
\hline
\hline
Mass $m_{A,B}$ & $1~\text{kg}$ &$100~\mu\text{g}$\\
\hline
Radius $R$ & $2.2$~cm &$0.1$~mm \\
\hline
Equilibrium distance $L$ & $4.6$~cm &$0.3$~mm \\
\hline
Trap frequency $\omega$ & $0.1$~Hz&$100$~kHz \\
\hline
Squeezing strength $s_{A,B}$ & $1.73$&$0$ \\
\hline
\cellcolor{trolleygrey}Average width $\Delta x$ &\cellcolor{trolleygrey} $8.3\times 10^{-17}$~m&\cellcolor{trolleygrey}$2.8\times 10^{-12}$~m \\
\hline
\cellcolor{trolleygrey}Entangling time $\tau_{\text e}$ &\cellcolor{trolleygrey} $31.7$~s &\cellcolor{trolleygrey}$0.8$~s \\
\hline
\end{tabular}
\end{center}
\end{table}

\subsection{Quantum and classical trajectories}\label{SC_qnct}

From the quadrature dynamics of Eq. (\ref{AEQ_quad}) above, one can calculate the expectation value of position for both masses. 
They are presented in Fig. \ref{FIG_traject}, where we have assumed initial conditions $\langle u(0)\rangle=(0,0,0,0)^T$.
On the other hand, without truncating the gravitational interaction, one can easily solve the classical dynamics and obtain the following equation for $x_t$:
\begin{equation}
t\sqrt{\frac{2Gm}{L}}=\sqrt{x_t(L-2x_t)}+\frac{L}{2\sqrt{2}}\left(\frac{\pi}{2}-\tan^{-1}(\theta(x_t))\right),
\end{equation}
where $\theta(x_t)=(L-4x_t)/\sqrt{8x_t(L-2x_t)}$ and $t$ is the time taken for the left mass to move a distance $x_t$. 
The trajectory of the right mass is simply $-x_t$.
One can confirm that the classical trajectories indeed coincide with the quantum ones.
This justifies the truncation of $H_{\text g}$ in our calculations. 
Note also that, within $10$~s, the displacement $(x_A-x_B)\sim 10^{-9}$~m, which is much smaller than the initial distance between the masses $L\approx 0.3$~mm.
This validates the use of the limit $x_A-x_B \ll L$.

\begin{figure}[!h]
\centering
\includegraphics[scale=0.25]{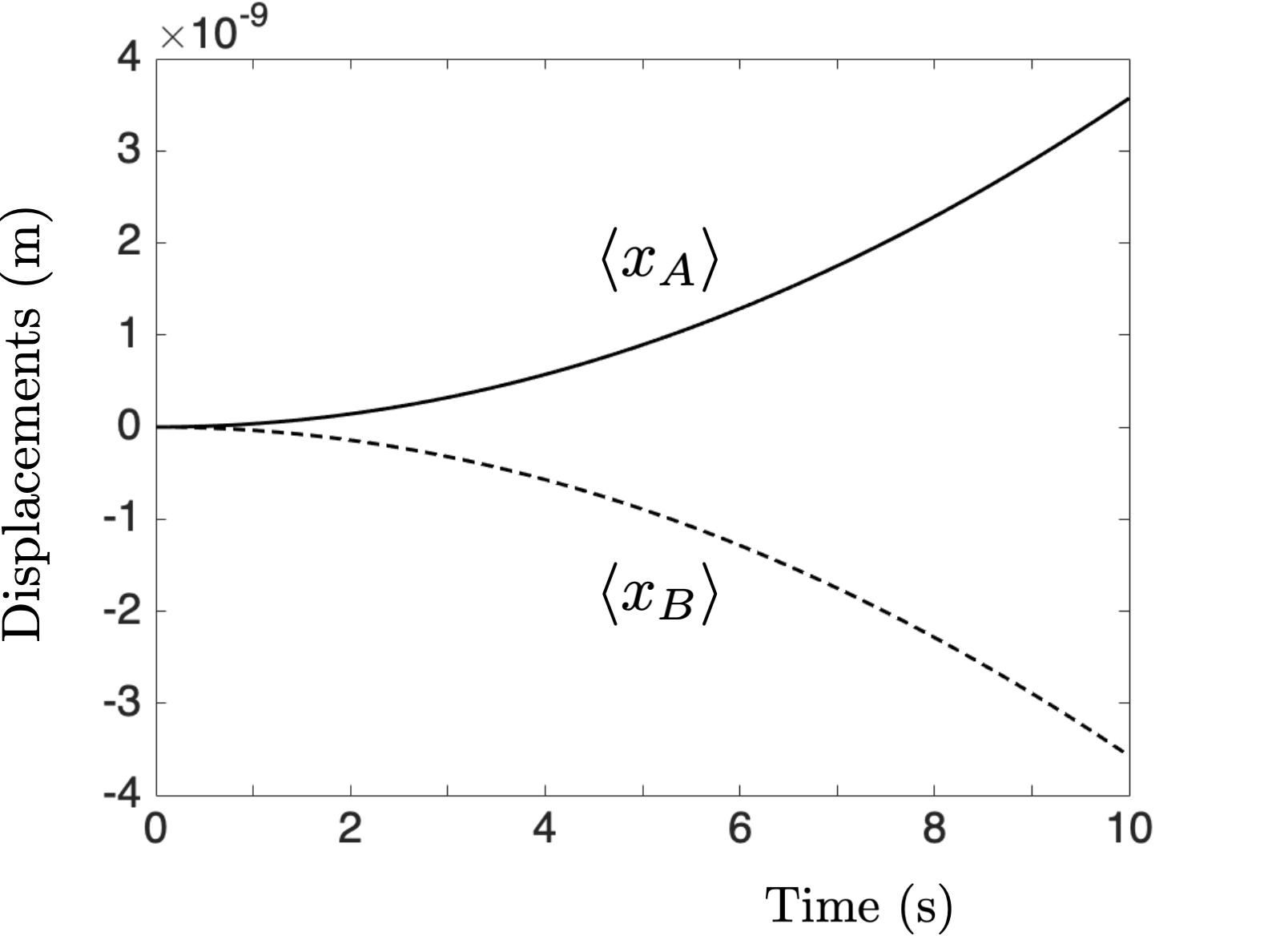}
\caption{Expectation value of the displacement of the masses.
Two Osmium spheres, each with mass $100$~$\mu$g, are separated by $L=3R$.
The dynamics shown is independent of the temperature and the frequency of the initial trapping potential $\omega$.
}
\label{FIG_traject}
\end{figure}

\section{Other shapes of masses}\label{SC_osm}

First, let us note, from Eqs. (\ref{EQ_langevins}) and (\ref{EQ_langevinsfm}), that the magnitude of the interaction rate between different modes (couples the momentum of one mass to the position of the other) in the case of identical spheres is given by $r_1= 2Gm/\omega L^3$. 
One can alternatively see this from an element of the drift matrix, i.e., $r_1=|K_{23}|$.
Assuming, e.g., the material is Osmium and $L=2.1R$, one gets $r_1=1.36\times 10^{-6}/\omega$.
Below, we will compare the interaction rate for different shapes of the objects to this reference.

Consider a particle of spherical shape, mass $m_A$, and frequency $\omega_A$ that interacts with a second sphere with mass $m_B$ and frequency $\omega_B$, see Fig. \ref{FIG_setupothers}a
(the same treatment applies to released masses with initial trap frequencies $\omega_A$ and $\omega_B$).
Moreover, we assume both spheres are made of Osmium and take $R_B=\alpha R_A$, where $\alpha=[0,\infty )$.
After taking similar steps as those in Section~\ref{SC_leosc} one obtains a new intermodal interaction rate $r_2= 1.36\times 10^{-6}f(\alpha)/\omega_A$, where $f(\alpha)=\alpha^{9/4}$ and we have used $L=2.1R_A$.
We have also assumed spring-like scaling for the frequency, i.e., $\omega_B=\omega_A\sqrt{m_A/m_B}$.
This result is intuitive as one expects stronger gravitational interaction by making the second sphere larger, i.e., larger $\alpha$.

More intriguing is the setting in Fig. \ref{FIG_setupothers}b, in which a rod with length $d$, mass $\lambda_B d$, and radius $R_B$ is interacting with a sphere of mass $m_A$ and radius $R_A$.
For simplicity, we assume both objects have a single mechanical frequency $\omega_A$ and $\omega_B$ respectively, and that the rod is thin such that its radius is much smaller than $L$. 
The gravitational interaction reads
\begin{equation}\label{EQ_hrodsphere}
H_{\text g}=-2Gm_A\lambda_B \ln{\left( \frac{d/2+\sqrt{(d/2)^2+(L^{\prime})^2}}{L^{\prime}}\right)},
\end{equation}
where $L^{\prime}=L-(x_A - x_B)$ with $x_A$ ($x_B$) being the displacement of mass $A$ ($B$). 
Note that this expression is the same as if we had a one-dimensional rod and a point mass.
By expanding Eq.~(\ref{EQ_hrodsphere}) in the limit $x_A - x_B \ll L$ and keeping only up to the quadratic term in displacements, one can show that the intermodal interaction rate is given by $r_3=2.18\times 10^{-7}f(\varsigma)/\omega_A$, where 
\begin{equation}
f(\varsigma)= (\varsigma)^{1/4}\left( 1-\frac{\varsigma^2((\varsigma^2-1)\sqrt{1+\varsigma^2}-1)}{(1+\sqrt{1+\varsigma^2})^2(1+\varsigma^2)^{3/2}}\right),
\end{equation}
and we define $\varsigma=2L/d$.
We have also taken $L=1.1R_A$, $R_B=0.1R_A$, and the same spring-like scaling as in the two-sphere configuration.
Although the strength of the gravitational energy (\ref{EQ_hrodsphere}) increases monotonically with $d$, it is not the case for $f(\varsigma)$, and hence $r_3$, 
which peaks at $\varsigma \approx 1.14$, i.e., $d\approx 1.75 L\approx 1.93R_A$.
The maximum $f_{\mbox{max}}=1.07$ gives maximum interaction rate $2.33\times 10^{-7}/\omega_A$.
For higher $d$ the rate $r_3$ decreases, implying that $r_3$ is always weaker than $r_1$.

We also note that both gravitational field and field gradient are necessary (but not sufficient) for producing entanglement.
This is clear from consideration of yet another configuration -- an infinite plane and a point mass separated by a distance $L$. 
One immediately observes that there is no field gradient here and that the gravitational energy of this configuration is proportional to $L-(x_A - x_B)$, which does not couple the masses and therefore does not contribute to entanglement.

\begin{figure}[h]
\centering
\includegraphics[width=0.7\textwidth]{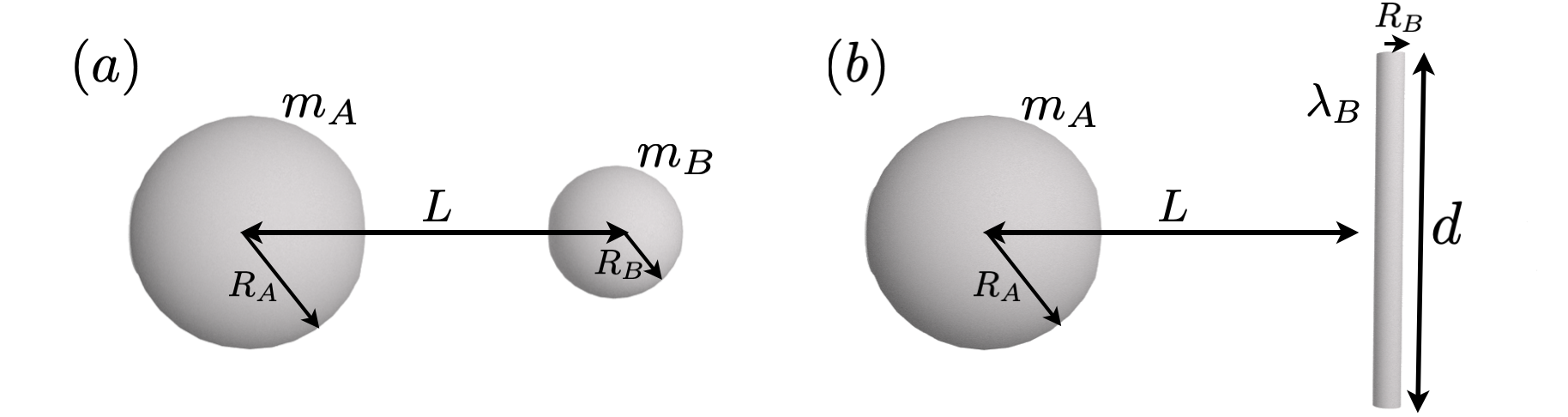}
\caption{Other configurations and notation for the two interacting masses. A mass with spherical shape is situated next to another spherical mass with a different radius (a) or a thin rod (b).
}
\label{FIG_setupothers}
\end{figure}

\section{Casimir interactions}

In light of their proximity, apart from gravitational interaction, the two masses can also interact via Casimir force.
It has been shown that the Casimir energy between two nearby spheres is given by a fraction of the ``proximity force approximation" $\mathcal{E}_{\text{c}}=-f_0 (\pi^3/1440)\hbar c R/(L - 2R + x_B - x_A)^2$, with the factor $0\le f_0\le1$ \cite{casimir1,casimir2}.
As typically $x_A - x_B \ll L - 2R$, we can expand such expression to find a quadratic term in $x_A - x_B$ that can produce entanglement between the masses.
The strength of this term, however, is much weaker than the strength of the corresponding entangling term of gravitational origin:
for Osmium oscillators with mass $\sim 1$~kg separated by $L=2.1R$, the ratio between the Casimir and gravitational term is $r_{\text{cg}}\sim 10^{-12}$.
Similar calculations made for released masses of the same material with $m=100~\mu$g and $L=3R$, give $r_{\text{cg}}\sim 10^{-2}$.
It is thus legitimate to ignore Casimir interaction in both schemes.

\section{Standard decoherence}\label{SC_decdet}

Let us also discuss common decoherence mechanisms, i.e., due to interactions with thermal photons and air molecules \cite{decoherence}.
All the situations we consider follow the limit $\Delta x \ll \lambda$, i.e., the ``size'' of superposition is much smaller than the wavelength of the particles causing the decoherence. 
In this regime, the coherence time due to interactions with thermal photons is given by $\tau_{\text {ph}}=1/\bm{\Lambda}_{\text {ph}} (\Delta x)^2$ with
\begin{equation}
\bm{\Lambda}_{\text {ph}}=10^{36} R^6\:T^9\: [1/\text m^2 \text s],
\end{equation}
where $R$ is the radius of the sphere and $T$ is the temperature of the environment. 
Note that all variables are in SI units.

The decoherence due to interactions with other scattering particles, e.g., air molecules, gives
\begin{equation}
\bm{\Lambda}_{\text {am}}=\frac{8}{3\hbar^2}\frac{N}{V}\sqrt{2\pi m_{\text {air}}} \:R^2 (k_{\text B}T)^{3/2},
\end{equation}
where $N/V$ is the density of air molecules with mass $m_{\text {air}}$.
We take $m_{\text {air}}\approx 0.5\times 10^{-25}$~kg.
For ultrahigh vacuum, pressure $\sim 10^{-10}$~Pa, the density is $\sim 10^{12}$ particles/m$^3$.
One could also consider performing these experiments in space.
In this case, by taking the pressure $\sim 10^{-15}$~Pa, the density can be as low as $\sim 10^7$~particles/m$^3$.

We take the average width of the wave function as an estimate for the superposition that is subjected to decoherence.
For oscillators made of Osmium, we use $m\sim 1$ kg and frequency $\omega \sim 0.1$~Hz.
Taking $L=2.1R$ and starting with ground state give $\Delta x \approx 8 \times 10^{-17}$m.
From interactions with thermal photons at environmental temperature of $4$~K (liquid Helium), the coherence time for the oscillators is $\tau_{\text {ph}}\sim 5$s.
The coherence time due to collisions with air molecules can be improved by evacuating the chamber with the oscillators -- for about $10^{12}$ molecules/$\mbox{m}^3$ (ultrahigh vacuum), the coherence time is also $\tau_{\text {am}} \sim 5$s.
For the space experiment, by taking the temperature as $2.7$~K (cosmic microwave background) and assuming $10^{7}$ molecules/$\mbox{m}^3$, we obtain $\tau_{\text {ph}}\sim 170$s and $\tau_{\text {am}} \sim 10^6$s.

By making similar calculations for released masses, with parameters considered in Fig.~\ref{FIG_fsdynamics}, one obtains $\tau_{\text {ph}}\sim 10^5$s and $\tau_{\text {am}} \sim 10^{-4}$s for the experiment on Earth with liquid Helium temperature and ultrahigh vacuum.
For the space experiment the coherence times improve to $\tau_{\text {ph}}\sim 10^7$s and $\tau_{\text {am}} \sim 41$s respectively.
We present in Table~\ref{TB_gravitydecoherence} a summary of coherence times.

\begin{table}[h]
\begin{center}
\caption{Coherence times for oscillators ($m=1$~kg, $\omega=0.1$~Hz) and released masses ($m=100$~$\mu$g, $\omega=100$~kHz) both for experiments on Earth and in space.
We use the pressure $P\sim 10^{-10}$~Pa and the temperature $T=4$~K for Earth experiment, while $P\sim 10^{-15}$~Pa and $T=2.7$~K for space experiment.
}
\label{TB_gravitydecoherence}
\smallskip
\begin{tabular}{|c|c|c|c|c|}
\hline
Coherence& \multicolumn{2}{c|}{Oscillators}&\multicolumn{2}{c|}{Released masses}\\
\cline{2-5}
time&Earth&Space&Earth&Space\\
\hline
\hline
$\tau_{\text {ph}}$ &$5$~s& $170$~s &$4\times 10^5$~s& $2\times 10^7$~s \\
\hline
$\tau_{\text {am}}$ &$5$~s& $10^6$~s &$2\times 10^{-4}$~s& $41$~s \\
\hline

\end{tabular}
\end{center}
\end{table}

Other schemes have been proposed for gravitationally induced entanglement~\cite{gravity1,gravity2}.
They are based on Newtonian interaction between spatially superposed microspheres with embedded spins.
\textcolor{black}{In those proposals, entanglement is reached faster and the small size of the experiment is the main advantage.
However, in order} to separate gravitational and Casimir contributions in that setup, each diamond sphere with mass $m=10^{-14}$~kg has to be superposed across 250~$\mu$m.
Decoherence due to scattering of molecules then becomes the main limiting factor.
The schemes we discussed here are complementary in a sense that vibrations of each oscillator are minute (no macroscopic superposition) but larger mass, $100$~$\mu$g, is required for observable entanglement.

\section{Comparison with recent experimental achievements}\label{SC_cwrexp}

We have shown that two nearby masses -- both trapped and released -- can become entangled via gravitational interaction.
Let us now discuss the experimental conditions required to observe this entanglement in light of recent experimental achievements.

Logarithmic negativity in the order $10^{-2}$ has already been observed experimentally between mechanical motion and microwave cavity field~\cite{evalue1}. 
Extrapolating the same entanglement resolution to the case of two massive oscillators sets the required frequency to $\omega \sim 10^{-2}$~Hz, see Eq. (\ref{EQ_OSC_MERIT}) and its expression in terms of $\omega$.
Interestingly, kilogram-scale mirrors of similar frequency ($\omega \sim 10^{-1}$~Hz) were recently cooled down near their quantum ground state~\cite{ligomirror}.
Furthermore, recent experiments on squeezed light have reported high squeezing strength \cite{sqpara1,sqpara2} (see also a review in this context \cite{sqpara3}), up to 15 dB, which corresponds to $s \approx 1.73$.
Advances in the state transfer between light and optomechanical mirrors~\cite{RMP.86.1391} make this high squeezing promising also for mechanical systems.

For released masses, the experimental requirements are more relaxed. 
Their mass can be considerably smaller while the frequency for initial trapping considerably higher, which is close to common experimental parameters used for optomechanical system \cite{nanomirror,nm,RMP.86.1391}.
Note that higher frequency (lower $\Delta x(0)$) actually improves entanglement gain, unlike in the oscillators case where small $\omega$ is preferable.
However, one has to be cautious of decoherence mechanisms as a result of faster spreading rate of the wave functions. 
For future experiments, an improvement in the sensitivity of entanglement detection will also be beneficial.

\section{Gravity: Direct or mediated interaction?}\label{SC_directmediated}

We conclude with analysis of implications on the nature of the gravitational coupling that one can draw from such experiments and future research directions.
In laboratory, we deal with two nearby masses which are experimentally shown to become entangled.
These setups can be theoretically treated in different ways depending on the assumptions one makes about gravity.
In a ``conservative approach'' the two masses are coupled via Newtonian potential.
As seen from our calculations and those in Refs.~\cite{gravity1,gravity2} this indeed leads to gravitationally-induced entanglement.
In this picture gravity is a direct interaction and hence it is difficult to draw conclusions about the form of quantumness of the gravitational field.
We note that even in this conservative approach such an experiment has considerable value as it would show the necessity of at least the quadratic term in the expansion of the Newtonian potential for generating entanglement.

The objection to the conservative approach is instantaneity of gravitational interaction: Newtonian potential directly couples masses independently of their separation.
\textcolor{black}{On the other hand, it has been shown that gravitational waves travel with finite speed~\cite{GravWaves}.}
For nearby masses this retardation is hardly measurable and Newtonian potential is \textcolor{black}{dominant and} expected to correctly describe the amount of generated entanglement.
A more consistent option in our opinion, motivated by quantum formalism and comparison with other fundamental interactions, is to treat gravitational field as a separate physical object.
In this picture the masses are not directly coupled, but each of them individually interacts with the field.
It has been argued within this mass-field-mass setting that entanglement gain between the masses implies non-classical features of the field~\cite{krisnanda2017,gravity1,gravity2}, see Chapter~\ref{Chapter_revealing}.

This discussion shows that it would be useful to provide methods for independent verification of the presence or absence of a physical object mediating the interaction.
We finish with a toy example of a condition capable of revealing that there was \emph{no} mediator.
To this end we consider two scenarios: (i) evolving a bipartite system described at time $t$ by a density matrix $\rho_{12}(t)$;
(ii) two objects interacting via a mediator $M$, i.e., with Hamiltonian $H_{1M} + H_{M2}$, described by a tripartite state $\rho_{1M2}(t)$.
We ask whether there exists bipartite quantum dynamics $\rho_{12}(t)$ that cannot be obtained by tracing out the mediator in scenario (ii).
Indeed, if $\rho_{12}(t)$ is a pure state at all times and entanglement increases, the dynamics could not have been mediated.
The purity assumption requires the mediator to be uncorrelated from $\rho_{12}(t)$, and uncorrelated mediator is not capable of entangling the principal system, composed of particles $1$ and $2$~\cite{krisnanda2017}.
It would be valuable to generalise this argument to mixed states measured at finite number of time instances.

\textcolor{black}{For example, in an experimental situation, the state of particles $1$ and $2$ might only be close to a pure state (with purity $1-\epsilon$, where $\epsilon$ is a small positive parameter) and therefore they could be weakly entangled to the mediator $M$ (with entanglement $\sim \mathcal{O}(\epsilon)$).
One would naturally expect that entanglement gain between particles $1$ and $2$ is bounded, e.g., as a function of $\epsilon$.
An observation of higher entanglement gain therefore excludes the possibility of mediated dynamics.}

\section{Summary}
No experiment to date verified that gravity possesses quantum nature.
A possible way towards providing such evidence requires the generation of quantum entanglement between massive objects.
In this chapter, we have provided systematic study of entanglement dynamics between two masses that are coupled gravitationally.
We put forward two proposals where the masses are either trapped in 1D harmonic potentials or released from such traps. 
We derived figures of merit that are useful for estimating experimental parameters needed to produce entanglement.
Our schemes have the advantage that no macroscopic superpositions develop during the dynamics, resulting in observable entanglement generated within the coherence times of the masses. 
Finally, we concluded with a discussion on the nature of the gravitational interactions that can be infered from our proposals.


\afterpage{\blankpage} 

\chapter{Probing quantum features of photosynthetic organisms} 

\label{Chapter_probing} 

\lhead{Chapter 6. \emph{Probing quantum features of photosynthetic organisms}} 

\emph{This chapter studies the presence of quantum features in photosynthetic organisms, particularly the green sulphur bacteria.\footnote{Parts of this chapter are reproduced from our published article of Ref.~\cite{bacteria}, which is licensed under the Creative Commons Attribution 4.0 International License (http://creativecommons.org/licenses/by/4.0/). Where applicable, changes made will be indicated.}
Our aim is to reveal that the organisms can possess quantumness without directly measuring them.
For this purpose, I will start by presenting our experimental proposal that involves placing the bacteria inside a Fabry-Perot cavity.
In order to show that our proposal is viable, I proceed with the modelling of the bacterial interactions with cavity fields and environment. 
The dynamics of interesting features, such as quantum entanglement, will be presented.  
It is confirmed with our simulations that quantum entanglement between the cavity fields indeed shows that parts of the bacteria are non-classical.
}

\clearpage

\section{Motivation and objectives}

There is no a priori limit on the complexity, size or mass of objects to which quantum theory is applicable.
Yet, whether or not the physical configuration of macroscopic systems could showcase quantum coherences has been the subject of a long-standing debate. 
The pioneers of quantum theory, such as Schr\"odinger \cite{SCH} and Bohr \cite{BOH}, wondered whether there might be limitations to living systems obeying the laws of quantum theory. 
Wigner even claimed that their behaviour violates unitarity~\cite{WIG}. 

A striking way to counter such claims on the implausibility of macroscopic quantum coherence would be the successful preparation of quantum superposition states of living objects. 
A direct route towards such goal is provided by matter-wave interferometers, which have already been instrumental in observing quantum interference from complex molecules \cite{arndt}, and are believed to hold the potential to successfully show similar results for objects as large as viruses in the near future.

However, other possibilities exist that do not make use of interferometric approaches. 
An instance of such alternatives is to interact a living object with a quantum system in order to generate quantum correlations.
Should such correlations be as strong as entanglement, measuring the quantum system in a suitable basis could project the living object into a quantum superposition. 
Furthermore, requesting the establishment of entanglement is, in general, not necessary as the presence of quantum discord, that is a weaker form of quantum correlations, would already provide evidence that the Hilbert space spanned by the living object must contain quantum superposition states~\cite{discord1,discord2,discord3,discord4,discord5}.
For example, by operating on the quantum system alone one could remotely prepare quantum coherence in the living object~\cite{coherence-distillation}.

A promising step in this direction, demonstrating strong coupling between living bacteria and optical fields and suggesting the existence of entanglement between them~\cite{bacteria-th}, has recently been realised~\cite{bacteria-exp}.
See also Refs. \cite{bio1,bio2,bio3,bio4,bio5,bio6,bio7,bio8,bio9,bio10,bio11} for a broader picture of quantum effects in photosynthetic organisms.
However, the experimental results reported in Ref.~\cite{bacteria-exp} can as well be explained by a fully classical model~\cite{ZHU, bacteria-th,bacteria-exp,rabienergy}, which calls loud for the design of a protocol with more conclusive interpretation.

In this chapter we make a proposal in such a direction by designing a thought experiment in which the bacteria are mediating interactions between otherwise uncoupled light modes.
This scheme fits into the general framework of Ref.~\cite{krisnanda2017}, which shows in the present context that quantum entanglement between the light modes can only be created if the bacteria are non-classically correlated with them during the process.
It is important to realise that in this way we bypass the need of exact modelling of the living organisms and their interactions with external world.
Indeed, experimenters are never asked to directly operate on the bacteria, it is solely sufficient to observe the light modes.
A positive result of this experiment, i.e. observation of quantum entanglement between the light modes, provides an unambiguous witness of quantum correlations, in the form of quantum discord, between the light and bacteria.

In order to demonstrate that there should be observable entanglement in the experiment we then propose a plausible model of light-bacteria interactions and noises in the experiment.
We focus on the optical response of the bacteria and model their light-sensitive part by a collection of two-level atoms with transition frequencies matching observed bacterial spectrum~\cite{bacteria-exp}.
All processes responsible for keeping the organisms alive are thus effectively put into the environment of these atoms.
We argue that standard Langevin approach gives a sensible treatment of this environment due to its quasi-thermal character, low energies compared to optical transitions and no evidence for finite-size effects.
Within this model we find scenarios with non-zero steady-state entanglement between the light modes which is always accompanied by light-bacteria entanglement (in addition to quantum discord), which is in turn empowered by the ultra-strong coupling\footnote{Note that we refer to the ultra-strong coupling as the regime where the so-called counter-rotating terms in the Hamiltonian have considerable strength. These terms are responsible for simultaneous creation (and annihilation) of excitations both in the light and bacterial modes.} between such systems.

\section{Proposed setup: Bacteria in a Fabry-Perot cavity}

Our idea is to design a setup which, on one hand, is close to what has already been realised with bacteria and light, in order to utilise their strong coupling,
and whose description, on the other hand, can be phrased within the framework of Chapter~\ref{Chapter_revealing}.
It was shown there that two physical systems, $A$ and $B$, coupled via a mediator $C$, i.e., described by a total Hamiltonian of the form $H_{AC} + H_{BC}$,
can become entangled only if quantum discord $D_{AB|C}$ is generated during the evolution.
This also holds if each system is allowed to interact with its own local environment.
Therefore, observation of quantum entanglement between $A$ and $B$ is a witness of quantum discord $D_{AB|C}$ during the evolution if one can ensure the following conditions:
\begin{itemize}
\item[(i)] $A$ and $B$ do not interact directly, i.e., there is no term $H_{AB}$ in the total Hamiltonian.
\item[(ii)] All environments are local, i.e., they do not interact with each other.
\item[(iii)] The initial state is completely unentangled (otherwise entanglement between $A$ and $B$ can grow via classical $C$~\cite{krisnanda2017}).
\end{itemize}

We now propose a concrete scheme for revealing non-classicality of the bacteria and argue how it meets the conditions above.
Consider the arrangement in Fig.~\ref{FIG_pr_setup}.
The bacteria are inside a driven single-sided multimode Fabry-Perot cavity where they interact independently with a few cavity modes.
The cavity modes are divided into two sets which play the role of systems $A$ and $B$ in the general framework.
The bacteria are mediating the interaction between the modes and hence they represent system $C$.
Condition (i) above can be realised in practice in at least two ways.
An experimenter could utilise the polarisation of electromagnetic waves and group optical modes polarised along one direction to system $A$ and those polarised orthogonally to system $B$.
Another option, which we will study in detail via a concrete model below, is to choose different frequency modes and arbitrarily group them into systems $A$ and $B$.
Condition (ii) holds under typical experimental circumstances where the environment of the cavity modes is outside the cavity
whereas that of the bacteria is inside the cavity or even part of bacteria themselves.
The electromagnetic environment outside the cavity is a large system giving rise to the decay of cavity modes but having no back-action on them.
Therefore each cavity mode decays independently and cannot get entangled via interactions with the electromagnetic environment.
Finally, condition (iii) is satisfied right before placing the bacteria into the cavity, because at this time all three systems $A$, $B$, and $C$ are in a completely uncorrelated state $\rho_A \otimes \rho_B \otimes \rho_C$.

\begin{figure}[h]
\centering
\includegraphics[scale=0.4]{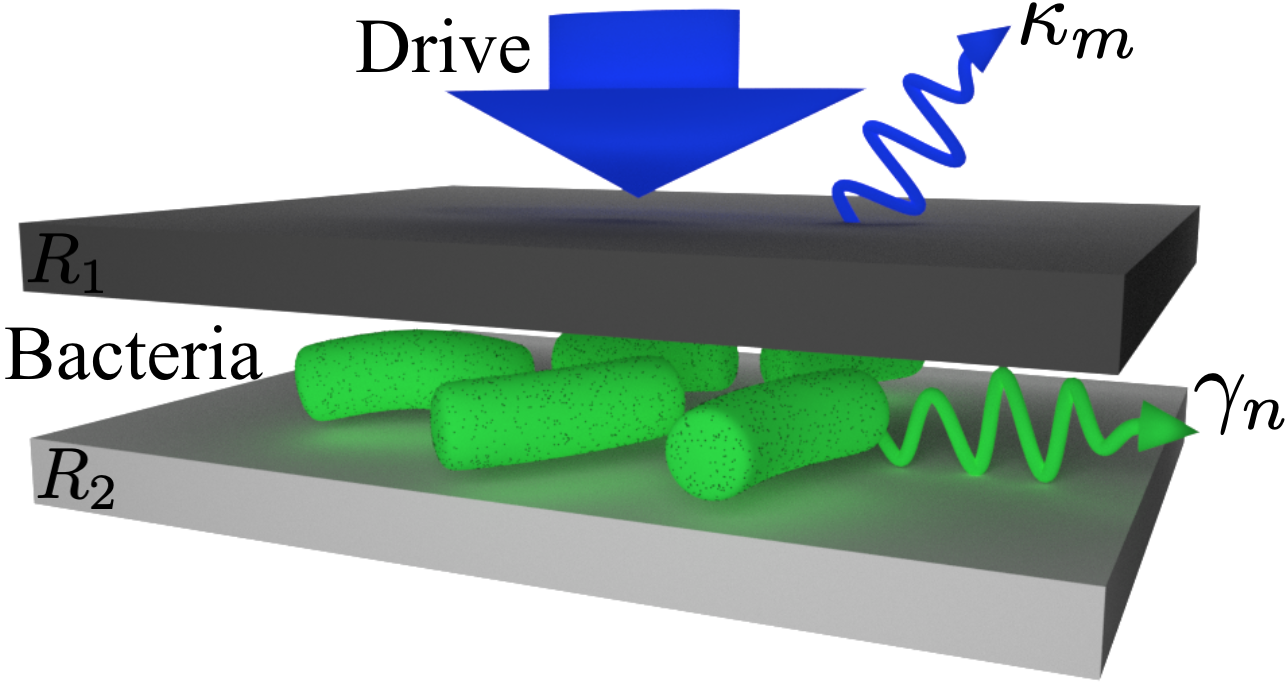}
\caption{Proposed experimental setup for probing quantum features of photosynthetic organisms.
In particular, green sulphur bacteria are mediating interactions between a few cavity modes in a driven multimode Fabry-Perot cavity, which are otherwise non-interacting. 
The cavity has an input mirror with reflectivity $R_1$ and an end mirror with perfect reflectivity, i.e., $R_2\approx 1$.
As a result of interactions with the environment, the $m\text{th}$ cavity mode has a decay rate $\kappa_m$, while the bacterial system dissipates at a rate $\gamma_n$.
In the text, we treat cavity fields and bacterial modes as open systems.
}
\label{FIG_pr_setup} 
\end{figure}

We note again that this discussion is generic with almost no modelling of the involved systems.
In particular, nothing has been assumed regarding the physics of the bacteria and their interactions with light and the external world.
This makes our proposal experimentally attractive.

In order to make concrete predictions about the amount of intermodal entanglement $E_{A:B}$ we now study a specific model for the energy of the discussed system.
This additional assumption about the overall Hamiltonian will allow us to demonstrate that the entanglement $E_{A:B}$ is accompanied by light-bacteria entanglement $E_{AB:C}$.
This independently confirms the presence of light-bacteria discord as entanglement is a stronger form of quantum correlations than discord~\cite{discord3,discord4,discord5}.
In the remainder of this chapter we will therefore only calculate entanglement.

\section{Model of optical coupling}

We consider a photosynthetic bacterium, \emph{Chlorobaculum tepidum}, that is able to survive in extreme environments with almost no light~\cite{blankenship1995antenna}. Each bacterium, which is approximately $2\mu\mbox{m}\times500\mbox{nm}$ in size, contains $200-250$ chlorosomes, each having $200,000$ bacteriochlorophyll $c$ (BChl $c$) molecules. Such pigment molecules serve as excitons that can be coupled to light \cite{bacteria-exp,chlorosomes}. 
The extinction spectrum of the bacteria (BChl $c$ molecules) in water shows two pronounced peaks, at wavelengths $\lambda_{\scriptsize\mbox{I}} = 750$nm and $\lambda_{\scriptsize\mbox{II}} = 460$nm (see Fig.~1b of Ref.~\cite{bacteria-exp}).
We therefore model the light-sensitive part of the bacteria by two collections of $N$ two-level atoms with transition frequencies $(\Omega_{\scriptsize\mbox{I}},\Omega_{\scriptsize\mbox{II}})=(2.5,4.1)\times 10^{15}$ Hz.
Simplification of this model to atoms with a single transition frequency was already shown to be able to explain the results of recent experiments~\cite{bacteria-exp, chlorosomes}.
This simplification was adequate because only one cavity mode was relevant in the previous experiments.
In contrast, several cavity modes are required for the observation of intermodal entanglement and it is correspondingly more accurate to include also all relevant transitions of BChl $c$ molecules.
We assume that the molecules (two-level atoms in our model) are coupled through a dipole-like mechanism to each light mode. 
For $N\gg1$, such collections of two-level systems can be approximated to spin $N/2$ angular momenta. In the low-excitation approximation (which we will justify later), such angular momentum can be mapped into an effective harmonic oscillator through the use of the Holstein-Primakoff transformation~\cite{HP}.
This allows us to cast the energy of the overall system as
\begin{eqnarray}\label{EQ_H}
H&=&\sum_m \hbar \omega_m a^{\dagger}_m a_m + \sum_{n}\hbar \Omega_n b^{\dagger}_n b_n +\sum_{m,n} \hbar G_{mn}(a_m+a_m^{\dagger})(b_n+b^{\dagger}_n) \nonumber \\ 
&&+\sum_{m} i\hbar \bm{E}_m (a_m^{\dagger}e^{-i\Lambda_{m} t}-a_m e^{i\Lambda_{m}t}).
\end{eqnarray}
Here, $m=1,\dots,M$ is the label for the $m\text{th}$ cavity mode, whose annihilation (creation) operator is denoted by $a_m$ ($a_m^\dag$) and having frequency $\omega_m$. 
Moreover, both harmonic oscillators describing the bacteria are labelled by $n=\scriptsize\mbox{I},\scriptsize\mbox{II}$ with $b_n$ ($b^\dag_n$) denoting the corresponding bosonic annihilation (creation) operator.
Each oscillator is coupled to the $m\text{th}$ cavity field at a rate $G_{mn}$. 
The collective form of the coupling allows us to write $G_{mn}=g_{mn}\sqrt{N}$ with $g_{mn}=\mu_n \sqrt{\omega_m/2\hbar \epsilon_r\epsilon_0 V_m}$, where $\mu_n$ is the dipole moment of the $n\text{th}$ two-level transition, $\epsilon_r$ relative permitivity of medium, and $V_m$ the $m\text{th}$ mode volume~\cite{rabienergy}, see also \cite{optoatoms,bai2016robust} for similar treatments. 
The cavity is driven by a multimode laser, each mode having frequency $\Lambda_{m}$, amplitude $\bm{E}_m=\sqrt{2\bm{P}_m\kappa_m/\hbar \Lambda_{m}}$, power $\bm{P}_m$, and amplitude decay rate of the corresponding cavity mode $\kappa_m$. 
It is important to notice that in Eq.~(\ref{EQ_H}) we have not invoked the rotating-wave approximation but actually retained the counter-rotating terms $a_mb_n$ and $a_m^{\dagger}b_n^{\dagger}$. These cannot be ignored in the regime of ultra-strong coupling and we will show that they actually play a crucial role in our proposal.

\section{Dynamics of light-bacteria system}

\subsection{Langevin equations}

We assume the local environment of the light-sensitive part of the bacteria to give rise to Markovian open-system dynamics, which is modelled as decay of the two-level systems.
For justification we note that in actual experiments the bacteria are surrounded by water which can be treated as a standard heat bath and although the environment of interest cannot be in a thermal state (because the bacteria are alive) its state is expected to be quasi-thermal.
Given that the bacterial environment of the BChl $c$ molecules is of finite size we should also justify the Markovianity assumption. 
To the best of our knowledge there is no experimental evidence against this assumption. 
Likely this is due to the fact that all excitations arriving at this environment are further rapidly dissipated to the large thermal environment of water, whose energy is small compared to the optical transitions.

We treat the environment of the cavity modes as the usual electromagnetic environment outside the cavity~\cite{noise,walls2007quantum}.
This results in independent decay rates of each mode.
Taken all together, the dynamics of the optical modes and bacteria can be written using the standard Langevin formulation in Heisenberg picture. 
This gives the following equations of motion, taking into account noise and damping terms coming from interactions with the local environments
\begin{equation}
\label{Eq_lgvn}
\begin{aligned}
\dot {{a}}_m&=-(\kappa_m+i\omega_m)a_m - i\sum_{n}G_{mn}(b_n+b_n^{\dagger})+\bm{E}_m e^{-i\Lambda_{m}t}+\sqrt{2\kappa_m}\: F_m,\\
\dot {{b}}_n&=-(\gamma_n+i\Omega_n)b_n-i\sum_{m}G_{mn}(a_m+a_m^{\dagger})+\sqrt{2\gamma_n}\:Q_n, 
\end{aligned}
\end{equation}
where $\gamma_n$ is the amplitude decay rate of the bacterial system. 
$F_m$ and $Q_n$ are operators describing independent zero-mean Gaussian noise affecting the $m\text{th}$ cavity field and the $n\text{th}$ bacterial mode respectively.
The only nonzero correlation functions between these noises are $\langle F_m(t)F_{m'}^{\dagger}(t^{\prime}) \rangle=\delta_{mm'}\delta(t-t^{\prime})$ 
and $\langle Q_n(t) Q_{n'}^{\dagger}(t^{\prime}) \rangle=\delta_{nn'}\delta(t-t^{\prime})$~\cite{noise,walls2007quantum}.
We note that in this model the light-sensitive part of the bacteria is treated collectively, i.e. all its two-level atoms are indistinguishable.
This assumption is standardly made in present-day literature, see e.g.~\cite{bacteria-exp,chlorosomes} where modelling of the bacteria / chlorosomes as a harmonic oscillator fits observed experimental results.
But it should be stressed that this assumption deserves an in-depth experimental assessment.

We express the Langevin equations in terms of mode quadratures. 
In particular, we use $ \bm{X}_m\equiv ( a_m+ a_m^{\dagger})/\sqrt{2}$ and $ \bm{Y}_m\equiv ( a_m-a_m^{\dagger})/i\sqrt{2}$.
This allows us to write the Langevin equations for the quadratures in a simple matrix equation $\dot u(t)=Ku(t)+l(t)$, with the vector $u=( \bm{X}_{1}, \bm{Y}_{1},\cdots, \bm{X}_{M}, \bm{Y}_{M}, \bm{X}_{\scriptsize\mbox{I}}, \bm{Y}_{\scriptsize\mbox{I}},\bm{X}_{\scriptsize\mbox{II}}, \bm{Y}_{\scriptsize\mbox{II}})^T$ and
\begin{equation}
K=\left( \begin{array}{cccccc} 
I_1&\bm{0}&\cdots& \bm{0} &L_{1\scriptsize\mbox{I}}&L_{1\scriptsize\mbox{II}}\\
\bm{0}&I_2 &\cdots&\bm{0}&L_{2\scriptsize\mbox{I}}&L_{2\scriptsize\mbox{II}}\\
\vdots&\vdots&\ddots&\vdots&\vdots&\vdots\\
\bm{0}&\bm{0}&\cdots&I_{M}&L_{M\scriptsize\mbox{I}}&L_{M\scriptsize\mbox{II}}\\
L_{1\scriptsize\mbox{I}}&L_{2\scriptsize\mbox{I}}&\cdots&L_{M\scriptsize\mbox{I}}&I_{\scriptsize\mbox{I}}&\bm{0}\\
L_{1\scriptsize\mbox{II}}&L_{2\scriptsize\mbox{II}}&\cdots&L_{M\scriptsize\mbox{II}}&\bm{0}&I_{\scriptsize\mbox{II}}\\ 
\end{array}\right),
\end{equation}
where the components are $2\times2$ matrices given by
\begin{equation}
I_m=\left( \begin{array}{cc}  
-\kappa_m &\omega_m\\
-\omega_m&-\kappa_m\\
\end{array}\right), \: 
L_{mn}=\left( \begin{array}{cc}  
0 &0\\
-2G_{mn}&0\\
\end{array}\right), \nonumber
\end{equation}
\begin{equation} 
I_{n}=\left( \begin{array}{cc}  
-\gamma_n &\Omega_n\\
-\Omega_n&-\gamma_n\\
\end{array}\right),
\end{equation}
and $\bm{0}$ is a $2\times2$ zero matrix.
Note that we have used the index $m=1,2,\cdots,M$ for the cavity modes and $n=\scriptsize\mbox{I},\scriptsize\mbox{II}$ for the bacterial modes. 
We split the last term in the matrix equation into two parts, representing the noise and pumping respectively, i.e. $l(t)=\eta(t)+p(t)$ where 
\begin{equation}
\frac{\eta(t)}{\sqrt{2}}=\left( \begin{array}{c} 
\sqrt{\kappa_1}\: X_1(t)\\ 
\sqrt{\kappa_1}\: Y_1(t)\\ 
\vdots\\
\sqrt{\kappa_{M}}\: X_{M}(t)\\ 
\sqrt{\kappa_{M}}\: Y_{M}(t)\\ 
\sqrt{\gamma_{\scriptsize\mbox{I}}}\: X_{\scriptsize\mbox{I}}(t)\\ 
\sqrt{\gamma_{\scriptsize\mbox{I}}}\: Y_{\scriptsize\mbox{I}}(t)\\
\sqrt{\gamma_{\scriptsize\mbox{II}}}\: X_{\scriptsize\mbox{II}}(t)\\ 
\sqrt{\gamma_{\scriptsize\mbox{II}}}\: Y_{\scriptsize\mbox{II}}(t)\\
\end{array}\right),
\frac{p(t)}{\sqrt{2}}=\left( \begin{array}{c} 
\bm{E}_{1}\cos{\Lambda_{1}t}\\ 
-\bm{E}_{1}\sin{\Lambda_{1}t}\\ 
\vdots \\
\bm{E}_{M}\cos{\Lambda_{M}t}\\ 
-\bm{E}_{M}\sin{\Lambda_{M}t}\\ 
0\\ 
0\\
0\\ 
0\\
\end{array}\right).
\end{equation}
We have also used quadratures for the noise terms, i.e. through $F_m=(X_m+iY_m)/\sqrt{2}$ and $Q_n=(X_n+iY_n)/\sqrt{2}$.

The solution to the Langevin equations is given by 
\begin{eqnarray}\label{AEQ_Lsol}
u(t)&=&W_+(t)u(0)+W_+(t)\int_0^t dt^{\prime}  W_-(t^{\prime})l(t^{\prime}),
\end{eqnarray}
where $W_{\pm}(t)=\exp{(\pm Kt)}$. 
This allows numerical calculation of expectation value of the quadratures as a function of time, i.e. $\langle u_i(t)\rangle$ is given by the $i{\text{th}}$ element of 
\begin{equation}\label{EQ_ui}
W_+(t)\langle u(0)\rangle+W_+(t)\int_0^t dt^{\prime}  W_-(t^{\prime})p(t^{\prime}),
\end{equation}
which is obtained as follows.
Since every component of $p(t)$ is not an operator, we have $\langle p_k(t)\rangle=\mbox{tr}(p_k(t)\rho)=p_k(t)$. 
Also, we have used the fact that the noises have zero mean, i.e. $\langle \eta_k(t)\rangle=0$.

\subsection{Covariance matrix}

Covariance matrix of our system is defined as $V_{ij}(t)\equiv \langle \{ \Delta u_i(t),\Delta u_j(t)\}\rangle/2=\langle u_i(t)u_j(t)+u_j(t)u_i(t)\rangle/2-\langle u_i(t)\rangle \langle u_j(t)\rangle$ where we have used $\Delta u_i(t)=u_i(t)-\langle u_i(t)\rangle$. 
This means that $p(t)$ does not contribute to $\Delta u_i(t)$ (and hence the covariance matrix) since $\langle p_k(t)\rangle=p_k(t)$. 
We can then construct the covariance matrix at time $t$ from Eq. (\ref{AEQ_Lsol}) without considering $p(t)$ as follows
\begin{eqnarray}
V_{ij}(t)&=&\langle u_i(t)u_j(t)+u_j(t)u_i(t)\rangle/2-\langle u_i(t)\rangle \langle u_j(t)\rangle \nonumber \\
\label{EQ_l3}
V(t)&=&W_+(t)V(0)W_+^T(t) \nonumber \\
&&+W_+(t)\int_0^t dt^{\prime} W_-(t^{\prime})DW_-^T(t^{\prime}) \:W_+^T(t) ,
\end{eqnarray}
where $D=\mbox{Diag}[\kappa_1,\kappa_1,\cdots,\kappa_M,\kappa_M,\gamma_{\scriptsize\mbox{I}},\gamma_{\scriptsize\mbox{I}},\gamma_{\scriptsize\mbox{II}},\gamma_{\scriptsize\mbox{II}}]$ and we have assumed that the initial quadratures are not correlated with the noise quadratures such that the mean of the cross terms are zero. 
A more explicit solution of the covariance matrix, after integration in Eq. (\ref{EQ_l3}), is given by 
\begin{eqnarray}\label{EQ_Ct}
KV(t)+V(t)K^T&=&-D+KW_+(t)V(0)W_+^T(t) \nonumber \\
&&+W_+(t)V(0)W_+^T(t)K^T \nonumber \\
&&+W_+(t)DW_+^T(t),
\end{eqnarray}
which is linear and can be solved numerically. 

Time evolution of important quantities can then be calculated from the covariance matrix, e.g., entanglement and excitation number.
If one is only interested in the steady state, it is guaranteed when all real parts of the eigenvalues of $K$ are negative.
In this case the covariance matrix satisfies Lyapunov-like equation
\begin{equation}\label{EQ_Css}
K\:V(\infty)+V(\infty)\:K^T+D=0,
\end{equation}
where $D=\mbox{Diag}[\kappa_1,\kappa_1,\cdots,\kappa_M,\kappa_M,\gamma_{\scriptsize\mbox{I}},\gamma_{\scriptsize\mbox{I}},\gamma_{\scriptsize\mbox{II}},\gamma_{\scriptsize\mbox{II}}]$.
Note that the steady-state covariance matrix does not depend on the initial conditions, i.e. $V(0)$. 
Moreover, as the Langevin equations are linear and due to the gaussian nature of the quantum noises, the dynamics of the system is preserving gaussianity. 
Therefore the steady state is a continuous variable gaussian state completely characterised by $V(\infty)$.

\subsection{Quantum entanglement}

\subsubsection{Entanglement from covariance matrix}

Logarithmic negativity is chosen as entanglement quantifier and below we provide the details on how this quantity is calculated.
 
The covariance matrix $V$ describing our system can be written in block form 
\begin{equation}\label{EQ_COV}
V=\left( \begin{array}{cccc} 
B_{11}&B_{12}&\cdots&B_{1Z}\\
B_{12}^T&B_{22}&\cdots&B_{2Z}\\ 
\vdots&\vdots&\ddots&\vdots \\ 
B_{1Z}^T&B_{2Z}^T&\cdots&B_{ZZ}\\
\end{array}\right),
\end{equation}
where $Z$ is the total number of modes, which is $M+2$ in our case. 
The block component, here denoted as $B_{jk}$, is a $2\times 2$ matrix describing local mode correlation when $j=k$ and intermodal correlation when $j\ne k$.
A $Z$-mode covariance matrix has symplectic eigenvalues $\{\nu_k\}_{k=1}^{Z}$ that can be computed from the spectrum of matrix $|i\Omega_{Z} V|$ \cite{weedbrook2012gaussian} where 
\begin{equation}
 \Omega_{Z}=\bigoplus^{Z}_{k=1} \left( \begin{array}{cc} 0&1\\ -1 &0\end{array}\right).
 \end{equation}
For a physical covariance matrix $2 \nu_k\ge 1$. 

Entanglement is calculated as follows. For example, the calculation in the partition $12:34$ only requires the covariance matrix of modes $1,2,3$, and $4$:
\begin{equation}
V=\left( \begin{array}{cccc} 
B_{11}&B_{12}&B_{13}&B_{14}\\
B_{12}^T&B_{22}&B_{23}&B_{24}\\ 
B_{13}^T&B_{23}^T&B_{33}&B_{34} \\ 
B_{14}^T&B_{24}^T&B_{34}^T&B_{44}\\
\end{array}\right),
\end{equation}
that can be obtained from Eq. (\ref{EQ_COV}).
If the covariance matrix $\tilde V$, after partial transposition with respect to mode $3$ and $4$ (this is equivalent to flipping the sign of the operator $\bm{Y}_3$ and $\bm{Y}_4$ in $V$) is not physical, then our system is entangled. 
This unphysical $\tilde V$ is shown by its minimum symplectic eigenvalue $\tilde \nu_{\text{min}}<1/2$.
Entanglement is then quantified by logarithmic negativity as follows $E_{12:34}=\mbox{max}[0,-\log_2{(2\tilde \nu_{\text{min}})}]$ \cite{negativity, adesso2004extremal}.\footnote{Note that we have used logarithm with base 2, instead of natural logarithm as in Ref.~\cite{bacteria}, to maintain consistencies in this thesis.}
Note that the separability condition, when $\tilde \nu_{\text{min}}\ge1/2$, is sufficient and necessary when one considers bipartitions with one mode on one side \cite{werner2001bound}, e.g., partition between bacterial modes $\mbox{I}:\mbox{II}$. 

\subsubsection{Simulation parameters}

The parameters used in our simulations are taken, wherever possible, from the experiments of Ref.~\cite{bacteria-exp}.
We place the bacteria in a single-sided Fabry-Perot cavity of length $L=518$ nm (cf. Fig. \ref{FIG_pr_setup}). 
The refractive index due to aqueous bacterial solution embedded in the cavity is $n_r=\sqrt{\epsilon_r} \approx 1.33$, which gives the frequency of the $m\text{th}$ cavity mode $\omega_m=m\pi c/n_rL\approx 1.37m\times 10^{15}$ Hz.
The reflectivities of the mirrors are engineered such that $R_2 = 100\%$ and $R_1 = 50\%$.
We assume the reflectivities are the same for all the optical modes, giving $\kappa_m \approx 7.5\times 10^{13}$ Hz through the finesse $\mathcal{F}_i=-2\pi/\ln{(R_1R_2)}=\pi c/2 \kappa_m n_r L$.
The decay rate of the excitons can be calculated as $\gamma_n=1/2\tau_n$ where $\tau_n=2h/\Gamma_n$ is the coherence time with $\Gamma_n$ being the full-width at half-maximum (FWHM) of the bacterial spectrum \cite{bajoni2012}. 
We approximate the spectrum in Fig. 1b of Ref. \cite{bacteria-exp} as a sum of two Lorentzian functions centred at $\Omega_{\scriptsize\mbox{I}}$ and $\Omega_{\scriptsize\mbox{II}}$ having FWHM of $(\Gamma_{\scriptsize \mbox{I}},\Gamma_{\scriptsize \mbox{II}})=(130,600)$ meV, giving $(\gamma_{\scriptsize\mbox{I}},\gamma_{\scriptsize\mbox{II}}) \approx (0.78,3.63)\times 10^{13}$ Hz respectively.
Note that the decay rate solely depends on the coherence time, i.e., we assume only homogenous broadening of the spectral lines.

All the spectral components of the driving laser are assumed to have the same power $\bm{P}_m=50$mW and frequency $\Lambda_{m}=\omega_m$. 
By using the mode volume $V_m=2\pi L^3/m(1-R_1)$ \cite{modeV}, we can express the interaction strength as $G_{mn}= m\tilde G_n$, where we define $\tilde G_n \equiv \mu_n \sqrt{c(1-R_1)N/4\hbar n_r^3 \epsilon_0L^4}$. 
This quantity is a rate that characterises the base collective interaction strength of the cavity mode and the $n\text{th}$ bacterial mode. 
Instead of fixing the value of $\tilde G_n$, we vary this quantity $\tilde G_n=[0,0.2]\: 10^{15}\: \mbox{Hz}$, which is within experimentally achievable regime (cf. Refs. \cite{bacteria-exp,bacteria-th}).

\subsubsection{Entanglement dynamics}

Let us consider as initial the time right before the bacteria are inserted into the cavity.
Then all the cavity modes and the bacteria are completely uncorrelated and do not interact.
The dynamics is then started by placing the bacteria in the cavity.
In what follows, as an example of the dynamics we start with vacuum state for the cavity modes and ground state for the bacteria. 
The initial state of the bacteria is justified by the fact that $\hbar \Omega_n \gg k_BT$, even at room temperature. 

Fig.~\ref{FIG_pr_dynamics} shows the resulting entanglement dynamics.
Panel (a) displays existence of steady-state entanglement between cavity modes $1,2$ and $3,4$, which is not altered heavily if the calculations take into account five and six cavity modes in total. 
Therefore, we consider $4$ cavity modes in all other calculations.
In recent experiments, the rate $\tilde G_{\scriptsize \mbox{I}}$ was shown to be $3.9\times 10^{13}\:\mbox{Hz}$ \cite{bacteria-exp} and the corresponding $\tilde G_{\scriptsize \mbox{II}}=6\times 10^{13}\:\mbox{Hz}$.
In our calculations we vary the latter rate as in panels (b) and (c) (also see Fig. \ref{FIG_pr_steadystate}).
As expected the higher the rate the more entanglement gets generated.
It is also apparent that entanglement between the cavity modes and bacteria $E_{1234:\scriptsize \mbox{I}\: \scriptsize \mbox{II}}$ grows faster than entanglement between the cavity modes.
More precisely, nonzero $E_{12:34}$ implies nonzero $E_{1234:\scriptsize \mbox{I}\: \scriptsize \mbox{II}}$.

\begin{figure}[h!]
\centering
\includegraphics[scale=0.6]{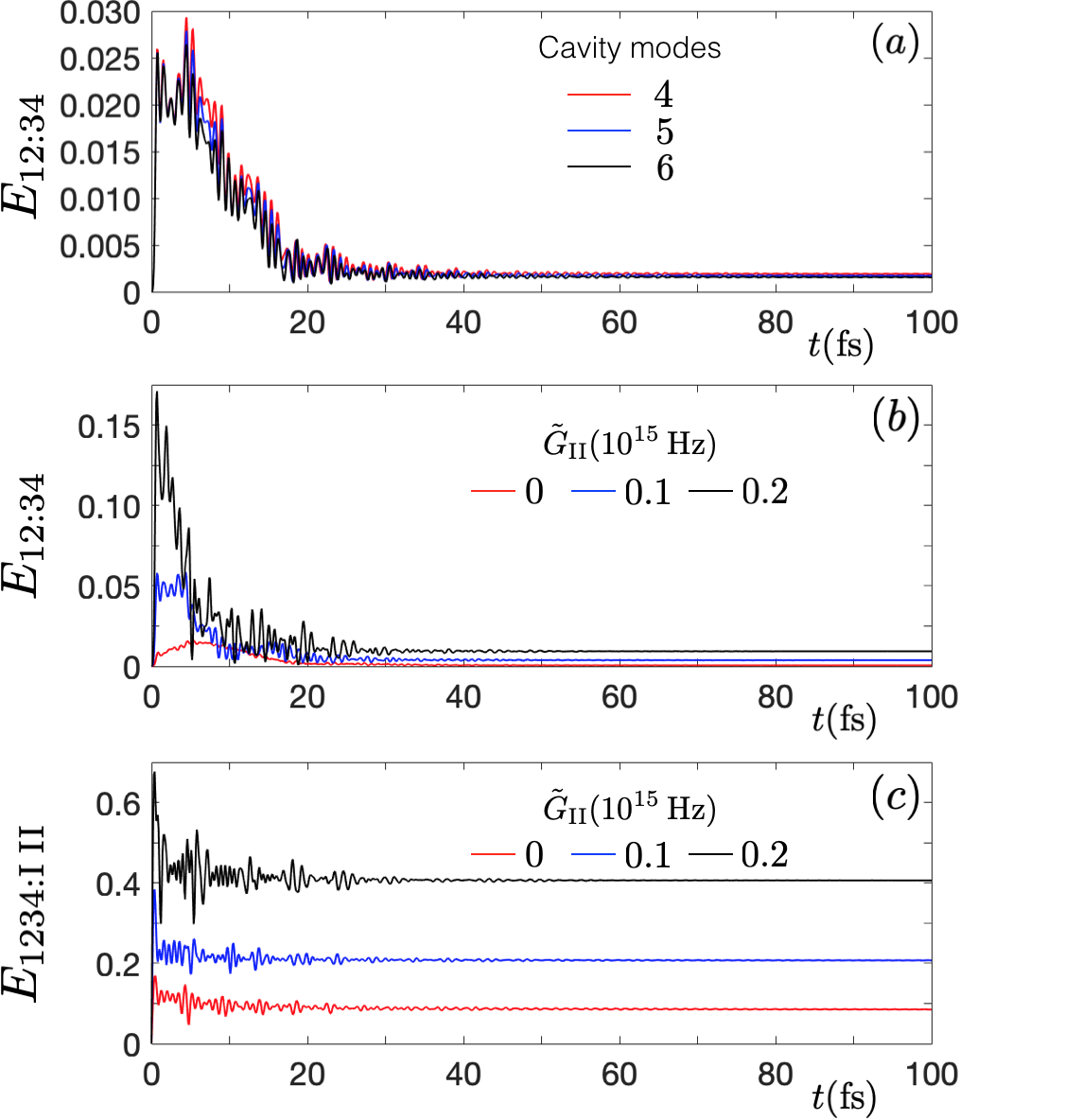}
\caption{Entanglement dynamics in the field-bacteria-field system.
(a) Entanglement $E_{12:34}$ between the cavity modes $\{1,2\}$ and $\{3,4\}$.
The entanglement is shown up to $4$, $5$, and $6$ cavity modes in the cavity. 
Higher modes do not contribute to the steady state entanglement due to higher detuning from the bacterial system.
We therefore only consider $4$ cavity modes for the rest of the dynamics.
(b) Entanglement between the cavity modes for varied interaction strength with the bacterial system.
(c) Cavity modes--bacteria entanglement.
The correlation from this partition is higher and showing faster initial growth compared to the others.
The coupling parameters used are $\tilde G_{\scriptsize\mbox{I}}=3.9\times 10^{13}\:\mbox{Hz}$ for all graphs and $\tilde G_{\scriptsize\mbox{II}}=6 \times10^{13}\:\mbox{Hz}$ for (a). 
Note that the steady state of all dynamics is reached within $\sim 100$~fs and we have used the logarithmic negativity as entanglement quantifier.}
\label{FIG_pr_dynamics} 
\end{figure}

\subsubsection{Steady state entanglement}

We consider four cavity modes as the addition of higher modes shows negligible effects to the steady state entanglement.
In the steady state regime, we calculate entanglement between the cavity modes $E_{12:34}$, between the cavity modes and bacteria $E_{1234:{\scriptsize\mbox{I}}\:{\scriptsize\mbox{II}}}$, 
and between the bacterial modes $E_{{\scriptsize\mbox{I}}:{\scriptsize\mbox{II}}}$, cf. Fig.~\ref{FIG_pr_steadystate}. 
This steady state regime is reached in $\sim 100 \:\mbox{fs}$ (see Fig.~\ref{FIG_pr_dynamics}), which is faster than relaxation processes ($\sim$ ps) occuring within green sulphur bacteria \cite{chlorosomes}.
Our results show that the steady state entanglement $E_{12:34}$ is always accompanied by $E_{1234:{\scriptsize\mbox{I}}\:{\scriptsize\mbox{II}}}$, i.e., the bacteria are non-classically correlated with the cavity modes. 
This is in agreement with the general detection method of Ref.~\cite{krisnanda2017} as entanglement is a stronger type of quantum correlation than discord, i.e., nonzero $E_{1234:\scriptsize \mbox{I}\: \scriptsize \mbox{II}}$ implies nonzero cavity modes-bacteria discord $D_{1234|\scriptsize \mbox{I}\: \scriptsize \mbox{II}}$.
Our results also show that the entanglement dynamics of $E_{12:34}$ is dominated by modes $2$ and $3$ since other modes are further off resonance with the bacterial modes.
Moreover, there is entanglement generated within the bacteria.
This requires both $\tilde G_{{\scriptsize\mbox{I}}}$ and $\tilde G_{{\scriptsize\mbox{II}}}$ to be nonzero and relatively high.
We see that the bacteria can be strongly entangled with the cavity modes, much stronger than entanglement between the cavity modes. 
While the latter is in the order of $10^{-2} - 10^{-3}$, we note that entanglement in the range $10^{-2}$ has already been observed experimentally between mechanical motion and microwave cavity field~\cite{evalue1}.
We have also indicated, as black dots in Fig. \ref{FIG_pr_steadystate}, the coupling strengths $\tilde G_{{\scriptsize\mbox{I}}}=3.9\times10^{13}\:\mbox{Hz}$ from Ref. \cite{bacteria-exp} and the corresponding $\tilde G_{{\scriptsize\mbox{II}}}=6\times10^{13}\:\mbox{Hz}$, which is estimated as follows.
From the relation $\mu_n^2\propto \int f(\omega)d\omega /\omega_n$ \cite{houssier1970circular}, where $f$ is the extinction coefficient, one can obtain the ratio $\tilde G_{\scriptsize \mbox{II}}/\tilde G_{\scriptsize \mbox{I}}=\mu_{\scriptsize \mbox{II}}/\mu_{\scriptsize \mbox{I}}\approx 1.53$.

\begin{figure}[h!]
\centering
\includegraphics[scale=0.5]{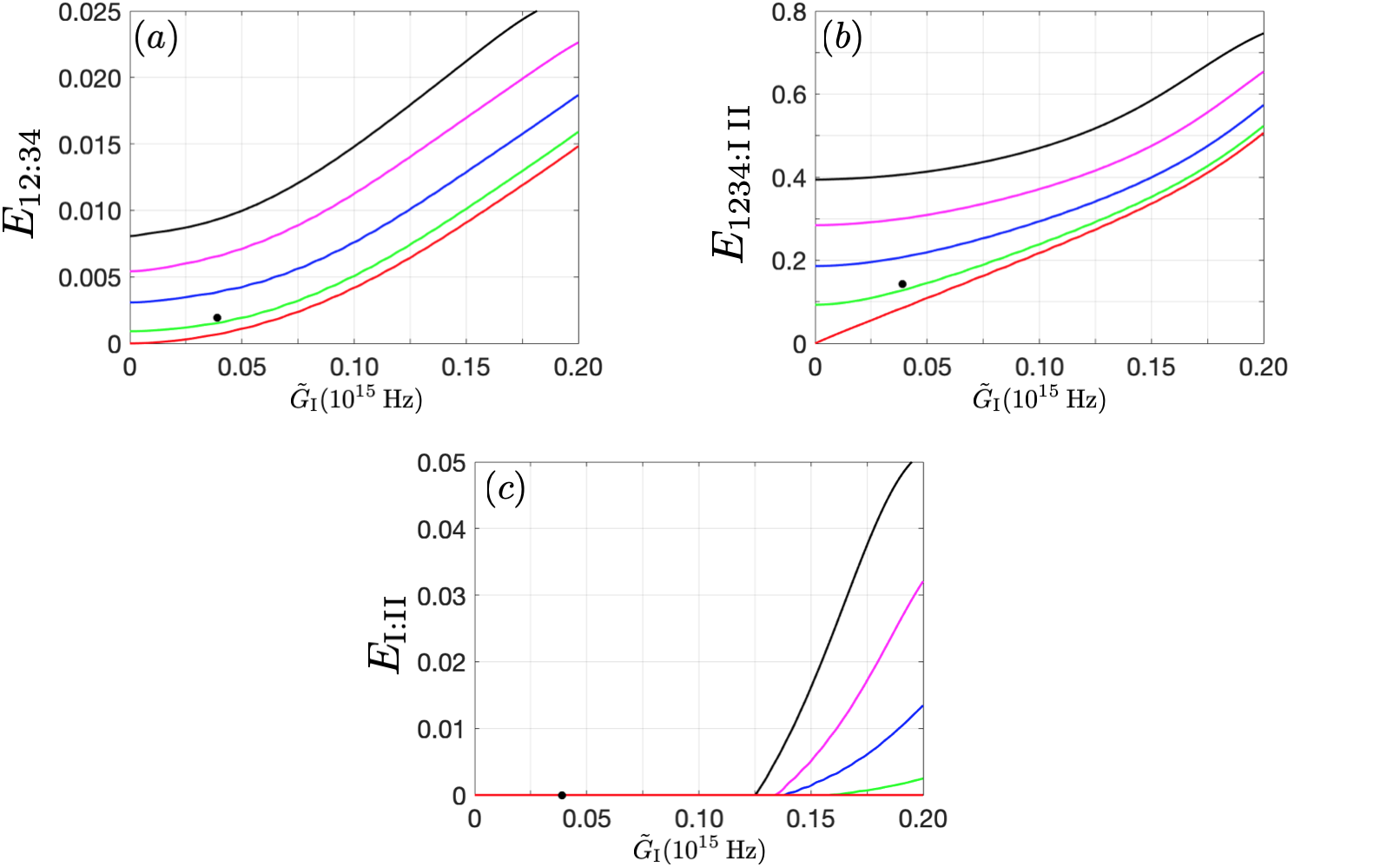} 
\caption{Steady state entanglement in field-bacteria-field system for varied interaction strengths.
In all panels, the horizontal axis is $\tilde G_{{\scriptsize\mbox{I}}}=[0,0.2]\: 10^{15}$~Hz, while $\tilde G_{{\scriptsize\mbox{II}}}$ is varied in $10^{15}$~Hz as: $0$ (red curves), $0.05$ (green curves), $0.1$ (blue curves), $0.15$ (magenta curves), and $0.2$ (black curves).
Black dots in the panels show the values of entanglement for recently realised coupling strengths $\tilde G_{{\scriptsize\mbox{I}}}=3.9\times10^{13}\:\mbox{Hz}$ by Coles \emph{et al}.~\cite{bacteria-exp} and the corresponding $\tilde G_{{\scriptsize\mbox{II}}}=6\times10^{13}\:\mbox{Hz}$.
We note from our simulations that the entanglement $E_{1234:\scriptsize \mbox{I}\: \scriptsize \mbox{II}}$ is nonzero whenever $E_{12:34}$ is present. 
Furthermore, entanglement in the bacterial system, panel (c), also persists for stronger couplings.
}
\label{FIG_pr_steadystate}
\end{figure}

\subsection{Excitation number}

We point out that the covariance matrix $V(t)$, and hence the entanglement, does not depend on the power of the lasers.
This is a consequence of the dipole-dipole coupling and classical treatment of the driving field.
Therefore, the system gets entangled also in the absence of the lasers.
There is no fundamental reason why this entanglement with vacuum could not be measured, but practically it is preferable to pump the cavity in order to improve the signal-to-noise ratio.
Of course quantities other than entanglement may depend on driving power, for example the light intensity inside the cavity.

This finding is quite different from results in optomechanical system where the covariance matrix depends on laser power \cite{paternostro,krisnanda2017}. 
The origin of this difference is the nature of the coupling. 
For example, in an optomechanical system consisting of a single cavity mode $A$ and a mechanical mirror $B$ the coupling is proportional to $a^{\dagger}a\bm{X}_B$, which is a third-order operator \cite{optmech1}. 
This results in the effective coupling strength being proportional to the classical cavity field intensity $\alpha$ after linearisation of the Langevin equations. 
This classical signal enters the covariance matrix via the effective coupling strength and introduces the dependence on the driving power. 

In order to justify the low atomic excitation limit we first note that the number of steady-state photons for the $m\text{th}$ cavity mode without the presence of the bacteria is given by $\bm{E}_m^2 / \kappa_m^2 \propto \bm{P}_m$. 
When one considers the bacteria in the cavity having the base interaction strength $\tilde G_n$ and a decay rate $\gamma_n$ in the same order as the cavity decay rate, the number of excitations of the bacterial modes would also be in the order of $\bm{E}_m^2 / \kappa_m^2 $, which in our case is $10^{3}$. 
With $\sim10^8$ actively coupled dipoles in the cavity \cite{bacteria-exp}, this gives $\sim10^{-3}\%$ excitation, which justifies the low-excitation approximation.
We also plotted the evolution of excitation numbers of the bacterial modes (together with the number of photons in different cavity modes) within our model in Fig.~\ref{FIG_pr_number}. 
It shows that excitation numbers are oscillating in the ``steady state". 
The oscillations are caused by the combination of interactions between the light and bacteria (Rabi-like oscillations) and the time-dependent driving laser.
Setting the interactions $G_{mn}=0$ or the driving off ($\bm{P}_m=0$) indeed produces constant steady-state value.
We observe that the excitation number of the bacterial system is always below $2000$, which is in agreement with the statement above.

The excitation number of the cavity modes and bacteria as a function of time can be calculated from $\langle u_i(t)\rangle$ and $V_{ii}(t)$. For example, the mean excitation number for the first cavity mode is given by
\begin{equation}
\bar N_{1}(t)=\langle a_1^{\dagger}(t)a_1(t)\rangle =\frac12 ( V_{11}(t)+V_{22}(t) +\langle u_1(t)\rangle^2+\langle u_2(t)\rangle^2-1). 
\end{equation}

We present the evolution of photon number of the cavity modes and excitation of the bacterial modes in Fig. \ref{FIG_pr_number}. Note that photon number of the third cavity mode (solid magenta line) is showing oscillations well below its ``off-interaction" value (dashed magenta line).
This is because $\omega_3$ is almost in resonance with the frequency of the atomic transition $\Omega_{\scriptsize \mbox{II}}$.

\begin{figure}[h]
\centering
\includegraphics[scale=0.55]{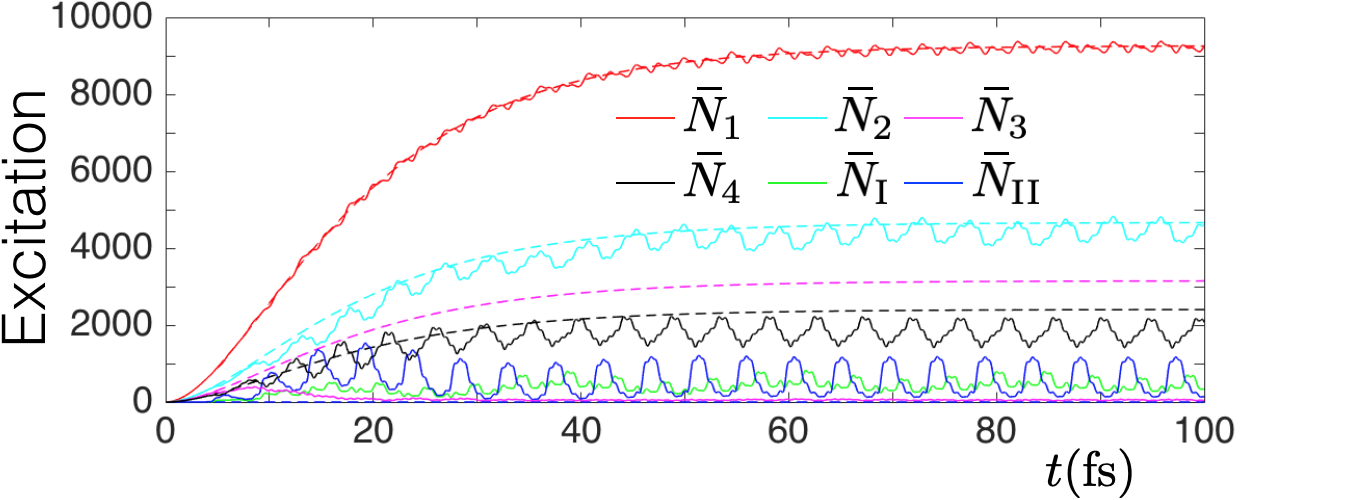}
\caption{Dynamics of the bacterial excitation numbers and photon number of the cavity modes.
Solid curves represent the interacting system: $\bar N_1$, $\bar N_2$, $\bar N_3$, $\bar N_4$ for the cavity modes and $\bar N_{\scriptsize\mbox{I}},\bar N_{\scriptsize\mbox{II}}$ for the bacterial system.
The couplings are taken as $\tilde G_{\scriptsize\mbox{I}},\tilde G_{\scriptsize\mbox{II}}=(3.9,6)\times 10^{13}$~Hz.
The corresponding dynamics for non-interacting system, i.e., $G_{mn}=0$, are given by the dashed curves. 
}
\label{FIG_pr_number} 
\end{figure}

\section{Entanglement as a witness of ultra-strong coupling}

We also performed similar calculations in which we neglected the counterrotating terms in Eq. (\ref{EQ_H}), the model known as Tavis-Cummings. 
This resulted in no entanglement generated in the steady state and can be intuitively understood as follows.
Since the steady-state covariance matrix does not depend on the initial state and on the power of the driving lasers, we might start with all atoms in the ground state, vacuum for the light modes, and no driving.
Under such circumstances there is no interaction between bacterial modes and light modes as every term in the interaction Hamiltonian contains an annihilation operator.
In physical terms, since we begin with the lowest energy state and the interaction Hamiltonian preserves energy, the ground state will be the state of affairs at any time.
Therefore, nonzero entanglement observed in experiments will provide evidence of the existence of the counterrotating terms, showing a signature of the ultra-strong coupling regime.

\section{Summary}
Recent experimental studies have shown that the regime of strong coupling between photosynthetic organisms such as green sulphur bacteria and light is accessible.
In this chapter, we put forward a proposal for probing a quantum property of the bacteria (as characterised by quantum correlations) without directly measuring them. 
Our general proposal bypasses the need of modelling both the bacteria and the interactions they have with external world.
The proposed scheme utilises the bacteria as mediators between two groups of non-interacting cavity light modes. 
In order to show feasibility, we modelled the light-sensitive part of the bacteria and their interactions with the cavity modes and environments using recent experimental parameters.
Within this model, our simulations show that, in most cases, quantum entanglement endures the environmental noises and remains present in the steady state.
Our simulations also confirmed that entanglement between the cavity light modes reveals a quantum property of the bacteria, i.e., quantum entanglement between the bacterial modes and cavity modes, and for stronger interactions, even quantum entanglement within the bacterial modes.
We note that the presence of entanglement also provides independent evidence of the ultra-strong coupling regime.



\chapter{Other applications} 

\label{Chapter7} 

\lhead{Chapter 7. \emph{Other applications}} 

\emph{In this chapter, I will present other applications that result from the principles of distribution of correlations between two objects whose interactions are mediated by an ancillary system -- investigated in Chapters~\ref{Chapter_revealing} to \ref{Chapter_speedup}.\footnote{Parts of this chapter are reproduced from our published articles of Refs.~\cite{krisnanda2017,nondecompos}, $\copyright$ [2019] American Physical Society. Where applicable, changes made will be indicated.}
First, I will apply our non-classicality detection method to an experimentally relevant platform of optomechanics, in particular, the membrane-in-the-middle setting.
Our protocol indirectly detects the quantum property of the mechanical membrane by observing the dynamics of entanglement between probing light modes.
Next, I present a protocol to reveal the correlation between a system and its environment in a scenario of open system dynamics. 
Then, high gain of correlations will be demonstrated in a dynamics of two cavity fields interacting with a two-level atom. 
This gain will be shown to imply a quantum trait -- non-commutativity of interaction Hamiltonians.
Finally, a brief discussion will be presented that shows an application of our method to estimate dimensionality of a quantum object.
}

\clearpage
\section{Membrane-in-the-middle optomechanics}

In this section, we address the practical implication of our criteria from Chapter~\ref{Chapter_revealing} to experiments of cavity optomechanics~\cite{RMP.86.1391}. This is a paradigmatic open mesoscopic quantum system for which the non-classicality detection method in Chapter~\ref{Chapter_revealing} holds the potential to be practically significant. 
In fact, one of the goals of optomechanics is to infer the non-classicality of the state of a massive mechanical system, in similar spirit as ``certification" in Refs. \cite{certification1,certification2}, without affecting its (in general fragile) state. 
A possible setting for such a task is given by the so-called membrane-in-the-middle configuration, where a mechanical oscillator (\emph{a membrane}) is suspended at the centre of a two-sided optical cavity~\cite{paternostro}.
By driving the cavity with laser fields from both its input mirrors, respectively, we realise a situation completely analogous to that in Fig.~\ref{FIG_re_ABC} (cf. Fig.~\ref{FIG_re_optmech}).
We now show that our scheme detects non-classicality of the membrane without measuring it.

\subsection{Experimental setup}


The hamiltonian of our setup (Fig.~\ref{FIG_re_optmech} below) in a rotating frame with frequency of the lasers can be written as $H=H_{\text{loc}}+H_{\text{int}}$ where \cite{paternostro}:
\begin{eqnarray}\label{Eq_hloc}
H_{\text{loc}}&=&\hbar \Delta_{0A} a^{\dagger} a+\hbar \Delta_{0B} b^{\dagger} b+\frac{\hbar \omega_C}{2}( \bm{P}_C^2+ \bm{X}_C^2) \nonumber \\ 
&&+i\hbar \bm{E}_A(a^{\dagger}- a)+i\hbar \bm{E}_B( b^{\dagger}- b) 
\end{eqnarray}
and 
\begin{equation}\label{Eq_hint}
H_{\text{int}}=-\hbar G_{0A} a^{\dagger}a \:\bm{X}_C+\hbar G_{0B}  b^{\dagger} b \:\bm{X}_C,
\end{equation}
where the annihilation (creation) operator of field $J=A,B$ is denoted by the corresponding lowercase letter $j$ ($ j^{\dagger}$) with $[ j, j^{\dagger}]=1$, 
$\bm{P}_C$ and $\bm{X}_C$ are dimensionless quadratures of the membrane with $[\bm{X}_C, \bm{P}_C]=i$, 
$\bm{E}_J$ is the driving strength of laser $J$ with $|\bm{E}_J|=(2\bm{P}_J\kappa_J/\hbar \omega_{lJ})^{1/2}$, where $\bm{P}_J$ is the laser power and $\omega_{lJ}$ denotes its frequency. $\kappa_J=\pi c/2\mathcal{F}_{\text{i},J}l_J$ is decay rate of cavity $J$ with finesse $\mathcal{F}_{\text{i},J}$.
Cavity-laser detuning is defined as $\Delta_{0J}\equiv \omega_J-\omega_{lJ}$, where $\omega_J$ is the frequency of the cavity
and $G_{0J}= (\omega_J/l_J)(\hbar/m_C \omega_C)^{1/2}$ represents field-membrane coupling strength, 
where $l_J$ is the length of the cavity, $m_C$ is the mass of the membrane and $\omega_C$ is its natural frequency.
Note that $H_{\text{loc}}$ is local in $A:B:C$ partition and the two terms in $H_{\text{int}}$ represent coupling in the partition $A:C$ and $B:C$ respectively. 
All the other interactions are local, i.e., $A$ ($B$) is coupled to its own environment $A'$ ($B'$) and $C$ is coupled to its thermal phonon reservoir $C'$, responsible for the Brownian motion of the membrane.
Thus, our Theorem~\ref{TH_revealing} directly applies here and we can implement the detection method of Fig.~\ref{FIG_re_method}a.

\begin{figure}[h]
\centering
\includegraphics[scale=0.45]{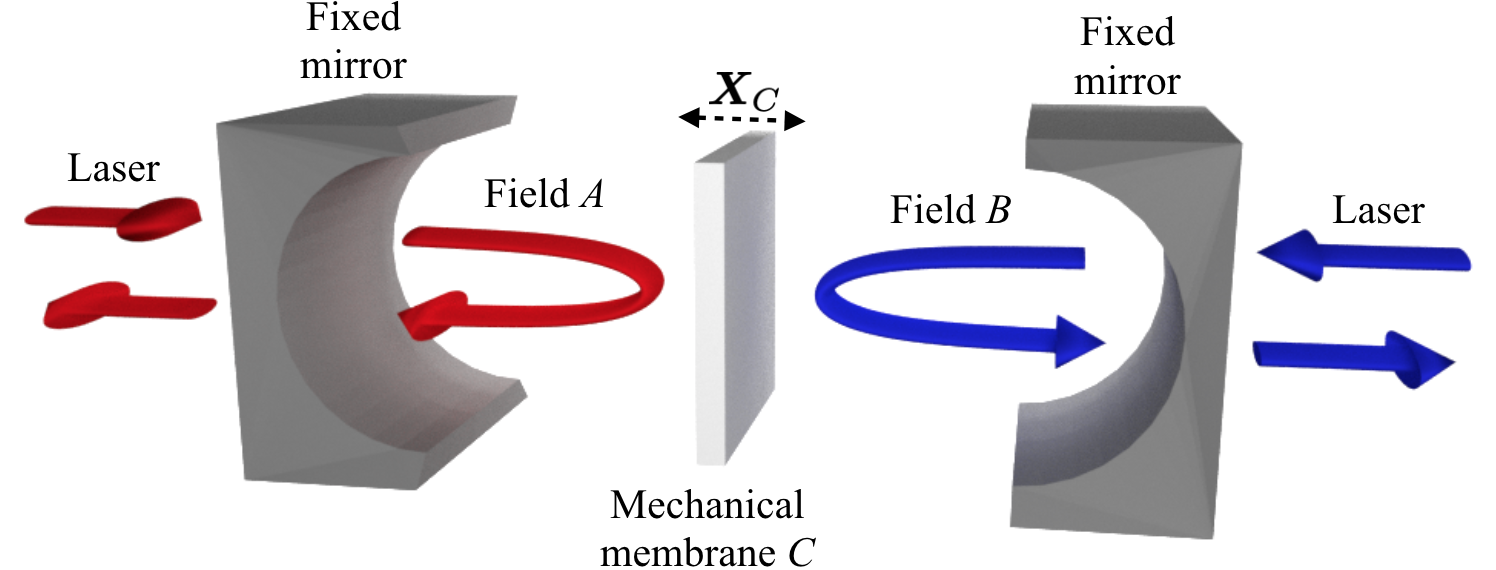} 
\caption{Membrane-in-the-middle optomechanical setup.
Cavity field $A$ is interacting indirectly with cavity field $B$ via a perfectly reflecting mechanical membrane $C$.
The interactions of the movable mechanical mirror with its thermal environment result in the Brownian noise.
The cavities are driven by lasers and experience energy dissipation through the fixed input mirrors.
}
\label{FIG_re_optmech} 
\end{figure}

\subsection{Dynamics of the system}

\subsubsection{Langevin equations}

The dynamics of the operators, adding into account noise and damping terms (also local), can be well written by a set of Langevin equations in Heisenberg picture
\begin{eqnarray}\label{Eq_lgvn}
\dot a&=&-(\kappa_A+i\Delta_{0A})a+iG_{0A}a\bm{X}_C+\bm{E}_A+\sqrt{2\kappa_A}\:a_{\text{in}} \nonumber \\
\dot b&=&-(\kappa_B+i\Delta_{0B})b-iG_{0B}b\bm{X}_C+\bm{E}_B+\sqrt{2\kappa_B}\:b_{\text{in}} \nonumber \\
\dot {\bm{X}_C}&=&\omega_C \bm{P}_C \nonumber \\
\dot {\bm{P}_C}&=&-\omega_C \bm{X}_C +G_{0A}a^{\dagger}a- G_{0B}b^{\dagger}b-\gamma_C \bm{P}_C+\xi
\end{eqnarray}
where $\gamma_C$ is damping rate of the membrane. Also $j_{\text{in}}$ is input noise of field $J$ associated with cavity-input mirror interface and has only correlation function\\ $\langle j_{\text{in}}(t)k_{\text{in}}^{\dagger}(t^{\prime}) \rangle=\delta_{jk}\delta(t-t^{\prime})$ \cite{walls2007quantum}, whereas $\xi$ is Brownian noise of the membrane and has correlation function $\langle \xi(t) \xi(t^{\prime})+\xi(t^{\prime})\xi(t)\rangle/2\approx \gamma_C(2\bar n+1)\delta(t-t^{\prime})$ in the limit of interest that is large mechanical quality of the membrane, i.e., $\omega_C/\gamma_C\gg 1$ \cite{giovannetti2001phase,benguria1981quantum}. The mean phonon number of the membrane reads $\bar n=1/(\exp{(\hbar \omega_C/k_BT)}-1)$.

The linearised Langevin equations can be obtained by splitting the operators into steady state values and fluctuating terms. In particular we write $\bm{X}_C=\bm{X}_{Cs}+\delta \bm{X}_C$, $\bm{P}_C=\bm{P}_{Cs}+\delta \bm{X}_C$, and $j=\alpha_{Js}+\delta j$. By inserting these into Eq. (\ref{Eq_lgvn}) and ignoring nonlinear terms $\delta j^{\dagger}\delta j$ and $\delta j\: \delta \bm{X}_C$ one gets a set of linear Langevin equations for the fluctuations of the quadratures
\begin{eqnarray}\label{Eq_llgvn}
\delta \dot {\bm{X}_A}&=&-\kappa_A \delta \bm{X}_A+\Delta_A \delta \bm{Y}_A+\sqrt{2\kappa_A}\:x_{\text{in},A} \nonumber \\
\delta \dot {\bm{Y}_A}&=&-\kappa_A \delta \bm{Y}_A-\Delta_A \delta \bm{X}_A+G_A\delta \bm{X}_C+\sqrt{2\kappa_A}\:y_{\text{in},A} \nonumber \\
\delta \dot {\bm{X}_B}&=&-\kappa_B \delta \bm{X}_B+\Delta_B \delta \bm{Y}_B+\sqrt{2\kappa_B}\:x_{\text{in},B} \nonumber \\
\delta \dot {\bm{Y}_B}&=&-\kappa_B \delta \bm{Y}_B-\Delta_B \delta \bm{X}_B-G_B\delta \bm{X}_C+\sqrt{2\kappa_B}\:y_{\text{in},B} \nonumber \\
\delta \dot {\bm{X}_C}&=&\omega_C \delta \bm{P}_C \nonumber \\
\delta \dot {\bm{P}_C}&=&-\omega_C\delta \bm{X}_C-\gamma_C \delta \bm{P}_C+G_A\delta \bm{X}_A -G_B\delta \bm{X}_B+\xi 
\end{eqnarray}
where effective detuning $\Delta_A\equiv \Delta_{0A}-G_{0A}\bm{X}_{Cs}$, $\Delta_B\equiv \Delta_{0B}+G_{0B}\bm{X}_{Cs}$, and effective coupling $G_J\equiv \sqrt{2}G_{0J}\alpha_{Js}$. The steady state values are given by $\bm{P}_{Cs}=0$, $\bm{X}_{Cs}=(G_{0A}|\alpha_{As}|^2-G_{0B}|\alpha_{Bs}|^2)/\omega_C$, and $\alpha_{Js}=|\bm{E}_J|/\sqrt{\kappa_J^2+\Delta_J^2}$. The quadratures of the field $\bm{X}_J$ and $\bm{Y}_J$ are related to the field operator $j$ through $j=(\bm{X}_J+i\bm{Y}_J)/\sqrt{2}$. This relation also applies for the input noise, i.e., $j_{\text{in}}=(x_{\text{in},J}+iy_{\text{in},J})/\sqrt{2}$.

For simplicity one can re-write Eqs. (\ref{Eq_llgvn}) as a single matrix equation $\dot u(t)=Ku(t)+n(t)$ where the vector $u(t)=(\delta \bm{X}_A,\delta \bm{Y}_A,\delta \bm{X}_B,\delta \bm{Y}_B,\delta \bm{X}_C,\delta \bm{P}_C)^T$, 
\begin{equation}
n(t)=(\sqrt{2\kappa_A}\: x_{\text{in,A}},\sqrt{2\kappa_A}\: y_{\text{in,A}},\sqrt{2\kappa_B}\: x_{\text{in,B}},\sqrt{2\kappa_B}\: y_{\text{in,B}},0,\xi)^T,
\end{equation}
and
\begin{equation}
K=\left( \begin{array}{cccccc} -\kappa_A&\Delta_A&0 &0 &0 &0\\-\Delta_A&-\kappa_A&0 &0 &G_A &0\\0&0&-\kappa_B &\Delta_B &0 &0\\0&0&-\Delta_B &-\kappa_B &-G_B &0\\0&0&0 &0 &0&\omega_C\\G_A&0&-G_B &0 &-\omega_C &-\gamma_C \\ \end{array}\right).
\end{equation}
The solution to this linearised Langevin equation is then given by $u(t)=M(t)u(0)+\int_0^t ds M(s)n(t-s)$ where $M(t)=\exp{(Kt)}$.

\subsubsection{Dynamics of covariance matrix}

The quantum state of the fluctuations is fully characterised by covariance matrix $V_{ij}(t)\equiv \langle u_i(t)u_j(t)+u_j(t)u_i(t)\rangle/2-\langle u_i(t)\rangle \langle u_j(t)\rangle$. Note that the Gaussian nature of the initial state is maintained since we have linear dynamics and the noises involved are zero mean Gaussian noises. One can show that the covariance matrix at time $t$ is $V(t)=M(t)V(0)M^T(t)+\int_0^t ds\; M(s)DM^T(s)$ where $D=\mbox{Diag}[\kappa_A,\kappa_A,\kappa_B,\kappa_B,0,\gamma_C(2\bar n+1)]$. A more explicit solution of the covariance matrix, after integration, is given by
\begin{eqnarray}
KV(t)+V(t)K^T&=&-D+KM(t)V(0)M^T(t) \nonumber \\
&&+M(t)V(0)M^T(t)K^T \nonumber \\
&&+M(t)DM^T(t),
\end{eqnarray}
which is linear and can easily be solved numerically. 
For our simulations of the dynamics below, we take the initial state to be thermal state for $c$ and coherent state for field $j$, this gives $V(0)=\mbox{Diag}[1,1,1,1,2\bar n+1,2\bar n+1]/2$. If one is only interested in steady state solution, it is guaranteed when all real parts of eigenvalues of $K$ are negative, giving $M(\infty)=0$ such that the steady state covariance matrix can be calculated from a simpler equation $KV(t_{s})+V(t_{s})K^T=-D$.

\subsubsection{Quantum entanglement: dynamics and steady state}

In order to calculate entanglement, we utilise here the method presented in Chapter~\ref{Chapter1}. 
In particular, the covariance matrix $V$ describing our three-mode optomechanical system can be written in block form 
\begin{equation}
V_{ABC}=\left( \begin{array}{ccc} L_{AA}&L_{AB}&L_{AC}\\ L_{AB}^T&L_{BB}&L_{BC}\\ L_{AC}^T&L_{BC}^T&L_{CC} \end{array}\right)
\end{equation}
where for $j,k=A,B,C$ the block component $L_{jk}$ is a $2\times 2$ matrix describing local mode correlation when $j=k$ and intermodal correlation when $j\ne k$. An $N$-mode covariance matrix has symplectic eigenvalues $\{\nu_k\}_{k=1}^N$ that can be computed from the spectrum of matrix $|i\Omega_N V|$ \cite{weedbrook2012gaussian} where 
\begin{equation}
 \Omega_N=\bigoplus^N_{k=1} \left( \begin{array}{cc} 0&1\\ -1 &0\end{array}\right).
 \end{equation}
For a physical covariance matrix $2 \nu_k\ge 1$. For an entangled system, e.g., in the partition $AB:C$, the covariance matrix will not be physical after partial transposition with respect to mode $C$ (this is equivalent to flipping the sign of the membrane's momentum fluctuation operator $\delta \bm{P}_C$ in $V$). For our system, this unphysical $V^{T_C}$ is shown by one of its three symplectic eigenvalues $\tilde \nu_{\text{min}}<1/2$. Entanglement is then quantified by logarithmic negativity as follows $E_{AB:C}=\mbox{max}[0,-\log_2{(2\tilde \nu_{\text{min}})}]$ \cite{negativity, adesso2004extremal}.\footnote{Note that we have used logarithm with base 2, instead of natural logarithm as in Ref.~\cite{krisnanda2017}, to maintain consistencies in this thesis.}
Note that the separability condition, when $V^{T_C}$ has $\tilde \nu_{\text{min}}\ge1/2$, is sufficient and necessary for $1:N$ mode partition \cite{werner2001bound}. Entanglement $E_{A:B}$ is calculated in similar manner by only considering system $AB$ where the covariance matrix is now
\begin{equation}
V_{AB}=\left( \begin{array}{cc} L_{AA}&L_{AB}\\ L_{AB}^T&L_{BB}\\  \end{array}\right).
\end{equation}

In order to independently confirm the non-classicality of the membrane and demonstrate that there is considerable entanglement to be detected we calculate the ensuing entanglement dynamics.
As mentioned above, we start with the experimentally natural state where $C$ is in a thermal state and $A$ and $B$ are coherent states, and calculate the dynamics of entanglement $E_{A:B}$ and $E_{AB:C}$.
Since initially there is no entanglement, the first step in Fig.~\ref{FIG_re_method}a can be omitted.

The parameters used in our simulations all adhere to present-day technology \cite{groblacher2009observation}.
This includes $m_C=145 \: \mbox{ng}$, $T=300\: \mbox{mK}$, $\l_J=25\: \mbox{mm}$, and $(\omega_C, \omega_{lJ}, \gamma_C)=2\pi(947\times 10^3, 2.8\times10^{14}, 140)\: \mbox{Hz}$. Finesse of each cavity is $1.4\times10^4$.
The results of our analysis are presented in Fig.~\ref{FIG_re_evstime} for varying power of the right laser. 
We see that non-zero $E_{A:B}(\tau)$ is always accompanied by non-zero $E_{AB:C}$ at some time $(0,\tau)$.
Note that entanglement is a stronger type of quantum correlations than discord.
We have also performed similar calculations by varying the power of the left laser as well as the frequencies of the lasers within experimentally accessible ranges and observed consistent results.

\begin{figure}[h]
\centering
\includegraphics[scale=0.6]{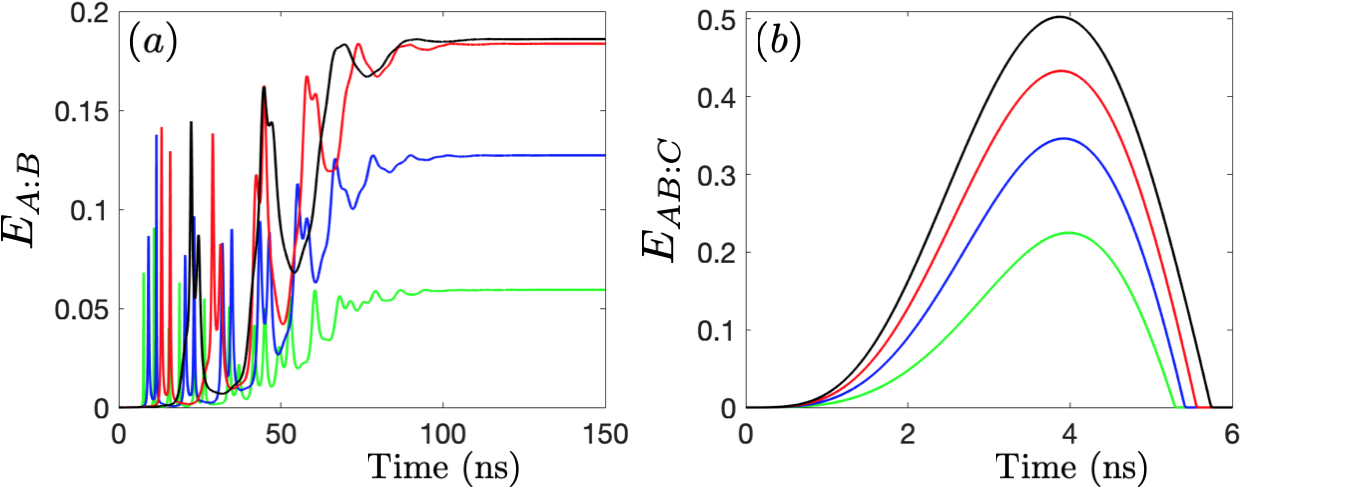}
\caption{Exemplary dynamics of entanglement in the membrane-in-the-middle optomechanical setup.
We take the feasible parameters from recent experiments by Gr\"{o}blacher \emph{et al}.~\cite{groblacher2009observation}.
In particular, the mirror has mass $m_C=145$ ng, damping rate $\gamma_C=2\pi \times 140$~Hz, natural frequency $\omega_C=2\pi \times947$~kHz, and environmental temperature $T=0.3$~K. 
Each cavity has length $25$~mm, finesse $1.4\times10^4$, and is driven by $1064$~nm laser. 
We fixed the power of the left laser to be $\bm{P}_A=100$~mW, and the detunings $\Delta_A=\omega_C$ and $\Delta_B=-\omega_C$, while varying the power of the right laser as $\bm{P}_B=20$~mW (green curves), $40$~mW (blue curves), $60$~mW (red curves), and $80$~mW (black curves).
Our simulations show that positive entanglement $E_{A:B}$ implies that the mirror is entangled with the cavity fields, i.e., nonzero $E_{AB:C}$ during the evolution.
}
\label{FIG_re_evstime} 
\end{figure}

If one is interested only in the steady state regime, Fig.~\ref{FIG_re_ess} shows the corresponding entanglement $E_{A:B}$ while $E_{AB:C}$ is zero in this range (not shown). Note that red colour has been used in the plots for parameters that do not correspond to steady state solution.

\begin{figure}[h]
\centering
\includegraphics[scale=0.45]{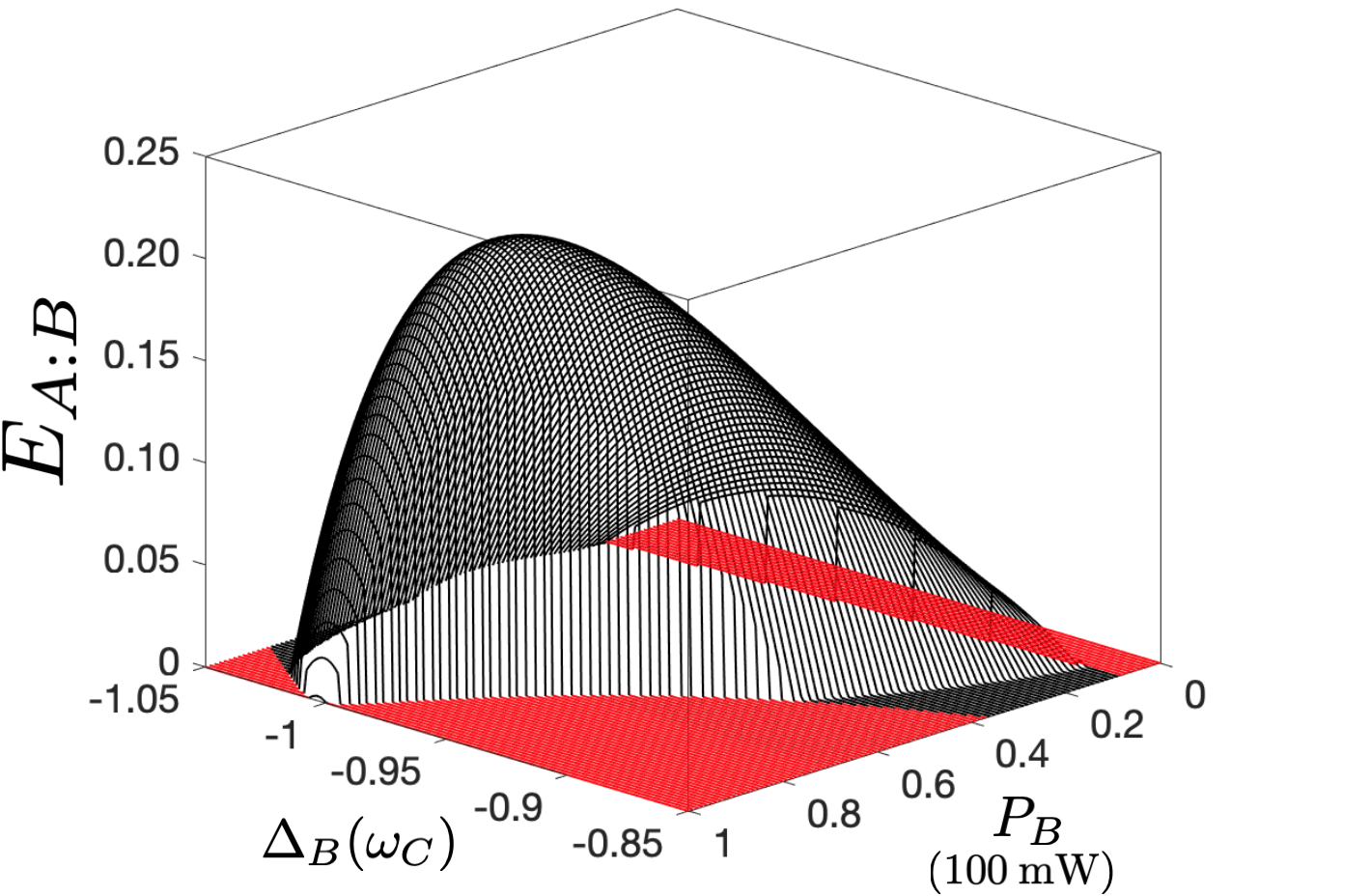}
\caption{Exemplary steady state entanglement between the cavity fields in the membrane-in-the-middle optomechanical setup.
The power of the right laser is varied as $\bm{P}_B=[0,100]$~mW and the detuning $\Delta_B=[-1.05,-0.85]\: \omega_C$.
Other parameters are the same as those stated in Fig. \ref{FIG_re_evstime}.
Red colour region corresponds to non-steady state regime. 
}
\label{FIG_re_ess} 
\end{figure}


\section{Detecting system-environment correlations}

In this section we present a brief review of schemes for detecting system-environment correlations.
We show that our protocol is capable of such detection without the need for state tomography.

\subsection{Previous detection schemes}
A vast body of literature exists on the study of the influence of initial system-environment correlations (SECs) on the evolution of the open system~\cite{breuer2002theory}.
Proposals for the detection of SECs based on monitoring the dynamics of distinguishability \cite{laine2011witness, smirne2010initial, dajka2010distance, dajka2011distance, wissmann2013detecting} or purity \cite{kimura2007,rossatto2011purity} of the open system have been put forward.
Such proposals have been implemented experimentally by means of quantum tomography \cite{smirne2011experimental, li2011experimentally}.
Moreover, the possible non-classical nature of SECs was linked to the impossibility of describing the evolution of an open system through completely positive maps~\cite{rodriguez2008completely}.
Hence detection schemes of quantum discord in the initial system-environment state have been proposed~\cite{gessner2011detecting, gessner2013local} 
and recently assessed experimentally \cite{gessner2014local, tang2015experimental, cialdi2014two}. 

\subsection{Our protocol}

In order to apply our method in this scenario, let us consider again a closed-system dynamics and, in line with the assumed inaccessibility of the mediator, focus the attention to the probes only (see Fig.~\ref{FIG_re_ABC}).
We could thus think of $C$ as an environment in contact with the open system $AB$, as seen in Fig.~\ref{FIG_oa_sec}.

\begin{figure}[h]
\centering
\includegraphics[scale=0.5]{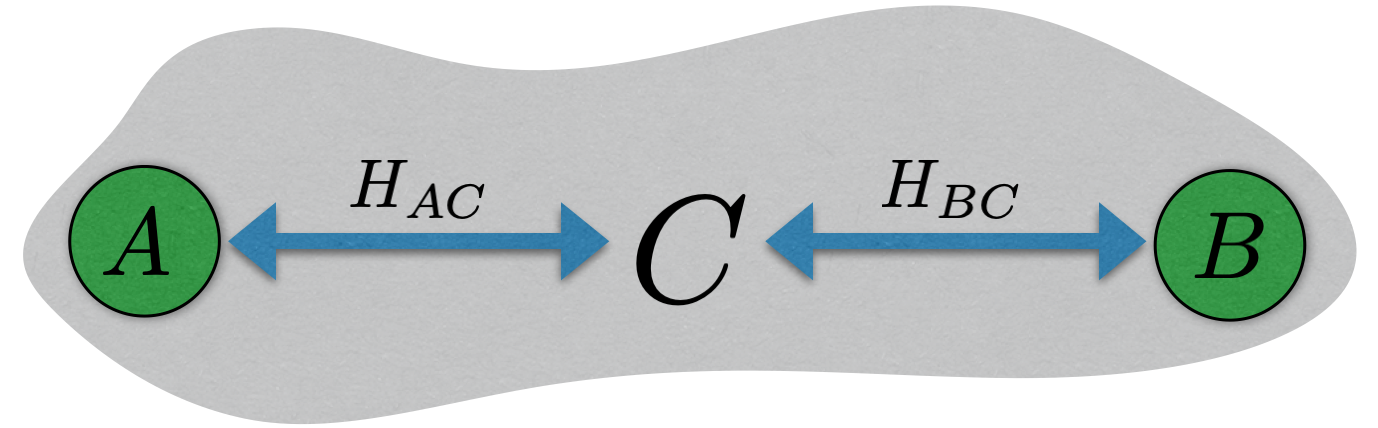}
\caption{A scenario to reveal system-environment correlations. 
Two mutually non-interacting objects $A$ and $B$ are in contact with a common environment $C$.
We show that correlations between the open system $AB$ and the environment can be inferred from the gain of entanglement between $A$ and $B$.
Note that the setup considered here is different from the one in Fig.~\ref{FIG_re_ABC} of Chapter~\ref{Chapter_revealing}, where now $C$ includes local or global environments or mediating objects.
}
\label{FIG_oa_sec} 
\end{figure}

Now we use our scheme of Fig.~\ref{FIG_re_method}a to reveal SECs, with the advantage that the applications of an entanglement breaking channel and state tomography are not necessary. 
This is a consequence of Lemma~\ref{LM_revealing1} that, if rephrased in terms of the aim in this section, states that quantum entanglement in the open system between two objects $A$ and $B$ cannot increase if the environment is not correlated with them at all times.
Note that the amount of entanglement can be quantified by any entanglement monotone.

Recall also that our detection protocol of Fig.~\ref{FIG_re_method}a is used to infer non-classicality of correlations, i.e., quantum discord, between the environment and the open system.
We note that previous schemes detect the non-classicality of the system \cite{gessner2011detecting, gessner2013local}, i.e., presence of $D_{C|AB}$, whereas our schemes ascertain the non-classicality of the environment, $D_{AB|C}$,
which is perhaps a prime example of an inaccessible object.

\section{Two cavity fields coupled via a two-level atom}\label{SC_faf}

In this section, we demonstrate, with concrete dynamics generated by non-commuting Hamiltonians, that the correlation capacity bounds derived in Chapter~\ref{Chapter_detecting} can be violated.
We also provide similar calculations for commuting Hamiltonians.

\subsection{Witnessing non-commutativity and discord}

Consider a two-level atom $C$, i.e., $d_C=2$, mediating interactions between two cavity fields $A$ and $B$ as illustrated in Fig.~\ref{FIG_de_faf}. 
A similar scenario has been considered and implemented, for example, in Refs. \cite{rauschenbeutel2001,messina2002,browne2003,hamsen}.
The interaction between the atom and each cavity field is taken to follow the Jaynes-Cummings model,
\begin{equation}
H = \hbar g ( a \sigma^+ + a^{\dagger} \sigma^-)+\hbar g ( b \sigma^+ + b^{\dagger} \sigma^-),
\label{EQ_JC}
\end{equation}
where $ a$ ($ b$) is the annihilation operator of field $A$ ($B$), while $ \sigma^+$ ($ \sigma^-$) is the raising (lowering) operator of the two-level atom.
For simplicity, we have assumed that the interaction strengths between the two-level atom and the fields are the same.
Note that $H$ is of the form $H_{AC}+H_{BC}$ with non-commuting components.

\begin{figure}[h]
\centering
\includegraphics[scale=0.6]{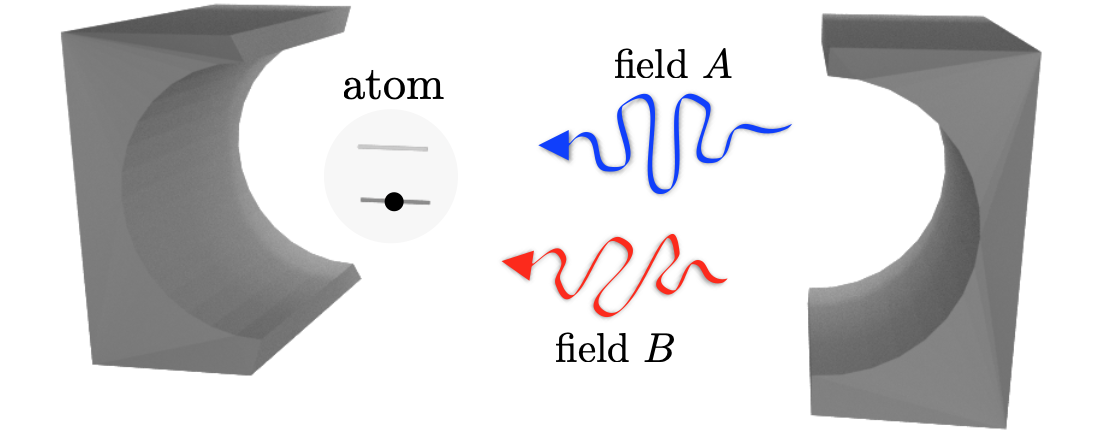}
\caption{Two orthogonally polarised cavity fields interact via a two-level atom, but not with each other.
In the text, we show that correlation gain between $A$ and $B$ is linked with non-commutativity of interaction Hamiltonians, or more generally, non-decomposability of time evolution.
}
\label{FIG_de_faf}
\end{figure}

The resulting correlation dynamics are plotted in Fig. \ref{FIG_de_dynamics}. 
Mutual information and negativity were calculated directly, whereas for the classical correlation and the relative entropy of discord, we provide the lower bounds $\tilde C_{A:B}$ and $-S_{A|B}$, respectively.
$\tilde C_{A:B}$ is calculated as the mutual information of the state resulting from projective local measurements in the Fock basis (no optimization over measurements performed).
The negative conditional entropy $-S_{A|B}$ is a lower bound on the distillable entanglement \cite{devetak2005distillation}, which in turn is a lower bound on the relative entropy of entanglement $E_{A:B}$ \cite{horodecki2000limits}. Therefore, we note the chain of inequalities $-S_{A|B} \le E_{A:B} \le D_{A|B} \le I_{A:B}$, where the last two inequalities follow from \cite{modi2010unified}.
Already these lower bounds can beat the limit set by decomposable evolution, and therefore, all mentioned correlations can detect non-decomposability of the evolution. 
Since we consider closed systems, this infers non-commutativity of the Jaynes-Cummings couplings.
We also note another non-classical feature of the studied dynamics: since Fig. \ref{FIG_de_dynamics} shows entanglement gain, according to Chapter~\ref{Chapter_revealing} there must be quantum discord $D_{AB|C}$ during the evolution.

It is apparent that the detection is easier (faster and with more pronounced violation) with a higher number of photons in the initial states of the cavity fields.
We offer an intuitive explanation.
Consider, for example, $| m n 0 \rangle$ as the initial state of $ABC$.
By defining $ \xi = ( a + b )/\sqrt{2}$, the Hamiltonian of Eq. (\ref{EQ_JC}) becomes $\sqrt{2} \hbar g( \xi  \sigma^+ +  \xi^{\dagger} \sigma^-)$ and it is straightforward to obtain the unitary evolution~\cite{scully-book}.
One finds that the quantum state of the fields oscillates incoherently between $\sum^{m+n}_{j=0}c_j(t) |j\rangle_A |m+n - j\rangle_B$ and $\sum^{m+n-1}_{j=0}d_j(t) |j\rangle_A |m+n-1-j\rangle_B$.
Both of these states are superpositions of essentially $m+n$ bi-orthogonal terms giving rise to high entanglement and, therefore, also other forms of correlations.

Figure~\ref{FIG_de_dynamics} illustrates that different correlation quantifiers have different detection capabilities and it is not clear at this stage whether there is a universal measure with which non-commutativity is detected, e.g., the fastest.
For most initial states we studied mutual information detected non-commutativity the most rapidly, but there are exceptions, as shown by the black curve corresponding to the initial state $| 1 0 1\rangle$.
With this initial state the mutual information never violates its bound, but the negativity does.

\begin{figure}[h]
\centering
\includegraphics[scale=0.4]{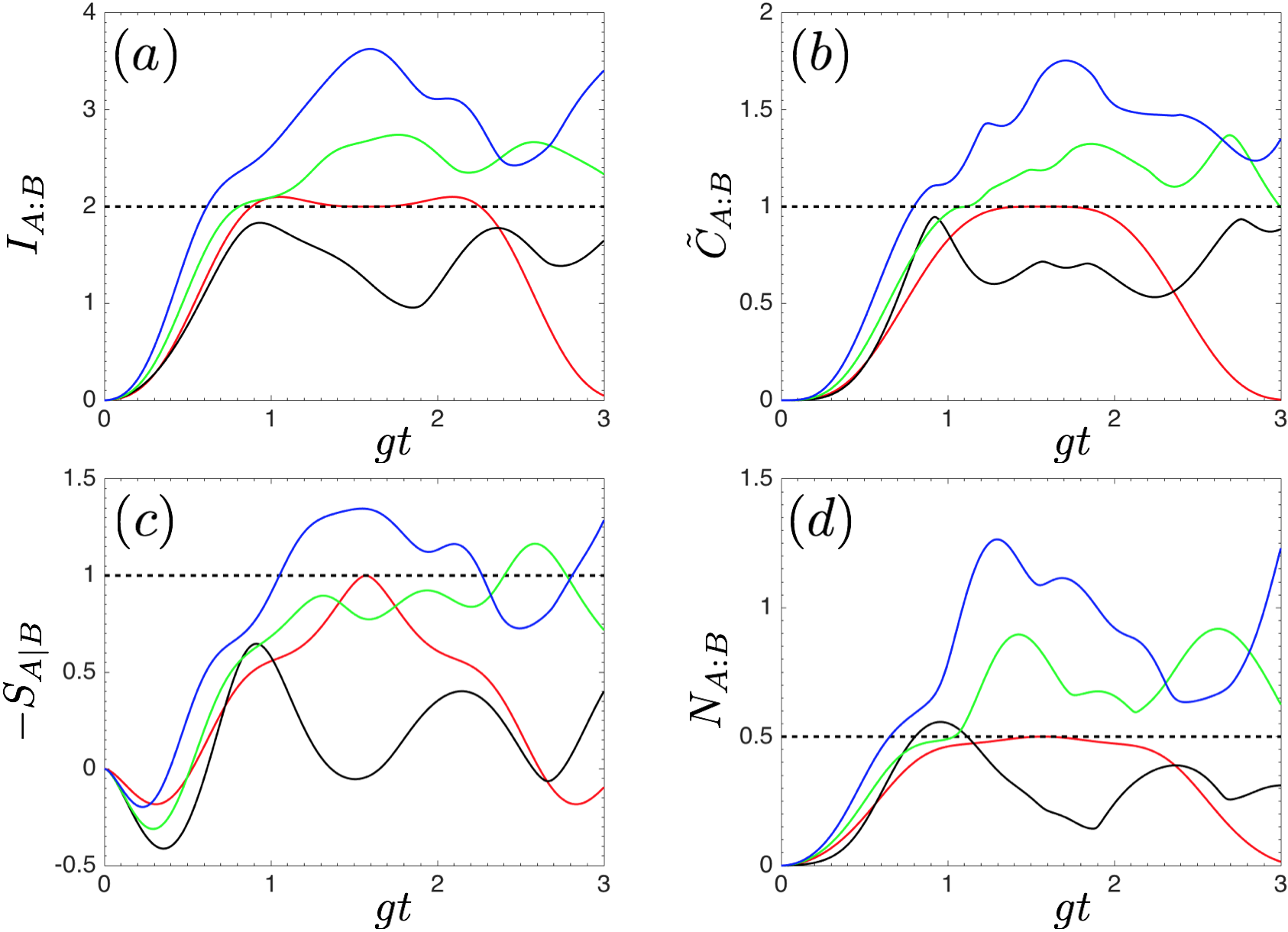}
\caption{Correlation dynamics in the field-atom-field system.
The solid curves represent the Jaynes-Cummings coupling while the dashed lines show the corresponding correlation bound for decomposable dynamics.
In the panels, we show correlations between field $A$ and field $B$: (a) the mutual information, (b) lower bound on the classical correlation, (c) lower bound on the relative entropy of discord, and (d) the entanglement as quantified by negativity.
The varied parameter is the initial state of the tripartite system $ABC$: $\ket{110}$ (red curves), $\ket{101}$ (black curves), $\ket{210}$ (green curves), and $\ket{220}$ (blue curves).
Note that the horizontal axis is given by the dimensionless time $gt$.
}
\label{FIG_de_dynamics}
\end{figure}

\subsection{Strong interactions and bounded entanglement}

Our results imply that the non-commutativity (non-decomposability in general) is a desired feature of interactions in the task of correlation distribution, which is important for quantum information applications.
As a contrasting physical illustration, we consider the strong dipole-dipole interactions in our field-atom-field example.
The Hamiltonian reads
\begin{equation}
H^{\prime}=\hbar g ( a+ a^{\dagger}) ( \sigma^+ +  \sigma^-)+\hbar g ( b+ b^{\dagger})( \sigma^+ +  \sigma^-),
\label{EQ_DD}
\end{equation}
with commuting components, i.e., $[H_{AC},H_{BC}]=0$.
One can verify that the results of this model are in agreement with all the bounds we derived in Chapter~\ref{Chapter_detecting}.
Furthermore, we prove below that, with this coupling, the state of $AB$ at time $t$ is effectively given by a two-qubit separable state. This makes $N_{A:B}(t)=0$ and $I_{A:B}(t)\le 1$. 
Note the counter-intuitive result that strong interactions produce bounded correlations between the probes, while weak interactions (Jaynes-Cummings coupling) can increase the correlations above the bounds. 

Let us define ${\xi} = ( a + b )/\sqrt{2}$. 
The dipole-dipole Hamiltonian, Eq. (\ref{EQ_DD}), is reformulated as $H^{\prime}=\sqrt{2}\hbar g( \xi + \xi^{\dagger}) \sigma^x$, where $ \sigma^x= \sigma^+ + \sigma^-$ and $[ \xi,  \xi^{\dagger}]=\openone$. 
The unitary evolution operator is given by
\begin{eqnarray}\label{EQ_hddunitary}
 U_t&=&e^{-\frac{iH^{\prime}t}{\hbar}} \\
&=&\frac{1}{2} [(\openone- \sigma^x)e^{i\sqrt{2} gt( \xi +  \xi^{\dagger})}+(\openone+ \sigma^x)e^{-i\sqrt{2} gt( \xi +  \xi^{\dagger})}]\nonumber \\
&=&\frac{1}{2} [(\openone- \sigma^x) D_a(\alpha) D_b(\alpha) +(\openone+ \sigma^x) D_a(-\alpha) D_b(-\alpha)], \nonumber
\end{eqnarray}
where $\alpha=igt$ and, e.g., $ D_a (\alpha)=\exp(\alpha a^{\dagger}-\alpha^{\ast} a)$. 
Given an initial state $|mn0\rangle$, the state at time $t$ reads
\begin{eqnarray}
| \psi_t \rangle & = & \frac{1}{4} [ (d^{(mn)}_{++}|D^{(m)}_+,D^{(n)}_+\rangle + d^{(mn)}_{--}|D^{(m)}_-,D^{(n)}_-\rangle )|0\rangle \nonumber \\
& - & (d^{(mn)}_{+-}|D^{(m)}_+,D^{(n)}_-\rangle+d^{(mn)}_{-+}|D^{(m)}_-,D^{(n)}_+\rangle)|1\rangle], \nonumber
\end{eqnarray}
where
\begin{eqnarray}
d^{(mn)}_{\pm\pm} & = & 2\sqrt{[1\pm e^{-2|\alpha|^2} L_m(4|\alpha|^2)][1\pm e^{-2|\alpha|^2} L_n(4|\alpha|^2)]},\nonumber \\
|D^{(n)}_{\pm}\rangle & = & \frac{1}{\sqrt{d^{(nn)}_{\pm\pm}}}[{D}(\alpha)\pm{D}(-\alpha)]|n\rangle . \nonumber
\end{eqnarray}
Note that $\langle D^{(n)}_+|D^{(n)}_-\rangle=0$ and $\langle D^{(n)}_{\pm}|D^{(n)}_{\pm}\rangle=1$. $L_n(|\alpha|^2)$ is the Laguerre polynomial, which comes from the relation
$\langle n|{D}(\alpha)|n\rangle=e^{-|\alpha|^2/2}L_n(|\alpha|^2)$.
After tracing-out of the atomic mode $C$, the state of the fields is effectively given by a two-qubit state,
\begin{eqnarray}
\frac{1}{16}
\begin{pmatrix}
(d^{(mn)}_{++})^2 & 0& 0& d^{(mn)}_{++}d^{(mn)}_{--} \\
0 & (d^{(mn)}_{+-})^2 & d^{(mn)}_{+-}d^{(mn)}_{-+} & 0 \\
0 & d^{(mn)}_{+-}d^{(mn)}_{-+} & (d^{(mn)}_{-+})^2 & 0 \\
d^{(mn)}_{++}d^{(mn)}_{--} & 0 & 0 & (d^{(mn)}_{--})^2 \nonumber
\end{pmatrix},\\
\end{eqnarray}
which is positive under partial transposition and, hence, separable \cite{peres1996,horodecki1996m}. The same result follows for initial state $\ket{mn1}$.

\subsection{Estimating dimension of mediators}

Last but not least, we note an application of our correlation capacity bounds to estimate the dimension of the mediator; see, e.g., Refs.~\cite{dimwit1,dimwit2,dimwit3} for other dimension witnesses.
For decomposable evolution (including discrete sequential operators considered in Refs.~\cite{cubitt,bounds1,bounds2,exp0,edssexp1,edssexp2,edssexp3}), the amount of correlation between the probes is bounded by the correlation capacity $\sup_{\ket{\psi}} Q_{A:C}$, which is a function of $d_C$.
If one observes $Q_{A:B}(t)$ value that is larger than the correlation capacity of a certain $d_C$, then the dimension of the mediator must be larger than $d_C$.


\chapter{Conclusion and future work} 

\label{Chapter_conclusion} 

\lhead{Chapter 7. \emph{Conclusion and future work}} 

\emph{In this chapter, I present the conclusion of the work that has been reported in this thesis.
Additionally, some immediate questions arise, which have not yet been settled.
I will discuss these key questions with preliminary results, wherever possible.
In particular, this includes a natural question on the general entanglement bound for the indirect interaction scenario, a quantitative bound on the amount of non-commutativity of Hamiltonians (in general, non-decomposability of time evolution operator), a request on the protocol capable of witnessing the presence of quantum objects (i.e., the ones that can be given a description within the quantum framework such as living in a Hilbert space), and a strict bound on entangling time for initial states where the mediator is decoupled. 
}

\clearpage
\section{Conclusion}
We have studied the creation of correlations, mostly quantum entanglement, between two principal quantum objects that are continuously interacting via a mediator -- the indirect interaction setting.
The research in this thesis has been devoted to understanding the factors that are crucial in this scenario.
It includes the required property of the mediator, the amount of correlation gain, and the speed at which entanglement is created.
Our work also resulted in key applications ranging from the current technologically available platform detecting non-classicality of an optomechanical mirror to a proposal on revealing a quantum nature of gravity.

In particular, Chapter~\ref{Chapter_revealing} has shown that, for initially uncorrelated objects, quantum discord between the mediator and the principal objects is a necessary condition for distribution of quantum entanglement. 
The contrapositive of our theorem is therefore the revelation of quantum discord (a form of non-classical feature of the mediator) from the observation of entanglement gain between the principal objects.
Next, in Chapter~\ref{Chapter_detecting}, we have made a connection between the amount of distributed correlations and non-commutativity of Hamiltonians or, more generally, non-decomposability of the dynamical operator (another form of non-classicality).
The speed of entanglement creation has been investigated in Chapter~\ref{Chapter_speedup}. 
We have presented a lower bound on entangling time for both direct and indirect interaction settings.
We proved that entanglement cannot be indirectly distributed faster than the direct time bound. 
However, some examples, all of which require an initially correlated mediator, have been shown to saturate the bound.
We note that in Chapters~\ref{Chapter_revealing}--\ref{Chapter_speedup}, wherever possible, we have used minimalistic assumptions in order to make our theorems as general as possible.
This includes the relaxation of knowledge of the dimensionality of all the objects under scrutiny, the initial state of the system, and details of interactions. 
All the objects can also be open systems.

Applications resulted from this work have been reported in three chapters.
First, Chapter~\ref{Chapter_gravity} focussed on our proposal on the observation of quantum entanglement between two masses.
This has been argued as a potential route towards the detection of a quantum nature of gravitational interactions.
Next, Chapter~\ref{Chapter_probing} has shown another proposal on probing of non-classicality of photosynthetic organisms, in particular, the green sulphur bacteria, without measuring them directly.
From our model of the bacterial modes and their interactions, we have confirmed our probing scheme.
Finally, other applications were presented in Chapter~\ref{Chapter7}.
This includes, among others, the revelation of a non-classical property of a mechanical mirror that is mediating interactions between two otherwise non-interacting cavity light modes.

\section{Future work}

\subsection{Entanglement bound for indirect continuous interactions}

In Chapter~\ref{Chapter_revealing}, we considered the indirect interaction setting where the interactions are continuous. 
An illustration can be seen in Fig.~\ref{FIG_re_ABC} where an ancillary object $C$ is mediating interactions between two principal objects $A$ and $B$.
In this case, we have proven that classical mediator (zero discord $D_{AB|C}$ at all times) cannot increase quantum entanglement, i.e., $E_{A:BC}(\tau)\le E_{A:BC}(0)$.
See Theorem~\ref{TH_revealing} for details.
However, we note that having positive discord is not sufficient for entanglement gain.

Moreover, we often deal with imperfections in real situations.
It has been shown that states with zero discord are impossible to prepare in experiments, i.e., small perturbations will drive a zero-discord state into a state having positive, albeit small, discord~\cite{ferraro2010almost}.
As an example, let us consider a state of the form
\begin{equation}\label{EQ_nzd}
\rho=p\: \rho_{AB}\otimes \ket{\alpha}\bra{\alpha} +(1-p)\: \rho_{AB}^{\prime}\otimes \ket{\alpha^{\prime}}\bra{\alpha^{\prime}},
\end{equation}
where $p$ stands for probability, $\{\rho_{AB},\rho_{AB}^{\prime}\}$ are states of system $AB$, and $\{\ket{\alpha},\ket{\alpha^{\prime}}\}$ are coherent states of $C$. 
Here we note that the states $\ket{\alpha}$ and $\ket{\alpha^{\prime}}$ are non-orthogonal. 
In the limit where the expectation values of position, $\langle \alpha| x| \alpha \rangle$ and $\langle \alpha^{\prime} |x| \alpha^{\prime}\rangle$ are far apart, the states will be \emph{effectively} orthogonal.
Theoretically, however, quantum discord $D_{AB|C}$ of the state in (\ref{EQ_nzd}) will never be exactly zero.
Yet, one expects that not much entanglement can be gained with the effectively orthogonal states.
This calls for a bound on entanglement gain in terms of quantum discord $D_{AB|C}$, or other related quantities.\footnote{For example, one might consider quantum coherence of object $C$ (see Ref. \cite{roqcoherence} for a recent review on quantum coherence).}
Indeed, an observation of finite entanglement gain would certify that useful quantum discord (i.e., finite) formed during the dynamics.  

We have performed preliminary investigation towards this direction. 
First, we note from the example in Fig.~\ref{FIG_re_exp1} that entanglement increment in the partition $A:BC$ within $\omega t=\pi/4$ ($E_{A:BC}=1$ at this time), is not bounded by the maximum discord ($\approx 0.81$) or the average discord ($\approx0.38$).
For a special class of initial states and Hamiltonian, we have obtained a simple bound.
In particular, the gain of entanglement (as quantified by REE) is always bounded by entanglement with the mediator, i.e., 
\begin{equation}
|E_{A:BC}(\tau)-E_{A:BC}(0)|\le E_{AB:C}(\tau).
\end{equation}
This applies for a coherent dynamics with commuting interaction Hamiltonians, $[H_{AC},H_{BC}]=0$, and pure decoupled states as the initial condition \cite{edmundfyp}.
The present aim is therefore to generalise this result, taking into account more general states and Hamiltonians with non-commuting components.
One could also make it experimentally friendly by considering incoherent interactions with environments.

\subsection{The strength of non-decomposability and correlation gain}

Here we discuss a future direction that is an extension of Chapter~\ref{Chapter_detecting}.
Previously we have established, in the indirect interaction setting, that the gain of correlations between the objects $A$ and $B$ exceeding certain bounds, would imply that the the Hamiltonians are non-commuting (for closed systems), or more generally, non-decomposability of the dynamical operator (for open systems).

Now let us consider the following bound on the correlation between $A$ and $B$ (as quantified by a quantifier $Q$)
\begin{equation}\label{EQ_strengthnd}
Q_{A:B}(\tau)-Q_{A:B}(0)\le \eta,
\end{equation}
where $\eta$ is the to-be-proven bound that is independent of initial state.
One can then think of the Trotterized process, as depicted in Fig. ~\ref{FIG_de_illu}.
For a general evolution, suppose one writes the dynamical operator as
\begin{equation}
\Lambda=\Lambda_{BC}^{(n)}\Lambda_{AC}^{(n)} \cdots \Lambda_{BC}^{(2)}\Lambda_{AC}^{(2)}\: \Lambda_{BC}^{(1)}\Lambda_{AC}^{(1)},
\end{equation}
where the superscript denotes the number of sequences where $C$ first interacts with $A$ and then with $B$.
Next, one can apply the bound of Eq.~(\ref{EQ_strengthnd}) to the sequences above and obtain
\begin{eqnarray}
Q_{A:B}(\Delta t)-Q_{A:B}(0)&\le& \eta \nonumber \\
Q_{A:B}(2\Delta t)-Q_{A:B}(\Delta t)&\le& \eta \nonumber \\
&\vdots& \nonumber \\
Q_{A:B}(n\Delta t)-Q_{A:B}((n-1)\Delta t)&\le& \eta.
\end{eqnarray}
The sum would give 
\begin{equation}\label{EQ_boundeta}
Q_{A:B}(\tau)-Q_{A:B}(0)\le n\eta,
\end{equation}
where $\tau=n\Delta t$. 
In real situation, the observation of correlation gain that is exceeding certain $n$, would imply that the corresponding evolution operator cannot be modelled by $n$ sequences of decomposable operator of the form $\Lambda_{BC}\Lambda_{AC}$. 
We propose this as characterisation of the strength of non-decomposability of the actual dynamical operator $\Lambda$.
In particular, the higher the observed correlation $Q_{A:B}(\tau)$ the stronger the non-decomposability in the dynamics. 

\begin{figure}[h]
\centering
\includegraphics[scale=0.5]{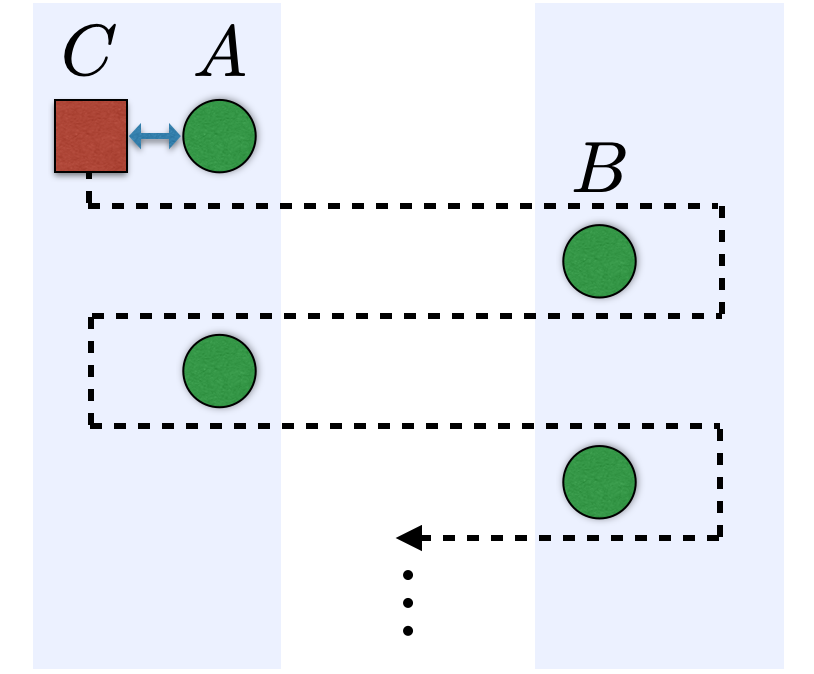} 
\caption{Illustration of Trotterized dynamics. 
The evolution of the whole system consists of sequences of interactions between the mediator object $C$ with $A$, followed by $C$ with $B$, each for a time $\Delta t$.
}
\label{FIG_de_illu} 
 \end{figure}

We now provide a simple example of the bound in Eq.~(\ref{EQ_strengthnd}).
Consider the mutual information as the correlation quantifier.
The corresponding bound is given by the following Lemma.

\begin{lemma}
For a tripartite system with a dynamical operator having the form $\Lambda_{BC}\Lambda_{AC}$, the mutual information follows
\begin{equation}
I_{A:B}(t)-I_{A:B}(0) \le 4\log_2(d_C),
\end{equation}
where $d_X$ is the dimension of object $X$ and we have assumed $d_A,d_B>d_C$ for simplicity.
\end{lemma}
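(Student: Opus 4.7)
The plan is to reduce the statement to Theorem~\ref{TH_cons} (applied to the mutual information) plus a short entropic estimate. First I would instantiate Theorem~\ref{TH_cons} with $Q = I$, the mutual information. Since $I_{X:Y}$ admits the distance-based form $\inf_{\sigma_X\otimes\sigma_Y} S(\rho\,\|\,\sigma_X\otimes\sigma_Y)$ (with the infimum attained by the marginals), the quantity $\mathcal{I}_{AC:B}(0)$ in Eq.~(\ref{EQ_thebound}) simply reduces to the ordinary mutual information $I_{AC:B}(0)$. Recalling that the correlation capacity of the mediator for mutual information is $\sup_{\ket{\psi}} I_{A:C} = 2\log_2(d_C)$, Theorem~\ref{TH_cons} (or Lemma~\ref{LM_re}, applied to relative entropy) immediately yields
\begin{equation}
I_{A:B}(t) \le I_{AC:B}(0) + 2\log_2(d_C).
\end{equation}

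The next step is to control the difference $I_{AC:B}(0) - I_{A:B}(0)$. Expanding both sides via the von Neumann entropy and cancelling $S_B(0)$ gives
\begin{equation}
I_{AC:B}(0) - I_{A:B}(0) = \bigl(S_{AC}(0) - S_A(0)\bigr) + \bigl(S_{AB}(0) - S_{ABC}(0)\bigr) = S_{C|A}(0) - S_{C|AB}(0).
\end{equation}
Each conditional entropy appearing here is bounded in absolute value by $\log_2(d_C)$: one has $|S_{C|A}| \le S_C \le \log_2(d_C)$ from subadditivity, and $|S_{C|AB}| \le \log_2(d_C)$ from the Araki--Lieb inequality applied to system $C$. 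Hence
\begin{equation}
I_{AC:B}(0) - I_{A:B}(0) \le 2\log_2(d_C),
\end{equation}
and combining with the previous display yields $I_{A:B}(t) - I_{A:B}(0) \le 4\log_2(d_C)$, as claimed.

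There is no real obstacle here beyond checking that the two entropic bounds are tight enough to give the advertised constant~$4$; both steps above are standard and initial-state independent, consistent with the setup of Eq.~(\ref{EQ_strengthnd}). The only mild subtlety is to make sure the distance-based mutual information used in Theorem~\ref{TH_cons} coincides with the von Neumann form $S_A + S_B - S_{AB}$, which is exactly the content of the fact (cited in Chapter~\ref{Chapter1}) that the closest product state in relative entropy is the product of marginals; once this is in place, the rest is purely an inequality chase.
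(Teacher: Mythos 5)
Your proof is correct and follows essentially the same route as the paper: both arguments reduce the gain to the sum of the initial conditional mutual information $I_{B:C|A}(0)=S_{C|A}(0)-S_{C|AB}(0)$ and the mediator's correlation capacity $2\log_2(d_C)$, each bounded by $2\log_2(d_C)$. The only cosmetic difference is that you invoke Theorem~\ref{TH_cons} (via Lemma~\ref{LM_re}) as a black box for the first step, whereas the paper re-derives that inequality inline using the chain rule $I_{A:BC}=I_{AC:B}+I_{A:C}-I_{B:C}$ and monotonicity under $\Lambda_{AC}$ and $\Lambda_{BC}$.
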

\begin{proof}
Let us begin with the monotonicity of mutual information under local operations to arrive at 
\begin{equation}\label{EQ_prepre}
I_{A:B}(t)\le I_{A:BC}(t)=I_{A:BC}(\Lambda_{BC}\Lambda_{AC}[\rho(0)])\le I_{A:BC}(\Lambda_{AC}[\rho(0)]).
\end{equation}
Next, we have the following arguments
\begin{eqnarray}\label{EQ_bound4}
I_{A:B}(t)-I_{A:B}(0)&\le& I_{A:BC}^{\prime}-I_{A:B}(0) \nonumber \\
&=&I_{AC:B}^{\prime}-I_{B:C}^{\prime}+I_{A:C}^{\prime}-I_{A:B}(0) \nonumber \\
&\le&I_{AC:B}(0)-I_{A:B}(0)-I_{B:C}^{\prime}+I_{A:C}^{\prime} \nonumber \\
&\le&I_{B:C|A}(0)+I_{A:C}^{\prime} \nonumber \\
&\le&4\log_2(d_C),
\end{eqnarray}
where the prime symbol denotes the state $\Lambda_{AC}[\rho(0)]$. 
We justify the steps as follows.
Eq.~(\ref{EQ_prepre}) has been used in the first line.
The second line is apparent by utilising the definition of mutual information between two parties, i.e., $I_{X:Y}=S_X+S_Y-S_{XY}$, where for example $S_X$ is the von Neumann entropy of object $X$.
The third line uses the monotonicity property under the operation $\Lambda_{AC}$ that is local in the partition $AC:B$.
We have used the definition of conditional entropy and the positivity of mutual information in the fourth line.
The mutual information between $B$ and $C$ (conditioned on $A$) is bounded by $2\log_2(d_C)$. 
Note that we assume $d_B>d_C$.
This is also true for mutual information between $A$ and $C$, which completes the proof.
\end{proof}

Therefore, the corresponding bound for $n$ sequences of decomposable dynamics reads 
\begin{equation}
I_{A:B}(t)-I_{A:B}(0)\le 4n\log_2(d_C).
\end{equation}
To demonstrate violations of the bound for certain $n$, one can make use of the setup described in Section~\ref{SC_faf}, where two cavity fields are interacting via a two-level atom.
Our simulations show that high correlations can be obtained by having pure states with more photons as the initial condition.
Future work is aimed at reducing the bound in Eq.~(\ref{EQ_bound4}) as our simulations show the gain of mutual information being bounded by only $2\log_2(d_C)$.
Also, one might consider other correlation quantifiers.

\subsection{Protocols witnessing the presence or absence of quantum objects}

Chapter~\ref{Chapter_gravity} has brought to our attention the need of a protocol for witnessing quantum mediators. 
To some extend, this has been discussed in Section~\ref{SC_directmediated}.
In particular, if quantum entanglement between the principal objects $A$ and $B$ increases and the state of system $AB$ is pure at all times, then the dynamics is only possible through direct interactions. 
This shows the absence of a mediating object. 
An experimentally friendly scheme requires one to generalise this method to mixed states, open system dynamics, and finite number of measurements.

Moreover, we are also interested in a protocol showing the presence of a mediating object.
Based on the study of the speed of distributing quantum entanglement in Chapter~\ref{Chapter_speedup}, a simple way can be proposed as follows.
If one starts with mixed initial state for system $AB$, by considering closed dynamics, one cannot get maximum entanglement in the partition $A:B$.
This is because unitary operator preserves the purity of the state and that maximum entanglement is given by a pure state.
Therefore, the observation of maximum entanglement between the principal objects, would reveal the presence of a mediating object. 
This mediating object results in the change of purity of the state of $AB$, making it possible to reach maximum entanglement.
This protocol is in general hard to implement due to the requirement of maximum entanglement.
We aim at making this scheme also experimentally friendly. 
A potential direction would be to consider an entanglement bound for a given purity.
If the entanglement $E_{A:B}$ reaches a higher value, this would imply the presence of a quantum mediator.

\subsection{Complete link between initial correlations with mediators and entangling time limit}

In Chapter~\ref{Chapter_speedup}, we have proven that the speed limit of entanglement creation in the indirect interaction setting cannot be faster than the direct setting.
See Theorem~\ref{TH_untimate} for details.
However, the saturation of the speed limit is possible and we have shown this with exemplary dynamics. 
Our examples, e.g., the ones in Section~\ref{SC_satlimit}, show that initial correlation in the partition $AB:C$ is important for the task.
Therefore, for completeness, we aim to show that for initial state of the form $\rho_{AB}\otimes \rho_C$, the entangling time for the indirect setting follows a strict bound $T>\Gamma_{\text{di}}$.

We also performed preliminary simulations with coherent dynamics in the indirect interaction setting. 
For initial assessment, we consider three qubits $A$, $B$, and $C$.
We have taken random initial states (separable in $A:B$ partition) and interaction Hamiltonians.\footnote{Note that the resource equality has been implemented, i.e., $\min \{ \langle M\rangle,\Delta M\}=1$.}
Our simulations suggest a time bound
\begin{equation}
T\ge \frac{\pi}{2}.
\end{equation}
Therefore, in this scenario, it follows that $T\ge 2\Gamma_{\text{di}}$ for the case of three qubits.
A simple example saturating this limit is given as follows
\begin{eqnarray}
\ket{\psi(0)}&=&\ket{110}, \nonumber \\
H&=&\frac{\hbar \Omega}{\sqrt{2}} \left( \sigma^+_A\otimes \sigma^-_C+\sigma^-_A\otimes \sigma^+_C+ \sigma^+_B\otimes \sigma^-_C+\sigma^-_B\otimes \sigma^+_C\right),
\end{eqnarray}
where $\sigma^{+(-)}_X$ is the raising (lowering) operator of object $X$.
Indeed, maximum entanglement (negativity $N_{A:B}=0.5$) is reached in $T=\Omega t=\pi/2$.

\subsection{Gravity-mediated entanglement: Full treatment}

The results we presented in Chapter~\ref{Chapter_gravity} have shown the possibility of having entanglement gain between two spherical masses that are interacting via gravitational potential.
The gravitational energy is taken as 
\begin{eqnarray}
H_{\text g}&=&-\frac{Gm^2}{ (L + x_B - x_A)} \label{EQ_fullhg}\\
&\simeq&-\frac{Gm^2}{L}\left(1+\frac{(x_A - x_B)}{L}+\frac{(x_A - x_B)^2}{L^2}\right),\label{EQ_trun}
\end{eqnarray}
where we have applied the approximation $(x_A - x_B)\ll L$.
As gravitational coupling is weak in practically any experimental situation, the resulting entanglement is either very small or requires a long time to accumulate. 
One might wonder if there is more entanglement generated by taking the full Hamiltonian as in Eq.~(\ref{EQ_fullhg}).
This is intuitive since higher order terms in the expansion provide coupling between the two masses, although at the expense of strength. 

In this direction, we aim to perform simulations of the dynamics considering full gravitational energy.
We note that logarithmic negativity is a valid entanglement quantifier only for Hamiltonians that have up to quadratic terms in operators, e.g., Eq.~(\ref{EQ_trun}).
This is because these Hamiltonians would preserve Gaussianity of the state, allowing logarithmic negativity to be a monotone quantifier of entanglement. 
However, we note that the simulation with full term has to converge to the same dynamics in the correct limit, i.e., $(x_A - x_B)\ll L$.
For general regime, one can use other quantifiers. 
As an example, we propose the use of pure state for the initial condition. 
For a closed dynamics, one can then calculate the entropy of one of the masses as a quantifier of entanglement.


\appendix
\chapter{Trotter expansion}\label{A_trotter}

A simple derivation of the Trotter formula~\cite{trotter1959product} is provided in Theorem \ref{THA_trotter} below.

\begin{theorem}\label{THA_trotter}
For arbitrary square matrices $A$ and $B$, the following holds
\begin{equation}
e^{A+B}=\lim_{n\rightarrow \infty}\left( e^{\frac{A}{n}}e^{\frac{B}{n}}\right)^n.
\end{equation}
\end{theorem}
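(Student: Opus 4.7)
The plan is to compare the two operator-valued expressions by expanding each one to second order in $1/n$ and then use a telescoping identity to push the pointwise comparison up to the $n$th power. I work in an arbitrary matrix norm $\|\cdot\|$ (say operator norm) so that $\|e^X\|\le e^{\|X\|}$ and the series manipulations are legitimate.

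First, I would Taylor-expand both sides individually. From the standard exponential series one has
\begin{equation}
e^{A/n}e^{B/n}=I+\tfrac{1}{n}(A+B)+\tfrac{1}{2n^2}(A^2+2AB+B^2)+R_1(n),
\end{equation}
\begin{equation}
e^{(A+B)/n}=I+\tfrac{1}{n}(A+B)+\tfrac{1}{2n^2}(A+B)^2+R_2(n),
\end{equation}
where the remainders satisfy $\|R_i(n)\|\le C/n^3$ for some constant $C$ depending only on $\|A\|,\|B\|$ (this follows by bounding the tail of the exponential series). Subtracting, and using $(A+B)^2=A^2+AB+BA+B^2$, the leading $O(1/n^2)$ difference collapses to the commutator term and one obtains the key estimate
\begin{equation}
\bigl\|e^{A/n}e^{B/n}-e^{(A+B)/n}\bigr\|\le \tfrac{1}{2n^2}\,\|[A,B]\|+\tfrac{C'}{n^3}=O(1/n^2).
\end{equation}

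Next I invoke the telescoping identity valid for any two square matrices $X,Y$:
\begin{equation}
X^n-Y^n=\sum_{k=0}^{n-1}X^{k}(X-Y)Y^{n-1-k}.
\end{equation}
Setting $X=e^{A/n}e^{B/n}$ and $Y=e^{(A+B)/n}$, so that $Y^n=e^{A+B}$, I use the submultiplicativity of the norm together with $\|X\|,\|Y\|\le e^{(\|A\|+\|B\|)/n}$, hence $\|X^k\|,\|Y^{n-1-k}\|\le e^{\|A\|+\|B\|}$ uniformly in $k$ and $n$. This yields
\begin{equation}
\bigl\|X^n-e^{A+B}\bigr\|\le n\cdot e^{2(\|A\|+\|B\|)}\cdot\bigl\|X-Y\bigr\|\le \frac{K}{n},
\end{equation}
with $K$ a constant depending only on $A$ and $B$. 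Letting $n\to\infty$ gives the claim.

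The only delicate step is the remainder bound $\|R_i(n)\|=O(1/n^3)$ and the uniform-in-$k$ bound on $\|X^k\|$ and $\|Y^{n-1-k}\|$; both are routine once one notes that each factor has norm at most $e^{(\|A\|+\|B\|)/n}$, but this is the part that has to be handled carefully because naively bounding $\|X\|^{k}$ by $e^{k(\|A\|+\|B\|)/n}$ (rather than a constant) would be too crude only in appearance and just right in fact. No other subtlety arises.
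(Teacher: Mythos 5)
Your proof is correct, but it takes a genuinely different route from the paper's. The paper proves Theorem~\ref{THA_trotter} by applying the Baker--Campbell--Hausdorff expansion to $\ln\bigl(e^{A/n}e^{B/n}\bigr)$, multiplying by $n$, and passing the limit $n\to\infty$ inside the exponential via $\lim_x e^{f(x)}=e^{\lim_x f(x)}$, so that all correction terms beyond $A+B$ carry at least one factor of $1/n$ and vanish. You instead give the standard telescoping argument for the Lie product formula: a second-order expansion showing $\bigl\|e^{A/n}e^{B/n}-e^{(A+B)/n}\bigr\|=O(1/n^2)$, combined with the identity $X^n-Y^n=\sum_{k=0}^{n-1}X^k(X-Y)Y^{n-1-k}$ and the uniform bounds $\|X^k\|,\|Y^{n-1-k}\|\le e^{\|A\|+\|B\|}$. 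Each step of yours checks out, including the tail bound $\|R_i(n)\|\le C/n^3$ (the coefficient of $1/n^m$ in either product is dominated by $(\|A\|+\|B\|)^m/m!$) and the cancellation down to $\tfrac{1}{2n^2}[A,B]$. What your approach buys is rigour and an explicit rate: the BCH route implicitly requires convergence of the BCH series (guaranteed only for sufficiently small arguments, i.e., large $n$) and an interchange of the limit with an infinite sum, both of which the paper's proof passes over silently, whereas your estimate uses only the exponential series and submultiplicativity and delivers $\|X^n-e^{A+B}\|=O(1/n)$. What the BCH route buys is brevity and a transparent identification of the leading commutator correction.
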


\begin{proof}
We recall the Baker-Campbell-Hausdorff equation
\begin{equation}\label{EQA_bch}
\ln{\left(e^{A}e^B\right)}=A+B+\frac{1}{2}[A,B]+\frac{1}{12}[A,[A,B]]+\frac{1}{12}[B,[B,A]]+\cdots
\end{equation}
and the identity $\lim_{x}e^{f(x)}=e^{\lim_x f(x)}$.

Taking the natural logarithm of $(e^{A/n}e^{B/n})^n$ gives 
\begin{equation}\label{EQA_bch2}
n\ln{\left(e^{\frac{A}{n}}e^{\frac{B}{n}}\right)}=n\left( \frac{A}{n} +\frac{B}{n}+\frac{1}{2n^2}[A,B]+\frac{1}{12n^3}[A,[A,B]]+\frac{1}{12n^3}[B,[B,A]]+ \cdots \right),
\end{equation}
where we have used Eq. (\ref{EQA_bch}). 

By applying the limit $n\rightarrow \infty$ to the exponent of Eq. (\ref{EQA_bch2}) one has
\begin{eqnarray}\label{EQA_bch3}
\lim_{n\rightarrow \infty}\left(e^{\frac{A}{n}}e^{\frac{B}{n}}\right)^n &=&e^{\lim_{n\rightarrow \infty} \left( A +B+\frac{1}{2n}[A,B]+\frac{1}{12n^2}[A,[A,B]]+\frac{1}{12n^2}[B,[B,A]]+ \cdots \right)}\nonumber \\
&=&e^{A +B},
\end{eqnarray}
where we have utilised the limit identity in the first line.
\end{proof}

\afterpage{\blankpage} 
\chapter{Entanglement localisation: Prescription}\label{A_eloc}

As shown in Chapter \ref{Chapter_revealing}, entanglement distribution between remote objects $A$ and $B$ is possible through a classical mediator $C$. 
Here I provide an insight into this example and prescribe a method for applying it to a more general setting.
We also note that our example has been realised recently~\cite{elocexp}.

For simplicity, let us assume that the Hamiltonian is of the form 
\begin{equation}
H=H_A\otimes H_{C_1}+H_B\otimes H_{C_2},
\end{equation}
where $[H_{C_1},H_{C_2}]=0$, i.e., these components share the same eigenbasis.
Now we take a quantum--classical initial state in the partition $AB:C$, 
\begin{equation}
\rho(0)=\sum_c p_c \: \rho_{AB|c} \otimes \ket{c}\bra{c},
\end{equation}
where $\{\ket{c}\}$ form the common eigenbasis of $H_{C_1}$ and $H_{C_2}$.
Let us also use, as the eigenvalues (assumed dimensionless), $E^c_{1}$ and $E^c_{2}$ respectively.

The state after a short time $\Delta t$ reads
\begin{eqnarray}
\rho(\Delta t)&=&\left(\openone -\frac{i\Delta t}{\hbar}H\right)\: \rho(0)\: \left(\openone+\frac{i\Delta t}{\hbar}H\right) \nonumber \\
&=&\sum_c p_c \Big( \rho_{AB|c}-\frac{i\Delta t }{\hbar} E^c_1H_A \: \rho_{AB|c}-\frac{i\Delta t }{\hbar}E^c_2 H_B \:\rho_{AB|c} \nonumber \\
&&+\frac{i\Delta t}{\hbar}  \rho_{AB|c}\: E^c_1H_A+\frac{i\Delta t}{\hbar}  \rho_{AB|c}\:E^c_2 H_B \Big) \otimes \ket{c}\bra{c} \nonumber \\
&=&\sum_c p_c \big(\rho_{AB|c}-\frac{i\Delta t}{\hbar}[(E^c_1H_A+E^c_2H_B),\rho_{AB|c}]\big) \otimes \ket{c}\bra{c} \nonumber \\
&=&\sum_c p_c \:\rho_{AB|c}^{\Delta t} \otimes \ket{c}\bra{c},
\end{eqnarray}
where we have used $\rho_{AB|c}^{\Delta t}$ to denote the short coherent evolution of $\rho_{AB|c}$ with weighted local Hamiltonians $E^c_1H_A+E^c_2H_B$.
Note that we have ignored terms proportional to $(\Delta t)^2$.
The evolution for a time $t$ simply requires successive applications of the above process, i.e., $\rho(t)=\sum_c p_c \:\rho_{AB|c}^{t} \otimes \ket{c}\bra{c}$.

We show that entanglement in the partition $A:BC$ stays constant as follows
\begin{equation}
E_{A:BC}(t)=\sum_c p_c\: E_{A:B}(\rho_{AB|c}^t)=\sum_c p_c\: E_{A:B}(\rho_{AB|c})=E_{A:BC}(0),
\end{equation} 
where we have used the flags condition \cite{flags} on the quantum--classical state in the first and last equality, and the monotonicity of entanglement under local unitary operations in the second equality.

Now let us consider entanglement between $A$ and $B$.
In particular we have
\begin{equation}
E_{A:B}(t)=E_{A:B}\big(\sum_c p_c\: \rho_{AB|c}^t\big)\le \sum_c p_c\: E_{A:B}(\rho_{AB|c}^t)=\sum_c p_c\: E_{A:B}(\rho_{AB|c}),
\end{equation}
where we have used the convexity of entanglement for the inequality and the monotonicity for the last equality.
One immediately realises that, in order to have entanglement gain in the partition $A:B$ with this method, there has to be entanglement already present initially, i.e., in the state $\rho_{AB|c}$.

Note that the necessary requirements above are satisfied in the example given in Chapter \ref{Chapter_revealing}, where $H=(\sigma^x_A\otimes \openone \otimes \sigma^x_C+\openone \otimes \sigma^x_B \otimes \sigma^x_C)\hbar \omega$, and the initial state reads $\tfrac{1}{2} \ket{\psi_+}\bra{\psi_+} \otimes \ket{+}\bra{+} + \tfrac{1}{2} \ket{\phi_+}\bra{\phi_+} \otimes \ket{-}\bra{-}$. 
In particular, the classical basis of system $C$ is the eigenbasis of the Pauli matrix $\sigma_x$ and there is maximum entanglement in the initial states $\{\ket{\psi_+},\ket{\phi_+}\}$.
Furthermore, the evolution of the states for a time $\tau=\pi/8\omega$ follows
\begin{eqnarray}
\ket{\psi_+}\bra{\psi_+}&\rightarrow& \rho_{AB|1}^{\tau}= \rho_{\text{max}}\nonumber \\
\ket{\phi_+}\bra{\phi_+}&\rightarrow& \rho_{AB|2}^{\tau}= \rho_{\text{max}},
\end{eqnarray}
where $\rho_{\text{max}}$ is, up to a universal phase factor, a density matrix of a pure maximally entangled state 
\begin{equation}
\frac{1}{\sqrt{2}}\left( \ket{0}\ket{y_+}+\ket{1}\ket{y_-} \right),
\end{equation}
with $\{\ket{y_+},\ket{y_-}\}$ being the eigenbasis of the $\sigma_y$ Pauli matrix.
Therefore, one can see that $E_{A:B}(0)=0$ and $E_{A:B}(\tau)=1$, where we have used REE as the entanglement quantifier.

\label{Bibliography} 
\lhead{\emph{Bibliography}} 
\bibliographystyle{unsrtnat} 

\begin{thebibliography}{193}
\providecommand{\natexlab}[1]{#1}
\providecommand{\url}[1]{\texttt{#1}}
\expandafter\ifx\csname urlstyle\endcsname\relax
  \providecommand{\doi}[1]{doi: #1}\else
  \providecommand{\doi}{doi: \begingroup \urlstyle{rm}\Url}\fi

\bibitem[Einstein et~al.(1935)Einstein, Podolsky, and Rosen]{epr}
A.~Einstein, B.~Podolsky, and N.~Rosen.
\newblock {Can Quantum-Mechanical Description of Physical Reality be Considered
  Complete?}
\newblock \emph{Phys. Rev.}, 47:\penalty0 777, 1935.

\bibitem[Schr{\"o}dinger(1935)]{schrodinger1935discussion}
E.~Schr{\"o}dinger.
\newblock {Discussion of probability relations between separated systems}.
\newblock \emph{Math. Proc. Camb. Philos. Soc.}, 31:\penalty0 555, 1935.

\bibitem[Ekert(1991)]{cryptography}
A.~K. Ekert.
\newblock {Quantum cryptography based on Bell's theorem}.
\newblock \emph{Phys. Rev. Lett.}, 67:\penalty0 661, 1991.

\bibitem[Bennett and Wiesner(1992)]{dense-coding}
C.~H. Bennett and S.~J. Wiesner.
\newblock Communication via one- and two-particle operators on
  {E}instein-{P}odolsky-{R}osen states.
\newblock \emph{Phys. Rev. Lett.}, 69:\penalty0 2881, 1992.

\bibitem[Bennett et~al.(1993)Bennett, Brassard, Cr{\'e}peau, Jozsa, Peres, and
  Wootters]{teleportation}
C.~H. Bennett, G.~Brassard, C.~Cr{\'e}peau, R.~Jozsa, A.~Peres, and W.~K.
  Wootters.
\newblock Teleporting an unknown quantum state via dual classical and
  {E}instein-{P}odolsky-{R}osen channels.
\newblock \emph{Phys. Rev. Lett.}, 70:\penalty0 1895, 1993.

\bibitem[Horodecki et~al.(2009)Horodecki, Horodecki, Horodecki, and
  Horodecki]{RevModPhys.81.865}
R.~Horodecki, P.~Horodecki, M.~Horodecki, and K.~Horodecki.
\newblock Quantum entanglement.
\newblock \emph{Rev. Mod. Phys.}, 81:\penalty0 865, 2009.

\bibitem[Cubitt et~al.(2003)Cubitt, Verstraete, D{\"u}r, and Cirac]{cubitt}
T.~S. Cubitt, F.~Verstraete, W.~D{\"u}r, and J.~I. Cirac.
\newblock {Separable States Can Be Used To Distribute Entanglement}.
\newblock \emph{Phys. Rev. Lett.}, 91:\penalty0 037902, 2003.

\bibitem[\.Zukowski et~al.(1993)\.Zukowski, Zeilinger, Horne, and
  Ekert]{swapping}
M.~\.Zukowski, A.~Zeilinger, M.~A. Horne, and A.~K. Ekert.
\newblock {``Event-ready-detectors" Bell experiment via entanglement swapping}.
\newblock \emph{Phys. Rev. Lett.}, 71:\penalty0 4287, 1993.

\bibitem[Pan et~al.(1998)Pan, Bouwmeester, Weinfurter, and
  Zeilinger]{exp_swap1}
J.-W. Pan, D.~Bouwmeester, H.~Weinfurter, and A.~Zeilinger.
\newblock {Experimental Entanglement Swapping: Entangling Photons That Never
  Interacted}.
\newblock \emph{Phys. Rev. Lett.}, 80:\penalty0 3891, 1998.

\bibitem[Ma et~al.(2012)Ma, Zotter, Kofler, Ursin, Jennewein, Brukner, and
  Zeilinger]{exp_swap2}
X.~Ma, S.~Zotter, J.~Kofler, R.~Ursin, T.~Jennewein, {\v C}.~Brukner, and
  A.~Zeilinger.
\newblock {Experimental delayed-choice entanglement swapping}.
\newblock \emph{Nat. Phys.}, 8:\penalty0 479, 2012.

\bibitem[Jr. and Korolkova(2008)]{mista1}
L.~Mi{\v{s}}ta Jr. and N.~Korolkova.
\newblock {Distribution of continuous-variable entanglement by separable
  Gaussian states}.
\newblock \emph{Phys. Rev. A}, 77:\penalty0 050302, 2008.

\bibitem[Mi{\v{s}}ta~Jr. and Korolkova(2009)]{mista2}
L.~Mi{\v{s}}ta~Jr. and N.~Korolkova.
\newblock Improving continuous-variable entanglement distribution by separable
  states.
\newblock \emph{Phys. Rev. A}, 80:\penalty0 032310, 2009.

\bibitem[Mi{\v{s}}ta~Jr.(2013)]{mista3}
L.~Mi{\v{s}}ta~Jr.
\newblock Entanglement sharing with separable states.
\newblock \emph{Phys. Rev. A}, 87:\penalty0 062326, 2013.

\bibitem[Kay(2012)]{akay}
A.~Kay.
\newblock {Using Separable Bell-Diagonal States to Distribute Entanglement}.
\newblock \emph{Phys. Rev. Lett.}, 109:\penalty0 080503, 2012.

\bibitem[Yang et~al.(2005)Yang, Wang, Ma, Xu, You, and Niu]{exp0}
X.-D. Yang, A.-M. Wang, X.-S. Ma, F.~Xu, H.~You, and W.-Q. Niu.
\newblock {Experimental Creation of Entanglement Using Separable States}.
\newblock \emph{Chin. Phys. Lett.}, 22:\penalty0 279, 2005.

\bibitem[Fedrizzi et~al.(2013)Fedrizzi, Zuppardo, Gillett, Broome, Almeida,
  Paternostro, White, and Paterek]{edssexp1}
A.~Fedrizzi, M.~Zuppardo, G.~G. Gillett, M.~A. Broome, M.~P. Almeida,
  M.~Paternostro, A.~G. White, and T.~Paterek.
\newblock {Experimental Distribution of Entanglement with Separable Carriers}.
\newblock \emph{Phys. Rev. Lett.}, 111:\penalty0 230504, 2013.

\bibitem[Vollmer et~al.(2013)Vollmer, Schulze, Eberle, H{\"a}ndchen,
  Fiur{\'a}{\v s}ek, and Schnabel]{edssexp2}
C.~E. Vollmer, D.~Schulze, T.~Eberle, V.~H{\"a}ndchen, J.~Fiur{\'a}{\v s}ek,
  and R.~Schnabel.
\newblock {Experimental Entanglement Distribution by Separable States}.
\newblock \emph{Phys. Rev. Lett.}, 111:\penalty0 230505, 2013.

\bibitem[Peuntinger et~al.(2013)Peuntinger, Chille, Mi{\v s}ta, Jr.,
  F{\"o}rtsch, Korger, Marquardt, and Leuchs]{edssexp3}
C.~Peuntinger, V.~Chille, L.~Mi{\v s}ta, N.~Korolkova Jr., M.~F{\"o}rtsch,
  J.~Korger, C.~Marquardt, and G.~Leuchs.
\newblock {Distributing Entanglement with Separable States}.
\newblock \emph{Phys. Rev. Lett.}, 111:\penalty0 230506, 2013.

\bibitem[Streltsov et~al.(2012{\natexlab{a}})Streltsov, Kampermann, and
  Bru\ss]{bounds1}
A.~Streltsov, H.~Kampermann, and D.~Bru\ss.
\newblock {Quantum Cost for Sending Entanglement}.
\newblock \emph{Phys. Rev. Lett.}, 108:\penalty0 250501, 2012{\natexlab{a}}.

\bibitem[Chuan et~al.(2012)Chuan, Maillard, Modi, Paterek, Paternostro, and
  Piani]{bounds2}
T.~K. Chuan, J.~Maillard, K.~Modi, T.~Paterek, M.~Paternostro, and M.~Piani.
\newblock {Quantum Discord Bounds the Amount of Distributed Entanglement}.
\newblock \emph{Phys. Rev. Lett.}, 109:\penalty0 070501, 2012.

\bibitem[Streltsov et~al.(2015)Streltsov, Augusiak, Demianowicz, and
  Lewenstein]{streltsov2015progress}
A.~Streltsov, R.~Augusiak, M.~Demianowicz, and M.~Lewenstein.
\newblock Progress towards a unified approach to entanglement distribution.
\newblock \emph{Phys. Rev. A}, 92:\penalty0 012335, 2015.

\bibitem[Zuppardo et~al.(2016)Zuppardo, Krisnanda, Paterek, Bandyopadhyay,
  Banerjee, Deb, Halder, Modi, and Paternostro]{excessive}
M.~Zuppardo, T.~Krisnanda, T.~Paterek, S.~Bandyopadhyay, A.~Banerjee, P.~Deb,
  S.~Halder, K.~Modi, and M.~Paternostro.
\newblock Excessive distribution of quantum entanglement.
\newblock \emph{Phys. Rev. A}, 93:\penalty0 012305, 2016.

\bibitem[Nielsen and Chuang(2010)]{nielsenchuang}
M.~A. Nielsen and I.~L. Chuang.
\newblock \emph{Quantum computation and quantum information}.
\newblock Cambridge University Press, 2010.

\bibitem[Modi et~al.(2012)Modi, Brodutch, Cable, Paterek, and Vedral]{discord4}
K.~Modi, A.~Brodutch, H.~Cable, T.~Paterek, and V.~Vedral.
\newblock The classical-quantum boundary for correlations: Discord and related
  measures.
\newblock \emph{Rev. Mod. Phys.}, 84:\penalty0 1655, 2012.

\bibitem[Weedbrook et~al.(2012)Weedbrook, Pirandola, Garc{\'\i}a-Patr{\'o}n,
  Cerf, Ralph, Shapiro, and Lloyd]{weedbrook2012gaussian}
C.~Weedbrook, S.~Pirandola, R.~Garc{\'\i}a-Patr{\'o}n, N.~J. Cerf, T.~C. Ralph,
  J.~H. Shapiro, and S.~Lloyd.
\newblock Gaussian quantum information.
\newblock \emph{Rev. Mod. Phys.}, 84:\penalty0 621, 2012.

\bibitem[Wootters(1981)]{fidelity1}
W.~K. Wootters.
\newblock {Statistical distance and Hilbert space}.
\newblock \emph{Phys. Rev. D}, 23:\penalty0 357, 1981.

\bibitem[Uhlmann()]{fidelity2}
A.~Uhlmann.
\newblock {The Metric of Bures and the Geometric Phase}, in \emph{Groups and
  Related Topics} ($\mbox{Springer, Netherlands, Dordrecht, 1992}$), pp.
  267-274.

\bibitem[Helstrom(1969)]{helstrom1969}
C.~Helstrom.
\newblock Quantum detection and estimation theory.
\newblock \emph{J. Stat. Phys.}, 1:\penalty0 231, 1969.

\bibitem[Vedral et~al.(1997)Vedral, Plenio, Rippin, and
  Knight]{vedral1997quantifying}
V.~Vedral, M.~B. Plenio, M.~A. Rippin, and P.~L. Knight.
\newblock {Quantifying Entanglement}.
\newblock \emph{Phys. Rev. Lett.}, 78:\penalty0 2275, 1997.

\bibitem[Peres(1996)]{peres1996}
A.~Peres.
\newblock {Separability Criterion for Density Matrices}.
\newblock \emph{Phys. Rev. Lett.}, 77:\penalty0 1413, 1996.

\bibitem[{\.Z}yczkowski et~al.(1998){\.Z}yczkowski, Horodecki, Sanpera, and
  Lewenstein]{zyczkowski1998volume}
K.~{\.Z}yczkowski, P.~Horodecki, A.~Sanpera, and M.~Lewenstein.
\newblock Volume of the set of separable states.
\newblock \emph{Phys. Rev. A}, 58:\penalty0 883, 1998.

\bibitem[Lee and Kim(2000)]{lee2000entanglement}
J.~Lee and M.~S. Kim.
\newblock Entanglement {T}eleportation via {W}erner {S}tates.
\newblock \emph{Phys. Rev. Lett.}, 84:\penalty0 4236, 2000.

\bibitem[Lee et~al.(2000)Lee, Kim, Park, and Lee]{lee2000partial}
J.~Lee, M.~S. Kim, Y.~J. Park, and S.~Lee.
\newblock Partial teleportation of entanglement in a noisy environment.
\newblock \emph{J. Mod. Opt.}, 47:\penalty0 2151, 2000.

\bibitem[Vidal and Werner(2002)]{negativity}
G.~Vidal and R.~F. Werner.
\newblock Computable measure of entanglement.
\newblock \emph{Phys. Rev. A}, 65:\penalty0 032314, 2002.

\bibitem[Adesso et~al.(2004)Adesso, Serafini, and
  Illuminati]{adesso2004extremal}
G.~Adesso, A.~Serafini, and F.~Illuminati.
\newblock Extremal entanglement and mixedness in continuous variable systems.
\newblock \emph{Phys. Rev. A}, 70:\penalty0 022318, 2004.

\bibitem[Henderson and Vedral(2001)]{discord1}
L.~Henderson and V.~Vedral.
\newblock Classical, quantum and total correlations.
\newblock \emph{J.~Phys.~A}, 34:\penalty0 6899, 2001.

\bibitem[Ollivier and Zurek(2001)]{discord2}
H.~Ollivier and W.~H. Zurek.
\newblock {Quantum Discord: A Measure of the Quantumness of Correlations}.
\newblock \emph{Phys. Rev. Lett.}, 88:\penalty0 017901, 2001.

\bibitem[Modi et~al.(2010)Modi, Paterek, Son, Vedral, and
  Williamson]{modi2010unified}
K.~Modi, T.~Paterek, W.~Son, V.~Vedral, and M.~Williamson.
\newblock {Unified View of Quantum and Classical Correlations}.
\newblock \emph{Phys. Rev. Lett.}, 104:\penalty0 080501, 2010.

\bibitem[Horodecki et~al.(2005)Horodecki, Horodecki, Horodecki, Oppenheim, Sen,
  Sen, and Synak-Radtke]{deficit}
M.~Horodecki, P.~Horodecki, R.~Horodecki, J.~Oppenheim, A.~Sen, U.~Sen, and
  B.~Synak-Radtke.
\newblock {Local versus nonlocal information in quantum-information theory:
  Formalism and phenomena}.
\newblock \emph{Phys. Rev. A}, 71:\penalty0 062307, 2005.

\bibitem[Horodecki(2005)]{flags}
M.~Horodecki.
\newblock {Simplifying Monotonicity Conditions for Entanglement Measures}.
\newblock \emph{Open Sys. Inf. Dyn.}, 12:\penalty0 231, 2005.

\bibitem[Streltsov et~al.(2017)Streltsov, Adesso, and Plenio]{roqcoherence}
A.~Streltsov, G.~Adesso, and M.~B. Plenio.
\newblock {Quantum coherence as a resource}.
\newblock \emph{Rev. Mod. Phys.}, 89:\penalty0 041003, 2017.

\bibitem[Chitambar et~al.(2016)Chitambar, Streltsov, Rana, Bera, Adesso, and
  Lewenstein]{coherence-distillation}
E.~Chitambar, A.~Streltsov, S.~Rana, M.~N. Bera, G.~Adesso, and M.~Lewenstein.
\newblock {Assisted Distillation of Quantum Coherence}.
\newblock \emph{Phys. Rev. Lett.}, 116:\penalty0 070402, 2016.

\bibitem[Groisman et~al.(2005)Groisman, Popescu, and Winter]{groisman2005}
B.~Groisman, S.~Popescu, and A.~Winter.
\newblock Quantum, classical, and total amount of correlations in a quantum
  state.
\newblock \emph{Phys. Rev. A}, 72:\penalty0 032317, 2005.

\bibitem[Ho()]{edmundfyp}
E.~K.~S. Ho.
\newblock {Bounds on continuous entanglement gain}.
\newblock Thesis,~Nanyang Technological University,~2018.

\bibitem[Krisnanda et~al.(2017)Krisnanda, Zuppardo, Paternostro, and
  Paterek]{krisnanda2017}
T.~Krisnanda, M.~Zuppardo, M.~Paternostro, and T.~Paterek.
\newblock {Revealing Nonclassicality of Inaccessible Objects}.
\newblock \emph{Phys. Rev. Lett.}, 119:\penalty0 120402, 2017.

\bibitem[C\'{e}leri et~al.(2011)C\'{e}leri, Maziero, and Serra]{discord3}
L.~C. C\'{e}leri, J.~Maziero, and R.~M. Serra.
\newblock Theoretical and experimental aspects of quantum discord and related
  measures.
\newblock \emph{Int. J. Quantum Inf.}, 9:\penalty0 1837, 2011.

\bibitem[Adesso et~al.(2016)Adesso, Bromley, and Cianciaruso]{discord5}
G.~Adesso, T.~R. Bromley, and M.~Cianciaruso.
\newblock Measures and applications of quantum correlations.
\newblock \emph{J. Phys. A}, 49:\penalty0 473001, 2016.

\bibitem[Bennett et~al.(1996)Bennett, DiVincenzo, Smolin, and Wootters]{locc}
C.~H. Bennett, D.~P. DiVincenzo, J.~A. Smolin, and W.~K. Wootters.
\newblock Mixed-state entanglement and quantum error correction.
\newblock \emph{Phys. Rev. A}, 54:\penalty0 3824, 1996.

\bibitem[Pal et~al.()Pal, Batra, Krisnanda, Paterek, and Mahesh]{elocexp}
S.~Pal, P.~Batra, T.~Krisnanda, T.~Paterek, and T.~S. Mahesh.
\newblock Experimental localisation of quantum entanglement through monitored
  classical mediator.
\newblock \emph{arXiv:1909.11030}.

\bibitem[Gyongyosi(2014)]{gyongyosi2014correlation}
L.~Gyongyosi.
\newblock {The Correlation Conversion property of quantum channels}.
\newblock \emph{Quantum Inf. Process.}, 13:\penalty0 467, 2014.

\bibitem[Horodecki et~al.(1996)Horodecki, Horodecki, and
  Horodecki]{horodecki1996m}
M.~Horodecki, P.~Horodecki, and R.~Horodecki.
\newblock Separability of mixed states: necessary and sufficient conditions.
\newblock \emph{Phys. Lett. A}, 223:\penalty0 1, 1996.

\bibitem[Baart et~al.(2017)Baart, Fujita, Reichl, Wegscheider, and
  Vandersypen]{pp}
T.~A. Baart, T.~Fujita, C.~Reichl, W.~Wegscheider, and L.~M.~K. Vandersypen.
\newblock Coherent spin-exchange via a quantum mediator.
\newblock \emph{Nat. Nanotechnol.}, 12:\penalty0 26, 2017.

\bibitem[Sahling et~al.(2015)Sahling, Remeny, Paulsen, Monceau, Saligrama,
  Marin, Revcolevschi, Regnault, Raymond, and Lorenzo]{NatPhys.11.255}
S.~Sahling, G.~Remeny, C.~Paulsen, P.~Monceau, V.~Saligrama, C.~Marin,
  A.~Revcolevschi, L.~P. Regnault, S.~Raymond, and J.~E. Lorenzo.
\newblock Experimental realization of long-distance entanglement between spins
  in antiferromagnetic quantum spin chains.
\newblock \emph{Nat. Phys.}, 11:\penalty0 255, 2015.

\bibitem[Kon et~al.(2019)Kon, Krisnanda, Sengupta, and Paterek]{kon2019non}
W.~Y. Kon, T.~Krisnanda, P.~Sengupta, and T.~Paterek.
\newblock Non-classicality of spin structures in condensed matter: An analysis
  of \text{Sr$_{14}$Cu$_{24}$O$_{41}$}.
\newblock \emph{Phys. Rev. B}, 100:\penalty0 235103, 2019.

\bibitem[Krisnanda et~al.(2018{\natexlab{a}})Krisnanda, Ganardi, Lee, Kim, and
  Paterek]{nondecompos}
T.~Krisnanda, R.~Ganardi, S.-Y. Lee, J.~Kim, and T.~Paterek.
\newblock Detecting nondecomposability of time evolution via extreme gain of
  correlations.
\newblock \emph{Phys. Rev. A}, 98:\penalty0 052321, 2018{\natexlab{a}}.

\bibitem[Robertson(1929)]{robertson1929}
H.~P. Robertson.
\newblock {The Uncertainty Principle}.
\newblock \emph{Phys. Rev.}, 34:\penalty0 163, 1929.

\bibitem[Landau(1987)]{landau1987}
L.~J. Landau.
\newblock {On the violation of Bell's inequality in quantum theory}.
\newblock \emph{Phys. Lett. A}, 120:\penalty0 54, 1987.

\bibitem[Peres(2002)]{peres-book}
A.~Peres.
\newblock \emph{Quantum Theory: Concepts and Methods}.
\newblock Kluwer Academic, Dordrecht, Netherlands, 2002.

\bibitem[Thompson et~al.(2016)Thompson, Kurzy\'nski, Lee, Soeda, and
  Kaszlikowski]{thompson2016}
J.~Thompson, P.~Kurzy\'nski, S.-Y. Lee, A.~Soeda, and D.~Kaszlikowski.
\newblock {Recent Advances in Contextuality Tests}.
\newblock \emph{Open Syst. Info. Dynam.}, 23:\penalty0 1650009, 2016.

\bibitem[Rauschenbeutel et~al.(2001)Rauschenbeutel, Bertet, Osnaghi, Nogues,
  Brune, Raimond, and Haroche]{rauschenbeutel2001}
A.~Rauschenbeutel, P.~Bertet, S.~Osnaghi, G.~Nogues, M.~Brune, J.~M. Raimond,
  and S.~Haroche.
\newblock Controlled entanglement of two field modes in a cavity quantum
  electrodynamics experiment.
\newblock \emph{Phys. Rev. A}, 64:\penalty0 050301, 2001.

\bibitem[Hamsen et~al.(2018)Hamsen, Tolazzi, Wilk, and Rempe]{hamsen}
C.~Hamsen, K.~N. Tolazzi, T.~Wilk, and G.~Rempe.
\newblock Strong coupling between photons of two light fields mediated by one
  atom.
\newblock \emph{Nat. Phys.}, 14:\penalty0 885, 2018.

\bibitem[Streltsov et~al.(2012{\natexlab{b}})Streltsov, Adesso, Piani, and
  Bru{\ss}]{streltsov2012general}
A.~Streltsov, G.~Adesso, M.~Piani, and D.~Bru{\ss}.
\newblock {Are General Quantum Correlations Monogamous?}
\newblock \emph{Phys. Rev. Lett.}, 109:\penalty0 050503, 2012{\natexlab{b}}.

\bibitem[Uhlmann(1977)]{uhlmann1977relative}
A.~Uhlmann.
\newblock {Relative entropy and the Wigner-Yanase-Dyson-Lieb concavity in an
  interpolation theory}.
\newblock \emph{Commun. Math. Phys.}, 54:\penalty0 21, 1977.

\bibitem[Fannes(1973)]{fannes1973continuity}
M.~Fannes.
\newblock A continuity property of the entropy density for spin lattice
  systems.
\newblock \emph{Commun. Math. Phys.}, 31:\penalty0 291, 1973.

\bibitem[Yuan et~al.(2015)Yuan, Assad, Thompson, Haw, Vedral, Ralph, Lam,
  Weedbrook, and Gu]{milegu}
X.~Yuan, S.~M. Assad, J.~Thompson, J.~Y. Haw, V.~Vedral, T.~C. Ralph, P.~K.
  Lam, C.~Weedbrook, and M.~Gu.
\newblock {Replicating the benefits of Deutschian closed timelike curves
  without breaking causality}.
\newblock \emph{npj Quantum Inf.}, 1:\penalty0 15007, 2015.

\bibitem[Terhal et~al.(2002)Terhal, Horodecki, Leung, and
  DiVincenzo]{terhal2002}
B.~M. Terhal, M.~Horodecki, D.~W. Leung, and D.~P. DiVincenzo.
\newblock The entanglement of purification.
\newblock \emph{J. Math. Phys.}, 43:\penalty0 4286, 2002.

\bibitem[Piani(2012)]{piani2012}
M.~Piani.
\newblock Problem with geometric discord.
\newblock \emph{Phys. Rev. A}, 86:\penalty0 034101, 2012.

\bibitem[Paula et~al.(2013)Paula, Oliveira, and Sarandy]{paula2013}
F.~M. Paula, T.~R. Oliveira, and M.~S. Sarandy.
\newblock {Geometric quantum discord through the Schatten 1-norm}.
\newblock \emph{Phys. Rev. A}, 87:\penalty0 064101, 2013.

\bibitem[Mandelstam and Tamm()]{mandelstam}
L.~Mandelstam and I.~Tamm.
\newblock {The Uncertainty Relation Between Energy and Time in Non-relativistic
  Quantum Mechanics}, in \emph{Selected papers} ($\mbox{Springer, Berlin,
  Heidelberg, 1945}$), pp. 115-123.

\bibitem[Fleming(1973)]{pure1}
G.~N. Fleming.
\newblock A unitarity bound on the evolution of nonstationary states.
\newblock \emph{Nuovo Cimento A}, 16:\penalty0 232, 1973.

\bibitem[Uffink(1993)]{pure2}
J.~Uffink.
\newblock The rate of evolution of a quantum state.
\newblock \emph{Am. J. Phys.}, 61:\penalty0 935, 1993.

\bibitem[Margolus and Levitin(1998)]{margolus}
N.~Margolus and L.~B. Levitin.
\newblock The maximum speed of dynamical evolution.
\newblock \emph{Physica D}, 120:\penalty0 188, 1998.

\bibitem[Uhlmann(1992)]{mix1}
A.~Uhlmann.
\newblock An energy dispersion estimate.
\newblock \emph{Phys. Lett. A}, 161:\penalty0 329, 1992.

\bibitem[Deffner and Lutz(2013)]{mix2}
S.~Deffner and E.~Lutz.
\newblock Energy-time uncertainty relation for driven quantum systems.
\newblock \emph{J. Phys. A}, 46:\penalty0 335302, 2013.

\bibitem[Zhang et~al.(2014)Zhang, Han, Xia, Cao, and Fan]{mix3}
Y.-J. Zhang, W.~Han, Y.-J. Xia, J.-P. Cao, and H.~Fan.
\newblock Quantum speed limit for arbitrary initial states.
\newblock \emph{Sci. Rep.}, 4:\penalty0 4890, 2014.

\bibitem[Mondal et~al.(2016)Mondal, Datta, and Sazim]{mix4}
D.~Mondal, C.~Datta, and S.~Sazim.
\newblock Quantum coherence sets the quantum speed limit for mixed states.
\newblock \emph{Phys. Lett. A}, 380:\penalty0 689, 2016.

\bibitem[Deffner and Lutz(2010)]{deffner2010generalized}
S.~Deffner and E.~Lutz.
\newblock {Generalized Clausius Inequality for Nonequilibrium Quantum
  Processes}.
\newblock \emph{Phys. Rev. Lett.}, 105:\penalty0 170402, 2010.

\bibitem[Giovannetti et~al.(2011)Giovannetti, Lloyd, and
  Maccone]{giovannetti2011advances}
V.~Giovannetti, S.~Lloyd, and L.~Maccone.
\newblock Advances in quantum metrology.
\newblock \emph{Nat. Photonics}, 5:\penalty0 222, 2011.

\bibitem[Lloyd(2000)]{lloyd2000ultimate}
S.~Lloyd.
\newblock Ultimate physical limits to computation.
\newblock \emph{Nature}, 406:\penalty0 1047, 2000.

\bibitem[Lloyd(2002)]{lloyd2002computational}
S.~Lloyd.
\newblock {Computational Capacity of the Universe}.
\newblock \emph{Phys. Rev. Lett.}, 88:\penalty0 237901, 2002.

\bibitem[Binder et~al.(2015)Binder, Vinjanampathy, Modi, and
  Goold]{binder2015quantacell}
F.~C. Binder, S.~Vinjanampathy, K.~Modi, and J.~Goold.
\newblock Quantacell: powerful charging of quantum batteries.
\newblock \emph{New J. Phys.}, 17:\penalty0 075015, 2015.

\bibitem[Binder()]{binderphdthesis}
F.~C. Binder.
\newblock {Work, Heat, and Power of Quantum Processes}.
\newblock PhD Thesis,~University of Oxford,~2016.

\bibitem[et~al.(2017)]{campaioli2017enhancing}
F.~Campaioli et~al.
\newblock {Enhancing the Charging Power of Quantum Batteries}.
\newblock \emph{Phys. Rev. Lett.}, 118:\penalty0 150601, 2017.

\bibitem[Shanahan et~al.(2018)Shanahan, Chenu, Margolus, and
  Campo]{shanahan2018quantum}
B.~Shanahan, A.~Chenu, N.~Margolus, and A.~Del Campo.
\newblock {Quantum Speed Limits across the Quantum-to-Classical Transition}.
\newblock \emph{Phys. Rev. Lett.}, 120:\penalty0 070401, 2018.

\bibitem[Okuyama and Ohzeki(2018)]{okuyama2018quantum}
M.~Okuyama and M.~Ohzeki.
\newblock {Quantum Speed Limit is Not Quantum}.
\newblock \emph{Phys. Rev. Lett.}, 120:\penalty0 070402, 2018.

\bibitem[Levitin and Toffoli(2009)]{unifiedbound}
L.~B. Levitin and T.~Toffoli.
\newblock {Fundamental Limit on the Rate of Quantum Dynamics: The Unified Bound
  Is Tight}.
\newblock \emph{Phys. Rev. Lett.}, 103:\penalty0 160502, 2009.

\bibitem[Campaioli et~al.(2018)Campaioli, Pollock, Binder, and Modi]{revbound}
F.~Campaioli, F.~A. Pollock, F.~C. Binder, and K.~Modi.
\newblock {Tightening Quantum Speed Limits for Almost All States}.
\newblock \emph{Phys. Rev. Lett.}, 120:\penalty0 060409, 2018.

\bibitem[Bengtsson and Zyczkowski()]{FS1}
I.~Bengtsson and K.~Zyczkowski.
\newblock \emph{Geometry of Quantum States: An Introduction to Quantum
  Entanglement} ($\mbox{Cambridge University Press}$, $\mbox{Cambridge}$,
  $\mbox{England}$, 2008) p. 419.

\bibitem[Fubini(1904)]{FS2}
G.~Fubini.
\newblock \emph{Atti Istit. Veneto}, 63:\penalty0 502, 1904.

\bibitem[Study(1905)]{FS3}
E.~Study.
\newblock {K\"{u}rzeste Wege im komplexen Gebiet}.
\newblock \emph{Math. Ann.}, 60:\penalty0 321, 1905.

\bibitem[Henderson and Vedral(2000)]{REEmutualinfo}
L.~Henderson and V.~Vedral.
\newblock {Information, Relative Entropy of Entanglement, and Irreversibility}.
\newblock \emph{Phys. Rev. Lett.}, 84:\penalty0 2263, 2000.

\bibitem[Nielsen(1996)]{nielsen1996entanglement}
M.~A. Nielsen.
\newblock The entanglement fidelity and quantum error correction.
\newblock \emph{arXiv:quant-ph/9606012}, 1996.

\bibitem[Krisnanda et~al.(2020)Krisnanda, Tham, Paternostro, and
  Paterek]{backtoback}
T.~Krisnanda, G.~Y. Tham, M.~Paternostro, and T.~Paterek.
\newblock Observable quantum entanglement due to gravity.
\newblock \emph{npj Quantum Inf.}, 6:\penalty0 12, 2020.

\bibitem[Colella et~al.(1975)Colella, Overhauser, and Werner]{gphase}
R.~Colella, A.~W. Overhauser, and S.~A. Werner.
\newblock {Observation of Gravitationally Induced Quantum Interference}.
\newblock \emph{Phys. Rev. Lett.}, 34:\penalty0 1472, 1975.

\bibitem[Peters et~al.(1999)Peters, Chung, and Chu]{gravimeter}
A.~Peters, K.~Y. Chung, and S.~Chu.
\newblock Measurement of gravitational acceleration by dropping atoms.
\newblock \emph{Nature}, 400:\penalty0 849, 1999.

\bibitem[Nesvizhevsky et~al.(2002)Nesvizhevsky, B{\"o}rner, Petukhov, Abele,
  Bae{\ss}ler, Rue{\ss}, St{\"o}ferle, Westphal, Gagarski, Petrov, and
  Strelkov]{nfall}
V.~V. Nesvizhevsky, H.~G. B{\"o}rner, A.~K. Petukhov, H.~Abele, S.~Bae{\ss}ler,
  F.~J. Rue{\ss}, T.~St{\"o}ferle, A.~Westphal, A.~M. Gagarski, G.~A. Petrov,
  and A.~V. Strelkov.
\newblock {Quantum states of neutrons in the Earth's gravitational field}.
\newblock \emph{Nature}, 415:\penalty0 297, 2002.

\bibitem[Pound and Jr.(1960)]{gshift1}
R.~V. Pound and G.~A.~Rebka Jr.
\newblock {Apparent Weight of Photons}.
\newblock \emph{Phys. Rev. Lett.}, 4:\penalty0 337, 1960.

\bibitem[Hafele and Keating(1972)]{clock1}
J.~C. Hafele and R.~E. Keating.
\newblock {Around-the-World Atomic Clocks: Predicted Relativistic Time Gains}.
\newblock \emph{Science}, 177:\penalty0 166, 1972.

\bibitem[Page and Geilker(1981)]{massg1}
D.~N. Page and C.~D. Geilker.
\newblock {Indirect Evidence for Quantum Gravity}.
\newblock \emph{Phys. Rev. Lett.}, 47:\penalty0 979, 1981.

\bibitem[Bahrami et~al.()Bahrami, Bassi, McMillen, Paternostro, and
  Ulbricht]{massg2}
M.~Bahrami, A.~Bassi, S.~McMillen, M.~Paternostro, and H.~Ulbricht.
\newblock {Is Gravity Quantum?}
\newblock \emph{arXiv:1507.05733}.

\bibitem[Carlesso et~al.(2019)Carlesso, Bassi, Paternostro, and
  Ulbricht]{massg22}
M.~Carlesso, A.~Bassi, M.~Paternostro, and H.~Ulbricht.
\newblock {Testing the gravitational field generated by a quantum
  superposition}.
\newblock \emph{New J. Phys.}, 21:\penalty0 093052, 2019.

\bibitem[Anastopoulos and Hu(2015)]{massg3}
C.~Anastopoulos and B.-L. Hu.
\newblock Probing a gravitational cat state.
\newblock \emph{Class. Quantum Grav.}, 32:\penalty0 165022, 2015.

\bibitem[Derakhshani et~al.(2016)Derakhshani, Anastopoulos, and Hu]{massg4}
M.~Derakhshani, C.~Anastopoulos, and B.-L. Hu.
\newblock {Probing a gravitational cat state: Experimental Possibilities}.
\newblock \emph{J. Phys. Conf. Ser.}, 701:\penalty0 012015, 2016.

\bibitem[Derakhshani()]{massg5}
M.~Derakhshani.
\newblock {Probing Gravitational Cat States in Canonical Quantum Theory vs
  Objective Collapse Theories}.
\newblock \emph{arXiv:1609.01711}.

\bibitem[Marletto and Vedral(2017{\natexlab{a}})]{massg6}
C.~Marletto and V.~Vedral.
\newblock Witness gravity's quantum side in the lab.
\newblock \emph{Nature}, 547:\penalty0 156, 2017{\natexlab{a}}.

\bibitem[et~al.(2018)]{qgdis}
A.~Belenchia et~al.
\newblock Quantum superposition of massive objects and the quantization of
  gravity.
\newblock \emph{Phys. Rev. D}, 98:\penalty0 126009, 2018.

\bibitem[Balushi et~al.(2018)Balushi, Cong, and Mann]{cavenexp}
A.~Balushi, W.~Cong, and R.~B. Mann.
\newblock Optomechanical quantum {C}avendish experiment.
\newblock \emph{Phys. Rev. A}, 98:\penalty0 043811, 2018.

\bibitem[Marletto and Vedral(2017{\natexlab{b}})]{gravity1}
C.~Marletto and V.~Vedral.
\newblock {Gravitationally Induced Entanglement between Two Massive Particles
  is Sufficient Evidence of Quantum Effects in Gravity}.
\newblock \emph{Phys. Rev. Lett.}, 119:\penalty0 240402, 2017{\natexlab{b}}.

\bibitem[Bose et~al.(2017)Bose, Mazumdar, Morley, Ulbricht, Toro{\v{s}},
  Paternostro, Geraci, Barker, Kim, and Milburn]{gravity2}
S.~Bose, A.~Mazumdar, G.~W. Morley, H.~Ulbricht, M.~Toro{\v{s}},
  M.~Paternostro, A.~A. Geraci, P.~F. Barker, M.~S. Kim, and G.~Milburn.
\newblock {Spin Entanglement Witness for Quantum Gravity}.
\newblock \emph{Phys. Rev. Lett.}, 119:\penalty0 240401, 2017.

\bibitem[Aspelmeyer et~al.(2014)Aspelmeyer, Kippenberg, and
  Marquardt]{RMP.86.1391}
M.~Aspelmeyer, T.~J. Kippenberg, and F.~Marquardt.
\newblock Cavity optomechanics.
\newblock \emph{Rev. Mod. Phys.}, 86:\penalty0 1391, 2014.

\bibitem[et~al.(2009)]{ligomirror}
B.~Abbott et~al.
\newblock Observation of a kilogram-scale oscillator near its quantum ground
  state.
\newblock \emph{New J. Phys.}, 11:\penalty0 073032, 2009.

\bibitem[Teufel~et al.(2011)]{nm}
J.~D. Teufel~et al.
\newblock Sideband cooling of micromechanical motion to the quantum ground
  state.
\newblock \emph{Nature}, 475:\penalty0 359, 2011.

\bibitem[Chan et~al.(2011)Chan, Alegre, Safavi-Naeini, Hill, Krause,
  Gr{\"o}blacher, Aspelmeyer, and Painter]{nanomirror}
J.~Chan, T.~P.~M. Alegre, A.~H. Safavi-Naeini, J.~T. Hill, A.~Krause,
  S.~Gr{\"o}blacher, M.~Aspelmeyer, and O.~Painter.
\newblock Laser cooling of a nanomechanical oscillator into its quantum ground
  state.
\newblock \emph{Nature}, 478:\penalty0 89, 2011.

\bibitem[Palomaki et~al.(2013)Palomaki, Teufel, Simmonds, and Lehnert]{evalue1}
T.~A. Palomaki, J.~D. Teufel, R.~W. Simmonds, and K.~W. Lehnert.
\newblock {Entangling Mechanical Motion with Microwave Fields}.
\newblock \emph{Science}, 342:\penalty0 710, 2013.

\bibitem[Riedinger et~al.(2018)Riedinger, Wallucks, Marinkovi{\'c},
  L{\"o}schnauer, Aspelmeyer, Hong, and Gr{\"o}blacher]{evalue2}
R.~Riedinger, A.~Wallucks, I.~Marinkovi{\'c}, C.~L{\"o}schnauer, M.~Aspelmeyer,
  S.~Hong, and S.~Gr{\"o}blacher.
\newblock Remote quantum entanglement between two micromechanical oscillators.
\newblock \emph{Nature}, 556:\penalty0 473, 2018.

\bibitem[Marinkovi{\'c} et~al.(2018)Marinkovi{\'c}, Wallucks, Riedinger, Hong,
  Aspelmeyer, and Gr{\"o}blacher]{evalue3}
I.~Marinkovi{\'c}, A.~Wallucks, R.~Riedinger, S.~Hong, M.~Aspelmeyer, and
  S.~Gr{\"o}blacher.
\newblock {Optomechanical Bell Test}.
\newblock \emph{Phys. Rev. Lett.}, 121:\penalty0 220404, 2018.

\bibitem[Qvarfort et~al.()Qvarfort, Bose, and Serafini]{linearp}
S.~Qvarfort, S.~Bose, and A.~Serafini.
\newblock Mesoscopic entanglement from central potential interactions.
\newblock \emph{arXiv:1812.09776}.

\bibitem[Benguria and Kac(1981)]{benguria1981quantum}
R.~Benguria and M.~Kac.
\newblock {Quantum Langevin Equation}.
\newblock \emph{Phys. Rev. Lett.}, 46:\penalty0 1, 1981.

\bibitem[Giovannetti and Vitali(2001)]{giovannetti2001phase}
V.~Giovannetti and D.~Vitali.
\newblock {Phase-noise measurement in a cavity with a movable mirror undergoing
  quantum Brownian motion}.
\newblock \emph{Phys. Rev. A}, 63:\penalty0 023812, 2001.

\bibitem[Werner and Wolf(2001)]{werner2001bound}
R.~F. Werner and M.~M. Wolf.
\newblock {Bound Entangled Gaussian States}.
\newblock \emph{Phys. Rev. Lett.}, 86:\penalty0 3658, 2001.

\bibitem[Vanner et~al.(2013)Vanner, Hofer, Cole, and
  Aspelmeyer]{vanner2013cooling}
M.~R. Vanner, J.~Hofer, G.~D. Cole, and M.~Aspelmeyer.
\newblock Cooling-by-measurement and mechanical state tomography via pulsed
  optomechanics.
\newblock \emph{Nat. Commun.}, 4:\penalty0 2295, 2013.

\bibitem[Rashid et~al.(2016)Rashid, Tufarelli, Bateman, Vovrosh, Hempston, Kim,
  and Ulbricht]{rashid2016experimental}
M.~Rashid, T.~Tufarelli, J.~Bateman, J.~Vovrosh, D.~Hempston, M.~S. Kim, and
  H.~Ulbricht.
\newblock {Experimental Realization of a Thermal Squeezed State of Levitated
  Optomechanics}.
\newblock \emph{Phys. Rev. Lett.}, 117:\penalty0 273601, 2016.

\bibitem[Kiesel et~al.(2013)Kiesel, Blaser, Deli{\'c}, Grass, Kaltenbaek, and
  Aspelmeyer]{kiesel2013cavity}
N.~Kiesel, F.~Blaser, U.~Deli{\'c}, D.~Grass, R.~Kaltenbaek, and M.~Aspelmeyer.
\newblock Cavity cooling of an optically levitated submicron particle.
\newblock \emph{Proc. Natl. Acad. Sci. U.S.A.}, 110:\penalty0 14180, 2013.

\bibitem[Vovrosh et~al.(2017)Vovrosh, Rashid, Hempston, Bateman, Paternostro,
  and Ulbricht]{vovrosh2017parametric}
J.~Vovrosh, M.~Rashid, D.~Hempston, J.~Bateman, M.~Paternostro, and
  H.~Ulbricht.
\newblock Parametric feedback cooling of levitated optomechanics in a parabolic
  mirror trap.
\newblock \emph{J. Opt. Soc. Am. B}, 34:\penalty0 1421, 2017.

\bibitem[Kiesel()]{Kiesel}
N.~Kiesel.
\newblock Private communication.

\bibitem[Emig et~al.(2007)Emig, Graham, Jaffe, and Kardar]{casimir1}
T.~Emig, N.~Graham, R.~L. Jaffe, and M.~Kardar.
\newblock {Casimir Forces between Arbitrary Compact Objects}.
\newblock \emph{Phys. Rev. Lett.}, 99:\penalty0 170403, 2007.

\bibitem[Garrett et~al.(2018)Garrett, Somers, and Munday]{casimir2}
J.~L. Garrett, D.~A.~T. Somers, and J.~N. Munday.
\newblock {Measurement of the Casimir Force between Two Spheres}.
\newblock \emph{Phys. Rev. Lett.}, 120:\penalty0 040401, 2018.

\bibitem[Schlosshauer(2007)]{decoherence}
M.~Schlosshauer.
\newblock \emph{Decoherence and the quantum-to-classical transition}.
\newblock Springer, Berlin, 2007.

\bibitem[Vahlbruch et~al.(2008)Vahlbruch, Mehmet, Chelkowski, Hage, Franzen,
  Lastzka, ler, Danzmann, and Schnabel]{sqpara1}
H.~Vahlbruch, M.~Mehmet, S.~Chelkowski, B.~Hage, A.~Franzen, N.~Lastzka,
  S.~Go\ss ler, K.~Danzmann, and R.~Schnabel.
\newblock {Observation of Squeezed Light with 10-dB Quantum-Noise Reduction}.
\newblock \emph{Phys. Rev. Lett.}, 100:\penalty0 033602, 2008.

\bibitem[Vahlbruch et~al.(2016)Vahlbruch, Mehmet, Danzmann, and
  Schnabel]{sqpara2}
H.~Vahlbruch, M.~Mehmet, K.~Danzmann, and R.~Schnabel.
\newblock {Detection of 15 dB Squeezed States of Light and their Application
  for the Absolute Calibration of Photoelectric Quantum Efficiency}.
\newblock \emph{Phys. Rev. Lett.}, 117:\penalty0 110801, 2016.

\bibitem[Schnabel(2017)]{sqpara3}
R.~Schnabel.
\newblock Squeezed states of light and their applications in laser
  interferometers.
\newblock \emph{Phys. Rep.}, 684:\penalty0 1, 2017.

\bibitem[et~al.(2016)]{GravWaves}
B.~P.~Abbott et~al.
\newblock {Observation of Gravitational Waves from a Binary Black Hole Merger}.
\newblock \emph{Phys. Rev. Lett.}, 116:\penalty0 061102, 2016.

\bibitem[Krisnanda et~al.(2018{\natexlab{b}})Krisnanda, Marletto, Vedral,
  Paternostro, and Paterek]{bacteria}
T.~Krisnanda, C.~Marletto, V.~Vedral, M.~Paternostro, and T.~Paterek.
\newblock Probing quantum features of photosynthetic organisms.
\newblock \emph{npj Quantum Inf.}, 4:\penalty0 60, 2018{\natexlab{b}}.

\bibitem[Schr\"{o}dinger(1943)]{SCH}
E.~Schr\"{o}dinger.
\newblock \emph{What is life?}
\newblock University Press, Cambridge, 1943.

\bibitem[Bohr(1933)]{BOH}
N.~Bohr.
\newblock {Light and Life}.
\newblock \emph{Nature}, 131:\penalty0 421, 1933.

\bibitem[Wigner(1967)]{WIG}
E.~Wigner.
\newblock \emph{The Probability of a Self-Reproducing Unit. Symmetries \&
  Reflections}.
\newblock Indiana University Press, Bloomington, 1967.

\bibitem[Gerlich et~al.(2011)Gerlich, Eibenberger, Tomandl, Nimmrichter,
  Hornberger, Fagan, T{\"u}xen, Mayor, and Arndt]{arndt}
S.~Gerlich, S.~Eibenberger, M.~Tomandl, S.~Nimmrichter, K.~Hornberger, P.~J.
  Fagan, J.~T{\"u}xen, M.~Mayor, and M.~Arndt.
\newblock Quantum interference of large organic molecules.
\newblock \emph{Nat. Commun.}, 2:\penalty0 263, 2011.

\bibitem[Marletto et~al.(2018)Marletto, Coles, Farrow, and Vedral]{bacteria-th}
C.~Marletto, D.~M. Coles, T.~Farrow, and V.~Vedral.
\newblock {Entanglement between living bacteria and quantized light witnessed
  by Rabi splitting}.
\newblock \emph{J. Phys. Commun.}, 2:\penalty0 101001, 2018.

\bibitem[Coles et~al.(2017)Coles, Flatten, Sydney, Hounslow, Saikin,
  Aspuru-Guzik, Vedral, Tang, Taylor, Smith, and Lidzey]{bacteria-exp}
D.~Coles, L.~C. Flatten, T.~Sydney, E.~Hounslow, S.~K. Saikin, A.~Aspuru-Guzik,
  V.~Vedral, J.~K.-H. Tang, R.~A. Taylor, J.~M. Smith, and D.~G. Lidzey.
\newblock {A Nanophotonic Structure Containing Living Photosynthetic Bacteria}.
\newblock \emph{Small}, 13:\penalty0 1701777, 2017.

\bibitem[Amerongen et~al.(2000)Amerongen, Valkunas, and Grondelle]{bio1}
H.~Amerongen, L.~Valkunas, and R.~Grondelle.
\newblock \emph{Photosynthetic excitons}.
\newblock World Scientific, Singapore, 2000.

\bibitem[Lee et~al.(2007)Lee, Cheng, and Fleming]{bio2}
H.~Lee, Y.-C. Cheng, and G.~R. Fleming.
\newblock {Coherence Dynamics in Photosynthesis: Protein Protection of
  Excitonic Coherence}.
\newblock \emph{Science}, 316:\penalty0 1462, 2007.

\bibitem[Engel et~al.(2007)Engel, Calhoun, Read, Ahn, Man{\v{c}}al, Cheng,
  Blankenship, and Fleming]{bio3}
G.~S. Engel, T.~R. Calhoun, E.~L. Read, T.-K. Ahn, T.~Man{\v{c}}al, Y.-C.
  Cheng, R.~E. Blankenship, and G.~R. Fleming.
\newblock Evidence for wavelike energy transfer through quantum coherence in
  photosynthetic systems.
\newblock \emph{Nature}, 446:\penalty0 782, 2007.

\bibitem[Sarovar et~al.(2010)Sarovar, Ishizaki, Fleming, and Whaley]{bio4}
M.~Sarovar, A.~Ishizaki, G.~R. Fleming, and K.~B. Whaley.
\newblock Quantum entanglement in photosynthetic light-harvesting complexes.
\newblock \emph{Nat. Phys.}, 6:\penalty0 462, 2010.

\bibitem[Collini et~al.(2010)Collini, Wong, Wilk, Curmi, Brumer, and
  Scholes]{bio5}
E.~Collini, C.~Y. Wong, K.~E. Wilk, P.~M.~G. Curmi, P.~Brumer, and G.~D.
  Scholes.
\newblock Coherently wired light-harvesting in photosynthetic marine algae at
  ambient temperature.
\newblock \emph{Nature}, 463:\penalty0 644, 2010.

\bibitem[Panitchayangkoon et~al.(2010)Panitchayangkoon, Hayes, Fransted, Caram,
  Harel, Wen, Blankenship, and Engel]{bio6}
G.~Panitchayangkoon, D.~Hayes, K.~A. Fransted, J.~R. Caram, E.~Harel, J.~Wen,
  R.~E. Blankenship, and G.~S. Engel.
\newblock Long-lived quantum coherence in photosynthetic complexes at
  physiological temperature.
\newblock \emph{Proc. Natl. Acad. Sci. U.S.A.}, 107:\penalty0 12766, 2010.

\bibitem[Wilde et~al.(2009)Wilde, McCracken, and Mizel]{bio7}
M.~M. Wilde, J.~M. McCracken, and A.~Mizel.
\newblock Could light harvesting complexes exhibit non-classical effects at
  room temperature?
\newblock \emph{Proc. R. Soc. A}, 446:\penalty0 1347, 2009.

\bibitem[Li et~al.(2012)Li, Lambert, Chen, Chen, and Nori]{bio8}
C.-M. Li, N.~Lambert, Y.-N. Chen, G.-Y. Chen, and F.~Nori.
\newblock {Witnessing Quantum Coherence: from solid-state to biological
  systems}.
\newblock \emph{Sci. Rep.}, 2:\penalty0 885, 2012.

\bibitem[Hildner et~al.(2013)Hildner, Brinks, Nieder, Cogdell, and Hulst]{bio9}
R.~Hildner, D.~Brinks, J.~B. Nieder, R.~J. Cogdell, and N.~F. Hulst.
\newblock {Quantum Coherent Energy Transfer over Varying Pathways in Single
  Light-Harvesting Complexes}.
\newblock \emph{Science}, 340:\penalty0 1448, 2013.

\bibitem[Lambert et~al.(2013)Lambert, Chen, Cheng, Li, Chen, and Nori]{bio10}
N.~Lambert, Y.-N. Chen, Y.-C. Cheng, C.-M. Li, G.-Y. Chen, and F.~Nori.
\newblock Quantum biology.
\newblock \emph{Nat. Phys.}, 9:\penalty0 10, 2013.

\bibitem[Scholes et~al.(2017)Scholes, Fleming, Chen, Aspuru-Guzik, Buchleitner,
  Coker, Engel, Van~Grondelle, Ishizaki, Jonas, Lundeen, McCusker, Mukamel,
  Ogilvie, Olaya-Castro, Ratner, Spano, Whaley, and Zhu]{bio11}
G.~D. Scholes, G.~R. Fleming, L.~X. Chen, A.~Aspuru-Guzik, A.~Buchleitner,
  D.~F. Coker, G.~S. Engel, R.~Van~Grondelle, A.~Ishizaki, D.~M. Jonas, J.~S.
  Lundeen, J.~K. McCusker, S.~Mukamel, J.~P. Ogilvie, A.~Olaya-Castro, M.~A.
  Ratner, F.~C. Spano, K.~B. Whaley, and X.~Zhu.
\newblock Using coherence to enhance function in chemical and biophysical
  systems.
\newblock \emph{Nature}, 543:\penalty0 647, 2017.

\bibitem[Zhu et~al.(1990)Zhu, Gauthier, Morin, Wu, Carmichael, and
  Mossberg]{ZHU}
Y.~Zhu, D.~J. Gauthier, S.~E. Morin, Q.~Wu, H.~J. Carmichael, and T.~W.
  Mossberg.
\newblock {Vacuum Rabi splitting as a feature of linear-dispersion theory:
  Analysis and experimental observations}.
\newblock \emph{Phys. Rev. Lett.}, 64:\penalty0 2499, 1990.

\bibitem[Fox(2006)]{rabienergy}
M.~Fox.
\newblock \emph{Quantum Optics-An Introduction}.
\newblock Oxford University Press, New York, 2006.

\bibitem[Blankenship et~al.(1995)Blankenship, Olson, and
  Miller]{blankenship1995antenna}
R.~E. Blankenship, J.~M. Olson, and M.~Miller.
\newblock \emph{Anoxygenic photosynthetic bacteria}.
\newblock Springer, Dordrecht, 1995.

\bibitem[Coles et~al.(2014)Coles, Yang, Wang, Grant, Taylor, Saikin,
  Aspuru-Guzik, Lidzey, Tang, and Smith]{chlorosomes}
D.~M. Coles, Y.~Yang, Y.~Wang, R.~T. Grant, R.~A. Taylor, S.~K. Saikin,
  A.~Aspuru-Guzik, D.~G. Lidzey, J.~K.-H. Tang, and J.~M. Smith.
\newblock Strong coupling between chlorosomes of photosynthetic bacteria and a
  confined optical cavity mode.
\newblock \emph{Nat. Commun.}, 5:\penalty0 5561, 2014.

\bibitem[Holstein and Primakoff(1940)]{HP}
T.~Holstein and H.~Primakoff.
\newblock {Field Dependence of the Intrinsic Domain Magnetization of a
  Ferromagnet}.
\newblock \emph{Phys. Rev.}, 58:\penalty0 1098, 1940.

\bibitem[Genes et~al.(2008)Genes, Vitali, and Tombesi]{optoatoms}
C.~Genes, D.~Vitali, and P.~Tombesi.
\newblock Emergence of atom-light-mirror entanglement inside an optical cavity.
\newblock \emph{Phys. Rev. A}, 77:\penalty0 050307, 2008.

\bibitem[Bai et~al.(2016)Bai, Wang, Wang, Zhu, and Zhang]{bai2016robust}
C.-H. Bai, D.-Y. Wang, H.-F. Wang, A.-D. Zhu, and S.~Zhang.
\newblock Robust entanglement between a movable mirror and atomic ensemble and
  entanglement transfer in coupled optomechanical system.
\newblock \emph{Sci. Rep.}, 6:\penalty0 33404, 2016.

\bibitem[Gardiner and Zoller(2004)]{noise}
C.~W. Gardiner and P.~Zoller.
\newblock \emph{Quantum noise}.
\newblock Springer Science \& Business Media, Berlin, 2004.

\bibitem[Walls and Milburn(2007)]{walls2007quantum}
D.~F. Walls and G.~J. Milburn.
\newblock \emph{Quantum optics}.
\newblock Springer Science \& Business Media, Berlin, 2007.

\bibitem[Bajoni(2012)]{bajoni2012}
D.~Bajoni.
\newblock {Polariton lasers. Hybrid light–matter lasers without inversion}.
\newblock \emph{J. Phys. D: Appl. Phys.}, 45:\penalty0 313001, 2012.

\bibitem[Ujihara(1991)]{modeV}
K.~Ujihara.
\newblock {Spontaneous Emission and the Concept of Effective Area in a Very
  Short Optical Cavity with Plane-Parallel Dielectric Mirrors}.
\newblock \emph{Jpn. J. Appl. Phys.}, 30:\penalty0 L901, 1991.

\bibitem[Houssier and Sauer(1970)]{houssier1970circular}
C.~Houssier and K.~Sauer.
\newblock Circular dichroism and magnetic circular dichroism of the chlorophyll
  and protochlorophyll pigments.
\newblock \emph{J. Am. Chem. Soc.}, 92:\penalty0 779, 1970.

\bibitem[Paternostro et~al.(2007)Paternostro, Vitali, Gigan, Kim, Brukner,
  Eisert, and Aspelmeyer]{paternostro}
M.~Paternostro, D.~Vitali, S.~Gigan, M.~S. Kim, C.~Brukner, J.~Eisert, and
  M.~Aspelmeyer.
\newblock {Creating and Probing Multipartite Macroscopic Entanglement with
  Light}.
\newblock \emph{Phys. Rev. Lett.}, 99:\penalty0 250401, 2007.

\bibitem[Vitali et~al.(2007)Vitali, Gigan, Ferreira, B{\"o}hm, Tombesi,
  Guerreiro, Vedral, Zeilinger, and Aspelmeyer]{optmech1}
D.~Vitali, S.~Gigan, A.~Ferreira, H.~R. B{\"o}hm, P.~Tombesi, A.~Guerreiro,
  V.~Vedral, A.~Zeilinger, and M.~Aspelmeyer.
\newblock {Optomechanical Entanglement between a Movable Mirror and a Cavity
  Field}.
\newblock \emph{Phys. Rev. Lett.}, 98:\penalty0 030405, 2007.

\bibitem[Abdi et~al.(2012)Abdi, Pirandola, Tombesi, and Vitali]{certification1}
M.~Abdi, S.~Pirandola, P.~Tombesi, and D.~Vitali.
\newblock {Entanglement Swapping with Local Certification: Application to
  Remote Micromechanical Resonators}.
\newblock \emph{Phys. Rev. Lett.}, 109:\penalty0 143601, 2012.

\bibitem[Abdi et~al.(2014)Abdi, Pirandola, Tombesi, and Vitali]{certification2}
M.~Abdi, S.~Pirandola, P.~Tombesi, and D.~Vitali.
\newblock Continuous-variable-entanglement swapping and its local
  certification: Entangling distant mechanical modes.
\newblock \emph{Phys. Rev. A}, 89:\penalty0 022331, 2014.

\bibitem[Gr{\"o}blacher et~al.(2009)Gr{\"o}blacher, Hammerer, Vanner, and
  Aspelmeyer]{groblacher2009observation}
S.~Gr{\"o}blacher, K.~Hammerer, M.~R. Vanner, and M.~Aspelmeyer.
\newblock Observation of strong coupling between a micromechanical resonator
  and an optical cavity field.
\newblock \emph{Nature}, 460:\penalty0 724, 2009.

\bibitem[Breuer and Petruccione(2002)]{breuer2002theory}
H.-P. Breuer and F.~Petruccione.
\newblock \emph{The theory of open quantum systems}.
\newblock Oxford University Press, 2002.

\bibitem[Laine et~al.(2011)Laine, Piilo, and Breuer]{laine2011witness}
E.-M. Laine, J.~Piilo, and H.-P. Breuer.
\newblock Witness for initial system-environment correlations in open-system
  dynamics.
\newblock \emph{Europhys. Lett.}, 92:\penalty0 60010, 2011.

\bibitem[Smirne et~al.(2010)Smirne, Breuer, Piilo, and
  Vacchini]{smirne2010initial}
A.~Smirne, H.-P. Breuer, J.~Piilo, and B.~Vacchini.
\newblock {Initial correlations in open-systems dynamics: The Jaynes-Cummings
  model}.
\newblock \emph{Phys. Rev. A}, 82:\penalty0 062114, 2010.

\bibitem[Dajka and {\L}uczka(2010)]{dajka2010distance}
J.~Dajka and J.~{\L}uczka.
\newblock Distance growth of quantum states due to initial system-environment
  correlations.
\newblock \emph{Phys. Rev. A}, 82:\penalty0 012341, 2010.

\bibitem[Dajka et~al.(2011)Dajka, {\L}uczka, and H{\"a}nggi]{dajka2011distance}
J.~Dajka, J.~{\L}uczka, and P.~H{\"a}nggi.
\newblock Distance between quantum states in the presence of initial
  qubit-environment correlations: A comparative study.
\newblock \emph{Phys. Rev. A}, 84:\penalty0 032120, 2011.

\bibitem[Wi{\ss}mann et~al.(2013)Wi{\ss}mann, Leggio, and
  Breuer]{wissmann2013detecting}
S.~Wi{\ss}mann, B.~Leggio, and H.-P. Breuer.
\newblock Detecting initial system-environment correlations: Performance of
  various distance measures for quantum states.
\newblock \emph{Phys. Rev. A}, 88:\penalty0 022108, 2013.

\bibitem[Kimura et~al.(2007)Kimura, Ohno, and Hayashi]{kimura2007}
G.~Kimura, H.~Ohno, and H.~Hayashi.
\newblock How to detect a possible correlation from the information of a
  subsystem in quantum-mechanical systems.
\newblock \emph{Phys. Rev. A}, 76:\penalty0 042123, 2007.

\bibitem[Rossatto et~al.(2011)Rossatto, Werlang, Castelano, Villas-Boas, and
  Fanchini]{rossatto2011purity}
D.~Z. Rossatto, T.~Werlang, L.~K. Castelano, C.~J. Villas-Boas, and F.~F.
  Fanchini.
\newblock Purity as a witness for initial system-environment correlations in
  open-system dynamics.
\newblock \emph{Phys. Rev. A}, 84:\penalty0 042113, 2011.

\bibitem[Smirne et~al.(2011)Smirne, Brivio, Cialdi, Vacchini, and
  Paris]{smirne2011experimental}
A.~Smirne, D.~Brivio, S.~Cialdi, B.~Vacchini, and M.~G.~A. Paris.
\newblock Experimental investigation of initial system-environment correlations
  via trace-distance evolution.
\newblock \emph{Phys. Rev. A}, 84:\penalty0 032112, 2011.

\bibitem[Li et~al.(2011)Li, Tang, Li, and Guo]{li2011experimentally}
C.-F. Li, J.-S. Tang, Y.-L. Li, and G.-C. Guo.
\newblock Experimentally witnessing the initial correlation between an open
  quantum system and its environment.
\newblock \emph{Phys. Rev. A}, 83:\penalty0 064102, 2011.

\bibitem[Rodr{\'\i}guez-Rosario et~al.(2008)Rodr{\'\i}guez-Rosario, Modi, Kuah,
  Shaji, and Sudarshan]{rodriguez2008completely}
C.~A. Rodr{\'\i}guez-Rosario, K.~Modi, A.-M. Kuah, A.~Shaji, and E.~C.~G.
  Sudarshan.
\newblock Completely positive maps and classical correlations.
\newblock \emph{J. Phys. A}, 41:\penalty0 205301, 2008.

\bibitem[Gessner and Breuer(2011)]{gessner2011detecting}
M.~Gessner and H.-P. Breuer.
\newblock {Detecting Nonclassical System-Environment Correlations by Local
  Operations}.
\newblock \emph{Phys. Rev. Lett.}, 107:\penalty0 180402, 2011.

\bibitem[Gessner and Breuer(2013)]{gessner2013local}
M.~Gessner and H.-P. Breuer.
\newblock Local witness for bipartite quantum discord.
\newblock \emph{Phys. Rev. A}, 87:\penalty0 042107, 2013.

\bibitem[Gessner et~al.(2014)Gessner, Ramm, Pruttivarasin, Buchleitner, Breuer,
  and H{\"a}ffner]{gessner2014local}
M.~Gessner, M.~Ramm, T.~Pruttivarasin, A.~Buchleitner, H.-P. Breuer, and
  H.~H{\"a}ffner.
\newblock Local detection of quantum correlations with a single trapped ion.
\newblock \emph{Nat. Phys.}, 10:\penalty0 105, 2014.

\bibitem[Tang et~al.(2015)Tang, Wang, Chen, Zou, Li, Guo, Yu, Li, Zha, Ni, Niu,
  Gressner, and Breuer]{tang2015experimental}
J.-S. Tang, Y.-T. Wang, G.~Chen, Y.~Zou, C.-F. Li, G.-C. Guo, Y.~Yu, M.-F. Li,
  G.-W. Zha, H.-Q. Ni, Z.-C. Niu, M.~Gressner, and H.-P. Breuer.
\newblock Experimental detection of polarization-frequency quantum correlations
  in a photonic quantum channel by local operations.
\newblock \emph{Optica}, 2:\penalty0 1014, 2015.

\bibitem[Cialdi et~al.(2014)Cialdi, Smirne, Paris, Olivares, and
  Vacchini]{cialdi2014two}
S.~Cialdi, A.~Smirne, M.~G.~A. Paris, S.~Olivares, and B.~Vacchini.
\newblock Two-step procedure to discriminate discordant from classical
  correlated or factorized states.
\newblock \emph{Phys. Rev. A}, 90:\penalty0 050301, 2014.

\bibitem[Messina(2002)]{messina2002}
A.~Messina.
\newblock {A single atom-based generation of Bell states of two cavities}.
\newblock \emph{Eur. Phys. J. D}, 18:\penalty0 379, 2002.

\bibitem[Browne and Plenio(2003)]{browne2003}
D.~E. Browne and M.~B. Plenio.
\newblock Robust generation of entanglement between two cavities mediated by
  short interactions with an atom.
\newblock \emph{Phys. Rev. A}, 67:\penalty0 012325, 2003.

\bibitem[Devetak and Winter(2005)]{devetak2005distillation}
I.~Devetak and A.~Winter.
\newblock Distillation of secret key and entanglement from quantum states.
\newblock \emph{Proc. R. Soc. A}, 461:\penalty0 207, 2005.

\bibitem[Horodecki et~al.(2000)Horodecki, Horodecki, and
  Horodecki]{horodecki2000limits}
M.~Horodecki, P.~Horodecki, and R.~Horodecki.
\newblock {Limits for Entanglement Measures}.
\newblock \emph{Phys. Rev. Lett.}, 84:\penalty0 2014, 2000.

\bibitem[Scully and Zubairy(1997)]{scully-book}
M.~O. Scully and M.~S. Zubairy.
\newblock \emph{Quantum optics}.
\newblock Cambridge University Press, Cambridge, UK, 1997.

\bibitem[Brunner et~al.(2008)Brunner, Pironio, Acin, Gisin, M\'{e}thot, and
  Scarani]{dimwit1}
N.~Brunner, S.~Pironio, A.~Acin, N.~Gisin, A.~A. M\'{e}thot, and V.~Scarani.
\newblock Testing the dimension of hilbert spaces.
\newblock \emph{Phys. Rev. Lett.}, 100:\penalty0 210503, 2008.

\bibitem[Hendrych et~al.(2012)Hendrych, Gallego, Mi\v{c}uda, Brunner, Ac\'{i}n,
  and Torres]{dimwit2}
M.~Hendrych, R.~Gallego, M.~Mi\v{c}uda, N.~Brunner, A.~Ac\'{i}n, and J.~P.
  Torres.
\newblock Experimental estimation of the dimension of classical and quantum
  systems.
\newblock \emph{Nat. Phys.}, 8:\penalty0 588, 2012.

\bibitem[Ahrens et~al.(2012)Ahrens, Badziag, Cabello, and Bourennane]{dimwit3}
J.~Ahrens, P.~Badziag, A.~Cabello, and M.~Bourennane.
\newblock Experimental device-independent tests of classical and quantum
  dimensions.
\newblock \emph{Nat. Phys.}, 8:\penalty0 592, 2012.

\bibitem[Ferraro et~al.(2010)Ferraro, Aolita, Cavalcanti, Cucchietti, and
  Acin]{ferraro2010almost}
A.~Ferraro, L.~Aolita, D.~Cavalcanti, F.~M. Cucchietti, and A.~Acin.
\newblock Almost all quantum states have nonclassical correlations.
\newblock \emph{Phys. Rev. A}, 81:\penalty0 052318, 2010.

\bibitem[Trotter(1959)]{trotter1959product}
H.~F. Trotter.
\newblock {On the Product of Semi-Groups of Operators}.
\newblock \emph{Proc. Am. Math. Soc.}, 10:\penalty0 545, 1959.

\end{thebibliography}

\end{document}